\newtheorem{theorem}{Theorem}
\newtheorem{lemma}[theorem]{Lemma}
\newtheorem{corollary}[theorem]{Corollary}
\newtheorem{conjecture}[theorem]{Conjecture}
\newtheorem{definition}[theorem]{Definition}
\newtheorem{proposition}[theorem]{Proposition}
\renewcommand{\epsilon}{\varepsilon}
\newcommand{\defeq}{\stackrel{\textup{\tiny def}}{=}}
\newcommand{\norm}[1]{\left\lVert#1\right\rVert}
\newcommand{\norms}[1]{\lVert#1\rVert}
\newcommand{\normop}[1]{\left\lVert#1\right\rVert_{\textup{op}}}
\newcommand{\normsop}[1]{\lVert#1\rVert_{\textup{op}}}
\newcommand{\normf}[1]{\left\lVert#1\right\rVert_{\textup{F}}}
\newcommand{\normsf}[1]{\lVert#1\rVert_{\textup{F}}}
\newcommand{\inprod}[2]{\left\langle#1, #2\right\rangle}
\newcommand{\inprods}[2]{\langle#1, #2\rangle}
\newcommand{\eps}{\epsilon}
\newcommand{\lam}{\lambda}
\newcommand{\R}{\mathbb{R}}
\newcommand{\N}{\mathbb{N}}
\newcommand{\gK}{\mathcal{K}}
\newcommand{\gT}{\mathcal{T}}
\newcommand{\gH}{\mathcal{H}}
\newcommand{\tgK}{\widetilde{\mathcal{K}}}
\newcommand{\diag}[1]{\textbf{\textup{diag}}\left(#1\right)}
\newcommand{\half}{\frac{1}{2}}
\newcommand{\1}{\mathbbm{1}}
\newcommand{\E}{\mathbb{E}}
\newcommand{\Nor}{\mathcal{N}}
\newcommand{\Tr}{\textup{Tr}}
\newcommand{\ma}{\mathbf{A}}
\newcommand{\id}{\mathbf{I}}
\definecolor{burntorange}{rgb}{0.8, 0.33, 0.0}
\newcommand{\nnz}{\textup{nnz}}
\newcommand{\Par}[1]{\left(#1\right)}
\newcommand{\Brack}[1]{\left[#1\right]}
\newcommand{\Brace}[1]{\left\{#1\right\}}
\newcommand{\Abs}[1]{\left|#1\right|}
\newcommand{\set}{\mathcal{K}}
\newcommand{\dd}{\textup{d}}
\newcommand{\Sym}{\mathbb{S}}
\newcommand{\Uni}{\mathbb{U}}
\newcommand{\msig}{\boldsymbol{\Sigma}}
\newcommand{\mmu}{\boldsymbol{\mu}}
\newcommand{\poly}{\textup{poly}}
\newcommand{\polylog}{\textup{polylog}}
\newcommand{\polyloglog}{\textup{polyloglog}}
\newcommand{\bO}{\Breve{O}}
\newcommand{\event}{\mathcal{E}}
\newcommand{\supp}{\textup{supp}}
\newcommand{\vol}{\textup{Vol}}
\newcommand{\spn}{\textup{span}}
\def\pleq{\preccurlyeq}
\renewcommand\AA{\boldsymbol{\mathrm{{A}}}}
\newcommand\BB{\boldsymbol{\mathrm{{B}}}}
\newcommand\CC{\boldsymbol{\mathrm{{C}}}}
\newcommand\DD{\boldsymbol{\mathrm{{D}}}}
\newcommand\HH{\boldsymbol{\mathrm{{H}}}}
\newcommand\II{\boldsymbol{\mathrm{{I}}}}
\newcommand\MM{\boldsymbol{\mathrm{{M}}}}
\newcommand\NN{\boldsymbol{\mathrm{{N}}}}
\newcommand\LL{\boldsymbol{\mathrm{{L}}}}
\newcommand\PP{\boldsymbol{\mathrm{{P}}}}
\newcommand\QQ{\boldsymbol{\mathrm{{Q}}}}
\newcommand\RR{\boldsymbol{\mathrm{{R}}}}
\newcommand\TT{\boldsymbol{\mathrm{{T}}}}
\newcommand\UU{\boldsymbol{\mathrm{{U}}}}
\newcommand\WW{\boldsymbol{\mathrm{{W}}}}
\newcommand\VV{\boldsymbol{\mathrm{{V}}}}
\newcommand\XX{\boldsymbol{\mathrm{{X}}}}
\newcommand\YY{\boldsymbol{\mathrm{{Y}}}}
\newcommand\ZZ{\boldsymbol{\mathrm{{Z}}}}
\newcommand\PPi{\boldsymbol{\mathrm{\Pi}}}
\newcommand{\Atil}{\widetilde{\AA}}
\newcommand{\Ahat}{\widehat{\AA}}
\newcommand{\Acal}{\mathcal{A}}
\newcommand{\hLL}{\widehat{\LL}}
\newcommand{\hDD}{\widehat{\DD}}
\newcommand{\ha}{\widehat{a}}
\newcommand{\hz}{\widehat{z}}
\renewcommand\aa{\boldsymbol{\mathrm{a}}}
\newcommand\bb{\boldsymbol{\mathrm{b}}}
\newcommand\cc{\boldsymbol{\mathrm{c}}}
\renewcommand\dd{\boldsymbol{\mathrm{d}}}
\newcommand\ee{\boldsymbol{\mathrm{e}}}
\renewcommand\gg{\boldsymbol{\mathrm{g}}}
\newcommand\hh{\boldsymbol{\mathrm{h}}}
\renewcommand\ll{\boldsymbol{\mathrm{\ell}}}
\newcommand\rr{\boldsymbol{\mathrm{r}}}
\renewcommand\ss{\boldsymbol{\mathrm{s}}}
\def\tt{\boldsymbol{\mathrm{t}}}
\newcommand\uu{\boldsymbol{\mathrm{u}}}
\newcommand\vv{\boldsymbol{\mathrm{v}}}
\newcommand\ww{\boldsymbol{\mathrm{w}}}
\newcommand\yy{\boldsymbol{\mathrm{y}}}
\newcommand\zz{\boldsymbol{\mathrm{z}}}
\newcommand\xx{\boldsymbol{\mathrm{x}}}
\newcommand\vone{\boldsymbol{\mathrm{1}}}
\newcommand\vzero{\boldsymbol{\mathrm{0}}}
\newcommand\vLL{\vec{\LL}}
\newcommand{\vG}{\vec{G}}
\newcommand{\vH}{\vec{H}}
\newcommand{\vC}{\vec{C}}
\newcommand{\rev}{\textup{rev}}
\newcommand{\und}{\textup{und}}
\newcommand{\blift}{\textup{blift}}
\newcommand{\ER}{\textup{ER}}
\newcommand{\hm}{\hat{m}}
\newcommand{\hn}{\hat{n}}
\newcommand{\hE}{\hat{E}}
\newcommand{\wws}{\ww_{\star}}
\newcommand{\uus}{\uu_{\star}}
\newcommand{\aas}{\aa_{\star}}
\newcommand{\bbs}{\bb_{\star}}
\newcommand{\xxs}{\xx_{\star}}
\newcommand{\yys}{\yy_{\star}}
\newcommand{\vHs}{\vH_{\star}}
\newcommand{\vGs}{\vG_{\star}}
\newcommand{\vCs}{\vC_{\star}}
\newcommand{\Gs}{G_\star}
\newcommand{\bw}{\bar{w}}
\newcommand{\trr}{\tilde{\rr}}
\newcommand{\bww}{\bar{\ww}}
\newcommand{\hx}{\hat{x}}
\newcommand{\hxx}{\hat{\xx}}
\newcommand{\hyy}{\hat{\yy}}
\renewcommand{\deg}{\textbf{\textup{deg}}}
\newcommand\ctight{c_{\textup{tight}}}
\newcommand\cset{C_{\textup{set}}}
\newcommand\ccolor{C_{\textup{color}}}
\newcommand\ctro{C_{\textup{tro}}}
\newcommand{\ceso}{C_{\textup{ESO}}}
\newcommand{\csign}{C_{\textup{sign}}}
\newcommand{\cps}{C_{\textup{PS}}}
\newcommand{\cbfs}{C_{\textup{BFS}}}
\newcommand{\cadk}{C_{\textup{ADK}}}
\newcommand{\cess}{C_{\textup{ESS}}}
\newcommand{\bv}{\bar{v}}
\newcommand{\bV}{\bar{V}}
\newcommand{\tsolve}{\gT_{\textup{solve}}}
\newcommand{\calF}{\mathcal{F}}
\newcommand{\calE}{\mathcal{E}}
\newcommand{\ESOalgo}{\textsc{ExistentialDecompSparsify}}
\newcommand{\ESalgo}{\textsc{ExistentialSparsify}}
\newcommand{\PSalgo}{\textsc{DecompSparsify}}
\newcommand{\ROalgo}{\textsc{Rounding}}
\newcommand{\BFSalgo}{\textsc{BasicFastSparsify}}
\newcommand{\PMROalgo}{\textsc{ProjMinusRankOne}}
\newcommand{\FSalgo}{\textsc{FastSparsify}}
\newcommand{\ERPalgo}{\textsc{ERDecomp}}
\newcommand{\EPalgo}{\textsc{ExpanderDecompADK}}
\newcommand{\ESSalgo}{\textsc{ExpanderSpectralSketch}}
\newcommand{\SSalgo}{\textsc{SpectralSketch}}
\newcommand{\codeInput}{\textbf{Input:} }
\renewcommand{\eps}{\varepsilon}
\newcommand{\ind}{\mathbb{I}}
\title{Eulerian Graph Sparsification \\ by Effective Resistance Decomposition}
\author{Arun Jambulapati \\ University of Michigan \\ \texttt{jmblpati@gmail.com} 
  	\and 
   	Sushant Sachdeva \\ University of Toronto \\ \texttt{sachdeva@cs.toronto.edu}
  	\and
  	Aaron Sidford \\ Stanford University\\ \texttt{sidford@stanford.edu}
  	\and
  	Kevin Tian \\ University of Texas at Austin\\ \texttt{kjtian@cs.utexas.edu}
          \and
          Yibin Zhao \\University of Toronto \\ \texttt{ybzhao@cs.toronto.edu}
}
\date{}
\begin{document}
\maketitle
\thispagestyle{empty}

\begin{abstract}
We provide an algorithm that, given an $n$-vertex $m$-edge Eulerian graph with polynomially bounded weights, computes an $\breve{O}(n\log^{2} n \cdot \varepsilon^{-2})$-edge $\varepsilon$-approximate Eulerian sparsifier with high probability in $\breve{O}(m\log^3 n)$ time (where $\breve{O}(\cdot)$ hides $\polyloglog(n)$ factors). Due to a reduction from [Peng-Song, STOC '22], this yields an $\breve{O}(m\log^3 n + n\log^6 n)$-time algorithm for solving $n$-vertex $m$-edge Eulerian Laplacian systems with polynomially-bounded weights with high probability, improving upon the previous state-of-the-art runtime of $\Omega(m\log^8 n + n\log^{23} n)$. We also give a polynomial-time algorithm that computes $O(\min(n\log n \cdot \varepsilon^{-2} + n\log^{5/3} n \cdot \varepsilon^{-4/3}, n\log^{3/2} n \cdot \varepsilon^{-2}))$-edge sparsifiers,
improving the best such sparsity bound of $O(n\log^2 n \cdot
\varepsilon^{-2} + n\log^{8/3} n \cdot \varepsilon^{-4/3})$ [Sachdeva-Thudi-Zhao, ICALP '24]. 
Finally, we show that our techniques extend to yield the first $O(m\cdot\text{polylog}(n))$ time algorithm for computing $O(n\varepsilon^{-1}\cdot\text{polylog}(n))$-edge graphical spectral sketches, as well as a natural Eulerian generalization we introduce. 

In contrast to prior Eulerian graph sparsification algorithms which used either short cycle or expander decompositions, our algorithms use  a simple efficient effective resistance decomposition scheme we introduce. Our algorithms apply a natural sampling scheme and electrical routing (to achieve degree balance) to such decompositions. Our analysis leverages new asymmetric variance bounds specialized to Eulerian Laplacians and tools from discrepancy theory.

\end{abstract}

\newpage
\thispagestyle{empty}
\tableofcontents

\newpage
\setcounter{page}{1}
\section{Introduction}\label{sec:intro}

Over the past decade, ideas from spectral graph theory have led to a revolution in graph algorithms.
A major frontier for such developments is the design of spectral algorithms for directed graphs. 
Such algorithms have wide-ranging applications from fast algorithms for processing Markov chains (see e.g., \cite{CohenKPPSV16,AhmadinejadJSS19}) to deterministic low-space computation (see e.g., \cite{AhmadinejadKMPS20}).
A fundamental challenge in this setting is the fairly involved machinery used in spectral directed graph algorithms, which include efficient constructions of expander decompositions~\cite{CohenKPPRSV17} and short cycle decompositions~\cite{ChuGPSSW18}. In this paper we focus on the central topic of \emph{spectral sparsification of directed graphs}, for which, this challenge is particularly manifest.

A sparsifier of an undirected graph $G = (V,E, \ww)$ or directed graph $\vG$ is another graph supported on the same set of vertices with fewer edges, that approximately preserves some property.
Several notions of sparsification for undirected graphs have been studied in the literature, e.g., spanners~\cite{BaswanaS03, ThorupZ05}, which approximately preserve shortest path distances, and cut sparsifiers~\cite{BenczurK96}, which approximately preserve cut sizes.
Spectral sparsification~\cite{SpielmanT04} has been particularly influential in the design of graph algorithms.
An $\eps$-approximate undirected spectral sparsifier (henceforth, $\eps$\emph{-approximate undirected sparsifier}) $H = (V, E', \ww')$ of undirected $G$ approximately preserves the quadratic form of $G$'s graph Laplacian, i.e., for all $\xx \in \R^V,$ 
\begin{equation}\label{eq:undir_lap_approx}
\begin{gathered}
(1 - \eps) \xx^\top\LL_G\xx \le \xx^\top\LL_H\xx \le (1 + \eps) \xx^\top\LL_G\xx,
\text{ where } \xx^{\top}\LL_G\xx = \sum_{e = (u,v) \in E} \ww_e(\xx_u - \xx_v)^2,
\end{gathered}
\end{equation}
where $\LL_G$ and $\LL_H$ are the undirected Laplacian matrices of $G$ and $H$ (see Section~\ref{sec:prelims} for notation), and \eqref{eq:undir_lap_approx} is equivalent to $(1 - \eps)\LL_G \preceq \LL_H \preceq (1 + \eps)\LL_G$.
Spectral sparsification generalizes cut sparsification and was key to the advent of nearly-linear time Laplacian systems solvers~\cite{SpielmanT04}. 

Simple and efficient algorithms for computing undirected spectral sparsifiers with nearly-optimal guarantees are known.
Spielman and Srivastava~\cite{SpielmanS08} showed that independently sampling (and reweighting) $O(n\varepsilon^{-2} \log n)$ edges of an $n$-vertex graph, with probability proportional to their \emph{effective resistances} (a graph-theoretic analog of leverage scores), produces a spectral sparsifier. 
All effective resistances can be estimated in $\bO(m\log n)$ time\footnote{When discussing a graph clear from context with $n$ vertices and edge weight ratio bounded by $U$, we use the $\bO$ notation to hide $\polyloglog(nU)$ factors for brevity (in runtimes only).} using fast Laplacian system solvers~\cite{JambulapatiS21} (see Lemma~\ref{lem:approx_er}) -- this step dominates the runtime for undirected spectral sparsification. 
Additionally, Batson, Spielman, and Srivastava~\cite{BatsonSS09} showed spectral sparsifiers with $O(n\varepsilon^{-2})$ edges exist, which is optimal \cite{BatsonSS09, CarlsonKST19} and constructible in near-linear time \cite{LeeS17, JambulapatiRT23}.

Obtaining correspondingly simple and fast sparsification algorithms and optimal sparsity bounds for directed graphs remains elusive.
Even proposing useful notions of directed sparsification was challenging; any sparsifier of the complete, directed, bipartite graph, i.e., the graph with a directed edge from every node in one side of the bipartition to the other, that approximately preserves all directed cuts cannot delete any edges.
The influential work~\cite{CohenKPPRSV17} overcame this bottleneck by restricting their attention to directed Eulerian graphs (where every vertex has equal weighted in-degree and out-degree). 
Further, \cite{CohenKPPRSV17} showed that their sparsification notion suffices for numerous applications, including fast solvers for all directed Laplacian linear systems (not necessarily corresponding to an Eulerian graph), overviewed in Section~\ref{sec:apps}. In this paper, we consider the following definition of Eulerian sparsification closely related to that of  \cite{CohenKPPRSV17}.\footnote{The key difference is that we add the $E(\vH) \subseteq E$ restriction.}

\begin{definition}[Eulerian sparsifier]\label{def:eulerian_sparsifier}
$\vH$ is an \emph{$\eps$-approximate Eulerian sparsifier} of $\vG = (V,E,\ww)$ if $\vH$ and $\vG$ are both Eulerian, $V(\vH) = V$, and for $G \defeq \und(\vG)$,  we have
\begin{equation}\label{eq:eulerian_sparsifier_def}\normop{\LL_G^{\frac \dagger 2}\Par{\vLL_{\vG} - \vLL_{\vH}}\LL_G^{\frac \dagger 2}} \le \eps,\text{ and } E(\vH) \subseteq E.\end{equation}
\end{definition}

\Cref{def:eulerian_sparsifier} generalizes the notion of undirected sparsification (\Cref{fact:dirclose_undirclose}).
While useful in applications, Definition~\ref{def:eulerian_sparsifier} poses computational challenges. 
Eulerian sparsifiers preserve exact degree balance, so in contrast to undirected sparsifiers, one cannot simply sample edges independently to compute sparsifiers. There have been two broad approaches for addressing this key challenge.

The first approach leverages \emph{expander decompositions} and is related to one used in \cite{SpielmanT04} to sparsify undirected graphs. \cite{CohenKPPRSV17} followed such an approach and their algorithm consists of decomposing the Eulerian graph $\vG$ into expanders, sampling edges independently inside the expanders, and then fixing the resulting degree imbalance by adding edges; this resulted in sparsifiers that did not necessarily satisfy the $E(\vH) \subseteq E$ property in \eqref{eq:eulerian_sparsifier_def}. This approach was refined in~\cite{AhmadinejadPPSV23} (using cycle decompositions as in the second approach below, but not necessarily short ones), resulting in an algorithm for constructing Eulerian sparsifiers with $O(n\varepsilon^{-2}  \log^{20} n)$ edges in $O(m\log^7 n)$ time.
Existing near-linear time expander decomposition methods \cite{SaranurakW19, AgassyDK23} incur several logarithmic factors in the running time and (inverse) expansion quality, leading to these large, difficult to improve, polylogarithmic factors in the running time and sparsity.

The second approach leverages that most the edges in $\vG$ can be decomposed into edge-disjoint short cycles, termed a \emph{short cycle decomposition}.  \cite{ChuGPSSW18} pioneered this approach and sampled the edges in a coordinated manner within each cycle to preserve degree balance.
Advances in short cycle decompositions~\cite{LiuSY19, ParterY19, SachdevaTZ23} resulted in an $m^{1 + o(1)}$-time algorithm for constructing Eulerian sparsifiers with $O(n\varepsilon^{-2}\log^3 n)$ edges. 
Short cycle decompositions yield Eulerian sparsifier constructions with significantly improved sparsity compared to the expander decomposition approach, at the cost of large $m^{o(1)}$ factors in running time.

In summary, all prior algorithms for constructing Eulerian sparsifiers use either expander decomposition or short cycle decomposition, which result in substantial polylogarithmic factors (or larger) in sparsities and runtimes. More broadly, large gaps seem to remain in our understanding of efficient algorithms for constructing Eulerian sparsifiers and the optimal sparsity achievable. 

\subsection{Our results}\label{ssec:results}

We present a new sparsification framework that allows one to preserve exact degree balance while sampling, as in Eulerian sparsification, and yet analyze the sampling error as if the edges were sampled independently.
Our framework is simple and intuitive, as it is based on randomly signing multiplicative reweightings to edges, and using electrical flows to fix the degree balance.
Combining our framework with a lightweight graph-theoretic construction, \emph{effective resistance decomposition} (Definition~\ref{def:er_partition}), we obtain the following Eulerian sparsification result.

\begin{restatable}{theorem}{restatefastsparsify}\label{thm:fastsparsify}
Given Eulerian $\vG = (V, E, \ww)$ with $|V| = n$, $|E| = m$, integral $\ww \in [1, \poly(n)]^E$ and $\eps \in (0, 1)$, $\FSalgo$ (Algorithm~\ref{alg:fastsparse}) in $\bO\Par{m\log^3 n}$ time returns Eulerian $\vH$ that w.h.p.,\footnote{In the introduction only, we use the abbreviation ``w.h.p.'' (``with high probability'') to mean that a statement holds with $n^{-C}$ failure probability for an arbitrarily large constant $C$ (which affects other constants in the statement). In the formal variants of theorem statements later in the paper, we state precise dependences on failure probabilities.} is an $\eps$-approximate Eulerian sparsifier of $\vG$ with $|E(\vH)| =
    O\Par{{n}{\eps^{-2}}\log^2(n)\log^2\log\Par{n}}$\,.
\end{restatable}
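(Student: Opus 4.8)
The plan is to instantiate the new framework sketched in the introduction: (i) compute an effective resistance decomposition of $\vG$ (Definition~\ref{def:er_partition}), which should partition most edges into pieces where effective resistances are roughly uniform and small; (ii) on each piece, apply a random $\pm 1$ sign to a multiplicative reweighting of each edge — this is the sampling step, and the key point is that signing (rather than deletion/duplication) keeps the expectation exactly equal to the original Laplacian while letting us analyze the variance as if edges were independent; (iii) the signed reweighting destroys exact degree balance, so route the resulting demand vector through an electrical flow on the underlying undirected graph $G$ and add the corresponding (small) correction to restore the Eulerian property. Repeat this basic operation for $O(\log n)$ rounds (or within a recursive sparsification schedule as in \cite{ChuGPSSW18, AhmadinejadPPSV23}) to drive the edge count down to $\tO(n\eps^{-2})$, each round shrinking $m$ by a constant factor until $m = \tO(n\eps^{-2}\log n)$.

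For the \textbf{error analysis} I would track the operator-norm quantity in \eqref{eq:eulerian_sparsifier_def}, i.e. $\normop{\LL_G^{\dagger/2}(\vLL_{\vG}-\vLL_{\vH})\LL_G^{\dagger/2}}$. Writing $\vLL_{\vH}-\vLL_{\vG}$ as a sum of a ``sampling'' term (the signed reweightings) and a ``routing'' term (the electrical correction), I would bound each separately. For the sampling term, the random signs make it a sum of independent mean-zero matrices, so a matrix Bernstein / non-commutative Khintchine bound applies; the novelty the abstract advertises is an \emph{asymmetric variance bound} tailored to Eulerian Laplacians — the point being that for directed/Eulerian objects $\normop{\cdot}$ is not controlled by a single PSD variance proxy, so one needs to bound both $\E[XX^\top]$ and $\E[X^\top X]$ (in the $\LL_G$-norm) and these are genuinely different. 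The effective-resistance decomposition is what makes these variance proxies small: on each piece, the per-edge contribution to the variance is controlled by (leverage score)$\times$(relative reweighting)$^2$, and uniform-small effective resistance means each edge contributes little, with the total bounded by (number of pieces)$\times$(something like $n$), which is where the $\log^2 n$ sparsity comes from. For the routing term, the electrical flow that fixes the demand created by the signs has energy bounded by the same effective-resistance quantities (demand is supported on endpoints of reweighted edges, with magnitude equal to the reweighting), so the added edges are few and their Laplacian contribution is spectrally dominated by the sampling error — this is where discrepancy-theoretic tools (Banaszczyk-style / partial-coloring) likely enter to argue that one can \emph{choose} the signs, not just sample them, so that the combined sampling-plus-routing perturbation is small with certainty rather than only in expectation, shaving a logarithmic factor.

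For the \textbf{running time}, each round costs: $\bO(m)$ or $\bO(m\log n)$ to build the effective resistance decomposition (via approximate effective resistances from Lemma~\ref{lem:approx_er}, which needs a Laplacian solve, $\bO(m\log n)$), plus $\bO(m\log n)$ for the electrical routing (another batch of Laplacian solves), possibly times an extra $\log n$ for the discrepancy/derandomization inner loop; over $O(\log n)$ rounds with geometrically shrinking edge counts this telescopes to $\bO(m\log^3 n)$. The $\breve{O}$ (polyloglog) slack absorbs the $\polyloglog(nU)$ cost of the nearly-linear Laplacian solver and the $\log\log$ factors from the recursion depth / sparsity, explaining the $\log^2\log n$ in the final sparsity bound.

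The \textbf{main obstacle} I anticipate is the asymmetric variance bound and its interaction with the routing step: unlike the undirected case, the perturbation $\vLL_{\vG}-\vLL_{\vH}$ is non-symmetric, so controlling $\normop{\LL_G^{\dagger/2}(\cdot)\LL_G^{\dagger/2}}$ requires simultaneously controlling two different PSD variance proxies, and the electrical correction must be shown to not inflate \emph{either} of them — in particular the correction introduces new directed edges whose own asymmetric variance must be charged back to the effective-resistance budget of the edges that were reweighted. Making this accounting tight enough to get only $\log^2 n$ (rather than the $\log^{20} n$ of \cite{CohenKPPRSV17} or $\log^3 n$ of the short-cycle route) is the crux, and is presumably where the discrepancy machinery does the heavy lifting by replacing a union bound over rounds with a single global coloring argument.
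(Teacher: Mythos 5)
Your high-level skeleton — ER decomposition (Definition~\ref{def:er_partition}), random $\pm 1$ reweightings inside each piece, an electrical correction to restore degree balance, and an asymmetric matrix-concentration bound tracking both $\sum \AA_e \AA_e^\top$ and $\sum \AA_e^\top \AA_e$ — is the same skeleton the paper uses for Theorem~\ref{thm:fastsparsify}. You even correctly identify that the Eulerian operator-norm error requires two distinct variance proxies. However, the way you propose to analyze the degree-balance correction is where the proof would break. You want to bound the sampling term and the ``routing'' term separately, and you justify the routing bound by an energy heuristic (``demand has magnitude equal to the reweighting, so the added edges are few and spectrally dominated by the sampling error''). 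That argument does not work: the electrical correction inside a cluster spreads over all cluster edges (it is $\WW^2\BB\LL_{H^2}^\dagger\BB^\top(\ww\circ\ss)$), and there is no a priori reason it should be dominated by the sampling term in the asymmetric operator norm. The paper's actual key insight, which you are missing, is \Cref{lemma:varianceproj}: the combined perturbation is exactly $\ww\circ(\PP\ss)$ for the orthogonal projection $\PP$ onto the cluster's circulation space, and an \emph{orthogonal projection cannot increase the matrix variance}. That one linear-algebra fact lets you bound the post-projection variance by the naive independent-signing variance, with no separate accounting for the correction at all; it is what lets the asymmetric variance bound of \Cref{lemma:variance} go through in one shot.

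Two further gaps. First, you attribute a central role in the nearly-linear-time algorithm to discrepancy/partial-coloring tools, claiming they replace the union bound over rounds and shave a log. That is false for Theorem~\ref{thm:fastsparsify}: Algorithm~\ref{alg:fastsparse} and its subroutine \BFSalgo{} use pure Rademacher signings analyzed via matrix Azuma (\Cref{lem:matrix_azuma}) plus a union bound; the Bansal--Jiang--Meka/Reis--Rothvoss partial-coloring machinery enters only in the polynomial-time existential construction (Theorem~\ref{thm:existential}/\ESalgo). Second, you never confront the issue that $\PP\ss$ is not a $\pm 1$ vector — a component can exceed $1$ in magnitude, driving a weight negative. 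The paper handles this by scaling the step to $\eta = \Theta(1/\sqrt{\log n})$, using sub-Gaussianity of projected coordinates (\Cref{lem:proj_sign_small}) to bound $\|\PP\ss\|_\infty$, and this in turn forces $\Theta(\log n)$ iterations per call to \BFSalgo{} and the two-phase schedule in \PSalgo{} (a $\log\log n$-iteration phase with threshold $\ell_1 = 1/\polylog$, then a $\log n$-iteration phase with $\ell_2 = \eps/\poly(n)$) together with a potential argument to show enough edges actually zero out. These omissions are not cosmetic: without the small step size the algorithm produces negative weights, and without the two-phase potential argument you have no handle on why a constant fraction of edges eventually vanish, so you cannot even start the geometric telescoping on which your runtime and sparsity accounting rely. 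Finally, your runtime accounting is internally inconsistent — you say the $O(\log n)$ outer rounds telescope yet still contribute a factor of $\log n$; in the paper the $\log^3 n$ in a single \PSalgo{} call already comes from (iterations per \BFSalgo) $\times$ (\BFSalgo{} calls per piece) $\times$ (cost of a Laplacian solve), with the outer rounds indeed telescoping away.
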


Theorem~\ref{thm:fastsparsify} constructs Eulerian sparsifiers with sparsity within a $\bO(\log^2 n)$ factor of optimal \cite{CarlsonKST19}, in time $\bO( m \log^3 n)$. Our algorithm simultaneously achieves a substantially faster runtime than prior Eulerian sparsification schemes and improves the state-of-the-art sparsity bound (see Table~\ref{tab:sota}). For instance, the prior state-of-the-art Eulerian sparsification algorithm with both $O(n\eps^{-2}\cdot \polylog(n))$ edges and a $O(m \cdot \polylog(n))$ runtime has (up to $O(\poly \log \log n))$ factors an extra $\Omega(\log^{18} n)$ factor in sparsity and an $\Omega(\log^4 n)$ factor in the runtime compared to \Cref{thm:fastsparsify}.

As a corollary of our fast sparsification algorithm (\Cref{thm:fastsparsify}), reductions due to Peng and Song \cite{PengS22} and earlier works on solving (variants of) directed Laplacian systems \cite{CohenKPPSV16, CohenKPPRSV17, AhmadinejadJSS19}, we obtain a host of additional results. The following is a straightforward corollary obtained by a direct reduction given in the main result of \cite{PengS22}.

\begin{restatable}[Eulerian Laplacian solver]{corollary}{restateeulsolver}
\label{cor:eulsolver}
There is  an algorithm which given input Eulerian $\vG = (V,E,\ww)$ with $|V|=n$, $|E|=m$,  $\ww \in [1,\poly(n)]^E$, and  $\bb \in \R^V$, in $\bO\Par{m\log^3\Par{n} + n\log^6\Par{n}}$ time returns $\xx \in \R^V$  satisfying,
 w.h.p.,
$\|\xx - \vLL_{\vG}^\dagger \bb\|_{\LL_G} \leq \eps \|\vLL_{\vG}^\dagger \bb\|_{\LL_G}$ for $G \defeq \und(\vG)$.
\end{restatable}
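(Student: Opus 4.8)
The plan is to derive \Cref{cor:eulsolver} as a direct consequence of \Cref{thm:fastsparsify} together with the black-box reduction of Peng and Song \cite{PengS22}. Their main result shows how to convert any subroutine that, on an Eulerian graph with polynomially bounded weights, computes an $\eps$-approximate Eulerian sparsifier with $q(n)\cdot\eps^{-2}$ edges in time $t(n,m)$, into an algorithm for solving Eulerian Laplacian systems; the reduction is a sparsified directed analog of block Gaussian elimination, which builds a chain of (approximate) Schur complements and re-sparsifies at each level, then uses the chain as a preconditioner. So the first step is to invoke this theorem with the sparsifier of \Cref{thm:fastsparsify}, for which $q(n) = \bO(n\log^2 n)$ and $t(n,m) = \bO(m\log^3 n)$. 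Before doing so I would record that all hypotheses are met: the input is already Eulerian, its weights already lie in $[1,\poly(n)]$, and any normalization Peng--Song require (handling the all-ones kernel, rescaling to a bounded spectrum) is classical and costs $\bO(m)$. Note that \Cref{def:eulerian_sparsifier} measures error exactly in the $\cite{CohenKPPRSV17}$ sense on which \cite{PengS22} is built, so the two notions are compatible up to constants.

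Next I would trace the runtime through the reduction. Peng--Song first sparsify the input once, costing $t(n,m) = \bO(m\log^3 n)$ and replacing $\vG$ by a graph with $\bO(n\log^2 n)$ edges — this is the source of the $m\log^3 n$ term, after which the quantity $m$ never reappears. All subsequent work — building the chain of (approximate) Schur complements with re-sparsification at each level, and then using the chain as a preconditioner in $O(\log(1/\eps))$ rounds of preconditioned Richardson iteration to obtain $\xx$ with $\|\xx - \vLL_{\vG}^\dagger\bb\|_{\LL_G} \le \eps\|\vLL_{\vG}^\dagger\bb\|_{\LL_G}$ — operates exclusively on Eulerian graphs with $\bO(n\,\polylog n)$ edges (Peng--Song never materialize the dense Schur complements, maintaining sparse approximations throughout), and by the accounting of \cite{PengS22} composes to $\bO(n\log^6 n)$. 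In particular, for $\eps \ge 1/\poly(n)$ we have $\log(1/\eps) = O(\log n)$ and each Richardson round costs only $\bO(n\,\polylog n)$, so the final solve is absorbed into this term, which explains the absence of an explicit $\eps$-dependence in the stated runtime. Summing yields $\bO(m\log^3 n + n\log^6 n)$. The $1 - n^{-C}$ success probability follows by running each of the $O(\log n)$ internal sparsification calls at failure probability $n^{-C'}$ for a suitably large $C'$ — only a constant-factor overhead by \Cref{thm:fastsparsify} — together with standard amplification of the iterative solver.

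The only content beyond this bookkeeping — and the point I would be most careful about — is checking that \Cref{def:eulerian_sparsifier} is exactly the notion Peng--Song's reduction consumes at \emph{every} recursion level, not just the top. Two things to verify: first, that our additional restriction $E(\vH)\subseteq E$ only strengthens the guarantee and is harmless for the reduction (it is, since the reduction only uses the spectral approximation property); and second, that the sparsification oracle is called inside the recursion on Schur complements of the original graph, which must still satisfy the polynomially-bounded-weight hypothesis of \Cref{thm:fastsparsify} — Schur complements of $\poly(n)$-conditioned Eulerian graphs remain $\poly(n)$-conditioned after the standard truncation of negligibly small entries, so the hypothesis holds at every level. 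The cleanest write-up states Peng--Song's theorem as a black-box lemma with its precise oracle hypotheses, verifies that \Cref{thm:fastsparsify} meets them, and then performs the runtime and probability accounting above.
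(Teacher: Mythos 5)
Your proposal is correct and follows essentially the same route as the paper: the paper proves the corollary by stating the Peng--Song reduction as a black box (Proposition~\ref{prop:ps22}), substituting the sparsity $\cS(n,U,1) = \bO(n\log^2 n)$ and runtime $\cT(m,n,U,1,\delta) = \bO(m\log^3 n)$ from Theorem~\ref{thm:fastsparsifydetailed}, and doing the same bookkeeping you describe (the $m\log^3 n$ term from the initial sparsification, the $n\log^6 n$ term from re-sparsifying $\bO(n\log^2 n)$-edge graphs $O(\log n)$ times inside the chain). The extra narration you give about Schur complements, truncation, and preconditioned Richardson is accurate but superfluous once Peng--Song's result is invoked abstractly, which is what the paper does.
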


The runtime of Corollary~\ref{cor:eulsolver} improves upon the prior state-of-the-art claimed in the literature of $\bO( m\log^8 n + n\log^{23} n)$ (see Appendix C, \cite{PengS22}). Up to small polylogarithmic factor overheads in runtimes, our Eulerian Laplacian solver also implies a solver for all directed Laplacians 
(Corollary~\ref{cor:dirlapsolver}), and fast high-accuracy approximations for directed graph primitives such as computation of stationary distributions, mixing times, Personalized PageRank vectors, etc., as observed by \cite{CohenKPPSV16, AhmadinejadJSS19}. We state these additional applications in Section~\ref{sec:apps}.

We further ask: what is the optimal number of edges in an Eulerian sparsifier? By combining our new approach with recent advances in discrepancy theory due to Bansal, Jiang, and Meka \cite{BansalJM23}, we obtain the following improved sparsity bound over Theorem~\ref{thm:fastsparsify}. 

\begin{restatable}{theorem}{restateexistential}\label{thm:existential}
    Given Eulerian $\vec{G} = (V,E,\ww)$ with $|V| = n$, $|E| = m$, $\ww \in [1,\poly(n)]^E$ and $\epsilon \in (0,1)$, \ES (Algorithm~\ref{alg:existsparse_new}) in  $\poly(n,\eps^{-1})$ time returns Eulerian $\vec{H}$ such that w.h.p.\ $\vH$ is an $\eps$-approximate Eulerian sparsifier of $\vG$ with
    \begin{gather*}
    |E(\vH)| = O\left(\min\Brace{\frac{n\log n}{\epsilon^2} + \frac{n \log^{5/3} n}{\epsilon^{4/3}},\; \frac{n \log^{3/2} n}{\eps^{2}}}\right).
    \end{gather*}
\end{restatable}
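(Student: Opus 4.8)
The plan is to combine the effective resistance decomposition with a discrepancy-based rounding of the random-signing framework, replacing the independent-sampling martingale argument used for Theorem~\ref{thm:fastsparsify} with a deterministic discrepancy bound. First I would recall the setup from the fast algorithm: using an effective resistance decomposition (Definition~\ref{def:er_partition}) of $\vG$, every edge is assigned a sampling probability $p_e$ proportional to (an overestimate of) its leverage score $\ww_e r_e$, so that $\sum_e p_e = O(n\polylog n)$, and for each edge we consider the signed reweighting $\xi_e \cdot (1/p_e - 1) \cdot \ww_e$ for $\xi_e \in \{\pm 1\}$. The key structural fact, established earlier in the paper, is that if the $\xi_e$ are chosen so that the induced degree imbalance can be cancelled by an electrical flow (i.e.\ each cycle in the decomposition or each cluster is sign-balanced appropriately) and so that a certain quadratic-form error matrix has small operator norm, then the reweighted-and-corrected graph $\vH$ is an $\eps$-approximate Eulerian sparsifier with $O(\sum_e p_e)$ edges in expectation. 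The reduction of sparsification to discrepancy is then: we must two-color the edges so that, simultaneously, (i) the ``asymmetric variance'' error term $\normop{\LL_G^{\dagger/2}(\sum_e \xi_e \MM_e)\LL_G^{\dagger/2}}$ is $O(\eps)$, where $\MM_e$ are the rank-one-ish matrices coming from the directed Laplacian contributions of each edge, and (ii) the degree-imbalance vector $\sum_e \xi_e \bb_e$ is controllable (ideally zero or routable cheaply through an electrical flow whose added edges don't blow up the count).

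Next I would invoke the Bansal--Jiang--Meka discrepancy machinery \cite{BansalJM23}: given vectors/matrices $v_1,\dots,v_m$ with $\sum \normsop{v_i}^2$ bounded and each $\normsop{v_i}$ small, there is a coloring $\xi \in \{\pm 1\}^m$ with $\normsop{\sum_i \xi_i v_i}$ small — quantitatively, the bound trades off between an $\eps^{-2}$ and an $\eps^{-4/3}$ regime, which is exactly the source of the two terms $\frac{n\log n}{\eps^2} + \frac{n\log^{5/3}n}{\eps^{4/3}}$ versus $\frac{n\log^{3/2}n}{\eps^2}$ in the statement. Concretely, I would iterate a ``partial coloring'' or one-shot BJM rounding: in each round we have a fractional solution (a set of edges with fractional weights summing to the current sparsity budget), we apply the matrix-discrepancy theorem to the normalized edge matrices to halve the support while keeping the spectral error under control, and we re-balance degrees via electrical routing after each round. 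Summing a geometric series of errors across $O(\log n)$ rounds gives the total $\eps$ budget, and the per-round sparsity shrinkage gives the final edge count; the two alternative bounds come from two different ways of instantiating BJM (one optimized for the $\eps^{-2}$ scaling directly, one going through an intermediate $\eps^{-4/3}$ term that wins when $\eps$ is not too small).

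The main obstacle I expect is handling the degree-imbalance / electrical-routing step \emph{inside} the discrepancy argument rather than after it: BJM gives us a coloring controlling the quadratic-form error, but the resulting $\vH$ before correction is not Eulerian, and the electrical flow we add to fix degrees injects new fractional edge weight that must itself be (a) spectrally bounded — here the asymmetric Eulerian variance bounds from the paper are essential, since the routing is along low-effective-resistance paths within a decomposition piece — and (b) accounted for in the sparsity. The cleanest route is to fold the imbalance vector into the discrepancy instance as an extra block (so we simultaneously control $\normsop{\sum \xi_e \MM_e}$ and $\normsf{\sum \xi_e \bb_e}$, or rather the $\LL_G^\dagger$-norm of the imbalance), which forces us to verify that augmenting the matrix list with these rank-one degree terms does not spoil the $\sum \normsop{v_i}^2$ budget — this is where the effective resistance decomposition's guarantee that each piece has bounded total leverage and small diameter in the resistance metric does the real work. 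Once that bookkeeping is in place, the rest is a routine composition of Theorem~\ref{thm:fastsparsify}'s framework lemmas with the BJM bound and a union bound over the $O(\log n)$ rounds.
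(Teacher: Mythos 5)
You correctly identify the central tension -- that a BJM-style coloring controls the operator-norm error but not the degree imbalance -- but your proposed resolutions are the wrong ones, and the paper's actual device is missing from your plan. The paper does not search for an unconstrained coloring and then fix the imbalance; it constrains the partial coloring to lie \emph{inside} the circulation subspace $S = \bigcup_i S_i$ from the start (each $S_i$ is $\{\xx : \supp(\xx) \subseteq E(\vG^{(i)}),\; \BB_{\vG^{(i)}}^\top \WW \xx = \vzero\}$, see Line~\ref{line:sdefine} of Algorithm~\ref{alg:eps}). Corollary~\ref{cor:partialcolor_variance} is precisely the statement that the Reis--Rothvoss partial-coloring framework still produces a point in $[-1,1]^m$ with a constant fraction of tight coordinates, even when the Gaussian-measure lower bound only holds inside a prescribed subspace $S$. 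What makes the Gaussian-measure bound available on $S$ is the pair Lemma~\ref{lemma:varianceproj} (projecting the matrices onto $S$ does not increase the matrix variance, so the $\sigma^2$ in Proposition~\ref{prop:bjm_measure} is still controlled by the ER-diameter bound of Lemma~\ref{lemma:variance}) and Lemma~\ref{lemma:atilequala} (inside $S$, the operator-norm body defined by the projected matrices $\Atil_e$ coincides with the one defined by the raw $\AA_e$). Thus the coloring $\xx$ \emph{exactly} preserves degree balance; the only residual imbalance comes from numerical error, and $\ROalgo$ routes an inverse-polynomial residual through a spanning tree with negligible cost.

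Your alternatives do not close the gap. Folding the imbalance vector into the discrepancy objective as an extra block cannot work: to route a residual $\dd = \BB^\top(\ww \circ \xx)$ through a spanning tree, Lemma~\ref{lemma:rounding} incurs spectral error proportional to $n \norm{\dd}_1$, so you would need the discrepancy bound to force $\norm{\dd}_1 \lesssim \eps/n$, far below what any BJM-type bound gives (it yields $O(\sqrt{\cdot})$ scale, not inverse-polynomial). Routing the imbalance via electrical flows inside each decomposition piece, rather than setting it to zero, would introduce fractional weight on edges not currently selected, which defeats the purpose of the partial coloring (the sparsity gain comes exactly from the tight coordinates $\xx_e = -1$, see the $\ctight \hm$ term in \eqref{eq:eso_bounds}). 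You also describe a leverage-score-proportional sampling reweighting $\xi_e(1/p_e - 1)\ww_e$, which is not how the algorithm moves; it reweights by $(1+\xx_e)\ww_e$ with $\xx_e \in [-1,1]$ and deletes an edge when $\xx_e = -1$. In short, the missing ingredient is the observation that projection onto the circulation subspace is variance non-increasing, so one can (and should) do the discrepancy search \emph{after} projecting, rather than before.
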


\begin{table}[ht!]
    \centering
    \begin{tabular}{{c}{c}{c}{c}}
    \toprule
      Method  &  Sparsity & Runtime & Approach \\
      \midrule
      \cite{CohenKPPRSV17} & $n\eps^{-2}\log^C n$ & $m \log^C n$ & expanders \\
      \cite{ChuGPSSW18} & $n\eps^{-2}\log^4 n$ & $mn$ & short cycles \\
      \cite{ChuGPSSW18, LiuSY19, ParterY19} & $n\eps^{-2} \log^{k} n $ & $m + n^{1+O(\frac 1 k)}$ & short cycles \\
      \cite{ParterY19} & $n^{1+o(1)} +n\eps^{-2}\log^4 n$  & $m \log^C n$ & short cycles \\
      \cite{AhmadinejadPPSV23} & $n\eps^{-2}\log^{12} n$ & \textup{existential} & SV sparsification \\
      \cite{AhmadinejadPPSV23} & $n\eps^{-2}\log^{20} n$ & $m\log^7 n$ & SV sparsification \\
      \cite{ParterY19, SachdevaTZ23} & $n\eps^{-2}\log^3 n$& 
 $m^{1 + \delta}$ & short cycles \\
      \cite{SachdevaTZ23} & $n\eps^{-2} \log^2 n + n\eps^{-4/3}\log^{8/3}n$ & $n^C$ & short cycles \\
       \midrule
      Theorem~\ref{thm:fastsparsify} & $n\eps^{-2}\log^2 n$ & $m\log^3 n $& ER decomposition \\
      Theorem~\ref{thm:existential} & $n\eps^{-2} \log n + n\eps^{-4/3}\log^{5/3} n$ & $n^C$ & ER decomposition \\
      Theorem~\ref{thm:existential} & $n\eps^{-2} \log^{3/2} n$ & $n^C$ & ER decomposition \\
       \bottomrule
    \end{tabular}
    \caption{\textbf{Eulerian sparsification algorithms.} All results apply to Eulerian $\vG = (V, E, \ww)$ with $n \defeq |V|$ and $m \defeq |E|$. For simplicity, $\ww \in [1, \poly(n)]^E$ and all algorithms fail with probability $\poly(\frac 1 n)$. $C$ denotes an unspecified (large) constant, $\delta$ denotes an arbitrarily small constant, and we hide $\polyloglog(n)$ factors. The third row requires $k \ge 4$. The \cite{CohenKPPRSV17} sparsifiers were not reweighted subgraphs of the original graph, but all other sparsifiers in this table are.}
    \label{tab:sota}
\end{table}

For $\eps \le \log^{-1} n$, Theorem~\ref{thm:existential} establishes that $O(n\eps^{-2} \log n)$-edge Eulerian sparsifiers exist and are constructible in polynomial time. Moreover for any $\eps,$ the sparsity is at most $n\eps^{-2}\log^{\frac{3}{2}} n$. 
In Appendix~\ref{sec:conjectures}, we discuss potential directions towards showing the existence of even sparser Eulerian sparsifiers, e.g., with only $O(n\eps^{-2})$ nonzero edge weights (matching the optimal sparsity for undirected graph sparsifiers \cite{BatsonSS09, CarlsonKST19}).

We further demonstrate the power of our framework by giving an efficient construction of graphical spectral sketches~\cite{AndoniCKQWZ16, JambulapatiS18, ChuGPSSW18}, i.e., sparse graphs which satisfy \eqref{eq:undir_lap_approx} 
for any fixed vector $\xx \in \R^V$ w.h.p.\ (rather than for all $\xx \in \R^V$). The only previously known construction of graphical spectral sketches was based on short cycle decompositions~\cite{ChuGPSSW18, LiuSY19, ParterY19}. 
We provide an algorithm that efficiently computes sparse weighted subgraphs that are simultaneously graphical spectral sketches, spectral sparsifiers (for a larger value of $\epsilon$), and sketches of the pseudoinverse in a suitable sense.

\begin{theorem} \label{thm:fastsketch_undir}
    There is an algorithm that, given undirected graph $G = (V,E,\ww)$ with $|V| =
    n$, $|E| = m$, $\ww \in [1, \poly(n)]^E$ and $\eps \in (0, \frac{1}{100})$,  in $\bO\Par{m\log^{9}(n)}$ time
    returns an undirected graph $H$ such that $|E(H)| = 
    O\Par{n \epsilon^{-1} \log^9(n)
    	\log^2\log(n)}$ and the following properties hold.
    \begin{enumerate}
    \item $H$ is a $\sqrt{\eps}$-approximate spectral sparsifier of $G$ w.h.p.
    \item $H$ is a $\eps$-approximate graphical sketch of $G$, i.e., for an
        arbitrarily fixed vector $\xx \in \R^V$, w.h.p.\ over $H$, $\left|\xx^\top (\LL_H - \LL_G) \xx\right| \le \eps \cdot \xx^\top \LL_H \xx$.
    \item $H$ is a $\eps$-approximate inverse sketch of $G$, i.e., for an
        arbitrarily fixed vector $\xx \in \R^V$, w.h.p.\ over $H$,
    $\left|\xx^\top (\LL_H^\dagger - \LL_G^\dagger) \xx\right|
        \le \eps \cdot \xx^\top \LL_H^\dagger \xx$.
    \end{enumerate}
    \end{theorem}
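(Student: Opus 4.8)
The plan is to build $H$ by \emph{iterative sparsification}, as in prior spectral-sketch constructions, but using the effective-resistance-decomposition sampler developed in this paper rather than a short-cycle decomposition. Concretely, produce a chain $G=G_0,G_1,\dots,G_T$ with $T=O(\log n)$ rounds, where $G_{i+1}$ is obtained from $G_i$ by one single-pass sparsification step built from the effective-resistance decomposition (\Cref{def:er_partition}) and an expectation-preserving reweighting/subsampling within each part --- the undirected specialization of the primitive behind $\FSalgo$ (\Cref{thm:fastsparsify}), for which the electrical-routing/degree-correction step is unnecessary since undirected Laplacians are automatically balanced, so the step reduces to plain reweighted subsampling with leverage-score-type probabilities. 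Stop once $|E(G_i)|=O(n\eps^{-1}\polylog n)$ and set $H\defeq G_T$. The per-round cost is dominated by $\bO(m_i\log n)$ for the effective-resistance estimates (\Cref{lem:approx_er}) together with the $\polylog(n)$-overhead subroutines inherited from the \Cref{thm:fastsparsify} machinery; summing the geometric edge counts and collecting the polylog factors arising from the $O(\log n)$ rounds, the $O(\log n)$ leverage buckets of each decomposition, and w.h.p.\ boosting yields the claimed $\bO(m\log^9 n)$ runtime and $O(n\eps^{-1}\log^9 n\log^2\log n)$ sparsity. (An initial call to an off-the-shelf near-linear-time $\Theta(1)$-sparsifier may be used to reduce $m$ to $\bO(n)$ up front, if convenient.)

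For item~(1) I would rerun the matrix-concentration analysis the paper already uses for \Cref{thm:fastsparsify} (with the ordinary undirected variance bound in place of the asymmetric Eulerian one): each round satisfies $(1-\delta)\LL_{G_i}\pleq\LL_{G_{i+1}}\pleq(1+\delta)\LL_{G_i}$ w.h.p.\ with $\delta=O(\sqrt\eps/\log n)$, and composing over $T=O(\log n)$ rounds gives $(1-\sqrt\eps)\LL_G\pleq\LL_H\pleq(1+\sqrt\eps)\LL_G$, i.e.\ $H$ is a $\sqrt\eps$-approximate spectral sparsifier. The target budget $n\eps^{-1}\polylog n=n(\sqrt\eps)^{-2}\polylog n$ is precisely what matrix concentration permits at accuracy $\sqrt\eps$, so item~(1) and the sparsity bound are consistent.

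The new ingredient is item~(2), and the bookkeeping here is where I expect the main difficulty. Fix $\xx\in\R^V$ and set $M_i\defeq\xx^\top\LL_{G_i}\xx$. Since each round's sampler is unbiased, $\{M_i\}_{i=0}^T$ is a martingale with respect to the filtration $\calF_i$ generated by the first $i$ rounds' randomness. One bounds the per-round conditional variance $\Var\Brack{M_{i+1}\mid\calF_i}$ and the maximum per-round increment, each reducing to a sum of per-edge terms of the form (reweighted) $w_e(\xx_u-\xx_v)^2\le\tau_e(G_i)\cdot M_i$ by Cauchy--Schwarz, with the leverage scores $\tau_e(G_i)$ controlled bucket-by-bucket by the effective-resistance decomposition. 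Freedman's inequality --- together with item~(1) to ensure $M_i=\Theta(M_0)$ along the chain --- then gives $|M_T-M_0|\le\eps M_T$ w.h.p., i.e.\ $|\xx^\top(\LL_H-\LL_G)\xx|\le\eps\,\xx^\top\LL_H\xx$. Because this is a \emph{scalar} martingale for a single fixed vector, no union bound over a net of test directions is needed, which is what makes $\eps^{-1}$ (rather than $\eps^{-2}$) edges sufficient; the delicate part is choosing the per-round leverage thresholds and sampling rates so that simultaneously (a) the variances summed over all $T$ rounds are $O(\eps^2M_T^2/\log n)$, (b) the retained edge count telescopes down to $n\eps^{-1}\polylog n$, and (c) this is compatible with the stricter matrix-concentration requirement of item~(1). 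In particular the two analyses are coupled: item~(1) is invoked inside the item~(2) variance computation (to replace $M_i$ by $M_0$ and $\tau_e(G_i)$ by $\tau_e(G)$).

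Item~(3) follows from items~(1) and~(2). Writing $\LL_H^\dagger-\LL_G^\dagger=\LL_H^\dagger(\LL_G-\LL_H)\LL_G^\dagger$ gives $\xx^\top(\LL_H^\dagger-\LL_G^\dagger)\xx=\zz^\top(\LL_G-\LL_H)\zz+\zz^\top(\LL_G-\LL_H)\ee$, where $\zz\defeq\LL_G^\dagger\xx$ and $\ee\defeq(\LL_H^\dagger-\LL_G^\dagger)\xx$. The first term is controlled by item~(2) applied to the \emph{fixed} vector $\zz$ (which depends only on $\xx$ and $G$, not on the randomness in $H$), giving $|\zz^\top(\LL_G-\LL_H)\zz|\le\eps\,\zz^\top\LL_H\zz=O(\eps)\,\xx^\top\LL_G^\dagger\xx$. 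The cross term is at most $\|\LL_G^{1/2}\zz\|_2\cdot\normsop{\LL_G^{\dagger/2}(\LL_G-\LL_H)\LL_G^{\dagger/2}}\cdot\|\LL_G^{1/2}\ee\|_2$, where $\normsop{\LL_G^{\dagger/2}(\LL_G-\LL_H)\LL_G^{\dagger/2}}\le\sqrt\eps$ by item~(1) and, expanding $\ee=\LL_H^\dagger(\LL_G-\LL_H)\LL_G^\dagger\xx$ and applying item~(1) twice more, $\|\LL_G^{1/2}\ee\|_2=O(\sqrt\eps)\,\|\LL_G^{1/2}\zz\|_2$; hence the cross term is also $O(\eps)\,\xx^\top\LL_G^\dagger\xx$. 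Summing and converting via $\xx^\top\LL_G^\dagger\xx=(1\pm O(\sqrt\eps))\,\xx^\top\LL_H^\dagger\xx$ (item~(1) again) yields $|\xx^\top(\LL_H^\dagger-\LL_G^\dagger)\xx|=O(\eps)\,\xx^\top\LL_H^\dagger\xx$, and running the whole construction at accuracy $\eps/O(1)$ throughout (the hypothesis $\eps<\tfrac1{100}$ keeps all $(1\pm O(\sqrt\eps))$ factors bounded) turns this into the stated $\le\eps\,\xx^\top\LL_H^\dagger\xx$.
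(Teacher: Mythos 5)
Your proposal has a genuine gap in the analysis of item~(2), which is exactly where the paper departs from a ``plain'' spectral-sparsification argument. You attribute the improvement from $\eps^{-2}$ to $\eps^{-1}$ edges to the fact that a scalar martingale for a single fixed vector needs no union bound over test directions. But dropping the union bound only saves a $\log n$ factor in the \emph{sparsity}, not a power of $\eps$. Carry out the variance calculation you sketch: per edge with leverage score $\tau_e$, Cauchy--Schwarz gives $(\xx_u-\xx_v)^2 \le \ER(e)\cdot \xx^\top\LL\xx$, so the per-round variance of $M_i=\xx^\top\LL_{G_i}\xx$ is at most $\rho\, M_i^2$ where $\rho$ is the leverage cap. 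Freedman then forces $\rho = O(\eps^2/\log n)$ to achieve additive error $\eps M_i$, and the resulting edge count is $\Theta(n/\rho) = \Theta(n\eps^{-2}\log n)$ --- the same power of $\eps$ as for the spectral sparsifier, not the claimed $n\eps^{-1}\polylog n$.

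The paper obtains the quadratic improvement in $\eps$ by using an \emph{expander decomposition} (Proposition~\ref{lemma:ex_partition}), not the ER decomposition of \Cref{prop:er_partition}, and then restricting each piece to vertices of combinatorial degree $\ge\beta$. The crucial extra ingredient is the degree-based spectral inequality \Cref{lemma:ex_deg} (from \cite{ChuGPSSW18}): within a $\phi$-expander with weights in $[\bw,2\bw]$, $\norm{\xx}_{\LL}^2 \ge \tfrac{\phi^2\bw}{2}\sum_v \deg_v(\xx_v-\hx)^2$, which gives the per-vertex bound $(\xx_v-\hx)^2 \le 2\phi^{-2}\bw^{-1}\beta^{-1}\xx^\top\LL\xx$ for degree-$\ge\beta$ vertices. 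This squares the exponent: the resulting per-vector sketching error (Lemma~\ref{lemma:ess_guarantee}, Item~\ref{item:ess_4}) scales as $\beta^{-1}$ while the operator-norm error (Item~\ref{item:ess_3}) scales only as $\beta^{-1/2}$. Taking $\beta=\Theta(\eps^{-1}\polylog n)$ simultaneously gives $\sqrt\eps$ spectral accuracy and $\eps$ per-vector accuracy with $O(n\beta\polylog n)$ edges. The paper explicitly flags this: the remark before Lemma~\ref{lemma:ess_guarantee} notes that the ER-decomposition bound alone would give only $\beta^{-1/2}$ per-vector scaling. Your proposed ER-decomposition approach lacks any mechanism to exploit vertex degrees in this way (Lemma~\ref{lemma:reffentry} bounds diagonal entries of $\LL^\dagger$ but not $\LL$), so it cannot reach $\eps^{-1}$.

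Two smaller points. First, even for undirected $G$, the paper does \emph{not} drop the degree-preserving projection: it reduces to a directed signing via \Cref{lemma:undir_to_dir} and a bipartite lift (\Cref{lemma:blift_spectral}), because the orthogonal projection onto circulations is exactly what keeps all partial sums expressible in the form $\sum_e \ss_e \AA_e$ with a controllable scalar martingale (see Lemma~\ref{lemma:bfs_fix}); ``plain reweighted subsampling'' would not yield the projection structure the error analysis relies on. Second, your reduction for item~(3) is essentially the identity underlying \Cref{lemma:sketch_inverse}, and that part of your argument is sound.
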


    While this more general guarantee was also achieved by the short-cycle decomposition based constructions, the previous best construction of a graphical spectral sketch with $n\varepsilon^{-1}\cdot \polylog (n)$ edges required $m^{1+o(1)}$ time~\cite{ParterY19}. 
    Additionally, in \Cref{sec:sketch} we generalize this notion of graphical spectral sketches to Eulerian graphs (Definition~\ref{def:sketch_directed}) and provide analogous runtimes and sparsity bounds for such sketches (Theorem~\ref{thm:fastsketch}); these are the first such results to the best of our knowledge.
    
\subsection{Overview of approach}\label{ssec:summary}

In this paper, we provide a new, simpler framework for sparsifying Eulerian graphs. Despite its simplicity, our approach yields Eulerian sparsification algorithms which improve upon prior work in both runtime and sparsity. We briefly overview our framework and technical contributions here; see Section~\ref{sec:overview} for a more detailed technical overview.

Our framework is motivated by the following simple undirected graph sparsification algorithm.
\begin{itemize}
	\item For all edges $e \in E$ with an effective resistance (ER) smaller than $\rho$, toss an independent coin and either drop the edge or double its weight.
	\item Repeat until there are no edges left with a small ER.
\end{itemize}

It is straightforward to show that this algorithm produces a spectral sparsifier. In each iteration, the algorithm's relative change to the Laplacian (in a multiplicative sense) is  $\sum_{e \in E} \ss_e \AA_e,$ where $\ss_e$ is a random $\pm 1$ sign and $\AA_e = \ww_e \LL_G^{\dagger/2} \bb_e \bb_e^{\top} \LL_G^{\dagger/2}$ denotes the normalized contribution of the edge Laplacian. The key step of the analysis is bounding the total matrix variance $\sum_{e \in E} \AA_e \AA_e^{\top}$, across all iterations. When setting $\rho = c \frac n m$ where $m$ is the current number of edges and $c$ is a sufficiently large constant, the variance contribution for each edge forms an increasing geometric progression (as $m$ decreases geometrically) where the sum is bounded by the last term. Moreover, each edge Laplacian only contributes if its leverage score is at most $\rho$, so $\AA_e \AA_e^{\top} \preceq \rho \AA_e.$ Summing over all edges, the total matrix variance is $\preceq \rho \II$. 
Stopping when $\rho = O(\frac{\eps^2}{\log n})$ for an appropriate constant, standard matrix concentration bounds then show the total relative spectral error is $O(\sqrt{\rho \log n}) \cdot \II = \eps \II$.

Emulating such a strategy for Eulerian graphs faces an immediate obstacle: adding and dropping edges independently might result in a non-Eulerian graph, i.e., one that does not satisfy the  degree balance constraints of an Eulerian graph. In fact, there may be no setting of $\ss \in \{\pm 1\}^E$ for which the relative change in edge weights, $\ww \circ \ss$, satisfies the necessary degree balance.
As mentioned previously, one approach to Eulerian sparsification~\cite{CohenKPPRSV17} independently samples $\pm 1$ signs for edges inside an expander, fixes the resulting degree imbalance, and uses the expansion property to bound the resulting error. 
Another approach, based on short cycle decomposition~\cite{ChuGPSSW18}, toggles cycles, keeping either only the clockwise or counterclockwise edges, thus ensuring degrees are preserved. Additionally,  \cite{AhmadinejadPPSV23} samples $\pm 1$ signs for cycles (not necessarily short) inside an expander. Each of these results in large polylogarithmic factors or worse in their guarantees, due to limitations in algorithms for expander or short-cycle decomposition.

To obtain faster and simpler algorithms with improved sparsity guarantees, we take an alternative approach. As a starting point, consider sampling a random signing $\ss$ on edge Laplacians, and projecting $\ss$ down to the degree balance-preserving subspace. We make the simple, yet crucial, observation: this projection step does not increase the matrix variance (Lemma~\ref{lemma:varianceproj})! This fact, which lets us bound spectral error as we would if all edge signings were independent, has not been exploited previously for efficient degree balance-preserving sparsification to our knowledge.

Our second key contribution is recognizing that to bound the variance of an independent edge Laplacian signing in a subgraph, requiring the subgraph to be an expander is stronger than necessary. In Lemma~\ref{lemma:variance}, we show it suffices to work in subgraphs with bounded ER diameter (implied by expansion in high-degree unweighted graphs, cf.\ Lemma~\ref{lemma:ex_to_er}). 
Decomposing a graph into low ER diameter pieces can be achieved more simply, efficiently, and with better parameters (for our purposes) as compared to expander or short cycle decompositions (Proposition~\ref{prop:er_partition}).

To implement this approach to Eulerian sparsification efficiently, we overcome several additional technical hurdles. 
The first one is ensuring (in nearly-linear time) that the updated edge weight vector is nonnegative; negative weight edges could occur when projecting a large vector to the degree-preserving space. In previous discrepancy works, e.g., \cite{Rothvoss17}, this problem was alleviated by projecting the random vector to the intersection of the subspace with the $\pm 1$ hypercube. This projection is expensive; on graphs it could be implemented with oblivious routings, but unfortunately, the fastest routings of sufficient quality in the literature do not run in nearly-linear time. We show that by scaling down the step size by a polylogarithmic factor and appealing to sub-Gaussianity of random projection vectors, we can ensure the nonnegativity of weights.

Secondly, since the weight updates are small in magnitude, there is no immediate reduction in sparsity. 
Using a careful two-stage step size schedule (see discussion in Section~\ref{sec:algos}), we give a potential argument showing that after adding roughly $\log^2 (n)$ random signings, each projected by solving an undirected Laplacian system, suffices to make a constant fraction of the weights tiny. These tiny edge weights can then be rounded to zero, decreasing the sparsity by a constant factor. 
Combining our framework with state-of-the-art undirected Laplacian solvers gives our overall runtime of $\bO(m\log^3 (n))$ in Theorem~\ref{thm:fastsparsify}.
    
\subsection{Related work}\label{ssec:related}

\paragraph{Undirected sparsifiers and Laplacian solvers.}
 The first nearly-linear time algorithm for solving undirected Laplacian linear systems was obtained in groundbreaking work of Spielman and Teng~\cite{SpielmanT04}. Since then, there has been significant work on developing faster undirected Laplacian solvers~\cite{KoutisMP10, KoutisMP11, PengS14, CohenKMPPRS14, KyngLPSS16, KyngS16, JambulapatiS21, ForsterGLPSY22, SachdevaZ23}, culminating in an algorithm that runs in $\bO(m\log \frac 1 \eps)$ time for approximately solving undirected Laplacian linear systems up to expected relative error $\eps$ (see Proposition~\ref{prop:js21} for a formal statement).

The first spectral sparsifiers for undirected graphs were constructed by Spielman and Teng~\cite{SpielmanT04}, which incurred significant polylogarithmic overhead in their sparsity. Spielman and Srivastava~\cite{SpielmanS08} then gave a simple algorithm for constructing undirected spectral sparsifiers with $O(n\varepsilon^{-2} \log n)$ edges in nearly-linear time. Batson, Spielman, and Srivastava~\cite{BatsonSS09} gave a polynomial time algorithm for constructing undirected spectral sparsifiers with $O(n\varepsilon^{-2})$ edges, and established that this sparsity bound is optimal. Faster algorithms for $O(n\varepsilon^{-2})$-edge undirected sparsifiers were later given in~\cite{LeeS17, LeeS18, JambulapatiRT23}. We also mention an additional notion of sparsification in undirected graphs, \emph{degree-preserving sparsification}, which has been studied in the literature as an intermediary between undirected and Eulerian sparsification \cite{ChuGPSSW18, JambulapatiRT23}. Degree-preserving undirected sparsifiers of sparsity $O(n\eps^{-2})$ were recently shown to exist and be constructible in almost-linear time by \cite{JambulapatiRT23}, motivating our work in the related Eulerian sparsification setting.

\paragraph{Eulerian sparsifiers and directed Laplacian solvers.} The study of efficient directed Laplacian solvers was initiated by Cohen, Kelner, Peebles, Peng, Sidford, and Vladu \cite{CohenKPPSV16}, who established that several computational problems related to random walks on directed graphs can be efficiently reduced to solving linear systems in Eulerian Laplacians. This work also gave an algorithm for solving Eulerian Laplacian linear systems in $O((mn^{2/3} + m^{3/4}n) \cdot \polylog( n))$ time, the first such solver with a runtime faster than that known for linear system solving in general.
Subsequently, the aforementioned authors and Rao \cite{CohenKPPRSV17} introduced the notion of Eulerian sparsifiers and gave the first $O(m \cdot \polylog(n))$-time algorithm for constructing Eulerian sparsifiers with  $O(n\varepsilon^{-2}\cdot\polylog(n))$ edges, based on expander decompositions. They used their method to give the first $m^{1+o(1)}$ time algorithm for solving linear systems in directed Eulerian Laplacians. A follow-up work by the aforementioned authors and Kyng~\cite{CohenKKPPRSV18} later gave an improved $O(m \cdot \polylog(n))$-time solver for directed Laplacian linear systems.

As an alternative approach to Eulerian sparsification, Chu, Gao, Peng, Sachdeva, Sawlani, and Wang~\cite{ChuGPSSW18} introduced the short cycle decomposition, and used it to give an $O(mn)$ time algorithm for computing Eulerian sparsifiers with $O(n\varepsilon^{-2} \log^4 n)$ edges.
Improved short cycle decomposition constructions by Liu, Sachdeva, and Yu~\cite{LiuSY19}, as well as Parter and Yogev~\cite{ParterY19} resulted in an improved running time of $O(m^{1+\delta})$ for any constant $\delta > 0,$ for the same sparsity.

Very recently, Sachdeva, Thudi, and Zhao~\cite{SachdevaTZ23} gave an improved analysis of the short cycle decomposition-based construction of Eulerian sparsifiers from \cite{ChuGPSSW18}, improving the resulting sparsity to $O(n\varepsilon^{-2} \log^3 n)$ edges. They complemented their algorithmic construction with an existential result showing that Eulerian sparsifiers with $\bO(n\varepsilon^{-2} \log^2 n + n\varepsilon^{-4/3} \log^{8/3} n)$ edges exist, using recent progress on the matrix Spencer's conjecture~\cite{BansalJM23}. Our fast algorithm in Theorem~\ref{thm:fastsparsify} yields an improved sparsity compared to the strongest existential result in \cite{SachdevaTZ23} with a significantly improved runtime, and departs from the short cycle decomposition framework followed by that work. Moreover, our existential result in Theorem~\ref{thm:existential}, which also applies \cite{BansalJM23} (combined with our new framework), improves \cite{SachdevaTZ23}'s existential result by a logarithmic factor.

Finally, we note that our applications in Section~\ref{sec:apps} follow from known implications in the literature, e.g., \cite{CohenKPPSV16, AhmadinejadJSS19, PengS22}. In particular, our directed Laplacian linear system solver follows from reductions in \cite{CohenKPPSV16, PengS22}, who showed that an efficient Eulerian sparsification algorithm implies efficient solvers for all directed Laplacian linear systems. Building upon this result, our other applications follow \cite{CohenKPPSV16, AhmadinejadJSS19}, which show how various other primitives associated with Markov chains can be reduced to solving appropriate directed Laplacian systems.

\paragraph{Discrepancy-theoretic approaches to sparsification.} The use of discrepancy-theoretic techniques for spectral sparsification has been carried out in several prior works. First, \cite{ReisR20} showed how to use matrix variance bounds in undirected graphs with the partial coloring framework of \cite{Rothvoss17} to construct linear-sized sparsifiers. Subsequently, this partial coloring-based sparsification algorithm was sped up to run in nearly-linear time by \cite{JambulapatiRT23} and \cite{SachdevaTZ23} showed how to adapt these techniques to the Eulerian sparsification setting, by using an improved analysis of the matrix variance induced by algorithms using short cycle decompositions. 

Our strongest existential sparsification result (cf.\ Theorems~\ref{thm:existential},~\ref{thm:existentialdetailed}) follows the discrepancy-based partial coloring approach to sparsification pioneered in these works, combining it with our new matrix variance bounds via ER decomposition (Lemma~\ref{lemma:variance}) instead of short cycles, as was done in \cite{SachdevaTZ23}. Recently, concurrent and independent work of \cite{LauWZ24} gave a derandomized partial colouring framework for spectral sparsification using the ``deterministic discrepancy walk" approach from \cite{PesentiV23}, and applied it to obtain polynomial-time deterministic Eulerian sparsifiers satisfying a stronger notion of spectral approximation known as ``singular value (SV) approximation'' \cite{AhmadinejadPPSV23}. This result of  \cite{LauWZ24} complements, but is largely orthogonal to, our results: it yields directed sparsifiers with larger sparsities and runtimes than ours, but which satisfy stronger notions of sparsification (i.e., SV sparsification) and are obtained deterministically.

\subsection{Roadmap}\label{ssec:roadmap}

In Section~\ref{sec:prelims}, we introduce notation and useful technical tools used  throughout the paper. In  Section~\ref{sec:overview} we then provide a technical overview of the rest of the paper. Next, we give our effective resistance decomposition algorithm in Section~\ref{sec:partition}, a key building block in our sparsification methods. In Section~\ref{sec:variance}, we then show how to take advantage of this decomposition by proving a new matrix variance bound for directed edge Laplacians after an electric projection. Crucially, this bound is parameterized by the effective resistance diameter of decomposition pieces.

The remainder of the paper contains applications of our sparsification framework. In Section~\ref{sec:existence}, we prove Theorem~\ref{thm:existential}, our result with the tightest sparsity guarantees. In Section~\ref{sec:algos}, we prove Theorem~\ref{thm:fastsparsify}, which obtains a significantly improved runtime at the cost of slightly worse sparsity. In Section~\ref{sec:apps}, we combine our sparsification methods with existing reductions in the literature and overview additional applications of our algorithms for directed graph primitives. 
Finally, in Section~\ref{sec:sketch}, we show how to apply our sparsification subroutines to design state-of-the-art graphical spectral sketches, proving Theorem~\ref{thm:fastsketch_undir} and an extension to Eulerian graphs that we introduce.

\section{Preliminaries}\label{sec:prelims}

\paragraph{General notation.} All logarithms are base $e$ unless otherwise specified. When discussing a graph clear from context with $n$ vertices and edge weight ratio bounded by $U$, we use the $\bO$ notation to hide $\polyloglog(nU)$ factors for brevity (in runtimes only). We let $[n] \defeq \{i \in \N \mid 1 \le i \le n\}$. 

\paragraph{Vectors.}
Vectors are denoted in lower-case boldface.  $\vzero_d$ and $\vone_d$ are the all-zeroes and all-ones vector respectively of dimension $d$.  $\ee_i$ denote the $i^{\text{th}}$ basis vector.   $\uu \circ \vv$ denotes the entrywise product of $\uu, \vv$ of equal dimension. 

\paragraph{Matrices.}
Matrices are denoted in upper-case boldface. We refer to the $i^{\text{th}}$ row and $j^{\text{th}}$ column of matrix $\MM$ by $\MM_{i:}$ and $\MM_{:j}$ respectively. We use $[\vv]_i$ to index into the $i^{\text{th}}$ coordinate of vector $\vv$, and let $[\MM]_{i:} \defeq \MM_{i:}$, $[\MM]_{:j} \defeq \MM_{:j}$, and $[\MM]_{ij} \defeq \MM_{ij}$ in contexts where $\vv$, $\MM$ have subscripts.

$\id_d$ is the $d \times d$ identity matrix.  For $\vv \in \R^d$, $\diag{\vv}$ denotes the associated diagonal $d \times d$ matrix. For linear subspace $S$ of $\R^d$, $\dim(S)$ is its dimension and $\PP_S$ is the orthogonal projection matrix onto $S$. We let $\ker(\MM)$ and $\MM^{\dagger}$ denote the kernel and pseudoinverse of $\MM$.  We denote the operator norm (largest singular value) of matrix $\MM$ by $\normop{\MM}$, and the Frobenius norm (entrywise $\ell_2$ norm) of $\MM$ by $\normf{\MM}$. The number of nonzero entries of a matrix $\MM$ (resp.\ vector $\vv$) is denoted $\nnz(\MM)$ (resp.\ $\nnz(\vv)$), and the subset of indices with nonzero entries is $\supp(\MM)$ (resp.\ $\supp(\vv)$). 

We use $\preceq$ to denote the Loewner partial order on $\Sym^d$, the symmetric $d \times d$ matrices. We let $\Uni^d$ denote the set of $d \times d$ real unitary matrices. For $\MM \in \Sym^d$ and $i \in [d]$, we let $\lam_i(\MM)$ denote the $i^{\text{th}}$ smallest eigenvalue of $\MM$, so $\lam_1(\MM) \le \lam_2(\MM) \le \ldots \le \lam_d(\MM)$. For positive semidefinite $\AA \in \Sym^d$, we define the seminorm induced by $\AA$ by $\|\xx\|_{\AA}^2 \defeq \xx^\top \AA \xx$.

\paragraph{Distributions.} $\textup{Geom}(p)$ for $p \in (0, 1]$ denotes the geometric distribution on $\N$ with mean $\frac 1 p$. $\Nor(\mmu, \msig)$ denotes the multivariate normal distribution with mean $\mmu$ and covariance $\msig$. $\gamma_d$ denotes the Gaussian measure in dimension $d$, i.e., for $\set \subseteq \R^d$, $\gamma_d(\set) \defeq \Pr_{g \sim \Nor(\vzero_d, \id_d)}[g \in \set]$; when $S$ is a linear subspace of $\R^d$, we define $\gamma_S(\set) \defeq \Pr_{g \sim \Nor(\vzero_d, \PP_S)}[g \in \set]$. 

\paragraph{Graphs.} All graphs throughout this paper are assumed to be simple without loss of generality, as collapsing parallel multi-edges does not affect (undirected or directed) graph Laplacians. We denote undirected weighted graphs without an arrow and directed weighted graphs with an arrow, i.e., $G = (V, E, \ww)$ is an undirected graph with vertices $V$, edges $E$, and weights $\ww \in \R_{\ge 0}^E$, and $\vec{G}$ is a directed graph.
A directed Eulerian graph is a directed graph where weighted in-degree equals weighted out-degree for every vertex.
We refer to the vertex set and edge set of a graph $G$ (resp.\ $\vec{G}$) by $V(G)$ and $E(G)$ (resp.\ $V(\vec{G})$ and $E(\vec{G})$). We associate a directed edge $e$ from $u$ to $v$ with the tuple $(u, v)$, and an undirected edge with $(u, v)$ and $(v, u)$ interchangeably. We define $h(e) = u$ and $t(e) = v$ to be the head and tail of a directed edge $e = (u, v)$. 

Finally, when we are discussing Eulerian sparsification of a graph $\vG$ in the sense of Definition~\ref{def:eulerian_sparsifier}, we will always assume henceforth that $G = \und(\vG)$ is connected. This is without loss of generality: otherwise, we can define an instance of Definition~\ref{def:eulerian_sparsifier} on each connected component of $G$. The left and right kernels of $\vLL_{\vG}$ and $\LL_{G}$ are spanned by the all-ones vectors indicating each connected component of $G$. Moreover, each connected component in $G$ still corresponds to an Eulerian graph. Therefore, satisfying Definition~\ref{def:eulerian_sparsifier} for each component individually implies the same inequality holds for the entire graph, by adding all the component Laplacians. 

\paragraph{Subgraphs and graph operations.} We say $H$ is a subgraph of $G$ if the edges and vertices of $H$ are subsets of the edges and vertices of $G$ (with the same weights), denoting $H = G_{F}$ if $E(H) = F$, and defining the same notion for directed graphs. For $U \subseteq V$, we let $G[U]$ denote the induced subgraph of $G$ on $U$ (i.e., keeping all of the edges within $U$). We let $\rev(\vec{G})$ denote the directed graph with all edge orientations reversed from $\vec{G}$, and $\und(\vec{G})$ denote the undirected graph which removes orientations (both keeping the same weights). When $V$ is a set of vertices, we say $\{V_i\}_{i \in [I]}$ is a partition of $V$ if $\bigcup_{i \in [I]} V_i = V$, and all $V_i$ are disjoint. We say $\{G_j\}_{j \in [J]}$ are a family of edge-disjoint subgraphs of $G = (V, E, \ww)$ if all $E(G_j)$ are disjoint, and for all $j \in [J]$, $V(G_j) \subseteq V$, $E(G_j) \subseteq E$, and every edge weight in $G_j$ is the same as its weight in $G$.

\paragraph{Graph matrices.} For a graph with edges $E$ and vertices $V$, we let $\BB \in \{-1, 0, 1\}^{E \times V}$ be its edge-vertex incidence matrix, so that when $\vG$ is directed and $e = (u, v)$, $\BB_{e:}$ is $2$-sparse with $\BB_{eu} = 1$, $\BB_{ev} = -1$ (for undirected graphs, we fix an arbitrary consistent orientation). For $u, v \in V$, we define $\bb_{(u, v)} \defeq \ee_u - \ee_v$. When $\BB$ is the incidence matrix associated with graph $G = (V, E, \ww)$ (resp.\ $\vec{G}$), we say $\xx$ is a circulation in $G$ (resp.\ $\vec{G}$) if $\BB^\top \xx = \vzero_V$; when $G$ (resp.\ $\vec{G}$) is clear we simply say $\xx$ is a circulation. We let $\HH, \TT \in \{0, 1\}^{E \times V}$ indicate the heads and tails of each edge, i.e., have one nonzero entry per row indicating the relevant head or tail vertex for each edge, respectively, so that $\BB = \HH - \TT$. When clear from context that $\ww$ are edge weights, we let $\WW \defeq \diag{\ww}$.  For undirected $G = (V, E, \ww)$ with incidence matrix $\BB$, the Laplacian matrix of $G$ is $\LL \defeq \BB^\top \WW \BB$. For directed $\vec{G} = (V, E, \ww)$, the directed Laplacian matrix of $\vec{G}$ is $\vec{\LL} \defeq \BB^\top \WW \HH$. To disambiguate, we use $\LL_G$, $\HH_G$, $\TT_G$, $\BB_G$, etc.\ to denote matrices associated with a graph $G$ when convenient.

Note that $\vec{\LL}^\top \vone_V = \vzero_V$ for any directed Laplacian $\vec{\LL}$. If $\vG$ is Eulerian, then its directed Laplacian also satisfies $\vLL \vone_V = \vzero_V$ and $\ww$ is a circulation in $\vG$ (i.e., $\BB^\top \ww = \vzero_V$).
Note that for a directed graph $\vG = (V,E,\ww)$ and its corresponding undirected graph $G \defeq \und(\vG)$, the undirected Laplacian is $\LL_G = \BB^\top \WW \BB$, and the reversed directed Laplacian is $\vLL_{\rev(\vG)} = -\BB^\top \WW \TT$.

We let $\PPi_V$ denote the Laplacian of the unweighted complete graph on $V$, i.e., $\PPi_V \defeq \id_V - \frac 1 {|V|} \vone_V\vone_V^\top$. Note that $\PPi_V$ is the orthogonal projection on the the subspace spanned by the vector that is $1$ in the coordinates of $V$ and $0$ elsewhere.

\paragraph{Effective resistance.}
For undirected $G = (V, E, \ww)$, the effective resistance (ER) of $u, v \in V$ is
$\ER_G(u, v) \defeq \bb_{(u, v)}^\top \LL_G^\dagger \bb_{(u, v)}$. We also define $\ER_G(e)$ for $e = (u, v) \in E$ by $\ER_G(e) \defeq \ER_G(u, v)$. 

\paragraph{Graph linear algebra.} 

In Appendix~\ref{app:deferred} we prove the following facts about graph matrices.

\begin{restatable}{fact}{restatecirceq} \label{lemma:circeq}
Let $\BB = \HH - \TT$ be the edge-vertex incidence matrix of a graph, let $\xx$ be a circulation in the graph (i.e.\ $\BB^\top \xx = \vzero$), and let $\XX \defeq \diag{\xx}$. Then $\HH^\top \XX \HH = \TT^\top \XX\TT$ and $\BB^\top \XX \HH = -\TT^\top \XX \BB$.
\end{restatable}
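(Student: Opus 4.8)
The plan is to unfold the definitions of $\HH$, $\TT$, $\BB = \HH - \TT$, and the diagonal matrix $\XX = \diag{\xx}$ associated with a circulation $\xx$, and then reduce both claimed identities to the single scalar fact that the net signed $\xx$-weight entering each vertex equals the net signed $\xx$-weight leaving it. Concretely, write $\HH = \sum_e \ee_{h(e)} \ee_e^\top$ and $\TT = \sum_e \ee_{t(e)} \ee_e^\top$ (viewing these as maps from edge space to vertex space via their transposes), so that for any diagonal $\XX = \diag{\xx}$ we have $\HH^\top \XX \HH = \sum_e \xx_e \ee_{h(e)} \ee_{h(e)}^\top$ and $\TT^\top \XX \TT = \sum_e \xx_e \ee_{t(e)} \ee_{t(e)}^\top$; both are diagonal vertex matrices, the first with $u$-th diagonal entry $\sum_{e : h(e) = u} \xx_e$ and the second with $u$-th diagonal entry $\sum_{e : t(e) = u} \xx_e$.

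The first identity $\HH^\top \XX \HH = \TT^\top \XX \TT$ then follows immediately once I observe that the circulation condition $\BB^\top \xx = \vzero$ says precisely $\sum_{e : h(e) = u} \xx_e = \sum_{e : t(e) = u} \xx_e$ for every vertex $u$ (since $(\BB^\top \xx)_u = \sum_{e : h(e) = u} \xx_e - \sum_{e : t(e) = u} \xx_e$). Hence the two diagonal matrices agree entrywise. For the second identity, I expand $\BB^\top \XX \HH = (\HH - \TT)^\top \XX \HH = \HH^\top \XX \HH - \TT^\top \XX \HH$ and similarly $-\TT^\top \XX \BB = -\TT^\top \XX (\HH - \TT) = -\TT^\top \XX \HH + \TT^\top \XX \TT$. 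Comparing the two expressions, they are equal if and only if $\HH^\top \XX \HH = \TT^\top \XX \TT$, which is exactly the first identity just established. So the second identity is a formal consequence of the first together with the (trivial) distributivity of matrix multiplication over $\BB = \HH - \TT$.

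There is essentially no obstacle here; the only mild subtlety is bookkeeping with the convention that rows of $\HH,\TT$ are indexed by edges and columns by vertices (so that $\HH^\top \XX \HH$ and $\TT^\top \XX \TT$ are $V \times V$), and making sure the sign convention in $\BB_{e:} = \ee_{h(e)}^\top - \ee_{t(e)}^\top$ lines up so that $\BB^\top \xx = \vzero$ genuinely encodes flow conservation at each vertex. Once the indexing is fixed, both claims are one or two lines of linear algebra. I would present the proof in that order: first rewrite $\HH^\top\XX\HH$ and $\TT^\top\XX\TT$ as the two diagonal degree-type matrices, invoke $\BB^\top\xx = \vzero$ to conclude they are equal, and then derive the second identity by expanding $\BB = \HH - \TT$ on both sides and cancelling.
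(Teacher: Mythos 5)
Your proof is correct and follows essentially the same approach as the paper's: observe that $\HH^\top \XX \HH$ and $\TT^\top \XX \TT$ are the diagonal matrices $\diag{\HH^\top \xx}$ and $\diag{\TT^\top \xx}$, which agree since $\xx$ is a circulation, and then derive the second identity by expanding $\BB = \HH - \TT$ on both sides and cancelling. The only cosmetic difference is that you unpack the diagonal entries as explicit sums over edges incident to each vertex rather than writing them compactly as $\diag{\HH^\top \xx}$.
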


\begin{restatable}{fact}{restatedirundir} \label{fact:dirclose_undirclose}
Suppose $\vG=(V,E,\ww_{\vG}), \vH=(V,F,\ww_{\vH})$ share the same vertex set and $G \defeq \und(\vG)$, $H \defeq \und(\vH)$.
If $\BB_{\vG}^\top \ww_{\vG} = \BB_{\vH}^\top \ww_{\vH}$, then
$\normsop{\LL_{G}^{\frac \dagger 2}(\LL_G - \LL_H)\LL_{G}^{\frac \dagger 2}} \le 2 \normsop{\LL_{G}^{\frac \dagger 2} (\vLL_{\vG} - \vLL_{\vH})\LL_{G}^{\frac \dagger 2}}$.
\end{restatable}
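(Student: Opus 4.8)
Fact~\ref{fact:dirclose_undirclose} (the statement to prove).

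The plan is to exploit the identity $\LL_G - \LL_H = \BB_{\vG}^\top \WW_{\vG} \BB_{\vG} - \BB_{\vH}^\top \WW_{\vH} \BB_{\vH}$, and to relate this undirected difference to the directed Laplacian difference $\vLL_{\vG} - \vLL_{\vH} = \BB_{\vG}^\top \WW_{\vG} \HH_{\vG} - \BB_{\vH}^\top \WW_{\vH} \HH_{\vH}$. The key algebraic observation is that for a single graph with incidence matrix $\BB = \HH - \TT$, one has $\LL = \BB^\top \WW \BB = \BB^\top \WW (\HH - \TT) = \BB^\top \WW \HH - \BB^\top \WW \TT = \vLL_{\vG} + \vLL_{\rev(\vG)}^{\,\text{-ish}}$ — more precisely, since $\vLL_{\rev(\vG)} = -\BB^\top \WW \TT$ (as recalled in the preliminaries), we get $\LL_G = \vLL_{\vG} - \vLL_{\rev(\vG)}$. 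Wait — let me instead use the cleaner route: $\LL_G = \BB^\top\WW\HH - \BB^\top\WW\TT$, and by symmetry of $\LL_G$, $\LL_G = \LL_G^\top = \HH^\top\WW\BB - \TT^\top\WW\BB$; adding, $2\LL_G = \BB^\top\WW\HH + \HH^\top\WW\BB - \BB^\top\WW\TT - \TT^\top\WW\BB$. Since $\vLL_{\vG} = \BB^\top\WW\HH$, the symmetrization of $\vLL_{\vG}$ is $\vLL_{\vG} + \vLL_{\vG}^\top = \BB^\top\WW\HH + \HH^\top\WW\BB$. The remaining terms $\BB^\top\WW\TT + \TT^\top\WW\BB$ must be handled — this is where the circulation hypothesis enters.

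Here is the concrete sequence of steps. First, write $\LL_G = \thalf\Par{(\vLL_{\vG} + \vLL_{\vG}^\top) - (\BB_{\vG}^\top\WW_{\vG}\TT_{\vG} + \TT_{\vG}^\top\WW_{\vG}\BB_{\vG})}$ and likewise for $H$. Second, subtract: $\LL_G - \LL_H = \thalf\Par{(\vLL_{\vG} - \vLL_{\vH}) + (\vLL_{\vG} - \vLL_{\vH})^\top} - \thalf\Par{\BB_{\vG}^\top\WW_{\vG}\TT_{\vG} + \TT_{\vG}^\top\WW_{\vG}\BB_{\vG} - \BB_{\vH}^\top\WW_{\vH}\TT_{\vH} - \TT_{\vH}^\top\WW_{\vH}\BB_{\vH}}$. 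Third — the crux — show that the hypothesis $\BB_{\vG}^\top\ww_{\vG} = \BB_{\vH}^\top\ww_{\vH}$ forces the second bracketed term to equal $\pm(\vLL_{\vG}-\vLL_{\vH})$ or its transpose (or something with the same operator norm after $\LL_G^{\dagger/2}$-conjugation). The natural tool is Fact~\ref{lemma:circeq}: if $\xx$ is a circulation then $\BB^\top\XX\HH = -\TT^\top\XX\BB$. But here the weights $\ww_{\vG}$ of an Eulerian graph \emph{are} a circulation ($\BB^\top\ww = \vzero$), so $\BB_{\vG}^\top\WW_{\vG}\HH_{\vG} = -\TT_{\vG}^\top\WW_{\vG}\BB_{\vG}$, i.e.\ $\vLL_{\vG} = -\TT_{\vG}^\top\WW_{\vG}\BB_{\vG}$, and similarly $\vLL_{\vG}^\top = -\BB_{\vG}^\top\WW_{\vG}\TT_{\vG}$. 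Hence $\BB_{\vG}^\top\WW_{\vG}\TT_{\vG} + \TT_{\vG}^\top\WW_{\vG}\BB_{\vG} = -(\vLL_{\vG} + \vLL_{\vG}^\top)$, and the same for $\vH$. Substituting back, $\LL_G - \LL_H = \thalf\Par{(\vLL_{\vG}-\vLL_{\vH}) + (\vLL_{\vG}-\vLL_{\vH})^\top} + \thalf\Par{(\vLL_{\vG}-\vLL_{\vH}) + (\vLL_{\vG}-\vLL_{\vH})^\top} = (\vLL_{\vG}-\vLL_{\vH}) + (\vLL_{\vG}-\vLL_{\vH})^\top$.

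Finally, conjugate by $\LL_G^{\dagger/2}$ on both sides and apply the triangle inequality together with the fact that $\normsop{\MM^\top} = \normsop{\MM}$ and $\normsop{\LL_G^{\dagger/2}\MM^\top\LL_G^{\dagger/2}} = \normsop{(\LL_G^{\dagger/2}\MM\LL_G^{\dagger/2})^\top} = \normsop{\LL_G^{\dagger/2}\MM\LL_G^{\dagger/2}}$ (using that $\LL_G^{\dagger/2}$ is symmetric), which yields $\normsop{\LL_G^{\dagger/2}(\LL_G-\LL_H)\LL_G^{\dagger/2}} \le 2\normsop{\LL_G^{\dagger/2}(\vLL_{\vG}-\vLL_{\vH})\LL_G^{\dagger/2}}$ as claimed. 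The main obstacle I anticipate is getting the circulation bookkeeping exactly right: the statement of Fact~\ref{fact:dirclose_undirclose} does not literally say $\vG, \vH$ are Eulerian, only that $\BB_{\vG}^\top\ww_{\vG} = \BB_{\vH}^\top\ww_{\vH}$, so I must make sure the argument uses only the \emph{equality} of the two boundary vectors (via $\BB_{\vG}^\top\ww_{\vG} - \BB_{\vH}^\top\ww_{\vH} = \vzero$, so that the combined vector is a circulation in the union graph) rather than each being zero individually; the cleanest fix is to apply Fact~\ref{lemma:circeq} to the disjoint union of $\vG$ and $\rev(\vH)$ with weight vector $(\ww_{\vG}; \ww_{\vH})$, which is a genuine circulation there, and then read off the block identities. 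I should double-check the sign conventions in that union-graph step, since a sign error would change the factor $2$ into something else or break the bound entirely.
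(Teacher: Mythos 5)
Your closing paragraph gives the right argument, and it coincides with the paper's proof: the paper stacks $\BB_{\vG}$ over $\BB_{\vH}$ with the signed weight vector $(\ww_{\vG}, -\ww_{\vH})$ (equivalent to your union of $\vG$ with $\rev(\vH)$ and unsigned weights), invokes Fact~\ref{lemma:circeq} on this combined circulation to obtain $\LL_G - \LL_H = (\vLL_{\vG}-\vLL_{\vH}) + (\vLL_{\vG}-\vLL_{\vH})^\top$, and finishes with the triangle inequality and $\normsop{\MM^\top}=\normsop{\MM}$. As you yourself flag, the intermediate calculation using $\vLL_{\vG} = -\TT_{\vG}^\top\WW_{\vG}\BB_{\vG}$ for $\vG$ alone is not justified by the hypothesis --- only the combined vector is a circulation --- but your proposed repair is exactly the paper's move.
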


\begin{restatable}{fact}{restatelinalgthree}\label{fact:opnorm_precondition}
Suppose $G, H$ are connected graphs on the same vertex set $V$, and $\normsop{\LL_{G}^{\dagger/ 2}\Par{\LL_G - \LL_H}\LL_{G}^{\dagger/ 2}} \le \eps$. Then for any $\MM \in \R^{V \times V}$, we have $\normsop{\LL_{G}^{\dagger/ 2} \MM \LL_{G}^{\dagger/ 2}} \le (1 + \eps) \normsop{\LL_{H}^{\dagger/ 2} \MM \LL_{H}^{\dagger/ 2}}$.
\end{restatable}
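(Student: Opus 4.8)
The plan is to reduce the claim to a statement purely about the Loewner order restricted to the relevant subspace, and then invoke the standard fact that $\normsop{\LL_G^{\dagger/2}\MM\LL_G^{\dagger/2}} = \max_{\xx \perp \ker(\LL_G)} |\xx^\top \MM \xx| / (\xx^\top \LL_G \xx)$ when $\MM$ is symmetric; for general (possibly non-symmetric) $\MM$ the operator norm is instead controlled via the bilinear form $\sup |\xx^\top \MM \yy| / (\|\xx\|_{\LL_G}\|\yy\|_{\LL_G})$. First I would note that since $G$ and $H$ are connected on the same vertex set $V$, we have $\ker(\LL_G) = \ker(\LL_H) = \spn(\vone_V)$, so $\LL_G^{\dagger/2}$ and $\LL_H^{\dagger/2}$ have the same image, namely $\PPi_V$'s image, and $\normsop{\LL_G^{\dagger/2}\MM\LL_G^{\dagger/2}} = \normsop{\PPi_V \LL_G^{\dagger/2}\MM\LL_G^{\dagger/2}\PPi_V}$.

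The key step is the spectral comparison: the hypothesis $\normsop{\LL_G^{\dagger/2}(\LL_G-\LL_H)\LL_G^{\dagger/2}} \le \eps$ is equivalent, by \eqref{eq:undir_lap_approx}, to $(1-\eps)\LL_G \preceq \LL_H \preceq (1+\eps)\LL_G$ on the complement of $\spn(\vone_V)$. The lower bound $(1-\eps)\LL_G \preceq \LL_H$ gives, upon conjugating by pseudoinverse square roots and using that the image of $\LL_G^{\dagger/2}$ equals the image of $\LL_H^{\dagger/2}$, the relation $\LL_G^{\dagger} \preceq (1-\eps)^{-1}\LL_H^{\dagger}$ — equivalently, for every $\xx$ orthogonal to $\vone_V$, $\|\xx\|_{\LL_H^\dagger}^2 \le (1-\eps)^{-1}\|\xx\|_{\LL_G^\dagger}^2$. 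Hmm, but actually I want the reverse direction: I want to bound a norm involving $\LL_G^{\dagger/2}$ by one involving $\LL_H^{\dagger/2}$, so I should use the upper bound $\LL_H \preceq (1+\eps)\LL_G$, which yields $\LL_G^{\dagger} \preceq (1+\eps)\LL_H^{\dagger}$, hence there is a linear operator $\QQ$ with $\QQ^\top\QQ \preceq (1+\eps)\II$ on the relevant subspace and $\LL_G^{\dagger/2} = \LL_H^{\dagger/2}\QQ$ (take $\QQ = \LL_H^{1/2}\LL_G^{\dagger/2}$ on $\text{im}(\PPi_V)$, extended by zero; then $\QQ^\top\QQ = \LL_G^{\dagger/2}\LL_H\LL_G^{\dagger/2} \preceq (1+\eps)\PPi_V$). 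Substituting,
\[
\normsop{\LL_G^{\dagger/2}\MM\LL_G^{\dagger/2}} = \normsop{\QQ^\top \LL_H^{\dagger/2}\MM\LL_H^{\dagger/2}\QQ} \le \normsop{\QQ}^2 \cdot \normsop{\LL_H^{\dagger/2}\MM\LL_H^{\dagger/2}} \le (1+\eps)\normsop{\LL_H^{\dagger/2}\MM\LL_H^{\dagger/2}},
\]
which is exactly the claim.

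The main obstacle, and the only subtlety worth care, is the bookkeeping around the kernels and pseudoinverses: one must verify that $\LL_H^{1/2}\LL_G^{\dagger/2}$ is the right object — i.e., that it correctly maps $\text{im}(\PPi_V)$ to itself and that composing $\LL_H^{\dagger/2}$ with it recovers $\LL_G^{\dagger/2}$ exactly (not just up to kernel components). This is where connectedness of both $G$ and $H$ is used essentially, to guarantee the kernels coincide and all the square roots and pseudoinverses live on a common invariant subspace. Once that alignment is pinned down, the operator-norm submultiplicativity step $\normsop{\QQ^\top \NN \QQ} \le \normsop{\QQ}^2\normsop{\NN}$ is routine and holds for arbitrary (non-symmetric) $\NN = \LL_H^{\dagger/2}\MM\LL_H^{\dagger/2}$, so no symmetry assumption on $\MM$ is needed. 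I expect the whole argument to be about half a page.
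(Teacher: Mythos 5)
Your proof is correct. Both you and the paper derive the two-sided Loewner comparison $(1-\eps)\LL_G \preceq \LL_H \preceq (1+\eps)\LL_G$ from the hypothesis (using that $\ker\LL_G = \ker\LL_H = \spn(\vone_V)$ by connectedness), but the final step is carried out differently. The paper uses the variational characterization $\normsop{\LL_G^{\dagger/2}\MM\LL_G^{\dagger/2}} = \sup\{u^\top\MM v : u,v\perp\vone_V,\ \norms{u}_{\LL_G}, \norms{v}_{\LL_G}\le 1\}$, and observes that the $\LL_G$-unit ball is contained in the $\sqrt{1+\eps}$-dilated $\LL_H$-unit ball, giving the bound directly. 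You instead construct a transfer operator $\QQ = \LL_H^{1/2}\LL_G^{\dagger/2}$ with $\QQ^\top\QQ = \LL_G^{\dagger/2}\LL_H\LL_G^{\dagger/2} \preceq (1+\eps)\PPi_V$ and $\LL_H^{\dagger/2}\QQ = \LL_G^{\dagger/2}$, then invoke submultiplicativity $\normsop{\QQ^\top\NN\QQ}\le\normsop{\QQ}^2\normsop{\NN}$. Both are short and valid; the paper's norm-ball argument is arguably a touch more elementary (no need to verify that $\QQ$ lands in the right subspace and that the factorization is exact), while your factorization makes the $(1+\eps)$ constant manifest as $\normsop{\QQ}^2$ and would generalize cleanly if one wanted to transfer more than operator norms. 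One nit: you never actually need the lower bound $(1-\eps)\LL_G\preceq\LL_H$ or the pseudoinverse-reversal step $\LL_G^\dagger\preceq(1+\eps)\LL_H^\dagger$ that you flagged as a detour — the bound $\LL_G^{\dagger/2}\LL_H\LL_G^{\dagger/2}\preceq(1+\eps)\PPi_V$ follows directly from $\LL_H\preceq(1+\eps)\LL_G$ by conjugation, which is how you in fact conclude.
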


\section{Technical overview}\label{sec:overview}

In this section, we overview our strategy for preserving degree balance in efficient directed sparsification primitives, in greater detail than in Section~\ref{ssec:summary}. We first review a motivating construction for undirected sparsifiers via randomly signed edge weight updates. Then we introduce our extension of this construction to the Eulerian setting, based on electric projections of edge Laplacians. 

To bound the spectral error incurred by random reweightings in the Eulerian setting, we then describe a new asymmetric matrix variance bound under certain bounds on the effective resistance diameter and weight ratio of the edges under consideration (Lemma~\ref{lemma:variance}). This Lemma~\ref{lemma:variance} is the key technical tool enabling our results, proven in Section~\ref{sec:variance}. 

We then describe an \emph{effective resistance decomposition} (Definition~\ref{def:er_partition}) subroutine we introduce in Section~\ref{sec:partition}, used to guarantee the aforementioned weight and effective resistance bounds hold in our sparsification procedures. Finally, we explain how each of our algorithms (in proving Theorems~\ref{thm:fastsparsify} and~\ref{thm:existential})
and their applications in Sections~\ref{sec:existence},~\ref{sec:algos},~\ref{sec:apps}, and~\ref{sec:sketch} 
build upon these common primitives.

\paragraph{Sparsification from random signings.} To motivate our approach, consider the following conceptual framework in the simpler setting of undirected sparsification. (Variants of this framework have appeared in the recent literature \cite{ChuGPSSW18, ReisR20, JambulapatiRT23}.) Starting from  undirected graph $G = (V, E, \ww)$ with $n$ vertices and $m$ edges, we initialize $\ww_0 \gets \ww$ and in each iteration $t$, let 
\begin{equation}\label{eq:undir_reweight_intro}\ww_{t + 1} \gets \ww_t \circ (\vone_E + \eta \ss_t),\end{equation}
where $\ss_t \in \{\pm 1\}^E$ has independent Rademacher entries and $\eta \in (0, 1]$. Intuitively, the update \eqref{eq:undir_reweight_intro} drives edge weights rapidly to zero, as it induces an exponential negative drift on each weight:
\begin{equation}\label{eq:negative_weight_drift}\E \log\Par{\frac{[\ww_{t + 1}]_e}{[\ww_t]_e}} = \E \log(1 + \eta[\ss_t]_e) \approx -\eta^2.\end{equation}
This phenomenon is most obvious when $\eta = 1$ (which suffices for undirected sparsification), as a constant fraction of edges are immediately zeroed out in each iteration, but \eqref{eq:negative_weight_drift} quantifies this for general $\eta$.
Next, consider the spectral approximation error induced by the first step ($t = 0$), where we denote $G_0 \defeq G = (V, E, \ww_0)$ and $G_1 \defeq (V, E, \ww_1)$, and let $\eta = 1$. By standard matrix concentration inequalities on Rademacher signings (see, e.g., Lemma~\ref{lem:matrix_azuma}), w.h.p.,
\begin{equation}\label{eq:undir_error_onestep}
	\begin{aligned}\normop{\LL_{G_0}^{\frac \dagger 2}\Par{\LL_{G_1} - \LL_{G_0}} \LL_{G_0}^{\frac \dagger 2}} &= \normop{\LL_{G_0}^{\frac \dagger 2}\BB_{G}^\top\Par{\WW_1 - \WW_0}\BB_G\LL_{G_0}^{\frac \dagger 2}} \\
		&= \normop{\sum_{e \in E} \ss_e \AA_e} \lesssim \sqrt{\normop{\sum_{e \in E} \AA_e^2}}, \text{ where } \AA_e \defeq \ww_e\LL_G^{\frac \dagger 2} \bb_e\bb_e^\top \LL_G^{\frac \dagger 2}.
	\end{aligned}
\end{equation}
This argument suggests that it is crucial to control the following matrix variance statistic, $\sigma^2 \defeq \normsop{\sum_{e \in E} \AA_e^2}$, 
as we incur spectral approximation error $\approx \sigma$. It is straightforward to see that, letting $\rho_{\max} \defeq \max_{e \in E} \ww_e \bb_e^\top \LL_G^{\dagger}\bb_e = \max_{e \in E} \ww_e \ER_G(e)$
be the maximum weighted effective resistance of any edge in $G$, we have
\begin{equation}\label{eq:er_bound_sigma}
	\sum_{e \in E} \AA_e^2 = \sum_{e \in E} \ww_e \LL_G^{\frac \dagger 2} \bb_e\Par{\ww_e \bb_e^\top \LL_G^{\dagger} \bb_e} \bb_e^\top \LL_G^{\frac \dagger 2} \preceq \rho_{\max} \LL_G^{\frac \dagger 2}\Par{\sum_{e \in E} \ww_e  \bb_e \bb_e^\top} \LL_G^{\frac \dagger 2} \preceq \rho_{\max} \II_V.
\end{equation}
By zeroing entries of $\ss$ corresponding to the largest half of $\ww_e \ER_G(e)$ values, we can ensure $\rho_{\max} = O(\frac n m)$, since $\sum_{e \in E} \ww_e \ER_G(e) \le n$. Hence, \eqref{eq:undir_error_onestep} shows the spectral approximation error is $\lesssim \sqrt{n/m}$. Since the edge sparsity $m$ decreases by a constant factor in each iteration $t$ when $\eta = 1$, this induces a geometric sequence in the spectral approximation quality terminating at $\approx \eps$ when $\nnz(\ww_t) \approx n \eps^{-2}$, as desired. We remark that Rademacher signings are not the only way to instantiate this scheme; indeed, \cite{ReisR20, JambulapatiRT23} show how to use discrepancy-theoretic tools to choose the update \eqref{eq:undir_reweight_intro} in a way which does not lose logarithmic factors in the spectral error bound.

\paragraph{Asymmetric variance statistics and ER decomposition.} The aforementioned framework for undirected sparsification runs into immediate difficulties in the context of Eulerian sparsification (Definition~\ref{def:eulerian_sparsifier}), as it does not preserve degree balances. Previous Eulerian sparsification methods sidestepped this obstacle by either fixing degrees after sampling and incurring errors (e.g., via expander decomposition) or coordinating the sampling in a degree-preserving way (e.g., via short cycle decomposition). We propose an arguably more direct approach to preserving degrees, departing from prior work. Consider Eulerian $\vG_0 \defeq \vG = (V, E, \ww_0 \defeq \ww)$. On iteration $t \ge 0$, let 
\[\PP_t \defeq \II_E - \WW_t \BB_{\vG} \LL_{G_t^2}^{\dagger} \BB_{\vG}^\top \WW_t,\]
where $\LL_{G_t^2}$ is the undirected Laplacian of $G_t^2 \defeq (V, E, \ww_t^2)$, $\ww_t^2$ is entrywise, and $\WW_t \defeq \diag{\ww_t}$. Observe that $\PP_t$ is the orthogonal projection matrix onto the space of degree-preserving reweightings on the graph $G_t$ with weights $\ww_t$, i.e., for all $\xx \in \textup{Im}(\PP_t)$, we have $\BB_{\vG_t}^\top (\ww_t \circ \xx) = \vzero_V$. Our starting point is thus a modification of the reweighting scheme \eqref{eq:undir_reweight_intro}, where the Rademacher vector $\ss_t$ is replaced by $\xx_t \defeq \PP_t \ss_t$, and we choose an appropriate step size $\eta \approx \log^{-1/2}(n)$ to ensure no edge weight falls below $0$. In other words, we simply let
\begin{equation}\label{eq:dir_reweight_intro}
	\ww_{t + 1} \gets \ww_t \circ (\vone_E + \eta \xx_t),\text{ where } \xx_t \gets \PP_t \ss_t.
\end{equation}
Because this reweighting scheme preserves degree imbalance by construction, it remains to analyze two properties of the reweighting. First, how much does the spectral approximation factor in \eqref{eq:eulerian_sparsifier_def} grow in each iteration? Second, does the reweighting significantly decrease the graph sparsity (ideally, after few iterations)? We postpone discussion of the second point until the end of this overview, when we discuss implementations of our framework. Our analysis of weight decay will ultimately carefully quantify the intuition in \eqref{eq:negative_weight_drift} with an appropriate step size schedule.

Regarding the first point, matrix Rademacher inequalities (extending \eqref{eq:undir_error_onestep} to the asymmetric setting) show that the spectral error in the first step $t = 0$ is controlled by 
\begin{equation}\label{eq:directed_sigma_intro}
	\begin{gathered}
		\sigma^2 \defeq \max\Par{\normop{\sum_{e \in E} \Atil_e \Atil_e^\top},\; \normop{\sum_{e \in E} \Atil_e \Atil_e^\top}}, \\
		\text{where } \Atil_e \defeq \sum_{f \in E} \PP_{fe} \AA_f \text{ and } \AA_e \defeq \ww_e \LL_{G}^{\frac \dagger 2} \bb_e \ee_{h(e)}^\top \LL_G^{\frac \dagger 2},
	\end{gathered}
\end{equation}
and we abbreviate $G = \und(\vG)$ and $\PP_0 = \PP$ for short. To briefly explain the formula \eqref{eq:directed_sigma_intro}, note that analogously to \eqref{eq:undir_error_onestep}, the matrix $\AA_e$ is defined so that the one-step spectral error when reweighting by Rademacher $\ss$ (in the sense of \eqref{eq:dir_reweight_intro}) is precisely $\normsop{\sum_{e \in E} \ss_e \AA_e}$. Correspondingly, the matrix $\Atil_e$ is defined to capture the correct error statistic after first applying $\PP$ to $\ss$. 

A primary technical contribution of our work is quantifying a sufficient condition under which the asymmetric variance statistic \eqref{eq:directed_sigma_intro} is bounded, stated formally as Lemma~\ref{lemma:variance}. Recall that in the undirected setting, \eqref{eq:er_bound_sigma} bounds $\sigma^2$ in terms of the maximum weighted ER of the edges we choose to reweight. Similar logic suggests that the Eulerian variance statistic in \eqref{eq:directed_sigma_intro} is small if $\ee_{v}^\top \LL_G^{\dagger} \ee_{v}$ is bounded for each vertex $v \in V$, i.e., the diagonal entries of $\LL_G^{\dagger}$ are small. In the undirected, unweighted case, $\ee_{v}^\top \LL_G^{\dagger} \ee_{v}$ is bounded for all $v \in V$ if $G$ has small \emph{effective resistance diameter}, i.e., $\ER_G(u, v) \defeq \bb_{(u, v)}^\top \LL_G^\dagger \bb_{(u, v)}$ is small for all $(u, v) \in V \times V$ (Lemma~\ref{lemma:reffentry}).

This intuition neglects at least three factors: it only captures the variance matrix $\sum_{e \in E} \AA_e \AA_e^\top$ (rather than $\sum_{e \in E} \AA_e^\top \AA_e$), it is based on the matrices $\AA_e$ (rather than $\Atil_e$), and it ignores the effects of weights.
Our bound in Lemma~\ref{lemma:variance} tackles all three of these factors by using graph-theoretic construction we introduce, called an ER decomposition (Definition~\ref{def:er_partition}). Again considering only the first step for simplicity, we prove that if $\vH = (V, F, \ww_F)$ is a subgraph of $\vG$ whose vertices all lie in $U$, the quantities $\sum_{f \in F} \Atil_f^\top \Atil_f$ and $\sum_{f \in F} \Atil_f \Atil_f^\top$ are both bounded (in the Loewner ordering) by
\[\rho_{\max}(F) \cdot \LL_G^{\frac \dagger 2} \LL_H \LL_G^{\frac \dagger 2},
\text{ where } \rho_{\max}(F) \defeq \Par{\max_{f \in F} \ww_f} \cdot \Par{\max_{u, v \in U} \ER_G(u, v)},\text{ and } H \defeq \und(\vH). \]
This suggests that if we can isolate a cluster of edges $F$ on a vertex set $U$, such that all edges in $F$ have roughly even edge weight, and such that $U$ has bounded effective resistance diameter through $G$ (inversely proportional to the weights in $F$), we can pay for the contribution of all $\Atil_f$ for $f \in F$ to the variance statistic in \eqref{eq:directed_sigma_intro}. We accordingly define ER decompositions to decompose $E$ into such clusters $\{F_k\}_{k \in [K]}$, each with bounded $\rho_{\max}(F_k) \approx \frac n m$.

\paragraph{Our ER decomposition scheme.} We take a brief digression to answer: how do we find such an edge-disjoint decomposition $\{F_k\}_{k \in [K]}$, each with bounded $\rho_{\max}(F_k)$? In fact, such a decomposition is immediately implied by the related ER decomposition of \cite{AlevALG18}, save two issues. The ER decomposition of \cite{AlevALG18} only guarantees that a constant fraction of edges \emph{by total weight} are cut, as opposed to by edge count (which our recursion can tolerate). The more pressing issue is that the \cite{AlevALG18} algorithm uses $\Omega(mn)$ time, necessitating design of a faster decomposition scheme.

In Section~\ref{sec:partition}, we provide a simple near-linear time decomposition scheme which makes use of the well-known fact that effective resistances in a graph form a metric. We first partition the undirected graph $G$ in question into subgraphs $\{G^j\}_{j_{\min} \le j \le j_{\max}}$ for appropriate $j_{\max} - j_{\min} + 1 = O(\log U)$, where $G^j$ consists of edges with weight between $2^j$ and $2^{j+1}$, and $U$ is the multiplicative range of edge weights. In each $G^j$, it suffices to partition the vertices to induce subgraphs $\{G^j_i\}_{i \in [K_j]}$, each with ER diameter $\approx \frac n m \cdot 2^{-j}$, and such that few edges are cut. We accomplish this by first providing constant-factor estimates to all edge effective resistances using standard sketching tools (Lemma~\ref{lem:approx_er}). 
Within each subgraph $G^j$, we induce a shortest path metric based on our ER overestimates, and then apply classic region-growing techniques~\cite{GargVY96} to partition the subgraphs into pieces of bounded shortest path diameter without cutting too many edges.

\paragraph{Implementations of our framework.} Finally, we briefly outline how Theorems~\ref{thm:fastsparsify},~\ref{thm:existential}, and~\ref{thm:fastsketch_undir}
 follow from the frameworks we outlined. Our Eulerian sparsification algorithms (for establishing Theorems~\ref{thm:fastsparsify} and~\ref{thm:existential}) simply interleave computation of an ER decomposition on the current graph, with a small number of reweightings roughly of the form \eqref{eq:dir_reweight_intro}. For our nearly-linear time algorithm in Theorem~\ref{thm:fastsparsify}, in each reweighting \eqref{eq:dir_reweight_intro}, we zero out the half of entries of $\ss_t$ which are cut by the ER decomposition, and additionally enforce a linear constraint that the total weight is preserved. We show that by making the intuition \eqref{eq:dir_reweight_intro} rigorous, after polylogarithmically many reweightings, a constant fraction of edge weights have decreased by a polynomial factor, which is enough to explicitly delete them from the graph and (after fixing degrees by routing through a spanning tree) incur small spectral error. This lets us recurse and obtain the same geometric sequence behavior on our accumulated spectral error bound as in the undirected setting.

Our proof of Theorem~\ref{thm:existential} applies carefully-coordinated reweighting vectors $\xx_t$ which yield smaller spectral error than na\"ive random signing. We choose these vectors $\xx_t$ based on recent progress towards the matrix Spencer conjecture (a well-known open problem in discrepancy theory) due to \cite{BansalJM23}. Specifically, \cite{BansalJM23} (along with earlier works, e.g., \cite{Rothvoss17, ReisR23}) provide tools which construct ``partial colorings'' $\xx_t$ such that $[\xx_t]_e = -1$ for a constant fraction of $e \in E$, and
\[\normop{\sum_{e \in E} [\xx_t]_e \AA_e}\]
is smaller than what matrix Rademacher inequalities would predict for random $\xx_t$ (based on the matrix variance statistic). Applying these higher-quality reweightings $\xx_t$ in each iteration through \eqref{eq:dir_reweight_intro} (with $\eta = 1$) then directly decreases the edge sparsity by a constant factor in each iteration, allowing for simple control of the spectral error in \eqref{eq:eulerian_sparsifier_def}. This strategy immediately yields Theorem~\ref{thm:existential} upon recursing. As mentioned previously, in Appendix~\ref{sec:conjectures}, we examine natural routes which could further improve upon the sparsity bounds of Theorem~\ref{thm:existential}.

Finally, we show that our subroutines designed for Eulerian sparsification compose well with a framework for obtaining graphical spectral sketches by \cite{ChuGPSSW18}, based on expander decomposition. Specifically, \cite{ChuGPSSW18} (based on similar ideas in \cite{AndoniCKQWZ16, JambulapatiS18}) showed that spectral bounds between degree matrices and Laplacians which hold in expander graphs yield improved per-vector quadratic form guarantees. We make the simple observation that expander subgraphs with large minimum degree also have bounded effective resistance diameter (Lemma~\ref{lemma:ex_to_er}). Hence, directly using our algorithms for sparsifying pieces of an effective resistance decomposition using electric projections in place of short cycle decompositions (as used in \cite{ChuGPSSW18}) improves state-of-the-art runtimes by $m^{o(1)}$ factors. Our spectral sketch algorithm is flexible enough to extend straightforwardly to the Eulerian setting (following Definition~\ref{def:sketch_directed}), as described in Theorem~\ref{thm:fastsketch}.

\section{Effective resistance decomposition}\label{sec:partition}

In this section, we show how to efficiently decompose a weighted, undirected graph into subgraphs with bounded weight ratio, small effective resistance diameter (relative to the edge weights it contains), a limited number of edges cut, and each vertex appearing in a limited number of subgraphs. This procedure will be a key subroutine in our sparsification algorithms, as captured by the variance bound in Lemma~\ref{lemma:variance}. Below in \Cref{def:er_partition} we formally define this type of decomposition guarantee and then in \Cref{prop:er_partition} we provide our main result on computing said decompositions.

\begin{definition}[ER decomposition]\label{def:er_partition}
We call $\{G_i\}_{i \in [I]}$ a \emph{$(\rho, r, J)$-effective resistance (ER) decomposition} if $\{G_i\}_{i \in [I]}$ are edge-disjoint subgraphs of $G = (V, E, \ww)$, and the following hold.
\begin{enumerate}
    \item \label{item:effres:partition:weight} \emph{Bounded weight ratio}: For all $i \in [I]$, $\frac{\max_{e \in E(G_i)}\ww_e}{\min_{e \in E(G_i)}\ww_e} \le r$.
    \item \label{item:effres:partition:diameter} \emph{Effective resistance diameter}: For all $i \in [I]$, $(\max_{e \in E(G_i)}\ww_e) \cdot (\max_{u, v \in V(G_i)} \ER_G(u, v)) \le \rho$.
    \item \label{item:effres:partition:cut} \emph{Edges cut}: $|E(G) \setminus ( \bigcup_{i \in [I]} E(G_i) )| \leq \frac m 2$.
    \item \label{item:effres:partition:vertex} \emph{Vertex coverage}: Every vertex $v \in V(G)$ appears in at most $J$ of the subgraphs.
\end{enumerate}
\end{definition}

\begin{proposition}\label{prop:er_partition}
There is an algorithm, $\ERPalgo(G, r, \delta)$, which given any $G = (V,E,\ww)$ with $n = |V|$, $m = |E|$, $\frac{\max_{e \in E} \ww_e}{\min_{e \in E} \ww_e} \le W$ and $r \geq 1$, $\delta \in (0, 1)$, computes a 
\[\Par{\frac{8rn\log(n + 1)}{m},\, r,\, \log_r(W) + 3}\text{-ER decomposition of } G,\]
with probability $\ge 1 - \delta$ in time\footnote{The $O(n \log n)$ term arises from the use of Fibonacci heaps to compute shortest paths in undirected graphs in \Cref{prop:region}. There are results that have since obtained faster algorithms for computing shortest paths in undirected graphs \cite{Thorup99, DuanMSY23}. Moreover, the shortest paths do not necessarily need to be computed exactly, so it is possible that this factor could be improved as it has been in other region growing settings \cite{MillerPX13, AbrahamN19}. However, since this is not a bottleneck in the runtimes of our main results, we make no attempt to improve it here.
}
\[\bO\Par{m \log\Par{\frac n \delta} + n \log (n) \log_r(W)}.\]
\end{proposition}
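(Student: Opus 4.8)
The plan is to build the ER decomposition of $G$ in two nested stages: an outer weight-bucketing stage and an inner region-growing stage, using constant-factor overestimates of all edge effective resistances as a guide. First I would invoke the effective-resistance estimation routine (Lemma~\ref{lem:approx_er}) to compute values $\widetilde{r}_e$ with $\ER_G(e) \le \widetilde{r}_e \le O(1)\cdot \ER_G(e)$ for all $e \in E$, in $\bO(m\log(n/\delta))$ time and with failure probability at most $\delta$; all subsequent steps are deterministic conditioned on this event. Partition $E$ into weight buckets $E^{(j)} \defeq \{e \in E : 2^j \le \ww_e < 2^{j+1}\}$ for integers $j$ ranging over an interval of length $\log_r(W)+O(1)$ after rescaling so the minimum weight is $1$ (more precisely, group into $r$-multiplicative buckets so that item~\ref{item:effres:partition:weight} holds with ratio $r$; the number of such buckets is at most $\log_r(W)+3$, which will be the vertex-coverage parameter $J$ since within a single bucket's processing every vertex lands in exactly one induced piece).

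Within each bucket $G^{(j)}$ (on its induced vertex set), the inner stage runs a classical region-growing partition on the vertices using the shortest-path metric $d_j$ induced by treating $\widetilde{r}_e$ as the length of edge $e$. The target diameter is $\Theta(\rho / (r \cdot 2^{j}))$ up to the constant from the ER overestimates, chosen precisely so that item~\ref{item:effres:partition:diameter} holds: if $u,v$ lie in the same piece then $\ER_G(u,v) \le d_j(u,v) \le$ (diameter bound), using that effective resistance is a metric and is dominated by the sum of edge resistances along any path — hence by the sum of the $\widetilde{r}_e$'s — so multiplying by $\max_{e}\ww_e \le 2^{j+1}$ gives the required $\rho$. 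I would set $\rho = 8rn\log(n+1)/m$ and verify the region-growing diameter bound $\Theta(n\log n / m')$ within bucket $G^{(j)}$ (where $m'$ counts edges, or a charging argument over all buckets) is consistent with this choice. The region-growing lemma (Proposition~\ref{prop:region}, invoked via Fibonacci-heap shortest paths, contributing the $n\log n$ term per bucket and hence $n\log(n)\log_r(W)$ overall) guarantees that the number of inter-cluster edges in bucket $j$ is at most a $\frac{\ln(\text{something})}{\text{diameter}}$-fraction times the "volume"; summing the standard region-growing cut bound $|\text{edges cut in }G^{(j)}| \le O(\frac{\log(\cdot)}{\text{diam}}) \cdot (\text{total }\widetilde{r}\text{-weighted volume of }G^{(j)})$ over all $j$, and using $\sum_e \ww_e \ER_G(e) \le n-1$ together with the bucket structure $\ww_e \asymp 2^j$, the total cut is at most $m/2$ once the constant in $\rho$ is large enough — this is where the factor $8\log(n+1)$ is calibrated. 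Finally I would assemble the $G_i$'s as the induced subgraphs (restricted to their bucket's edges), discard the cut edges, and check all four items of Definition~\ref{def:er_partition} and the claimed runtime $\bO(m\log(n/\delta) + n\log(n)\log_r(W))$.

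The main obstacle I anticipate is the cut-budget bookkeeping: one must show the \emph{total} number of cut edges across all $O(\log_r W)$ buckets is at most $m/2$, not merely a constant fraction within each bucket, and the region-growing cut bound is naturally stated in terms of $\widetilde{r}$-weighted volume rather than edge count. The resolution is the charging inequality $\sum_{e \in E} \ww_e \widetilde{r}_e \le O(1)\sum_e \ww_e \ER_G(e) \le O(n\log n)$ (the last step because the graph is connected; the $\log n$ slack is exactly what the overestimate and the region-growing logarithm cost), combined with the fact that in bucket $j$ the edge weight is within a factor $2$ (or $r$) of $2^j$, so edge count in bucket $j$ is comparable to $2^{-j}$ times its weighted volume, and the per-bucket region-growing diameter $\Theta(n\log n / m)$ makes the cut fraction $O(m/(n\log n))$ times the weighted volume. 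Plugging in, the cut in bucket $j$ is $O(\frac{m}{n\log n}) \cdot O(2^{-j}) \cdot (\text{weighted volume of bucket } j)$, and summing over $j$ telescopes against $\sum_e \ww_e\widetilde r_e$ to give $O(m)$, with the constant driven below $\frac12$ by the choice of $\rho$'s leading constant. A secondary, more routine obstacle is confirming the vertex-coverage bound $J = \log_r(W)+3$: since each vertex belongs to the induced subgraph of every weight bucket in which it has an incident edge, and there are at most $\log_r(W)+3$ buckets, this is immediate once buckets are set up correctly.
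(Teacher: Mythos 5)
Your proposal follows essentially the same route as the paper's proof: constant-factor ER overestimates via Lemma~\ref{lem:approx_er}, $r$-multiplicative weight buckets, region growing (Proposition~\ref{prop:region}) within each bucket using the overestimates as edge lengths with diameter calibrated to $\alpha/\max_e \ww_e$, charging the cut edges across all buckets against $\sum_{e}\ww_e\widetilde{r}_e = O(n)$, and bounding vertex coverage by the number of buckets. One small slip to fix: $\sum_e \ww_e \ER_G(e)\le n-1$, not $O(n\log n)$ as in your parenthetical---the $\log(n+1)$ factor enters only through the region-growing diameter/cut tradeoff, exactly as your earlier (correct) statement and your calibration of the constant in $\rho$ assume.
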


In the remainder of this section, we prove \Cref{prop:er_partition}. The algorithm consists of two components. First, we use standard randomized algorithms (\Cref{lem:effres_over}) to efficiently compute an ER overestimate for the graph edges (\Cref{def:effres_over}). Then, we apply a standard result on region growing (\Cref{prop:region}) from \cite{GargVY96} to efficiently partition the edges within one weight range (\Cref{lem:region_applied}). Applying this decomposition scheme at every weight scale to the graph with edge lengths given by the effective resistance overestimates then yields the result. 
Interestingly, the only use of randomization in this algorithm is in computing overestimates of effective resistances and if a sufficiently efficient deterministic subroutine for this was developed, substituting this subroutine into our algorithm would would obtain a deterministic counterpart of \Cref{prop:er_partition}.

\begin{definition}[Effective resistance overestimate]
\label{def:effres_over}
Given $G = (V, E, \ww)$ with $n = |V|$, we call $\trr \in \R^E$ an $\alpha$\emph{-approximate effective resistance (ER) overestimate} if 
\[\ww^\top \trr \leq \alpha n \text{ and } \tilde{\rr}_{e} \geq \ER_G(e) \text{ for all } e \in E.\]
\end{definition}

To efficiently compute ER overestimates for use in our decomposition algorithms, we rely on near-linear time undirected Laplacian linear system solvers. To begin, we first provide a statement of the current fastest Laplacian linear system solver in the literature.

\begin{proposition}[Theorem 1.6, \cite{JambulapatiS21}]\label{prop:js21}
	Let $\LL_G$ be the Laplacian of $G = (V, E, \ww)$. There is an algorithm which takes $\LL_G$, $\bb \in \R^V$, and $\delta, \xi \in (0, 1)$, and outputs $\xx$ such that with probability $\ge 1 - \delta$, $\xx$ is an $\xi$-approximate solution to $\LL_G \xx = \bb$, i.e.,
	\[\norm{\xx - \LL_G^\dagger \bb}_{\LL_G} \le \xi \norm{\LL_G^\dagger \bb}_{\LL_G},\]
	in time $\bO(|E| \cdot \log \frac{1}{\delta\xi})$. Moreover, the algorithm returns $\xx = \MM \bb$ where $\MM$ is a random linear operator constructed independently of $\bb$, such that the above guarantee holds with $1 - \delta$ for all $\bb$.
\end{proposition}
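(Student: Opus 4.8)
The plan is to prove Proposition~\ref{prop:js21} (which is quoted from \cite{JambulapatiS21}) via the now-standard two-stage architecture for nearly-linear-time Laplacian solvers: (i) construct, using only $\LL_G$, a sparse spectral preconditioner $\PP$ with $O(1)$ relative condition number to $\LL_G$; (ii) run a fixed-coefficient polynomial (preconditioned Chebyshev / Richardson) iteration for $O(\log\frac1\xi)$ steps. The composition of these two stages is a single linear operator in the right-hand side $\bb$ — the Chebyshev coefficients depend only on the a priori condition-number bound, never on $\bb$ — so the ``$\xx = \MM\bb$ with $\MM$ built independently of $\bb$'' clause comes out for free, and the per-vector error bound holds simultaneously for \emph{all} $\bb$ on the high-probability event that $\PP$ is a good preconditioner. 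To reach the $\polyloglog$-only overhead actually claimed one would instantiate (i) via the recursive ultrasparsifier preconditioning chain of \cite{JambulapatiS21}; I describe below the conceptually cleanest instantiation, which already yields $\tO(m\cdot\polylog(n)\cdot\log\frac1\xi)$ and exhibits all the structural features asserted.

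First I would build $\PP$ by randomized approximate Cholesky factorization (approximate Gaussian elimination). Fix a uniformly random elimination ordering of $V$; when eliminating a vertex $v$, the exact Schur complement would add the Laplacian of the weighted clique on the current neighborhood $N(v)$, so instead sample $|N(v)|$ edges from that clique with probabilities proportional to products of incident weights, each reweighted to be unbiased, obtaining a sparse approximate Schur complement. Iterating produces a sparse lower-triangular $\widehat\LL$ and we set $\PP \defeq \widehat\LL\widehat\LL^\top$. Two facts need proof: (a) $\half\LL_G \preceq \PP \preceq 2\LL_G$ on $\ker(\LL_G)^\perp$ with probability $\ge 1-\delta$; and (b) $\nnz(\widehat\LL)$ and the total sampling work are $\tO(m\log\tfrac n\delta)$. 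For (a), the sequence of partially eliminated matrices forms a matrix martingale: each step's deviation from the exact Schur complement is small because a sampled clique edge has bounded ``leverage'' relative to the current matrix, and a matrix Freedman/Azuma bound (cf.\ Lemma~\ref{lem:matrix_azuma}) converts the accumulated conditional variance into the stated $O(1)$ spectral approximation. For (b), a random elimination order keeps expected fill-in near-linear, bounding the total clique-sampling cost. Handling $\ker(\LL_G)$ is routine — restrict to $\vone_V^\perp$, or work per connected component.

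Next I would boost accuracy by preconditioned Chebyshev iteration. Given $\PP$ with $\half\LL_G\preceq\PP\preceq 2\LL_G$, the matrix $\PP^{-1/2}\LL_G\PP^{-1/2}$ has spectrum in $[\half,2]$ on the relevant subspace, so the degree-$T$ Chebyshev-accelerated Richardson iteration started at $\xx_0=\vzero$ yields, after $T=O(\log\frac1\xi)$ steps, an iterate with $\norm{\xx_T - \LL_G^\dagger\bb}_{\LL_G}\le\xi\norm{\LL_G^\dagger\bb}_{\LL_G}$ — the standard polynomial-approximation-of-the-inverse bound at condition number $O(1)$. Each step costs one multiplication by $\LL_G$ (time $O(m)$) and one solve with $\PP$, i.e., two sparse triangular solves with $\widehat\LL$ (time $\tO(m)$), so the iteration runs in $\tO(m\log\frac1\xi)$; adding the $\tO(m\log\frac n\delta)$ preprocessing gives the claimed $\bO(|E|\log\frac1{\delta\xi})$. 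Because the iteration uses coefficients fixed in advance from the condition-number bound, the map $\bb\mapsto\xx_T$ equals $\MM\bb$ for $\MM = p_T(\PP^{-1}\LL_G)\PP^{-1}$ with $p_T$ a fixed polynomial, so $\MM$ depends only on the randomness used to construct $\PP$ and hence is independent of $\bb$. Conditioning on the probability-$(1-\delta)$ event in (a), which involves no information about $\bb$, the polynomial error bound yields $\norm{\xx - \LL_G^\dagger\bb}_{\LL_G}\le\xi\norm{\LL_G^\dagger\bb}_{\LL_G}$ for every $\bb$ at once — the ``for all $\bb$'' strengthening.

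The main obstacle is step (a): establishing the $O(1)$ spectral approximation with the right failure probability and near-linear work. The difficulty is that the matrix being approximated evolves adaptively — each elimination depends on the realized outcomes of earlier ones — so one genuinely needs a matrix-martingale concentration argument with careful control of the conditional variance contributed by each eliminated vertex, coupled with the combinatorial claim that a random elimination order keeps total fill-in $\tO(m)$; the $\log\frac n\delta$ oversampling needed to replace the natural $n^{-c}$ failure probability by $\delta$ is precisely what produces the $\log\frac1\delta$ factor in the runtime. Everything downstream — the Chebyshev error analysis, the linearity and $\bb$-independence of $\MM$, and the kernel bookkeeping — is routine once a constant-factor sparse preconditioner is in hand.
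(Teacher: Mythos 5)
There is a genuine gap, and it is quantitative: the proposition's runtime is $\bO(|E|\log\frac{1}{\delta\xi})$, where the paper's $\bO$ hides only $\polyloglog(nU)$ factors, and your self-contained construction does not achieve it. An approximate-Cholesky preconditioner with $O(1)$ relative condition number (Kyng--Sachdeva-style elimination with matrix-martingale control) costs $\Omega(m\,\polylog(n))$ to build and to apply, so your solver runs in $m\cdot\polylog(n)\cdot\log\frac{1}{\xi\delta}$ time, not $m\cdot\polyloglog\cdot\log\frac{1}{\delta\xi}$. This is not a cosmetic difference in this paper: the $\polyloglog$-only overhead of Proposition~\ref{prop:js21} is exactly what feeds Lemma~\ref{lem:approx_er} and ultimately the $\bO(m\log^3 n)$ runtime of Theorem~\ref{thm:fastsparsify}; replacing it with your solver degrades those headline bounds by polylogarithmic factors. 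Your escape hatch --- ``to reach the $\polyloglog$-only overhead one would instantiate (i) via the recursive ultrasparsifier preconditioning chain of \cite{JambulapatiS21}'' --- is simply re-citing the theorem being proved, so the quantitative content of the statement is never established by your argument.

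It is also worth noting how the paper itself handles this statement: it does \emph{not} reprove the solver. The runtime is taken verbatim from Theorem~1.6 of \cite{JambulapatiS21}, and the only thing argued is the clause that statement adds beyond the citation, namely that the output is $\MM\bb$ for a random linear operator $\MM$ built independently of $\bb$, with the guarantee holding for all $\bb$ simultaneously. The paper proves this by inspecting the structure of the cited algorithm: Algorithm~12 of \cite{JambulapatiS21} composes $\mathsf{PreconNoisyAGD}$, $\mathsf{RichardsonSolver}$, and $\mathsf{PreconRichardson}$, each of which applies only linear transformations (and recursive calls on strictly smaller problems) to its input, so linearity follows by induction over the recursion. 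Your linearity argument --- randomness confined to preconditioner construction, followed by a fixed-coefficient Chebyshev polynomial, with conditioning on the $\bb$-independent success event giving the ``for all $\bb$'' strengthening --- is correct in spirit and indeed cleaner, but it applies to \emph{your} algorithm, not to the algorithm whose runtime you are forced to borrow. To close the gap you should either (a) keep the citation for the runtime and adapt your conditioning argument to the actual recursive structure of \cite{JambulapatiS21}, as the paper does, or (b) accept the weaker $m\,\polylog(n)$ runtime and acknowledge that this changes the downstream constants in the paper's main theorems.
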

The runtime guarantee of the above proposition follows from Theorem~1.6 of \cite{JambulapatiS21}. We now briefly justify the second clause in Proposition~\ref{prop:js21}, i.e.\ that the Laplacian solver is a randomized linear function of $\bb$, as it is not explicitly stated in \cite{JambulapatiS21}. Theorem~1.6 follows by combining an algorithm which constructs low-stretch subgraphs with a recursive preconditioning framework (Algorithm 12). Algorithm 12 returns the result of an error-robust accelerated gradient descent procedure $\mathsf{PreconNoisyAGD}$, which only applies linear transformations and a procedure $\mathsf{RichardsonSolver}$, to $\bb$. In turn, $\mathsf{RichardsonSolver}$ performs only linear transformations and another procedure $\mathsf{PreconRichardson}$ to its input. Finally, $\mathsf{PreconRichardson}$ applies linear transformations and Algorithm 12 to its input: in addition, these calls to Algorithm 12 operate on strictly smaller problems. Thus, if we assume that these inner calls to Algorithm 12 perform a linear transformation of $\bb$, the outer call is also a linear transformation: the last claim in Proposition~\ref{prop:js21} follows. 

Proposition~\ref{prop:js21} combined with a Johnson-Lindenstrauss based sketching approach from \cite{SpielmanS08} shows we can efficiently approximate a set of effective resistances to constant multiplicative error, which we summarize in the following. We remark that the runtime in \cite{SpielmanS08} is larger than in Lemma~\ref{lem:approx_er}; our improvement stems from replacing the solver used there with Proposition~\ref{prop:js21}.

\begin{lemma}[Theorem 2, \cite{SpielmanS08}]\label{lem:approx_er}
	Let $\delta \in (0, 1)$, let $\LL_G$ be the Laplacian of $G = (V, E, \ww)$, and let $S \subseteq V \times V$. There is an algorithm, \AER{$G, S, \delta$}, which runs in time $\bO((|E| + |S|)\log(\frac{|S|}{\delta}) )$ and outputs $\rr = \{\rr_{(u, v)}\}_{(u, v) \in S}$ satisfying with probability $\ge 1 - \delta$,
	\[\frac 2 3 \ER_G(u, v) \le \rr_{(u, v)} \le \frac 4 3 \ER_G(u, v), \text{ for all } (u, v) \in S.\]
\end{lemma}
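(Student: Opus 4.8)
The plan is to follow the Johnson--Lindenstrauss approach of Spielman--Srivastava, with the near-linear-time solver of \Cref{prop:js21} substituted for the solver used there. Recall that $\ER_G(u,v) = \bb_{(u,v)}^\top\LL_G^\dagger\bb_{(u,v)}$, and since $\LL_G = \BB_G^\top\WW\BB_G$ and $\LL_G^\dagger\LL_G\LL_G^\dagger = \LL_G^\dagger$, this equals $\norms{\WW^{1/2}\BB_G\LL_G^\dagger\bb_{(u,v)}}_2^2$. Thus the quantities $\{\ER_G(u,v)\}_{(u,v)\in S}$ are among the pairwise squared Euclidean distances of the point set $\{\WW^{1/2}\BB_G\LL_G^\dagger\ee_v : v\in V\}\subset\R^E$. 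I would draw a JL matrix $\QQ\in\R^{k\times E}$ with appropriately scaled i.i.d.\ Gaussian (or Rademacher) entries and $k = O(\log(|S|/\delta))$; then with probability $\ge 1-\delta/3$, $\QQ$ preserves all $|S|$ relevant squared distances up to a factor $1\pm\tfrac1{10}$.

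It then remains to compute, approximately, the projected embedding, i.e.\ the matrix $\QQ\WW^{1/2}\BB_G\LL_G^\dagger$, without ever forming $\LL_G^\dagger$. First form $\ZZ \defeq \QQ\WW^{1/2}\BB_G\in\R^{k\times V}$ explicitly; since $\BB_G$ has $2|E|$ nonzero entries this costs $O(k|E|)$ time. Next, invoke \Cref{prop:js21} once to obtain its random linear solver operator $\MM$ (which we may take symmetric, and which by the ``moreover'' clause acts as an $\xi$-approximate inverse of $\LL_G$ for \emph{all} right-hand sides simultaneously with probability $\ge 1-\delta/3$), with $\xi$ a small absolute constant; apply $\MM$ to each of the $k$ rows $\zz_i$ of $\ZZ$, obtaining $\tilde\ZZ\in\R^{k\times V}$ with rows $(\MM\zz_i)^\top$. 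Finally output $\rr_{(u,v)} \defeq \norms{\tilde\ZZ\bb_{(u,v)}}_2^2$ for each $(u,v)\in S$, at an extra cost of $O(k|S|)$.

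For correctness, note that by symmetry of $\MM$ we have $\tilde\ZZ\bb_{(u,v)} = \ZZ\MM\bb_{(u,v)} = \QQ\WW^{1/2}\BB_G\xx_{(u,v)}$ where $\xx_{(u,v)} \defeq \MM\bb_{(u,v)}$. The key identity is $\norms{\WW^{1/2}\BB_G\xx}_2^2 = \xx^\top\LL_G\xx = \norms{\xx}_{\LL_G}^2$; since \Cref{prop:js21} gives $\norms{\xx_{(u,v)} - \LL_G^\dagger\bb_{(u,v)}}_{\LL_G}\le\xi\norms{\LL_G^\dagger\bb_{(u,v)}}_{\LL_G}$ and $\norms{\LL_G^\dagger\bb_{(u,v)}}_{\LL_G}^2 = \ER_G(u,v)$, the triangle inequality yields $\norms{\WW^{1/2}\BB_G\xx_{(u,v)}}_2^2 \in [(1-\xi)^2,(1+\xi)^2]\,\ER_G(u,v)$. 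Crucially, the vectors $\{\WW^{1/2}\BB_G\xx_{(u,v)}\}_{(u,v)\in S}$ are fixed once $\MM$ (which is independent of $\QQ$) is drawn, so applying the JL guarantee to these $|S|$ vectors gives $\rr_{(u,v)} = \norms{\QQ\WW^{1/2}\BB_G\xx_{(u,v)}}_2^2 \in (1\pm\tfrac1{10})(1\pm\xi)^2\,\ER_G(u,v)\subseteq[\tfrac23,\tfrac43]\,\ER_G(u,v)$ for all $(u,v)\in S$, provided $\xi$ is a sufficiently small constant. A union bound over the JL and solver failure events (each of probability $\le\delta/3$, with room to spare) gives success probability $\ge1-\delta$.

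For the running time, forming $\ZZ$ and the outputs costs $O(k(|E|+|S|))$, while constructing $\MM$ and applying it to the $k$ rows of $\ZZ$ costs $\bO(|E|\log\tfrac1\delta) + k\cdot\bO(|E|)$ by \Cref{prop:js21} with $\xi=\Theta(1)$ (reusing the single random operator across the $k$ solves, so the $\log\tfrac1\delta$ is paid only once), and $\log\tfrac1\delta \le \log\tfrac{|S|}\delta$; with $k = O(\log\tfrac{|S|}\delta)$ this sums to $\bO((|E|+|S|)\log\tfrac{|S|}\delta)$. The one point that demands care — and the main obstacle — is the interaction between solver accuracy and runtime: a naive row-by-row propagation of the solver error through $\tilde\ZZ$ only yields $\norms{\tilde\ZZ\bb_{(u,v)} - \ZZ\LL_G^\dagger\bb_{(u,v)}}_2^2 \le \xi^2\,\ER_G(u,v)\,\Tr(\ZZ\LL_G^\dagger\ZZ^\top)$ with $\Tr(\ZZ\LL_G^\dagger\ZZ^\top) = \normf{\QQ\WW^{1/2}\BB_G\LL_G^{\dagger/2}}^2 = \Theta(n)$ w.h.p., which would force $\xi = \Theta(n^{-1/2})$ and an extra $\log n$ factor in the runtime. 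Routing the error through the identity $\norms{\WW^{1/2}\BB_G\xx}_2^2 = \norms{\xx}_{\LL_G}^2$ is precisely what lets a constant $\xi$ suffice; together with the mild accuracy dependence and reusable random operator of \Cref{prop:js21} (in contrast to the solver originally used in \cite{SpielmanS08}), this delivers the stated runtime.
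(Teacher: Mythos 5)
Your correctness argument is essentially the paper's: reduce $\ER_G(u,v)$ to squared Euclidean distances of the embedding $\WW^{1/2}\BB_G\LL_G^\dagger$, sketch with a JL matrix drawn independently of the solver's random operator $\MM$, and route the solver error through the identity $\norms{\WW^{1/2}\BB_G\xx}_2^2=\norms{\xx}_{\LL_G}^2$ so that a constant relative accuracy $\xi$ suffices (the paper does the same error propagation, just with a slightly cruder norm inequality). The implicit symmetry of $\MM$ when you apply it to the rows of $\QQ\WW^{1/2}\BB_G$ rather than to $\bb_{(u,v)}$ is glossed over in the paper as well, so I will not press on that.

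The genuine gap is in the runtime accounting, and it stems from a structural difference with the paper. You use a single JL matrix with $k=\Theta(\log\frac{|S|}{\delta})$ rows together with a single solver operator that must succeed (for all right-hand sides) with probability $1-\delta/3$. Under \Cref{prop:js21} as stated, producing an answer with failure probability $\delta$ and accuracy $\xi$ costs $\bO(|E|\log\frac{1}{\delta\xi})$ \emph{per solve}; the proposition gives no separation between a one-time construction cost and a cheaper per-application cost, and the ``reuse the random seed'' remark in the paper is about keeping $\MM$ consistent and independent of $\QQ$, not about amortizing the $\log\frac1\delta$. So your $k$ applications cost $\bO(|E|\log\frac1\delta\log\frac{|S|}{\delta})$, which exceeds the claimed $\bO((|E|+|S|)\log\frac{|S|}{\delta})$ by a $\log\frac1\delta$ factor in the relevant regime $\delta=1/\poly(n)$; your parenthetical ``the $\log\frac1\delta$ is paid only once'' is exactly the unproved step. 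If instead you make the solver constant-failure-probability to get $\bO(|E|)$ per application, a single bad draw of $\MM$ corrupts every estimate and you cannot reach overall success $1-\delta$ without repetition. The paper's construction is designed around this: it uses $\Theta(1)$-dimensional sketches and constant-accuracy, constant-failure-probability solves in each of $K=\Theta(\log\frac{|S|}{\delta})$ independent runs, and boosts per-pair by taking medians and applying Chernoff plus a union bound over $S$. Replacing your single high-probability solve with that median-of-independent-cheap-runs scheme (or else proving an amortized application cost for $\MM$ beyond what \Cref{prop:js21} asserts) is what is needed to obtain the stated $\bO((|E|+|S|)\log\frac{|S|}{\delta})$ bound.
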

\begin{proof}
	Consider the following algorithm for approximating $\ER_G(u, v)$ for some $(u, v) \in S$. We output the median of $K = \Theta(\log\frac{|S|}{\delta})$ independent evaluations of
	\begin{equation}\label{eq:one_estimate}\norm{\QQ \WW_G^{\half} \BB_G \MM (\ee_v - \ee_u)}_2^2,\end{equation}
	for $\QQ \in \RR^{\Theta(1) \times |E|}$ filled with random scaled Gaussian entries, and where $\MM$ is the random linear operator given by the approximate solver in Proposition~\ref{prop:js21} with a sufficiently small constant $\xi$. We claim that \eqref{eq:one_estimate} lies in the range $[\frac 2 3 \ER_G(u, v), \frac 4 3 \ER_G(u, v)]$ with probability $\frac 2 3$. By standard Johnson-Lindenstrauss guarantees (see, e.g., the proof of Theorem 2 in \cite{SpielmanS08}), it suffices to prove that with probability $\frac 5 6$, letting $\MM$ be the resulting linear operator from Proposition~\ref{prop:js21},
	\[\Abs{\norm{\WW_G^{\half}\BB_G \MM \bb}_2^2 - \norm{\WW_G^{\half} \BB_G \LL_G^\dagger \bb}_2^2} \le \frac 1 4 \norm{\WW_G^{\half} \BB_G \LL_G^\dagger \bb}_2^2.\]
	To this end, using $0.9\norm{\uu}_2^2 - 11\norm{\vv}_2^2 \le \norm{\uu + \vv}_2^2 \le 1.1\norm{\uu}_2^2 + 11\norm{\vv}_2^2$, we have
	\begin{align*}
		\norm{\WW_G^{\half}\BB_G \MM\bb}_2^2 &\le 1.1\norm{\WW_G^{\half} \BB_G \LL_G^\dagger \bb}_2^2 + 11\norm{\WW_G^{\half}\BB_G \Par{\LL_G^\dagger \bb - \MM \bb}}_2^2, \\
		\norm{\WW_G^{\half}\BB_G \MM\bb}_2^2 &\ge 0.9\norm{\WW_G^{\half} \BB_G \LL_G^\dagger \bb}_2^2 - 11\norm{\WW_G^{\half}\BB_G \Par{\LL_G^\dagger \bb - \MM \bb}}_2^2,
	\end{align*}
	so choosing $\delta = \frac 1 6$ and $\xi = \frac{1}{100}$ in Proposition~\ref{prop:js21} yields the desired claim on each individual evaluation of \eqref{eq:one_estimate}. 
	Thus, by Chernoff bounds the median estimate will lie in the specified range with probability $\ge 1 - \frac{\delta}{|S|}$, yielding correctness after a union bound over all of $S$. 
	
	We now discuss how to implement the above algorithm within the stated runtime. For each independent run $k \in [K]$, we first precompute $\QQ \WW_G^{1/2} \BB_G$ in the given time, and apply $\MM$ from Proposition~\ref{prop:js21} to each of the $\Theta(1)$ rows of this matrix. Notably, we can reuse the same random seed in the solver of \cite{JambulapatiS21} so that the random linear operator $\MM$ provided by Proposition~\ref{prop:js21} is the same for all rows of  $\QQ \WW_G^{1/2}  \BB_G$. The random linear function $\MM$ is constructed obliviously to the choice of $\QQ$, so $\QQ$ is independent of these calls and Johnson-Lindenstrauss applies. 
	Each evaluation of \eqref{eq:one_estimate} takes constant time, which we need to repeat $|S|K$ times in total. 
\end{proof}

Our ER overestimate computations then follow from an immediate application of Lemma~\ref{lem:approx_er}.

\begin{lemma}
\label{lem:effres_over}
There is a randomized algorithm, that given any $G = (V, E, \ww)$ with $n = |V|$, $m = |E|$, computes a $2$-approximate ER overestimate with probability $\ge 1- \delta$ in $\bO(m \log \frac n \delta)$ time.
\end{lemma}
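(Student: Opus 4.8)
The plan is to obtain a $2$-approximate ER overestimate by invoking \Cref{lem:approx_er} with the set $S = E$, and then post-processing the resulting estimates by a small multiplicative blow-up so that the guarantee $\trr_e \ge \ER_G(e)$ holds deterministically once the high-probability event of \Cref{lem:approx_er} occurs. Concretely, I would run $\AER(G, E, \delta)$ to obtain $\rr \in \R^E$ such that with probability $\ge 1 - \delta$, $\tfrac{2}{3}\ER_G(e) \le \rr_e \le \tfrac{4}{3}\ER_G(e)$ for all $e \in E$, which takes $\bO((m + |E|)\log\tfrac{|E|}{\delta}) = \bO(m \log \tfrac{n}{\delta})$ time since $|E| = m \le n^2$ (so $\log |E| = O(\log n)$). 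Then set $\trr_e \defeq \tfrac{3}{2}\rr_e$.

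Conditioned on the success event, I would verify the two defining properties of \Cref{def:effres_over} with $\alpha = 2$. For the pointwise lower bound: $\trr_e = \tfrac{3}{2}\rr_e \ge \tfrac{3}{2}\cdot \tfrac{2}{3}\ER_G(e) = \ER_G(e)$, as required. For the weighted sum bound: $\ww^\top \trr = \tfrac{3}{2}\sum_{e \in E}\ww_e \rr_e \le \tfrac{3}{2}\cdot\tfrac{4}{3}\sum_{e \in E}\ww_e \ER_G(e) = 2\sum_{e \in E}\ww_e\ER_G(e)$. The final step is the standard identity (Foster's theorem) that $\sum_{e \in E}\ww_e \ER_G(e) = \sum_{e=(u,v)\in E}\ww_e\, \bb_{(u,v)}^\top \LL_G^\dagger \bb_{(u,v)} = \Tr(\LL_G^\dagger \BB_G^\top \WW_G \BB_G) = \Tr(\LL_G^\dagger \LL_G) = \rank(\LL_G) \le n - 1 < n$ (using connectedness of $G$, or $\le n$ in general). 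Hence $\ww^\top \trr \le 2n = \alpha n$, completing the verification.

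I do not expect any genuine obstacle here; the lemma is essentially a bookkeeping wrapper around \Cref{lem:approx_er} plus Foster's theorem. The only points requiring a sentence of care are (i) confirming that the runtime of \Cref{lem:approx_er} with $S = E$ collapses to $\bO(m\log\tfrac{n}{\delta})$, since $|S| = m$ and $\log m = O(\log n)$; and (ii) ensuring the constant-factor rescaling does not break the runtime (it is a single pass over $m$ entries, negligible). If one wants to be careful about the rank bound when $G$ is not connected, one can either invoke the blanket connectedness assumption made earlier in \Cref{sec:prelims} for graphs under consideration, or simply note $\rank(\LL_G) \le n$ regardless, which still gives $\ww^\top \trr \le 2n$. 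I would state the proof in three short sentences: apply \AER with $S = E$ and $\delta$, rescale by $\tfrac32$, and invoke Foster's identity to bound the weighted trace.
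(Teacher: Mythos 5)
Your proof is correct and follows essentially the same approach as the paper: apply \Cref{lem:approx_er} with $S = E$, scale by $\tfrac{3}{2}$, and invoke Foster's identity $\sum_{e\in E}\ww_e\ER_G(e) \le n$ to bound $\ww^\top\trr$. The paper phrases the final bound as $n - c$ (with $c$ the number of connected components) rather than via $\rank(\LL_G)$, but this is the same observation.
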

\begin{proof}
Consider applying \Cref{lem:approx_er} with $S = E$ and the specified $\delta$. In $\bO(m \log \frac n \delta)$ time this procedure computes $\rr \in \R^E$ such that with probability $\ge 1 - \delta$,
\[
\frac 2 3 \ER_G(u, v) \le \rr_{(u, v)} \le \frac 4 3 \ER_G(e), \text{ for all } e \in E.
\]
Our algorithm simply computes this $\rr$ and then outputs $\tilde{\rr} = \frac 3 2 \rr$. The output $\tilde{\rr}$
has the desired properties as $\tilde{\rr}_{e} \geq \ER_G(e)$ for all $e \in E$ and
\[
\sum_{e \in E} \ww_e \tilde{\rr}_{e} 
\leq \Par{\frac 4 3 \cdot \frac 3 2} \sum_{e \in E} \ww_e \cdot \ER_G(e)
\leq 2 n ,
\]
as $\sum_{e \in E} \ww_e \ER_G(e)$ is $n - c$ where $c$ is the number of connected components in $G$.
\end{proof}

Next, we provide a key subroutine from prior work used in our decomposition.

\begin{proposition}[Region growing, \cite{GargVY96}, Section 4]
\label{prop:region}
There is a deterministic algorithm that given $G = (V, E, \ww)$ with $n = |V|$, $m = |E|$, edge lengths $\ll \in \R^{E}_{> 0}$ and $d > 0$, in $O(m + n \log n)$-time outputs a partition $\{S_k\}_{k \in [K]}$ of $V$, each with diameter $\leq 2 d \log(n + 1)$ with respect to $\ll$, and with 
\[d \cdot \sum_{e \in \partial(\{S_k\}_{k \in [K]})} \ww_e  \leq 2\ww^\top \ll,\]
where $\partial(\{S_k\}_{k \in [K]})$ is the set of edges $(u, v) \in E$ with $u \in S_i$, $v \in S_j$ and $i \neq j$.
\end{proposition}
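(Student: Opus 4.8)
This is the classical ``region growing'' argument of \cite{GargVY96}, so the plan is to reconstruct it. Write $\Phi \defeq \ww^\top\ll = \sum_{e \in E}\ww_e\ll_e$ and assume $\Phi > 0$ (otherwise $\ww$ vanishes on $E$ and returning all singletons works). Fix a \emph{base volume} $\mu \defeq \Phi/n$. The algorithm is iterative: while the current remaining graph is nonempty, pick an arbitrary vertex $v$ in it, grow a ball $B(v,r)$ around $v$ in the $\ll$-shortest-path metric of the remaining graph, stop at a carefully chosen radius $r$, output the vertex set $B(v,r)$ as a new piece $S_k$, then delete $B(v,r)$ together with all incident edges and recurse. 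For a fixed center $v$, let $c(r)$ be the total $\ww$-weight of edges with exactly one endpoint in $B(v,r)$, and let $\vol(r)$ be $\mu$ plus $\sum_e \ww_e$ times the length of the portion of $e$ lying within distance $r$ of $v$ (so a fully enclosed edge contributes $\ww_e\ll_e$, and an edge with endpoint $a$ at distance $< r$ but the rest beyond contributes $\ww_e(r - \mathrm{dist}(v,a))$). The structural facts I would use are: $\vol$ is nondecreasing, always at most $\mu + \Phi$, between consecutive Dijkstra ``settle events'' it is linear with slope exactly $c(r)$, and at a settle event it can only jump upward.

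\textbf{Radius bound.} I would show there exists $r \le d\log(n+1)$ with $c(r) \le \tfrac1d\vol(r)$. If not, then on $[0, d\log(n+1)]$ we have $\vol'(r) = c(r) > \tfrac1d\vol(r)$ between events and $\vol$ only jumps up at events, so $\tfrac{\mathrm d}{\mathrm d r}\log\vol(r) > \tfrac1d$ wherever differentiable and $\log\vol$ is nondecreasing overall; integrating gives $\vol(d\log(n+1)) > \vol(0)\cdot e^{\log(n+1)} = \mu(n+1) = \Phi + \mu$, contradicting $\vol \le \mu + \Phi$. Taking the smallest valid $r$, the piece $S_k \defeq B(v,r)$ has $\ll$-radius at most $d\log(n+1)$ about $v$ in the remaining graph, hence $\ll$-diameter at most $2d\log(n+1)$ there, and therefore also in $G$ since distances in a subgraph upper bound distances in $G$.

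\textbf{Accounting for the cut.} Each produced piece $S_k$ contributes boundary weight $c_k \le \tfrac1d\vol_k$, where $\vol_k$ is its volume at the chosen radius. Every cut edge is counted exactly once over the whole run — precisely when the first of its two endpoints is absorbed into a ball, after which the edge is deleted — so $\sum_{e \in \partial(\{S_k\})}\ww_e = \sum_k c_k \le \tfrac1d\sum_k\vol_k$. The interior-edge contributions to distinct $\vol_k$ come from disjoint deleted edges and so sum to at most $\Phi$; there are at most $n$ pieces, each carrying one base term $\mu = \Phi/n$; hence $\sum_k\vol_k \le \Phi + n\mu = 2\Phi$, which yields $d\sum_{e \in \partial}\ww_e \le 2\Phi = 2\ww^\top\ll$, exactly as claimed.

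\textbf{Implementation and the hard part.} Each ball is grown by Dijkstra from $v$ on the remaining graph using Fibonacci heaps; as vertices settle in increasing distance order one maintains $c(\cdot)$ and $\vol(\cdot)$ incrementally. Between consecutive settle events the ball's vertex set is fixed, so $c$ is constant and $\vol$ is linear with slope $c$, whence $c(r)/\vol(r)$ is decreasing on that interval; thus the smallest valid radius is attained either right at a settle event (using its post-jump values) or at a single explicitly computable point inside one interval, so $O(1)$ work per touched vertex and incident edge suffices to locate $r$. Since each vertex is settled in exactly one ball and each edge is deleted once, the heap operations amortize across all balls to $O(m)$ inserts and decrease-keys and $O(n)$ extract-mins, i.e.\ $O(m + n\log n)$ total. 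I expect the main obstacles to be precisely this amortized bookkeeping — in particular that a vertex may be relaxed (but not settled) as a boundary vertex of several different balls, which must be charged to its incident edges rather than to itself — together with the minor analytic care needed to make the volume/boundary growth argument actually pin down a valid radius among the finitely many Dijkstra events, handling ties and the in/out status of a vertex exactly at its settle distance.
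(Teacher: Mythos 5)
Your proof is correct, and it is the standard Garg--Vazirani--Yannakakis region-growing argument that the paper invokes by citation without reproving. Note that the paper itself never proves Proposition~\ref{prop:region}; it is stated with an attribution to \cite{GargVY96}, so there is no ``paper proof'' to compare against beyond the classical one you have reconstructed. All the load-bearing steps in your write-up are right: setting $\mu = \ww^\top\ll/n$ so that $\vol(0)\,e^{\log(n+1)} = \mu(n+1) = \Phi+\mu$ exceeds the a priori bound $\vol \le \mu + \Phi$; the monotonicity and piecewise-differentiability of $\vol$ (the upward jumps when a far endpoint settles help, not hurt, the lower bound on $\vol$); the observation that distances in the remaining subgraph dominate distances in $G$, so the radius bound gives the stated diameter in $G$; the charging of each cut edge to exactly the first piece that absorbs one of its endpoints, yielding $\sum_k c_k = \sum_{e\in\partial}\ww_e$; and the accounting $\sum_k \vol_k \le \Phi + n\mu = 2\Phi$ from edge-disjointness of the deleted edges plus at most $n$ base terms. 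Your implementation remarks are also the right concerns: each vertex is settled once globally (giving $O(n)$ extract-mins), while relaxations of a boundary vertex from several different balls must be charged to its incident edges rather than to the vertex, giving $O(m)$ heap decreases in total and the claimed $O(m + n\log n)$ bound with Fibonacci heaps. One tiny point of hygiene: when searching for the smallest valid $r$, it is cleanest to use a fixed convention (say $B(v,r) = \{u : d(v,u) \le r\}$ with $c(r)$ evaluated on the closed ball) and to check the stopping condition at the right endpoint of each Dijkstra interval, so that the ``valid $r$'' you prove exists analytically is also one your discrete search will find; you flagged this yourself, and it does not affect correctness.
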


By applying Proposition~\ref{prop:region} instantiated with appropriate edge lengths, we have the following.

\begin{lemma}
\label{lem:region_applied}
There is a deterministic algorithm that given $G = (V, E, \ww)$ with $n = |V|$, $m = |E|$, edge lengths $\ll \in \R^{E}_{> 0}$, and parameters $v, \alpha > 0$ and $r > 1$, in $O(m + n \log n)$-time outputs
vertex-disjoint subgraphs $\{G_k\}_{k \in [K]}$ such that the following hold.
\begin{enumerate}
    \item \label{item:partition:edges} $\bigcup_{k \in [K]} E(G_k) \subseteq F$ for $F \defeq \{e \in E \mid \ww_e \in (\frac{v}{r},v] \}$.
    \item \label{item:partition:diameter} For all $k \in [K]$, the diameter of $G_k$ with respect to $\ll$ is at most $\frac \alpha {\max_{e \in E(G_k)}\ww_e}$.
    \item \label{item:partition:cut} $|F \setminus \{ \bigcup_{k \in [K]} E(G_k) \}| \leq \frac{4 r \ln(n + 1)}{\alpha} \cdot \sum_{e \in F} \ww_e \ll_e$.
\end{enumerate}
\end{lemma}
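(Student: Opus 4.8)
The plan is to apply Proposition~\ref{prop:region} on the subgraph induced by the edge set $F = \{e \in E \mid \ww_e \in (\tfrac v r, v]\}$ with a suitable choice of the parameter $d$, and then translate the diameter and cut guarantees of region growing into the statements we need. First I would restrict attention to $G_F = (V, F, \ww)$ and feed the edge lengths $\ll$ (restricted to $F$) into Proposition~\ref{prop:region} with $d \defeq \tfrac{\alpha}{2 v \log(n+1)}$; this immediately returns a partition $\{S_k\}_{k \in [K]}$ of $V$ and I would define $G_k \defeq G_F[S_k]$, which are vertex-disjoint by construction and satisfy item~\ref{item:partition:edges} since every edge of every $G_k$ lies in $F$.

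For item~\ref{item:partition:diameter}, Proposition~\ref{prop:region} guarantees each $S_k$ has $\ll$-diameter at most $2 d \log(n+1) = \tfrac{\alpha}{v}$. Since every edge $e \in E(G_k) \subseteq F$ has $\ww_e \le v$, we get $\max_{e \in E(G_k)} \ww_e \le v$, hence $\tfrac{\alpha}{v} \le \tfrac{\alpha}{\max_{e \in E(G_k)}\ww_e}$, which is exactly the claimed bound (note the diameter of $G_k$ is at most the $\ll$-diameter of $S_k$ since $G_k$ is an induced subgraph — actually one should be slightly careful here, as shortest paths in $G_k$ can only be longer than in $G_F$; but Proposition~\ref{prop:region} bounds the diameter within the partition using paths in $G$, so the cleanest route is to note that the edges cut are all charged in item~\ref{item:partition:cut} and the surviving structure has the stated diameter — I would state the diameter bound as being inherited from the $\ll$-metric on $S_k$ and spell out that internal paths are what matter, using that region growing's diameter is realized internally to each ball). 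For item~\ref{item:partition:cut}, the edges of $F$ not contained in any $G_k$ are exactly $\partial(\{S_k\})$ intersected with $F$, and Proposition~\ref{prop:region} gives $d \cdot \sum_{e \in \partial(\{S_k\})} \ww_e \le 2 \ww^\top \ll$ where the sum and inner product are over $F$; since each such cut edge has $\ww_e > \tfrac v r$, we bound the count $|F \setminus \bigcup_k E(G_k)|$ by $\tfrac r v \sum_{e \in \partial} \ww_e \le \tfrac{r}{v d} \cdot 2\ww^\top\ll = \tfrac{4 r \log(n+1)}{\alpha}\sum_{e \in F}\ww_e \ll_e$, as desired. The runtime is $O(m + n\log n)$ directly from Proposition~\ref{prop:region}.

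The main obstacle I anticipate is the bookkeeping around the diameter claim in item~\ref{item:partition:diameter}: Proposition~\ref{prop:region} states the diameter of each $S_k$ with respect to $\ll$ in the \emph{ambient} graph, whereas item~\ref{item:partition:diameter} is phrased for the subgraph $G_k$, and one must argue that the relevant shortest paths realizing small diameter stay inside $S_k$ (which is how the Garg--Vazirani--Yannakakis region-growing balls are constructed — each ball is a ball in the metric, so geodesics between points in the ball can be taken within the ball). I would make this explicit by recalling that each region produced is a metric ball of radius $\le d\log(n+1)$ around a center, so any two vertices in it are within $2d\log(n+1)$ via a path through the center lying entirely in the region. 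Everything else is a routine substitution of the chosen value of $d$.
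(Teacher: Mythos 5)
Your proof is correct and takes essentially the same approach as the paper: apply Proposition~\ref{prop:region} with $d = \alpha/(2v\log(n+1))$, induce the $G_k$ on the edges of $F$ within each region, and verify the three items by direct substitution. The only difference is a cosmetic one — the paper applies Proposition~\ref{prop:region} to $G = (V,E,\cdot)$ with weights $\bww$ zeroed outside $F$ but lengths $\ll$ on all of $E$, whereas you restrict to the subgraph $G_F=(V,F,\ww)$ directly; both yield the same cut bound, and your restricted graph gives an $\ll$-diameter bound in $G_F$ which is only stronger than the ambient-$G$ bound the paper uses (and which is all the downstream application in Proposition~\ref{prop:er_partition} actually needs, since effective resistances form a metric on all of $V$), so your worry about paths staying internal to $S_k$ is harmless extra care rather than a required step.
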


\begin{proof} 
Let $\bww_e = \ww_e$ for all $e \in F$ and $\bww_e = 0$ for all $e \in E \setminus F$. We apply \Cref{prop:region} to $G$ with $\ww \gets \bww$  and $d \gets \frac{\alpha}{2v\log(n + 1)}$ to obtain $\{S_k\}_{k \in [K]}$. Define $\{G_k\}_{k \in [K]}$ so that $V(G_k) = S_k$ and $E(G_k)$ are the edges of $F$ with both endpoints in $S_k$, with the same weight as in $G$. 

We prove that the $\{G_k\}_{k \in [K]}$ satisfy Items~\ref{item:partition:edges}, \ref{item:partition:diameter}, and \ref{item:partition:cut}. \Cref{item:partition:edges} follows directly by construction. Next, \Cref{prop:region} implies that the diameter of each $G_k$ with respect to $\ell$ is at most $\frac \alpha v$. \Cref{item:partition:diameter} then follows as $\max_{e \in E(G_k)}\ww_e \leq v$. For \Cref{item:partition:cut}, note that \Cref{prop:region} implies that 
\[
\left(\frac{\alpha}{2 v \ln(n + 1)}\right)\sum_{e \in E \setminus (\bigcup_{k \in [K]} E(G_k))} \bww_e \cdot  \leq 2 \bww^\top \ll.
\]
\Cref{item:partition:cut} then follows from combining the above, $ \bww^\top \ll = \sum_{e \in F} \ww_e \ll_e$, and
\[
\left| F \setminus \left\{ \bigcup_{k \in [K]} E(G_k) \right\} \right|
= \sum_{\substack{e \in E \setminus (\bigcup_{k \in [K]} E(G_i)) \\ \bww_e > 0}} \frac{v}{v}
< \sum_{e \in E \setminus (\bigcup_{k \in [K]} E(G_k))} \frac{r \cdot \bww_e}{v}.
\]
\end{proof}

\begin{proof}[Proof of \Cref{prop:er_partition}]
Consider the following algorithm. First, apply \Cref{lem:effres_over} to compute a 2-approximate effective resistance overestimate with probability $\ge 1 - \delta$, and save these as $\ll \in \R^E_{> 0}$. We then apply \Cref{lem:region_applied} for all integers $j \in [j_{\min}, j_{\max}]$ where 
$j_{\min} = \lfloor \log_r(\min_{e \in E}\ww_e) \rfloor$ and $j_{\max} = \lceil\log_r(\max_{e \in E}\ww_e)\rceil$ with 
\[v \gets v_j \defeq r^j,\; \alpha \gets \frac{16rn\log(n + 1)}{m},\text{ and } r \gets r.\]
For all $j \in [j_{\min}, j_{\max}]$ we let $\{G_i^j\}_{i \in [K_j]}$ be the vertex-disjoint subgraphs output by \Cref{lem:region_applied} and we let $F_j$ be the value of $F$ for this application of \Cref{lem:region_applied}. This algorithm has the desired runtime as applying \Cref{lem:effres_over} takes time $\bO(m \log \frac n \delta)$ and each application of \Cref{lem:region_applied} takes time $O(|E(G_k)| + n \log n)$. Note that the sum of all the $O(|E(G_k)|)$ terms only contributes a single $O(m)$ to the runtime. Additionally, the number of distinct $j$ is
\begin{equation}
\label{eq:jcount}
j_{\max} - j_{\min} + 1 \leq \log_r \left(\max_{e \in E(G_i)}\ww_e\right) + 1 - \left(\log_r\left(\min_{e \in E(G_i)}\ww_e\right) - 1\right) + 1
= \log_r(W) + 3\,.
\end{equation}
The runtime follows and it remains only to show that the output $\{G_i^j\}_{j_{\min} \le j \le j_{\max}, i \in [K_j]}$ have the desired properties provided that the $\ll$ were indeed a $2$-approximate ER overestimate.

\paragraph{Bounded weight ratio (\Cref{item:effres:partition:weight}).} This follows directly by construction from \Cref{lem:region_applied}.

\paragraph{Effective resistance diameter (\Cref{item:effres:partition:diameter}).} By \Cref{lem:region_applied}, \Cref{item:partition:diameter} we know that for any $G_i^j$ it is the case that the diameter of $G_i^j$ with respect to $\ll$ is at most $\alpha (\max_{e \in E(G_i^j)}\ww_e)^{-1}$. Consequently, for each $u, v \in V(G_i^j)$ it is the case that there is a path of edges whose sum of lengths is at most $\alpha (\max_{e \in E(G_i^j)}\ww_e)^{-1}$. Each of these lengths is at least the effective resistance of the associated edge. Since effective resistances form a metric, by triangle inequality this means 
\[\max_{u, v \in V(G_i^j)} \ER_G(u,v) \leq \frac{\alpha}{\max_{e \in E(G_i^j)}\ww_e}\]
and \Cref{item:effres:partition:diameter} follows by the setting of $\alpha$.

\paragraph{Edges cut (\Cref{item:effres:partition:cut}).} Note that by our choice of $v_j$ and \Cref{lem:region_applied}, the $\{F_j\}_{}$ partition $E$. Since $E(G_i^j) \subseteq F_j$ for all $i \in [K_j]$ and $j \in [j_{\min},j_{\max}]$ we have that
\begin{align*}
\left|E(G) \setminus \Brace{ \bigcup_{j_{\min} \le j \le j_{\max}, i\in [K_j]} E(G_i^j) }\right|
&=
\sum_{j_{\min} \le j \le j_{\max}}
\left|E(G) \setminus \Brace{ \bigcup_{i\in [K_j]} E(G_i^j) }
\right|\\
&\leq \sum_{j_{\min} \le j \le j_{\max}}\Par{ 
\frac{4r\ln(n+1)}{\alpha} \sum_{e \in F_j} \ww_e \ll_e}
= \frac{m}{4n} \sum_{e \in E} \ww_e \ll_e
\end{align*}
where we applied \Cref{lem:region_applied}, \Cref{item:partition:cut} in the inequality. Since $\sum_{e \in E} \ww_e \ll_e \leq 2n$ by the definition of a $2$-approximate effective resistance overestimate, the result follows.

\paragraph{Vertex coverage (\Cref{item:effres:partition:vertex}).} Each collection of $\{G_i^j\}_{i \in [K_j]}$ for fixed $j \in [j_{\min}, j_{\max}]$ is vertex-disjoint by \Cref{lem:region_applied}. Consequently, each vertex $v \in V(G)$ is in at most $j_{\max} - j_{\min} + 1$ subgraphs and the result follows by our earlier bound \eqref{eq:jcount}.
\end{proof}

\section{Variance bounds from effective resistance diameter}\label{sec:variance}

In this section, we provide an operator norm bound on a matrix variance quantity,
used to bound the Gaussian measure of convex bodies induced by operator norm bounds encountered in our sparsification procedures. This variance bound (Lemma~\ref{lemma:variance}) is a key new structural insight which enables our applications in the remainder of the paper. In particular, it shows bounded ER diameter of decomposition pieces can be used to control the spectral error incurred by our reweightings.

We first provide a helpful result which upper bounds matrix variances after a projection operation, by the corresponding variance before the projection.

\begin{lemma} \label{lemma:varianceproj}
    Let $\{\AA_i\}_{i \in [m]} \in \R^{n \times n}$ and let $\PP, \QQ \in \R^{m \times m}$ be orthogonal projection matrices such that $\ker(\QQ) \subseteq \ker(\PP)$. 
    For each $i \in [m]$, let $\Atil_i \defeq \sum_{j \in [m]} \PP_{ji} \AA_j$ and $\Ahat_i \defeq \sum_{j \in [m]} \QQ_{ji} \AA_j$. Then, 
    \[
        \sum_{i \in [m]} \Atil_i \Atil_i^\top \preceq \sum_{i \in [m]} \Ahat_i \Ahat_i^\top.
    \] 
\end{lemma}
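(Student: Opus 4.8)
The plan is to reduce the Loewner inequality to a statement about Frobenius norms of matrix products, and then exploit the hypothesis $\ker(\QQ) \subseteq \ker(\PP)$ together with the fact that projections have operator norm at most $1$.

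First I would unfold the quadratic form against an arbitrary test vector. Fix $\xx \in \R^n$; since both sides of the claimed inequality are symmetric, it suffices to show $\xx^\top\big(\sum_{i}\Atil_i\Atil_i^\top\big)\xx \le \xx^\top\big(\sum_i \Ahat_i \Ahat_i^\top\big)\xx$. Writing out the definition, $\xx^\top\Atil_i\Atil_i^\top\xx = \big\|\sum_{j}\PP_{ji}\AA_j^\top\xx\big\|_2^2$. Introduce the matrix $\VV \in \R^{n\times m}$ whose $j$-th column is $\VV_{:j} \defeq \AA_j^\top\xx$. Then $\sum_j \PP_{ji}\AA_j^\top\xx = (\VV\PP)_{:i}$, so summing over $i$ gives $\sum_i \xx^\top\Atil_i\Atil_i^\top\xx = \normf{\VV\PP}^2$, and likewise $\sum_i \xx^\top\Ahat_i\Ahat_i^\top\xx = \normf{\VV\QQ}^2$. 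Thus the lemma reduces to the inequality $\normf{\VV\PP}^2 \le \normf{\VV\QQ}^2$ for every $\VV \in \R^{n \times m}$.

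Next I would use the kernel containment. Since $\PP$ and $\QQ$ are orthogonal projections (hence symmetric), $\ker(\QQ) \subseteq \ker(\PP)$ is equivalent to $\mathrm{Im}(\PP) = \ker(\PP)^\perp \subseteq \ker(\QQ)^\perp = \mathrm{Im}(\QQ)$. As $\QQ$ acts as the identity on $\mathrm{Im}(\QQ) \supseteq \mathrm{Im}(\PP)$, we get $\QQ\PP = \PP$. Therefore $\VV\PP = \VV\QQ\PP$, and by submultiplicativity of the Frobenius norm under right multiplication together with $\normsop{\PP} \le 1$ (true for any projection),
\[
\normf{\VV\PP} = \normf{\VV\QQ\PP} \le \normf{\VV\QQ}\cdot\normsop{\PP} \le \normf{\VV\QQ}.
\]
Squaring gives the desired inequality, and since $\xx$ was arbitrary this establishes the Loewner ordering.

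I do not anticipate a serious obstacle here: the argument is elementary once one sees the reformulation. The only point requiring a small amount of care is the step identifying $\sum_i \xx^\top \Atil_i\Atil_i^\top \xx$ with a Frobenius norm of a matrix product — i.e., correctly bookkeeping that applying $\PP$ to the ``coefficient index'' $i$ corresponds to right-multiplying the data matrix $\VV$ by $\PP$ — and verifying that the hypothesis on kernels yields $\QQ\PP = \PP$ rather than $\PP\QQ = \PP$ only (both hold by symmetry, but only the former is needed).
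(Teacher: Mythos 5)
Your proof is correct. The bookkeeping is right: with $\VV_{:j} \defeq \AA_j^\top\xx$ one indeed has $(\VV\PP)_{:i} = \sum_j \PP_{ji}\AA_j^\top\xx = \Atil_i^\top\xx$, so the quadratic form against any $\xx$ becomes $\normf{\VV\PP}^2$ versus $\normf{\VV\QQ}^2$; the equivalence $\ker(\QQ)\subseteq\ker(\PP) \iff \mathrm{Im}(\PP)\subseteq\mathrm{Im}(\QQ)$ for symmetric projections gives $\QQ\PP=\PP$; and $\normf{\VV\QQ\PP}\le\normf{\VV\QQ}\,\normsop{\PP}\le\normf{\VV\QQ}$ is a valid use of the mixed Frobenius--operator norm bound. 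The paper proves the same statement by a closely related but differently packaged argument: it forms the block matrix $\Acal = (\AA_1\ \cdots\ \AA_m)$, writes $\widetilde{\Acal} = \Acal(\PP\otimes\id)$ and $\widehat{\Acal} = \Acal(\QQ\otimes\id)$, deduces $\PP\preceq\QQ$ from the kernel containment, and concludes by conjugation, $\Acal(\PP\otimes\id)^2\Acal^\top \preceq \Acal(\QQ\otimes\id)^2\Acal^\top$. Both proofs rest on the same geometric fact (range containment of the projections); yours tests against a vector and finishes with $\QQ\PP=\PP$ plus norm contraction, whereas the paper stays at the matrix level and uses the Loewner inequality $\PP\preceq\QQ$ directly, at the cost of introducing Kronecker products. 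Your route is slightly more elementary; the paper's is slightly more compact. (Indeed, your reduction could also be closed the paper's way: $\normf{\VV\PP}^2 = \Tr(\VV\PP\VV^\top) \le \Tr(\VV\QQ\VV^\top) = \normf{\VV\QQ}^2$ using $\PP\preceq\QQ$ and idempotence.)
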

\begin{proof}
Throughout this proof, we denote the Kronecker product of matrices $\AA$ and $\BB$ by $\AA \otimes \BB$.  By $\ker(\QQ) \subseteq \ker(\PP)$, we have $\PP \preceq \QQ$. 
Define the $n \times mn$ block-partitioned matrices
\[
\Acal \defeq \begin{pmatrix}
\AA_1 & \AA_2 & \cdots & \AA_m
\end{pmatrix},\; 
\widetilde{\Acal} \defeq \begin{pmatrix}
\Atil_1 & \Atil_2 & \cdots & \Atil_m
\end{pmatrix},\;
\widehat{\Acal} \defeq \begin{pmatrix}
\Ahat_1 & \Ahat_2 & \cdots & \Ahat_m
\end{pmatrix}.
\]
Since $\widetilde{\Acal} = \Acal(\PP \otimes \id_m)$ and $\widehat{\Acal} = \Acal(\QQ \otimes \id_m)$ it now suffices to prove $\widetilde{\Acal}\widetilde{\Acal}^\top \preceq \widehat{\Acal}\widehat{\Acal}^\top$. Note that
\[
(\PP\otimes \id_m)^2 =(\PP^2) \otimes (\id_m)^2 = \PP \otimes \id_m
\preceq
\QQ \otimes \id_m =(\QQ)^2 \otimes (\id_m)^2 = (\QQ \otimes \id_m)^2,
\]
where the equality utilizes $\PP, \QQ$ are orthogonal projection matrices and the inequality holds since since $\PP \preceq \QQ.$
Now utilizing the fact that if $\AA \preceq \BB$ and $\CC$ is any matrix of compatible dimension, then
$ \CC \AA \CC^\top \preceq \CC \BB \CC^\top$ and we get the desired bound that 
\[
\widetilde{\Acal}\widetilde{\Acal}^\top = 
\Acal(\PP \otimes \id_m)^2\Acal^\top 
\preceq
\Acal( \QQ \otimes \id_m)^2\Acal^\top 
= \widehat{\Acal}\widehat{\Acal}^\top\,.
\]
\end{proof}

We also show that effective resistance decomposition pieces have bounded diagonal entries in an appropriate subgraph inverse Laplacian.

\begin{lemma} \label{lemma:reffentry}
For any $G = (V, E, \ww)$, $U \subseteq V$, and $u \in U$,	$\ee_u^\top \PPi_U \LL_G^\dagger \PPi_U \ee_u \leq \max_{a,b \in U} \ER_G(a, b)$.
\end{lemma}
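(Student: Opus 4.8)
The plan is to rewrite the left-hand side as a squared seminorm and bound it by convexity. The key observation is that $\PPi_U \ee_u$ is an \emph{average} of the edge-difference vectors $\bb_{(u,v)}$ over $v \in U$: since $\PPi_U = \id_U - \frac 1 {|U|}\vone_U\vone_U^\top$ (viewed inside $\R^V$, with $\vone_U$ the $0/1$ indicator of $U$), for $u \in U$ we have
\[
\PPi_U \ee_u = \ee_u - \frac 1 {|U|}\vone_U = \frac 1 {|U|}\sum_{v \in U}\Par{\ee_u - \ee_v} = \frac 1 {|U|}\sum_{v \in U}\bb_{(u,v)},
\]
where the $v = u$ term drops out since $\bb_{(u,u)} = \vzero_V$.

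Next I would write $\ee_u^\top \PPi_U \LL_G^\dagger \PPi_U \ee_u = \norm{\PPi_U \ee_u}_{\LL_G^\dagger}^2$ (using that $\PPi_U$ is symmetric and that $\LL_G^\dagger$ is positive semidefinite, hence induces a genuine seminorm), and apply Jensen's inequality to the convex function $\xx \mapsto \norm{\xx}_{\LL_G^\dagger}^2$ along the uniform average above:
\[
\norm{\PPi_U \ee_u}_{\LL_G^\dagger}^2 = \norm{\frac 1 {|U|}\sum_{v \in U}\bb_{(u,v)}}_{\LL_G^\dagger}^2 \le \frac 1 {|U|}\sum_{v \in U}\norm{\bb_{(u,v)}}_{\LL_G^\dagger}^2 = \frac 1 {|U|}\sum_{v \in U}\ER_G(u,v).
\]
Since $u, v \in U$, each term $\ER_G(u,v)$ is at most $\max_{a,b \in U}\ER_G(a,b)$, so the average is as well, yielding the claim. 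Equivalently, one could expand the quadratic form as $\frac 1 {|U|^2}\sum_{v,w \in U}\bb_{(u,v)}^\top \LL_G^\dagger \bb_{(u,w)}$ and bound each cross term by $\sqrt{\ER_G(u,v)\ER_G(u,w)} \le \max_{a,b \in U}\ER_G(a,b)$ using Cauchy--Schwarz for the positive semidefinite form $\LL_G^\dagger$.

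I do not anticipate a real obstacle here: the only idea required is recognizing $\PPi_U \ee_u$ as an average of the $\bb_{(u,v)}$'s, after which the bound is immediate from Jensen (or Cauchy--Schwarz) together with the definition $\ER_G(u,v) = \bb_{(u,v)}^\top \LL_G^\dagger \bb_{(u,v)}$. Notably the argument never uses connectivity of $G$ and is insensitive to whether $\vone_U$ lies in the column space of $\LL_G$, since it relies only on positive semidefiniteness of $\LL_G^\dagger$.
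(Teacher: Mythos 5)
Your proof is correct and takes essentially the same approach as the paper: both begin by recognizing $\PPi_U \ee_u$ as the average of the vectors $\bb_{(u,v)}$ over $v \in U$, and both bound the resulting quadratic form by convexity of $\|\cdot\|_{\LL_G^\dagger}^2$ (the paper phrases this via Cauchy--Schwarz, $\|\sum_i \xx_i\|^2 \le n\sum_i\|\xx_i\|^2$, which is the same inequality your Jensen step gives), followed by taking the maximum of the effective resistances. The only cosmetic difference is that you include the vacuous $v = u$ term in the average, whereas the paper sums over $v \neq u$.
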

\begin{proof}
First, observe that $\PPi_U \ee_u = \ee_u - \frac 1 {|U|} \vone_U = \frac 1 {|U|}\sum_{v \in U, v \neq u} \bb_{(u,v)}$. The conclusion follows from
    \begin{align*}
    \ee_u^\top \PPi_U \LL_G^\dagger \PPi_U \ee_u 
    &= \frac 1 {|U|^2} \Par{\sum_{v \in U, v \neq u} \bb_{(u, v)}}^\top \LL_G^\dagger \Par{\sum_{v \in U, v \neq u} \bb_{(u, v)}} \\
    &\leq \frac{|U| - 1}{|U|^2} \sum_{v \in U, v \neq u} \bb_{(u, v)}^\top \LL_G^\dagger \bb_{(u, v)} \leq \frac{(|U|-1)^2}{|U|^2} \max_{a,b \in U} \ER_G(a,b),
    \end{align*}
    where the first inequality was the Cauchy-Schwarz inequality.
\end{proof}

We now combine \Cref{lemma:circeq}, \Cref{lemma:varianceproj}, and \Cref{lemma:reffentry} to obtain the main result of this section.

\begin{lemma}\label{lemma:variance}
    Let $\vec{G} = (V,E,\ww)$ and let $\vec{H}$ be a subgraph on vertex set $U
    \subseteq V$. Suppose that for $\rho> 0$, $(\max_{e \in E(\vH)}\ww_e) \cdot (\max_{u, v \in U} \ER_G(u, v)) \leq \rho$. 
    Define 
    \begin{equation}\label{eq:edge_circ_def}
    \begin{aligned} 
    \LL_{H^2} &\defeq \BB_{\vec{H}}^\top \WW_{E(\vec{H})}^2 \BB_{\vec{H}},\\
    \PP_{\vH} &\defeq \id_{E(\vec{H})} - \WW_{E(\vec{H})}\BB_{\vec{H}}\LL_{H^2}^\dagger\BB_{\vec{H}}^\top\WW_{E(\vec{H})},\\
    \Atil_e &\defeq \LL_G^{\frac \dagger 2}\Par{\sum_{f \in E(\vec{H})}[\PP_{\vH}]_{fe} \ww_f \bb_f \ee_{h(f)}^\top}\LL_G^{\frac \dagger 2} \text{ for all } e \in E(\vec{H}),
    \end{aligned}
    \end{equation}
    where $\WW_{E(\vH)}$, $\id_{E(\vH)}$ zero out entries of $\WW_{\vG}$, $\id_{E(\vG)}$ not corresponding to edges in $E(\vH)$. Then, 
    \begin{gather*}
    \sum_{e \in E(\vec{H})} \Atil_e \Atil_e^\top \pleq \rho \cdot \LL_G^{\frac \dagger 2}\LL_{H}\LL_G^{\frac \dagger 2}, \;
    \sum_{e \in E(\vec{H})} \Atil_e^\top  \Atil_e \pleq \rho \cdot \LL_G^{\frac \dagger 2}\LL_{H}\LL_G^{\frac \dagger 2},\\ \text{ where } G \defeq \und(\vG),\; H \defeq \und(\vH).
    \end{gather*}
\end{lemma}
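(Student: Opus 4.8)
The plan is to prove the two Loewner bounds in two stages: first an intermediate bound using the \emph{unprojected} matrices $\AA_e \defeq \LL_G^{\dagger/2}\ww_e\bb_e\ee_{h(e)}^\top\LL_G^{\dagger/2}$ (for $e \in E(\vH)$), and then invoke \Cref{lemma:varianceproj} to pass to the projected matrices $\Atil_e$. The key observation is that the orthogonal projection $\PP_{\vH}$ and the zero projection $\QQ \defeq \id_{E(\vH)}$ (extended by zeros to $\R^{E(\vG)\times E(\vG)}$) satisfy $\ker(\QQ) \subseteq \ker(\PP_{\vH})$: indeed $\ker(\QQ)$ is exactly the coordinates outside $E(\vH)$, and $\PP_{\vH}$ acts only on those coordinates. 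Therefore \Cref{lemma:varianceproj} gives $\sum_e \Atil_e \Atil_e^\top \pleq \sum_{e \in E(\vH)} \AA_e \AA_e^\top$ and, applying the same lemma to the family $\{\AA_e^\top\}$, also $\sum_e \Atil_e^\top \Atil_e \pleq \sum_{e \in E(\vH)} \AA_e^\top \AA_e$. So it suffices to bound the two unprojected variance sums.

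For the $\sum \AA_e\AA_e^\top$ bound, I would compute
\[
\sum_{e \in E(\vH)} \AA_e\AA_e^\top = \LL_G^{\frac\dagger2}\Par{\sum_{e\in E(\vH)}\ww_e^2\,\bb_e\,\ee_{h(e)}^\top\LL_G^\dagger\ee_{h(e)}\,\bb_e^\top}\LL_G^{\frac\dagger2}.
\]
Here I bound $\ee_{h(e)}^\top\LL_G^\dagger\ee_{h(e)}$ using \Cref{lemma:reffentry}: since $h(e) \in U$ for every $e \in E(\vH)$, and since $\PPi_U\ee_{h(e)} = \ee_{h(e)}$ modulo the kernel direction $\vone_U$ which lies in $\ker(\LL_G^\dagger)$ (using that $\und(\vG)$ is connected so $\ker\LL_G = \spn\vone_V$ — I'd need to check $U$'s all-ones vector interacts correctly here, which is the one slightly delicate point), we get $\ee_{h(e)}^\top\LL_G^\dagger\ee_{h(e)} = \ee_{h(e)}^\top\PPi_U\LL_G^\dagger\PPi_U\ee_{h(e)} \le \max_{a,b\in U}\ER_G(a,b)$. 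Combined with $\ww_e \le \max_{f\in E(\vH)}\ww_f$, the middle sum is $\pleq \rho \sum_{e\in E(\vH)}\ww_e\bb_e\bb_e^\top = \rho\,\LL_H$, giving $\sum_e\AA_e\AA_e^\top \pleq \rho\,\LL_G^{\dagger/2}\LL_H\LL_G^{\dagger/2}$ as claimed.

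The $\sum \AA_e^\top\AA_e$ bound is the less symmetric and (I expect) the main obstacle. Expanding, $\sum_{e\in E(\vH)}\AA_e^\top\AA_e = \LL_G^{\dagger/2}\Par{\sum_{e\in E(\vH)}\ww_e^2(\bb_e^\top\LL_G^\dagger\bb_e)\,\ee_{h(e)}\ee_{h(e)}^\top}\LL_G^{\dagger/2}$, which involves the \emph{head matrix} $\HH_{\vH}$ rather than $\LL_H$ directly. Bounding $\bb_e^\top\LL_G^\dagger\bb_e = \ER_G(e) \le \max_{a,b\in U}\ER_G(a,b)$ and $\ww_e \le \max_f\ww_f$, the middle sum is $\pleq \rho\sum_{e\in E(\vH)}\ww_e\,\ee_{h(e)}\ee_{h(e)}^\top = \rho\,\HH_{\vH}^\top\WW_{E(\vH)}\HH_{\vH}$. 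Now the key is that $\vH$ is Eulerian (or at least that we only ever apply this to Eulerian subgraphs — actually the statement as given doesn't assume $\vH$ Eulerian, so I'd need $\vG$ Eulerian and $\vH$ a subgraph that inherits this, or re-examine; in the applications $\vH$ comes from an ER decomposition of an Eulerian graph). Using \Cref{lemma:circeq} with the circulation $\ww_{E(\vH)}$ — valid when $\vH$ is Eulerian — we have $\HH_{\vH}^\top\WW_{E(\vH)}\HH_{\vH} = \TT_{\vH}^\top\WW_{E(\vH)}\TT_{\vH}$, hence $2\HH_{\vH}^\top\WW_{E(\vH)}\HH_{\vH} = \HH_{\vH}^\top\WW_{E(\vH)}\HH_{\vH} + \TT_{\vH}^\top\WW_{E(\vH)}\TT_{\vH} \pleq 2(\HH_{\vH}^\top\WW_{E(\vH)}\HH_{\vH} + \TT_{\vH}^\top\WW_{E(\vH)}\TT_{\vH})$, and since $\LL_H = \BB_{\vH}^\top\WW_{E(\vH)}\BB_{\vH} = (\HH_{\vH}-\TT_{\vH})^\top\WW_{E(\vH)}(\HH_{\vH}-\TT_{\vH}) \succeq 0$, the parallelogram-type identity $\HH^\top\WW\HH + \TT^\top\WW\TT = \half\LL_H + \half(\HH+\TT)^\top\WW(\HH+\TT) \succeq \half\LL_H$ combined with $\HH^\top\WW\HH = \TT^\top\WW\TT$ gives $\HH_{\vH}^\top\WW_{E(\vH)}\HH_{\vH} \succeq \frac14\LL_H$ — wait, I want the reverse. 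I actually need $\HH_{\vH}^\top\WW_{E(\vH)}\HH_{\vH} \pleq \LL_H$, which is \emph{false} in general (e.g. a single self-loop-free edge). The correct route, which I'd work out carefully, is that $\PP_{\vH}$'s presence is essential here: $\Atil_e$ already incorporates the circulation projection, so I should \emph{not} discard it via \Cref{lemma:varianceproj} for this direction but instead use that $\PP_{\vH}$ maps into the circulation space, so $\sum_f[\PP_{\vH}]_{fe}\ww_f\bb_f\ee_{h(f)}^\top$ can be rewritten using \Cref{lemma:circeq}'s identity $\BB^\top\XX\HH = -\TT^\top\XX\BB$ to symmetrize heads and tails, effectively replacing $\ee_{h(f)}$ contributions by averaged endpoint contributions bounded by $\LL_H$. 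So the real plan for the second bound is: keep the projection, expand $\Atil_e^\top\Atil_e$, use that the image of $\PP_{\vH}$ consists of circulations to apply \Cref{lemma:circeq} and convert the head-incidence structure into a symmetric (Laplacian-like) form, then bound by $\rho\,\LL_G^{\dagger/2}\LL_H\LL_G^{\dagger/2}$ via the ER-diameter hypothesis. I expect reconciling the two directions — one wanting to drop $\PP_{\vH}$ and one needing to keep it — to be the crux, and the Kronecker/projection bookkeeping plus the careful use of \Cref{lemma:circeq} to be where the work lies.
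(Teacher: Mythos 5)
Your proposal for the first variance sum has a genuine gap, and you have in fact put your finger on exactly where it is — the ``slightly delicate point'' about how $\vone_U$ interacts with $\ker(\LL_G^\dagger)$ — but the resolution is not that it works out: the step fails. Since $\ker(\LL_G^\dagger) = \spn(\vone_V)$ and $U$ may be a proper subset of $V$, the vector $\vone_U$ is not in this kernel, so $\ee_{h(e)}^\top \LL_G^\dagger \ee_{h(e)} \ne \ee_{h(e)}^\top \PPi_U \LL_G^\dagger \PPi_U \ee_{h(e)}$ in general, and only the second quantity is controlled by \Cref{lemma:reffentry}. Concretely, take $G$ an unweighted path $v_1-\cdots-v_n$, $U=\{v_1,v_2\}$, and $\vH$ the single edge $(v_1,v_2)$, so that $\rho = 1$. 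Then $\ee_{v_1}^\top \PPi_U \LL_G^\dagger \PPi_U \ee_{v_1} = \frac14 \ER_G(v_1,v_2) = \frac14$, but the quantity you wrote down, $\ee_{v_1}^\top \LL_G^\dagger \ee_{v_1} = (\LL_G^\dagger)_{v_1 v_1}$, is $\Theta(n)$. So $\sum_{e \in E(\vH)} \AA_e \AA_e^\top = \Theta(n)\cdot \LL_G^{\dagger/2}\LL_H\LL_G^{\dagger/2} \not\pleq \rho \,\LL_G^{\dagger/2}\LL_H\LL_G^{\dagger/2}$. (The lemma itself is still true in this example because $\PP_{\vH} = 0$ for a single edge, so $\Atil_e = 0$.) \Cref{lemma:varianceproj} does validly give $\sum \Atil_e \Atil_e^\top \pleq \sum \AA_e \AA_e^\top$ with your unprojected $\AA_e$, but that upper bound is simply too weak — you have discarded $\PP_{\vH}$ before it can do its work.

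The fix, which the paper applies in both directions and which you essentially rediscover for the second direction but miss for the first, is to insert $\PPi_U$ \emph{inside} the matrices \emph{before} invoking \Cref{lemma:varianceproj}. Because $\PP_{\vH}$ is an orthogonal projection, every column $\WW_{E(\vH)}[\PP_{\vH}]_{:e}$ is a circulation on $\vH$; note this is a property of the projection $\PP_{\vH}$ itself, not an Eulerianness assumption on $\vH$ or $\vG$, so your worry about needing $\vH$ Eulerian is a red herring. The circulation identity $\BB_{\vH}^\top \WW_{E(\vH)}[\PP_{\vH}]_{:e} = \vzero$ lets one write $\Atil_e = \sum_f [\PP_{\vH}]_{fe}\,\AA_f'$ with $\AA_f' \defeq \LL_G^{\dagger/2}\ww_f \bb_f \ee_{h(f)}^\top \PPi_U \LL_G^{\dagger/2}$; applying \Cref{lemma:varianceproj} to this $\{\AA_f'\}$ family and then \Cref{lemma:reffentry} gives the first bound. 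For the second bound, the same circulation property plus \Cref{lemma:circeq} gives $-\Atil_e^\top = \LL_G^{\dagger/2}\BB_{\vH}^\top\WW_{E(\vH)}\XX_e\TT_{\vH}\LL_G^{\dagger/2}$ with $\XX_e = \diag{[\PP_{\vH}]_{:e}}$, which has the identical structure with tails in place of heads, and the same $\PPi_U$-insertion argument closes it. So there is no ``reconciling the two directions'' to be done: your instinct that $\PP_{\vH}$ is essential for the second bound is correct, and it is essential for the first bound in exactly the same way.
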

\begin{proof}
    For simplicity, we write $\WW_{\vH} = \WW_{E(\vec{H})}$ and $\BB_{\vec{H}} = \BB_{\vG} \II_{E(\vec{H})}$.
    We first note that $\PP_{\vH}$ is a orthogonal projection matrix, since $\WW_{\vH} \BB_{\vH} \LL_{H^2}^\dagger \BB_{\vH}^\top \WW_{\vH}$ is an orthogonal projection on the restriction to $\vH$. This justifies our notation: the $\Atil_e$ are as in \Cref{lemma:varianceproj}, where $\AA_e \defeq \LL_G^{\dagger/ 2}\ww_e \bb_e \ee_{h(e)}^\top\LL_G^{\dagger/ 2}$. Next, let $\xx_e \defeq [\PP_{\vH}]_{e:}$ and $\XX_e \defeq \diag{\xx_e}$, so $\Atil_e = \LL_G^{\dagger/ 2}\BB_{\vH}^\top \WW_{\vH} \XX_e \HH_{\vH}\LL_G^{\dagger/ 2}$. Since $\PP_{\vH}$ is an orthogonal projection matrix,
    \[\PP_{\vH} \xx_e = \xx_e \implies \WW_{\vH} \BB_{\vH} \LL^\dagger_{H^2} \BB_{\vH}^\top \WW_{\vH} \xx_e = \vzero_{V} \implies \BB_{\vH}^\top \WW_{\vH} \xx_e = \vzero_{V}.\] 
    To see the last implication, note that $\BB_{\vH}^\top \WW_{\vH} \xx_e$ is always orthogonal to the kernel of $\LL_{H^2}=\BB_{\vH}^\top \WW_{\vH} \BB_{\vH}$.
    The last equality then follows by noticing that $\ker(\LL_{H^2}) = \ker(\LL_{H^2}^\dagger)$.
    In other words, $\WW_{\vH} \xx_e$ is a circulation on $\vH$.
    Since $\PPi_U$ is the projection onto the coordinates of $U$ orthogonal to $\vone_U$, 
    by $\ker(\HH_{\vH}) \supseteq \spn(\vone_U) \cup \R^{V \setminus U}$, we further have
    \[\BB_{\vH}^\top \WW_{\vH} \XX_e \HH_{\vH} = \BB_{\vH}^\top \WW_{\vH} \XX_e \HH_{\vH} \PPi_U \implies \Atil_e = \LL_G^{\frac \dagger 2}\BB_{\vH}^\top \WW_{\vH} \XX_e \HH_{\vH} \PPi_U\LL_G^{\frac \dagger 2}.\]
    Applying \Cref{lemma:varianceproj} to $\{\Atil_e \}_{e \in E(\vH)}$ using the characterization in the above display then gives
    \begin{align*}
    \sum_{e \in E(\vec{H})} \Atil_e \Atil_e^\top 
        &\pleq \LL_G^{\frac \dagger 2}\Par{
        \sum_{e \in E(\vH)} w_e^2 \cdot \bb_e \ee_{h(e)}^\top \PPi_U \LL_G^\dagger \PPi_U \ee_{h(e)} \bb_e^\top}\LL_G^{\frac \dagger 2}
        \\
        &\pleq
        \LL_G^{\frac \dagger 2}\Par{\sum_{e \in E(\vH)} \rho w_e \cdot \bb_e \bb_e^\top}\LL_G^{\frac \dagger 2} =
        \rho \cdot \LL_G^{\frac \dagger 2}\LL_{H}\LL_G^{\frac \dagger 2}.
    \end{align*}
    The second inequality follows from \Cref{lemma:reffentry}  and the $w_e \leq
    \max_{e \in E(\vH)} w_e$. This yields the first claim. To see the second, since $\WW_{\vH} \xx_e$ is a circulation, by \Cref{lemma:circeq}, $\Atil_e = -\LL_G^{\dagger/ 2}\TT_{\vH}^\top \WW_{\vH} \XX_e \BB_{\vH}\LL_G^{\dagger/ 2}$. 
    By instead applying \Cref{lemma:varianceproj} to the matrices $\{-\Atil_e^\top\}_{e \in E(\vH)}$ (as $\XX_e \TT_{\vH} \vone_U = \XX_e \HH_{\vH} \vone_U = \xx_e$) and following an analogous derivation, we obtain the desired bound.
\end{proof}

\section{Sparser Eulerian sparsifiers}\label{sec:existence}

In this section, we give the first application of our framework by proving our Eulerian sparsification result obtaining the best-known sparsity bound in \Cref{thm:existential}. This application serves as a warmup for our nearly-linear time sparsification result in Section~\ref{sec:algos}. 

Our approach is to recursively apply \Cref{lemma:variance} on each subgraph component in a ER decomposition (\Cref{prop:er_partition}), 
with known results from the literature on discrepancy theory, to sparsify an Eulerian graph. Specifically, our main tools are a powerful matrix discrepancy Gaussian measure lower bound recently developed by \cite{BansalJM23} (motivated by the matrix Spencer conjecture), and a corresponding partial coloring framework from \cite{Rothvoss17, ReisR23}.

\begin{proposition}[Proof of Lemma 3.1, \cite{BansalJM23}]\label{prop:bjm_measure}
For every constant $c \in (0, \half)$, there is a constant $\ccolor$ such that for any $\{\Atil_i\}_{i \in [m]} \subset \Sym^n$ with $m > 2n$ that  satisfy $\normsop{\sum_{i \in [m]} \Atil_i^2} \leq \sigma^2$, $\sum_{i\in [m]} \normsf{\Atil_i}^2 \leq mf^2$, and letting
    \[
        \tgK \defeq \Brace{ \xx \in \R^m \Bigg| \normop{ \sum_{i\in [m]} \xx_i \Atil_i } \leq \ccolor\min\Brace{\sigma + \sqrt{\sigma f}\log^{\frac 3 4}n,\;  \sigma\log^{\frac 1 4} n + \sqrt{\sigma f}\log^{\frac 1 2}n}},
    \]
    there is a subspace $T \subseteq \R^m$ with $\dim(T) \geq (1-c)m$, $\gamma_T(\gK) \geq \exp(-c m)$.
\end{proposition}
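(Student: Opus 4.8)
This statement is the discrepancy-theoretic engine behind \Cref{thm:existential}, and it is exactly the Gaussian-measure bound that is established in the course of proving Lemma 3.1 of \cite{BansalJM23} (their lemma is phrased for the matrix Spencer application, but its proof yields precisely this). The plan is therefore to reduce to that result, which comes down to two things: (1) checking that our two hypotheses, $\normsop{\sum_{i\in[m]}\Atil_i^2}\le\sigma^2$ and $\sum_{i\in[m]}\normsf{\Atil_i}^2\le mf^2$, are exactly the ``operator variance'' and ``Frobenius variance'' parameters that their argument tracks, and that the sublevel set $\tgK$ is the convex body they analyze; and (2) following their two-step structure, namely a refined bound on $\E_g\normsop{\sum_{i\in[m]}g_i\Atil_i}$ for $g\sim\Nor(\vzero_m,\PP_T)$ on a suitable codimension-$cm$ coefficient subspace $T$, followed by a soft conversion of that expectation bound into the claimed measure lower bound.

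\textbf{The refined expectation bound (the substance).} The naive noncommutative Khintchine / matrix Chernoff bound gives only $\E_g\normsop{\sum_i g_i\Atil_i}\lesssim\sigma\sqrt{\log n}$, which ignores $f$; obtaining the stated $\min$ instead is where essentially all the work lies. At a high level the plan follows \cite{BansalJM23}: bucket each $\Atil_i$ by eigenvalue magnitude at a threshold $\tau$, writing $\Atil_i=\Atil_i^{\le\tau}+\Atil_i^{>\tau}$ where $\Atil_i^{>\tau}$ retains the eigendirections with $|\lambda|>\tau$; use the Frobenius hypothesis to observe that the ``large'' parts are collectively low rank, $\sum_i\textup{rank}(\Atil_i^{>\tau})\le\tau^{-2}\sum_i\normsf{\Atil_i}^2\le\tau^{-2}mf^2$, so that restricting to a coefficient subspace $T$ annihilating all of them costs only $\le cm$ dimensions once $\tau$ is chosen appropriately (this is where $m>2n$ is used); bound the contribution of the ``small'' parts, which keep the variance proxy $\sigma$ but now have operator norm $\le\tau$, by a matrix-concentration argument; and optimize $\tau$ against $\sigma$, $f$, and $\log n$. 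The two terms of the $\min$ correspond to two ways of spending the logarithmic factors in this tradeoff, the second obtained by a multi-scale chaining over $\approx\log\log n$ eigenvalue scales. \textbf{I expect this step to be the main obstacle}: it is the technical core of \cite{BansalJM23}, is the only place where the Frobenius hypothesis (a proxy for low rank) is genuinely needed rather than just the operator-variance hypothesis, and I would not attempt to reprove it from scratch but instead cite it.

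\textbf{From expectation to measure.} Given the subspace $T$ with $\dim(T)\ge(1-c)m$ on which $\E_{g\sim\Nor(\vzero_m,\PP_T)}\normsop{\sum_i g_i\Atil_i}\le\thalf\ccolor\beta$, where $\beta$ denotes the quantity inside the definition of $\tgK$, Markov's inequality immediately gives $\Pr_{g\sim\Nor(\vzero_m,\PP_T)}\big[\normsop{\sum_i g_i\Atil_i}\le\ccolor\beta\big]\ge\thalf\ge\exp(-cm)$, i.e.\ $\gamma_T(\tgK)\ge\exp(-cm)$, which is the claim. (This can be sharpened to $1-\exp(-\Omega(\beta^2/\sigma^2))$ using that $g\mapsto\normsop{\sum_i g_i\Atil_i}$ is $\sigma$-Lipschitz, since $\sum_i(\vv^\top\Atil_i\vv)^2\le\vv^\top(\sum_i\Atil_i^2)\vv\le\sigma^2$ for every unit $\vv$, together with Gaussian concentration; but the crude $\thalf$ already suffices.) Finally, one takes $\ccolor$ large enough, depending only on $c$, to absorb the implied constants from the expectation bound; this, together with the routine hypothesis-matching in (1), completes the reduction.
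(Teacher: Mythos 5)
Your reduction strategy — reduce to the restricted Gaussian expectation bound and then use Markov (or Lipschitz concentration) to convert to a measure lower bound — matches the paper's, and your sketch of the bucketing/low-rank-deletion structure for the \emph{first} term $\sigma + \sqrt{\sigma f}\log^{3/4} n$ is a reasonable account of \cite{BansalJM23}'s Lemma 3.1. However, there is a concrete gap for the \emph{second} term $\sigma\log^{1/4} n + \sqrt{\sigma f}\log^{1/2} n$: it is not established in \cite{BansalJM23}, and the paper explicitly notes this. Your plan is to cite their Lemma 3.1 for the full $\min$, with a side remark that the second term arises from ``a multi-scale chaining over $\approx\log\log n$ eigenvalue scales''; that mechanism is both unsubstantiated and not what is actually done. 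The paper instead fills the gap in Appendix~\ref{app:concentration} by swapping the matrix concentration inequality used inside the same BJM proof: it replaces the \cite{BandeiraBvH21} Theorem 1.2 bound with \cite{Tropp18}'s Corollary 3.6, which gives $\E\normsop{\sum_i g_i\AA_i}\lesssim\sigma\log^{1/4}n+w\log^{1/2}n$ for a fourth-moment parameter $w$, and then invokes \cite{BandeiraBvH21}'s Proposition 4.6 plus a trace/rank argument to bound $w\le\sqrt{\sigma f}$. That substitution, plugged into the BJM proof, is what yields the second term; without it (or a worked-out alternative), your proposal only proves the weaker statement with the $\min$ replaced by its first argument.

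A second, more minor point: you write that this ``is exactly the Gaussian-measure bound that is established in the course of proving Lemma 3.1 of \cite{BansalJM23}.'' Since that lemma as written gives only the first branch of the $\min$, the statement here is strictly stronger, and the proposition should be read as a two-part claim (cite BJM for one branch, prove the other). Your Markov step and the $\sigma$-Lipschitz aside are both correct and match the soft part of the argument.
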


We note that the proof of Lemma 3.1 in \cite{BansalJM23} only showed how to
obtain the first of the two operator norm upper bounds 
 within the $\min$ expression in Proposition~\ref{prop:bjm_measure}, but the second follows straightforwardly by substituting an alternative matrix concentration inequality from \cite{Tropp18} into the same proof of \cite{BansalJM23}. We formally show how to obtain the second bound in Appendix~\ref{app:concentration}.

\begin{proposition}[Theorem 6, \cite{ReisR23}]\label{prop:rr_partialcolor}
Let $\ctight \in (0, 1)$ be a constant, let $S \subseteq \R^m$ be a subspace with $\dim(S) \ge 2\ctight m$, and let $\gK \subseteq \R^m$ be symmetric and convex. Suppose $\gamma_m(\gK) \ge \exp(-Cm)$ for a constant $C$. There is $\cset > 0$ depending only on $\ctight, C$ such that if
$\gg \sim \Nor(\vzero_m, \II_m)$, and 
\[\xx \defeq \arg\min_{\xx \in \cset \gK \cap [-1,1]^m \cap S} \|\xx - \gg\|_2,\]
then $|\{i \in [m] \mid |\xx_i| = 1\}| \ge \ctight m$
with probability $1-\exp(-\Omega(m))$.
\end{proposition}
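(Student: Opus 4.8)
This is a partial-coloring lemma in the style of Rothvoss, and the plan is to follow the Gaussian-measure argument used in \cite{ReisR23} (itself a refinement of the approach of \cite{ReisR20}). The point $\xx$ is the Euclidean projection of the Gaussian $\gg$ onto the symmetric convex body $\mathcal{C} \defeq \cset\gK \cap [-1,1]^m \cap S$; write $\Pi_{\mathcal{C}}$ for this projection. We must show that $\xx$ is pinned to at least $\ctight m$ facets of the cube with overwhelming probability. First I would record the variational (obtuse-angle) characterization $\inprod{\gg - \xx}{\yy - \xx} \le 0$ for all $\yy \in \mathcal{C}$, and introduce the tight set $T \defeq \{i : |\xx_i| = 1\}$. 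The bad event is $\{|T| < \ctight m\}$; on it, letting $A = A(\xx)$ be the affine span of the minimal cube face containing $\xx$ and $L$ its linear direction, we have $\dim L = m - |T| > (1-\ctight)m$.

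The second step is a localization observation: since the off-$T$ coordinates of $\xx$ lie strictly inside $(-1,1)$, a short segment from $\xx$ toward any point of $\cset\gK \cap S \cap A$ stays inside $\mathcal{C}$, so the variational inequality upgrades to $\xx = \Pi_{\cset\gK \cap S \cap A}(\gg)$: on the bad event the cube constraints are locally inactive. Crucially, $S \cap A$ is an affine subspace whose linear part $S \cap L$ has dimension at least $\dim S + \dim L - m > 2\ctight m - \ctight m = \ctight m$ --- this is exactly where the factor $2$ in the hypothesis $\dim S \ge 2\ctight m$ is spent. Hence $\xx$ is the projection of $\gg$ onto a (translated, sliced) copy of the symmetric convex body $\cset\gK$ inside a subspace of dimension $> \ctight m$, and $\xx$ depends on $\gg$ only through its component $\gg'$ in that subspace, a standard Gaussian there.

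The third step is a per-face Gaussian-measure estimate, and this is the main obstacle. Fixing a tight set $T$ with $|T| = k < \ctight m$ and a sign pattern, I would bound the probability that $\Pi_{\cset\gK \cap S \cap A}(\gg')$ lands in the relative interior of the corresponding face, splitting on whether $\cset\gK$ is active at $\xx$. In the inactive case $\xx$ is (up to a fixed shift) the $(>\ctight m)$-dimensional Gaussian $\gg'$ itself, which must additionally lie in $\cset\gK$, an event that the measure hypothesis $\gamma_m(\gK) \ge e^{-Cm}$ together with log-concavity / Gaussian-measure comparison controls; in the active case $\gg - \xx$ lies in the normal cone of $\cset\gK \cap S \cap A$ at $\xx$, whose solid-angle fraction is exponentially small, again giving an $e^{-\Omega(m)}$ bound whose constant can be made large by tuning $\cset$ appropriately as a function of $\ctight$ and $C$. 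Finally, a union bound over the at most $\sum_{k < \ctight m}\binom{m}{k}2^k \le e^{\alpha(\ctight)m}$ such faces, with $\cset = \cset(\ctight, C)$ chosen so the per-face exponent beats $\alpha(\ctight)$, yields $\Pr[|T| < \ctight m] \le e^{-\Omega(m)}$. The delicate part throughout is converting the single measure inequality on $\gK$ into quantitative geometric control (an inradius bound, or a normal-cone solid-angle bound) on $\cset\gK$ restricted to an affine slice of a high-dimensional subspace, uniformly over all the relevant faces; the extra dimension $\dim(S\cap L) - \ctight m > 0$ coming from the factor $2$ is precisely the slack that lets the inactive case survive the exponential union bound.
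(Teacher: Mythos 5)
First, note that the paper does not prove this proposition at all: it is imported verbatim as Theorem~6 of \cite{ReisR23} (a subspace version of Rothvoss' partial-coloring theorem), so your attempt has to be judged as a reconstruction of that external argument rather than against an internal proof. Your opening two steps are correct and do match the standard opening of such proofs: the obtuse-angle characterization of the projection, the observation that on the event that the tight set is $(T,\sigma)$ with $|T|<\ctight m$ the cube constraints off $T$ are locally inactive so that $\xx=\arg\min_{\yy\in\cset\gK\cap S\cap A}\|\yy-\gg\|_2$, the dimension count $\dim(S\cap L)\ge \dim(S)+\dim(L)-m>\ctight m$ (this is indeed where the factor $2$ is spent), and the fact that $\xx$ is measurable with respect to the Gaussian component along $S\cap L$.

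The genuine gap is your step 3, which is where the entire content of the theorem lives, and the plan you sketch does not work. In the ``inactive'' case, the hypothesis $\gamma_m(\gK)\ge e^{-Cm}$ is a \emph{lower} bound, and dilating by $\cset$ only increases measure, so the event ``$\gg'$ lies in $\cset\gK$'' cannot be made unlikely this way (making $\cset$ large makes it \emph{more} likely); the only smallness available is that the affine projection must land in the cube slice, which requires an unproven uniform Gaussian-measure upper bound for (shifted) sections of $[-1,1]^m$ by $S\cap L$. Even granting the optimistic bound $\gamma_1([-1,1])^{\dim(S\cap L)}$, your union bound over the $\sum_{k\le \ctight m}\binom{m}{k}2^k$ faces fails in the allowed regime $\dim(S)=2\ctight m$, since then $\dim(S\cap L)$ can be as small as $\ctight m$ and $H(\ctight)+\ctight\log 2>\ctight\log(1/\gamma_1([-1,1]))$ for every constant $\ctight\in(0,1)$ (here $H$ is the natural-log binary entropy); this exponent cannot be boosted by tuning $\cset$. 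The pure-cube case shows exactly what is missing: there, the event ``tight set $=T$'' also forces $|\gg_i|\ge 1$ for each $i\in T$, contributing a factor $\Pr[|\gg_1|\ge1]^{|T|}$ that offsets the entropy, which is why the threshold $\approx 0.32$ appears; once $\gK$ and $S$ are present, tightness of a coordinate no longer forces anything about $\gg_i$, and recovering an analogous per-tight-coordinate gain (or avoiding the union bound altogether, as Rothvoss and Reis--Rothvoss do by playing a concentration upper bound on $\mathrm{dist}(\gg,\cset\gK\cap[-1,1]^m\cap S)$ against a lower bound forced by a small tight set) is precisely the hard part. Likewise, in your ``active'' case the claim that the normal cone of $\cset\gK\cap S\cap A$ has exponentially small solid angle is unsupported and false in general (normal cones at boundary points of a thin body can be huge); bounding the probability that the $\gK$-constraint is what halts the projection in the non-tight directions is exactly where the measure lower bound must do real quantitative work, not a generic solid-angle estimate. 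If you want a complete proof, follow the argument of \cite{Rothvoss17} as adapted in \cite{ReisR23} rather than this per-face union bound.
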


Roughly speaking, \Cref{prop:bjm_measure} shows that a convex body over $\xx \in \R^m$, corresponding to a sublevel set  of $\normsop{\sum_{i \in [m]} \xx_i \Atil_i}$, has large Gaussian measure restricted to a subspace. \Cref{prop:rr_partialcolor} then produces a ``partially colored'' point $[-1, 1]^m$ with many tight constraints, i.e., coordinates $i \in [m]$ with $|\xx_i| = 1$, which also lies in the convex body from \Cref{prop:bjm_measure}. We summarize a useful consequence of \Cref{prop:rr_partialcolor} that is more compatible with \Cref{prop:bjm_measure}. The difference is that the variant in \Cref{cor:partialcolor_variance} only requires a Gaussian measure lower bound on the convex body restricted to a subspace, the type of guarantee that \Cref{prop:bjm_measure} gives.

\begin{corollary}\label{cor:partialcolor_variance}
In the setting of \Cref{prop:rr_partialcolor}, assume that $\gamma_{S}(\gK) \ge
\exp(-Cm)$ for a constant $C$, instead of $\gamma_m(\gK) \ge \exp(-Cm)$. There
is $\cset > 0$ depending only on $\ctight, C$ such that if 
$\gg \sim \Nor(\vzero_m, \II_m)$, and 
\[\xx \defeq \arg\min_{\xx \in \cset \gK \cap [-1,1]^m \cap S} \|\xx - \gg\|_2,\]
then $|\{i \in [m] \mid |\xx_i| = 1\}| \ge \ctight m$
with probability $1 - \exp(-\Omega(m))$.
\end{corollary}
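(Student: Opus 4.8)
The plan is to reduce Corollary~\ref{cor:partialcolor_variance} directly to Proposition~\ref{prop:rr_partialcolor} by replacing $\gK$ with a ``cylindrified'' body that has the same intersection with $S$ but whose \emph{full-dimensional} Gaussian measure equals $\gamma_S(\gK)$. Concretely, I would set $\gK' \defeq (\gK \cap S) + S^\perp$, the Minkowski sum of $\gK$ restricted to $S$ with the entire orthogonal complement $S^\perp$. Since $\gK \cap S$ and $S^\perp$ are both symmetric and convex, so is $\gK'$ (it is generally unbounded, a cylinder over $\gK \cap S$), so it is a legitimate input to Proposition~\ref{prop:rr_partialcolor}.

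Next I would establish the two facts that make this substitution lossless. Writing each $\gg \in \R^m$ via its unique orthogonal decomposition $\gg = \PP_S\gg + \PP_{S^\perp}\gg$, one checks that $\gg \in \gK'$ if and only if $\PP_S\gg \in \gK$; hence for $\gg \sim \Nor(\vzero_m, \II_m)$ we get $\gamma_m(\gK') = \Pr[\PP_S \gg \in \gK] = \gamma_S(\gK) \ge \exp(-Cm)$. Similarly, for any scalar $\cset > 0$ we have $\cset\gK' \cap S = \cset(\gK' \cap S) = \cset(\gK \cap S) = \cset\gK \cap S$, because an element $u + v \in \gK' \cap S$ with $u \in \gK \cap S \subseteq S$ and $v \in S^\perp$ must have $v = 0$. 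In particular $\cset\gK' \cap [-1,1]^m \cap S = \cset\gK \cap [-1,1]^m \cap S$ for every $\cset > 0$.

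Finally I would invoke Proposition~\ref{prop:rr_partialcolor} on the body $\gK'$ with the same subspace $S$ (which satisfies $\dim(S) \ge 2\ctight m$) and the same constant $C$ (which now controls $\gamma_m(\gK')$): this produces $\cset > 0$, depending only on $\ctight$ and $C$, such that the minimizer of $\norm{\,\cdot\, - \gg}_2$ over $\cset\gK' \cap [-1,1]^m \cap S$ has at least $\ctight m$ coordinates of absolute value $1$ with probability $1 - \exp(-\Omega(m))$. By the set identity above, this minimizer is exactly the $\xx$ appearing in the statement of the corollary (as a function of the same random $\gg$), so the conclusion follows verbatim. I do not expect a genuine obstacle here: the only thing requiring care is verifying the measure identity $\gamma_m(\gK') = \gamma_S(\gK)$ and the set identity $\gK' \cap S = \gK \cap S$, both of which rest solely on uniqueness of the orthogonal decomposition relative to $S$; crucially the reduction introduces no new constants, so the dependence of $\cset$ on $(\ctight, C)$ is inherited unchanged from Proposition~\ref{prop:rr_partialcolor}.
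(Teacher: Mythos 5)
Your argument is correct and follows essentially the same route as the paper: both proofs cylindrify $\gK$ over the orthogonal complement of $S$ so that the intersection with $S$ is preserved, then invoke Proposition~\ref{prop:rr_partialcolor} on the cylindrified body. The only (cosmetic) difference is that you take the full unbounded cylinder $(\gK \cap S) + S^\perp$, giving the exact identity $\gamma_m(\gK') = \gamma_S(\gK)$, whereas the paper truncates the cylinder to a hypercube in $S^\perp$, incurring a harmless extra factor of $\gamma_1([-1,1])^{\dim(S^\perp)}$ in the measure bound.
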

\begin{proof}
Define $\gK' \subseteq \R^m$ to be $\gK \cap S$ expanded by a hypercube (centered at the origin and with side length $2$) in the subspace orthogonal to $S$, denoted $S_\perp$; concretely, let $\gK' \defeq (\gK \cap S) \oplus (\PP_{S_\perp}) [-1,1]^{\dim(S_\perp)}$, where $\oplus$ denotes the direct sum of two sets. Note that $\gK'$ is symmetric and convex, and $\gamma_m(\gK') \ge \exp(-C'm)$ for a constant $C'$ depending only on $C$ and the universal constant $\gamma_1([-1, 1])$, since the probability $\gg \sim \Nor(\vzero_m, \II_m)$ falls in $\gK'$ is the product of the probabilities of the independent events $\gg \in \gK' \cap S$ and $\gg \in \gK' \cap S_\perp$. Therefore, applying \Cref{prop:rr_partialcolor} to the subspace $S$ and the set $\gK'$ yields the conclusion, as $\cset \gK' \cap S = \cset \gK \cap S$.
\end{proof}

Finally, we give an equivalence we will later use.

\begin{lemma}\label{lemma:atilequala}
For $\{\AA_i\}_{i \in [m]} \subset \Sym^n$, a subspace $S \subseteq \R^m$ and a parameter $R \ge 0$, define 
\[\Atil_i \defeq \sum_{j \in [m]} \Brack{\PP_S}_{ji} \AA_j \text{ for all } i \in [m],\]
and their induced operator norm bodies
\[
\gK \defeq \Brace{ \xx \in \R^m \Bigg| \normop{ \sum_{i\in [m]} \xx_i \AA_i } \leq R},\; 
\tgK \defeq \Brace{ \xx \in \R^m \Bigg| \normop{ \sum_{i\in [m]} \xx_i \Atil_i } \leq R}.
\]
Then $\gK \cap T = \tgK \cap T$ for any subspace $T \subseteq S$.
\end{lemma}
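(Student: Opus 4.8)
The plan is to show that the two matrix-valued linear maps $\xx \mapsto \sum_{i \in [m]} \xx_i \AA_i$ and $\xx \mapsto \sum_{i \in [m]} \xx_i \Atil_i$ in fact agree as functions of $\xx$ once we restrict to the subspace $S$ (hence to any subspace $T \subseteq S$); the claimed set equality is then immediate from the definitions of $\gK$ and $\tgK$.

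First I would unpack the definition of the $\Atil_i$ and swap the order of summation. For an arbitrary $\xx \in \R^m$,
\[
\sum_{i \in [m]} \xx_i \Atil_i = \sum_{i \in [m]} \xx_i \sum_{j \in [m]} \Brack{\PP_S}_{ji} \AA_j = \sum_{j \in [m]} \Par{\sum_{i \in [m]} \Brack{\PP_S}_{ji} \xx_i} \AA_j = \sum_{j \in [m]} \Brack{\PP_S \xx}_j \AA_j,
\]
where the last equality just recognizes the inner sum as the $j^{\text{th}}$ coordinate of the matrix-vector product $\PP_S \xx$. Now if $\xx \in T \subseteq S$, then $\PP_S \xx = \xx$ since $\PP_S$ is the orthogonal projection onto $S$ and acts as the identity on $S$; the displayed identity therefore becomes $\sum_{i \in [m]} \xx_i \Atil_i = \sum_{j \in [m]} \xx_j \AA_j$. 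In particular $\normop{\sum_{i \in [m]} \xx_i \Atil_i} = \normop{\sum_{i \in [m]} \xx_i \AA_i}$ for every $\xx \in T$, so for such $\xx$ we have $\xx \in \tgK$ if and only if $\xx \in \gK$. Intersecting both sides with $T$ gives $\gK \cap T = \tgK \cap T$, as claimed.

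There is essentially no obstacle here: the only facts used are that $\PP_S$ fixes every vector of $S$ (hence of $T \subseteq S$) and that the definition of $\Atil_i$ is exactly the one that makes $\xx \mapsto \sum_i \xx_i \Atil_i$ equal to $\xx \mapsto \sum_j [\PP_S\xx]_j \AA_j$. The only points worth double-checking are the index bookkeeping in the interchange of summations, and the observation that we need only $T \subseteq S$ — not $T = S$ — so that $\PP_S$ restricts to the identity on all of $T$.
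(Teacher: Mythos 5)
Your argument matches the paper's proof essentially verbatim: interchange the sums to recognize $\sum_i \xx_i \Atil_i = \sum_j [\PP_S\xx]_j \AA_j$, then use that $\PP_S\xx = \xx$ for $\xx \in T \subseteq S$ to conclude the two linear maps agree on $T$, hence so do the sublevel sets. Nothing to add.
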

\begin{proof}
It suffices to note that for $\xx \in T$, $\PP_S \xx = \xx$ and therefore 
\[
    \sum_{i \in [m]} \xx_i \Atil_i = \sum_{i \in [m]} \sum_{j \in [m]} [\PP_S]_{ji}\xx_i \AA_j = \sum_{j \in [m]} [\PP_S \xx]_j \AA_j = \sum_{j \in [m]} \xx_j \AA_j.
\]
This shows that for $\xx \in T$, $\normsop{\sum_{i \in [m]} \xx_i \Atil_i} \le R \iff \normsop{\sum_{i \in [m]} \xx_i \AA_i} \le R$, so $\gK \cap T = \tgK \cap T$.
\end{proof}

Next, we state a guarantee on a degree-rounding algorithm, $\ROalgo$. This algorithm is used in all of our sparsification subroutines, to deal with small degree imbalances induced by approximation errors in projection operations. The algorithm (Algorithm~\ref{alg:round}) follows a standard approach of rerouting the vertex imbalances $\BB_{\vG}^\top \zz$ through a spanning tree.
We bound the incurred discrepancy in the directed Laplacian by the size of $\zz$.
This procedure is related to, and inspired by, other tree-based rounding schemes in the literature, see e.g., \cite{KelnerOSZ13}.

\begin{algorithm2e}[ht!]\label{alg:round}
\caption{$\ROalgo(\vG, \zz, T)$}
\DontPrintSemicolon
\codeInput $\vG = (V, E, \ww)$, $\zz \in \R^E$, $T$ a tree subgraph of $G \defeq \und(\vG)$ \;
$\dd \gets \BB_{\vG}^\top \zz$\;
$\yy \gets $ unique vector in $\R^E$ with $\supp(\yy) \subseteq E(T)$ and $\BB_{\vG}^\top \yy = \dd$\;
\Return{$\yy$}
\end{algorithm2e}

\begin{restatable}{lemma}{restaterounding}\label{lemma:rounding}
Given $\vG = (V,E,\ww)$, a tree subgraph $T$ of $G \defeq \und(\vG)$ with $\min_{e \in E(T)} \ww_e \ge 1,$ \RO (Algorithm~\ref{alg:round}) returns in $O(n)$ time $\yy \in \R^{E}$ with $\supp(\yy) \subseteq T$ satisfying:
\begin{enumerate}
    \item $\BB_{\vG}^\top \yy = \dd.$
    \item $\norm{\yy}_{\infty} \le \frac{1}{2}\norm{\dd}_1.$
    \item For any $\zz \in \R^{E}$ satisfying $\BB_{\vG}^\top \zz = \dd,$ we have  $\normsop{\LL^{\dagger/ 2}_G \BB^{\top}_{\vG}(\YY - \ZZ)\HH_{\vG} \LL^{\dagger/ 2}_G} \leq n \norm{\zz}_1.$
    \item $\normsop{\LL^{\dagger/ 2}_G \BB^{\top}_{\vG}\YY\HH_{\vG} \LL_G^{\dagger/ 2}} \le n\norm{\yy}_1$.
\end{enumerate}
\end{restatable}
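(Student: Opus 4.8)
The plan is to dispatch the algorithmic claims (existence of $\yy$, the $O(n)$ runtime, Property~1) and Property~2 by direct combinatorial arguments, Property~4 by a crude triangle inequality against the effective‑resistance diameter of $G$, and Property~3 — the crux — by exploiting that $\yy - \zz$ is a circulation.

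First I would note that we may assume $T$ is a spanning tree of the connected graph $G = \und(\vG)$, so that $\dd = \BB_{\vG}^\top \zz$ (which sums to zero over $V$) is uniquely routable through $T$: there is a unique $\yy$ with $\supp(\yy)\subseteq E(T)$ and $\BB_{\vG}^\top\yy = \dd$, and it is computed in $O(n)$ time by repeatedly peeling a leaf $v$ of the current subtree, setting the flow on $v$'s unique incident tree edge to cancel the residual demand at $v$, and deleting $v$; this gives Property~1 and the runtime. For Property~2, removing a tree edge $e$ splits $V$ into $S_e$ and $V\setminus S_e$, and since $\supp(\yy)\subseteq E(T)$ the only $\yy$‑flow crossing this cut is carried by $e$, so $\Abs{\yy_e} = \Abs{\sum_{v\in S_e}\dd_v}$; using $\sum_{v\in V}\dd_v = 0$ this equals $\tfrac12\Par{\Abs{\sum_{v\in S_e}\dd_v} + \Abs{\sum_{v\notin S_e}\dd_v}}\le\tfrac12\norm{\dd}_1$, and $\yy_e = 0$ off $E(T)$, proving Property~2.

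For Property~4 I would expand $\BB_{\vG}^\top\YY\HH_{\vG} = \sum_e \yy_e\bb_e\ee_{h(e)}^\top$, apply the triangle inequality, and use $\normsop{\LL_G^{\dagger/2}\bb_e\ee_{h(e)}^\top\LL_G^{\dagger/2}} = \sqrt{\ER_G(e)\cdot \ee_{h(e)}^\top\LL_G^\dagger\ee_{h(e)}}$. Both factors are at most the effective‑resistance diameter of $G$, which is $\le n-1$: any two vertices are joined by a path in $T$ of at most $n-1$ edges each of resistance $\ww_e^{-1}\le 1$, and $\ee_{h(e)}^\top\LL_G^\dagger\ee_{h(e)}\le\max_{a,b}\ER_G(a,b)$ by \Cref{lemma:reffentry} (with $U = V$, using $\LL_G^\dagger = \PPi_V\LL_G^\dagger\PPi_V$). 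Summing yields $\normsop{\LL_G^{\dagger/2}\BB_{\vG}^\top\YY\HH_{\vG}\LL_G^{\dagger/2}}\le(n-1)\norm{\yy}_1\le n\norm{\yy}_1$.

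Property~3 is the main obstacle. The key structural fact is $\BB_{\vG}^\top(\yy-\zz) = \dd-\dd = \vzero_V$, so $\yy-\zz$ is a circulation; I would then use \Cref{lemma:circeq} to split $\BB_{\vG}^\top(\YY-\ZZ)\HH_{\vG}$ into its symmetric part $\tfrac12\BB_{\vG}^\top(\YY-\ZZ)\BB_{\vG} = \tfrac12\sum_e(\yy_e-\zz_e)\bb_e\bb_e^\top$ (the undirected Laplacian of the signed weighting $\yy-\zz$) and an antisymmetric part coming from the adjacency matrix. The symmetric part is easy: its $\LL_G^\dagger$‑normalized operator norm is at most $\sum_e\Abs{\yy_e-\zz_e}\ER_G(e)$, and splitting into tree edges (where $\ER_G(e)\le 1$ and $\norm{\yy}_1\le(n-1)\norm{\zz}_1$ by Property~2) and non‑tree edges (where $\yy_e = 0$ and $\ER_G(e)\le n-1$) bounds this by $O(n\norm{\zz}_1)$. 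The antisymmetric part is the hard part: a term‑by‑term triangle inequality over edges overcounts by a $\sqrt{n}$ factor, because single diagonal entries of $\LL_G^\dagger$ can be as large as $n-1$. To beat this I would exploit that potential drops across tree edges are small on average, $\sum_{e\in E(T)}\Par{\bb_e^\top\LL_G^{\dagger/2}\uu}^2\le\uu^\top\PPi_V\uu\le 1$ (valid since $\ww_e\ge 1$ on $E(T)$), together with \Cref{lemma:reffentry}, through a telescoping / Abel‑summation argument along the tree paths of $G$ — equivalently, decompose $\yy-\zz$ into the fundamental cycles of $T$ and bound each one's contribution to the antisymmetric part via a skew‑symmetric Toeplitz operator whose norm is controlled by the cycle's tree‑path length ($\le n-1$). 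This yields an $O(n\norm{\zz}_1)$ bound for the antisymmetric part as well. The conceptual difficulty is entirely in identifying this cancellation; extracting the precise constant (rather than just $O(n)$) is a matter of careful bookkeeping.
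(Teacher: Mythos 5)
Your handling of Properties 1, 2, and 4 is correct and close to the paper's: leaf-peeling gives existence, uniqueness, and the $O(n)$ time; the cut argument gives $\norm{\yy}_\infty \le \tfrac12\norm{\dd}_1$; and for Property 4 your bound $\normsop{\LL_G^{\dagger/2}\bb_e\ee_{h(e)}^\top\LL_G^{\dagger/2}} = \norms{\LL_G^{\dagger/2}\bb_e}_2\norms{\LL_G^{\dagger/2}\ee_{h(e)}}_2$ together with \Cref{lemma:reffentry} and the tree-path bound on the ER diameter is a valid (slightly different) substitute for the paper's use of $\normsop{\LL_T^\dagger}\le n^2/4$.

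The issue is Property 3, which you yourself identify as the crux and then do not prove. Your plan — split $\BB_{\vG}^\top(\YY-\ZZ)\HH_{\vG}$ into $\tfrac12\BB_{\vG}^\top(\YY-\ZZ)\BB_{\vG}$ plus a skew part via \Cref{lemma:circeq}, handle the symmetric part edge-by-edge, and control the skew part by Abel summation along fundamental cycles — can be made to work (the ingredients you name, namely $\sum_{e\in E(T)}(\bb_e^\top\LL_G^{\dagger/2}\uu)^2\le 1$, \Cref{lemma:reffentry}, and the zero sum of potential drops around each cycle to handle the non-tree edge, do combine to give an $O(n)$ bound per unit fundamental cycle). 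But as written the decisive skew-part estimate is only asserted; the phrase about a ``skew-symmetric Toeplitz operator whose norm is controlled by the cycle's tree-path length'' is not by itself a bound, since the unnormalized norm of that operator is $O(1)$ and the whole difficulty is its interaction with $\LL_G^\dagger$. Moreover, even carried out, your route gives roughly $3n\norm{\zz}_1$ (symmetric part $\approx n\norm{\zz}_1$ plus skew part $\approx 2n\norm{\zz}_1$), not the stated $n\norm{\zz}_1$; this is harmless for the paper's downstream uses but falls short of the literal claim. For comparison, the paper avoids the symmetric/antisymmetric split entirely: it writes $\zz-\yy=\sum_{e\notin T}\zz_e\cc^{(T,e)}$ and, for each fundamental cycle, bounds the whole matrix at once by observing that $\MM\defeq\BB_{\vG}^\top\CC^{(T,e)}\HH_{\vG}$ is a unit-weight directed cycle Laplacian, so $\MM\LL_G^\dagger\MM^\top\preceq 2\,\MM(\MM^\top\MM)^\dagger\MM^\top\preceq 2\II_V$ (the middle matrix is an orthogonal projection), and then $\LL_G^{\dagger/2}\MM\LL_G^\dagger\MM^\top\LL_G^{\dagger/2}\preceq 2\LL_T^\dagger$ with $\normsop{\LL_T^\dagger}\le n^2/4$, giving a per-cycle norm of at most $n$ and hence $n\norm{\zz}_1$ directly. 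So the structural facts you isolate are the right ones, but the step carrying all of the content of Property 3 is missing from your write-up.
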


A proof of \Cref{lemma:rounding} is deferred to Appendix~\ref{app:rounding}. 

We next show how to combine \Cref{cor:partialcolor_variance} with our variance
bound in \Cref{lemma:variance} to slightly sparsify an Eulerian graph, while incurring small operator norm discrepancy.

\begin{algorithm2e}[ht!]\label{alg:eps}
\caption{$\ESOalgo(\{\vG_i\}_{i \in [I]}, \vG, T, \eps, W)$}
\DontPrintSemicolon
\codeInput $\{\vG^{(i)}\}_{i \in [I]}$, subgraphs of simple $\vG = (V, E, \ww)$
with $\max_{e \in \supp(\ww)} \ww_e \le W$, and such that $\{G^{(i)} \defeq
\und(\vG^{(i)})\}_{i \in [I]}$ are a $(\rho, 2, J)$-ER decomposition of $G \defeq
\und(\vG)$, $T$ a tree subgraph of $G$ with $\min_{e \in E(T)} \ww_e \ge 1$ and
$E(T) \cap \bigcup_{i \in I} E(\vG^{(i)}) = \emptyset$, 
$\delta, \eps \in (0, \frac 1 {100})$ \;\label{line:input_eps}
$\hm \gets \nnz(\ww)$, $\hE \gets \supp(\ww)$, $G \defeq \und(\vG)$, $n \gets |V|$ \;
\If{$\hm \ge 8nJ$}{\label{line:ifdense}
$S_i \gets \{\xx \in \R^{\hE} \mid \supp(\xx) \subseteq E(\vG^{(i)}), \BB_{\vG^{(i)}}^\top \WW_{\vG_i} \xx = \vzero_V\}$ for all $i \in [I]$\;
$S \gets \bigcup_{i \in [I]} S_i$\;\label{line:sdefine}
$\xx \gets $ point in $[-1, 1]^{\hE} \cap S$ such that for universal constants $\ceso$, $\ctight$,
\begin{equation}
\label{eq:eso_bounds}
\begin{gathered}\normop{\LL_{G}^{\frac \dagger 2}\Par{\sum_{i \in I}\sum_{e \in E(\vG_i)} \xx_e \ww_e \bb_e \ee_{h(e)}^\top } \LL_{G}^{\frac \dagger 2}} \\
    \le \ceso\min\Brace{\rho^{\frac 1 2} \log^{\frac 1 2}+ \rho^{\frac 3
    4}\log^{\frac 5 4}(n),\; \rho^{\frac 1 2}\log^{\frac 3 4}(n) + \rho^{\frac 3 4}\log(n)},\\
\Abs{\Brace{e \in \hE \mid \xx_e = -1}} \ge \ctight \hm
\end{gathered}
\end{equation}\;
\label{line:xexist}
\Comment*{Existence of $\xx$, $\ceso$, $\ctight$ follow from
\Cref{lemma:variance}, \Cref{prop:bjm_measure}, and \Cref{cor:partialcolor_variance}, see \Cref{lemma:eps}.}
$\xx' \gets $ extension of $\xx$ to $\R^E$ with $\xx'_e = \xx_e$ if $e \in \bigcup_{i \in I} E(G_i)$ and $\xx'_e = 0$ otherwise \;\label{line:xextend}
$\ww \gets \ww \circ (\vone_E + \xx')$\;
}
$D \defeq \{e \in \hE \mid \ww_e \le \ell\}$ \label{line:eso_d}\;
\Return{$\vG' \gets (V,E,[\ww]_{\hE \setminus D} + \ROalgo(\vG,[\ww]_D,T))$}\;
\end{algorithm2e}

\begin{lemma} \label{lemma:eps}
Suppose that 
$\ESO$ (Algorithm~\ref{alg:eps}) is run on inputs as specified in Line~\ref{line:input_eps}. Then, it returns $\vG' = (V,E,\ww')$ satisfying the following properties, with probability $\ge 1 - \delta$.
\begin{enumerate}
\item $\max_{e \in \hE} \ww'_e\le 2W$, $\min_{e \in \hE} \ww_e > \ell$ and 
    $\min_{e \in E(T)} \ww_e' \ge \min_{e \in E(T)} \ww_e - n\hm\ell$.
    \label{item:eso_1}
\item $\BB_{\vG}^\top \ww' = \BB_{\vG}^\top \ww$.
    \label{item:eso_2}
\item $\nnz([\ww']_{\hE}) \le (1-\ctight)\hm + \ceso \cdot nJ$.
    \label{item:eso_3}
\item 
    \[ \normsop{\LL_G^{\dagger/ 2} \BB_{\vG}^\top (\WW' - \WW)\HH_{\vG} \LL_G^{\dagger/ 2}} \le 
        \begin{gathered}
    \ceso\min\{\rho^{\frac 1 2} \log^{\frac 1 2}n + \rho^{\frac 3 4}\log^{\frac 5 4}n, \\ 
    \rho^{\half}\log^{\frac 3 4}n + \rho^{\frac 3 4}\log n\} + n\hm \ell. 
    \end{gathered} \]
    \label{item:eso_4}
\end{enumerate}
Moreover, $\ESOalgo$ is implementable in $\poly(n, \log U, \log \frac 1 \delta)$ time.
\end{lemma}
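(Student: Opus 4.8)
The plan is: every assertion of the lemma except the existence of the partial coloring $\xx$ claimed on Line~\ref{line:xexist} is direct bookkeeping from the structure of Algorithm~\ref{alg:eps} together with \Cref{lemma:rounding}, so I would first dispatch those and then concentrate on Line~\ref{line:xexist}, which is the crux. Write $S = \bigoplus_{i\in[I]} S_i$ (a genuine direct sum, since the $E(\vG^{(i)})$ are edge-disjoint), so $\PP_S$ is block-diagonal with $i$-th block equal to the projection $\PP_{\vG^{(i)}}$ appearing in \Cref{lemma:variance} with $\vH=\vG^{(i)}$. Granting a point $\xx\in[-1,1]^{\hE}\cap S$ meeting \eqref{eq:eso_bounds}: the update $\ww\gets\ww\circ(\vone_E+\xx')$ scales each weight by a factor in $[0,2]$, so no weight exceeds $2W$, and exactly the $\ge\ctight\hm$ entries with $\xx_e=-1$ (all inside $\bigcup_i E(\vG^{(i)})$) are zeroed, so $\nnz(\ww)\le(1-\ctight)\hm$ afterwards; moreover $\ww\circ\xx'$ is a circulation on each $\vG^{(i)}$ hence on $\vG$, so this step fixes $\BB_{\vG}^\top\ww$. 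The final line subtracts $[\ww]_D$ (entries $\le\ell$, disjoint from $E(T)$ because $\min_{e\in E(T)}\ww_e\ge1>\ell$) and adds $\yy=\ROalgo(\vG,[\ww]_D,T)$: by \Cref{lemma:rounding}(1), $\BB_{\vG}^\top\yy=\BB_{\vG}^\top[\ww]_D$, giving \Cref{item:eso_2}; by \Cref{lemma:rounding}(2), $\|\yy\|_\infty\le\|[\ww]_D\|_1\le\hm\ell$, so every tree weight changes by at most $\hm\ell\le n\hm\ell$ and stays below $2W$, and every surviving original edge has weight $0$ or $>\ell$ by the definition of $D$, giving \Cref{item:eso_1}; since $\yy$ only modifies edges already present in $E(T)$, $\nnz([\ww']_{\hE})\le(1-\ctight)\hm$ when the branch runs, and $\le\hm\le(1-\ctight)\hm+\ceso nJ$ otherwise (using $\hm<8nJ$ and $\ceso\ge 8\ctight$), giving \Cref{item:eso_3}; and $\WW'-\WW=\diag{\ww\circ\xx'}+(\diag{\yy}-\diag{[\ww]_D})$, whose second summand contributes at most $n\|[\ww]_D\|_1\le n\hm\ell$ to $\normop{\LL_G^{\dagger/2}\BB_{\vG}^\top(\WW'-\WW)\HH_{\vG}\LL_G^{\dagger/2}}$ by \Cref{lemma:rounding}(3), while the first summand contributes $\normop{\LL_G^{\dagger/2}(\sum_i\sum_{e\in E(\vG_i)}\xx_e\ww_e\bb_e\ee_{h(e)}^\top)\LL_G^{\dagger/2}}$, which is the left side of \eqref{eq:eso_bounds} and thus $\le\ceso\min\{\dots\}$; this is \Cref{item:eso_4}.

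The main obstacle is producing $\xx$. Set $\AA_e=\ww_e\LL_G^{\dagger/2}\bb_e\ee_{h(e)}^\top\LL_G^{\dagger/2}$ and $\Atil_e=\sum_f[\PP_S]_{fe}\AA_f$; by the block structure, for $e\in E(\vG^{(i)})$ this $\Atil_e$ coincides with the matrix of \Cref{lemma:variance} for $\vH=\vG^{(i)}$ (whose hypothesis is exactly \Cref{item:effres:partition:diameter} of the ER decomposition), while $\Atil_e=0$ off $\bigcup_i E(\vG^{(i)})$. Summing \Cref{lemma:variance} over $i$ and using edge-disjointness, so $\sum_i\LL_{H^{(i)}}\pleq\LL_G$, gives $\sum_e\Atil_e\Atil_e^\top\pleq\rho\,\PPi_V\pleq\rho\,\id$ and likewise $\sum_e\Atil_e^\top\Atil_e\pleq\rho\,\id$, and taking traces, $\sum_e\normf{\Atil_e}^2\le\rho(n-1)$. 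Passing to the $2n\times 2n$ symmetric dilations $\Atil_e^{\mathrm{sym}}$ (off-diagonal blocks $\Atil_e,\Atil_e^\top$, zero diagonal) preserves operator norms and doubles squared Frobenius norms, so with $\sigma=\rho^{1/2}$ and $f^2=2\rho n/\hm\le\rho/4$ (here $\hm\ge 8nJ\ge 8n$ is used, which also secures the dimension hypothesis of \Cref{prop:bjm_measure} after dilation) one gets $\sqrt{\sigma f}\le\rho^{1/2}$. \Cref{prop:bjm_measure} then supplies a subspace $T_0$ with $\dim T_0\ge(1-c)\hm$ and $\gamma_{T_0}(\tgK)\ge\exp(-c\hm)$, where $\tgK=\{\xx:\normop{\sum_e\xx_e\Atil_e^{\mathrm{sym}}}\le R_0\}$ and $R_0\le 2\ccolor\rho^{1/2}\log^{1/2}n$. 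To force the coloring into $S$, take $S'=S\cap T_0$: then $\dim S'\ge\dim S-c\hm$, and since the ``edges cut'' and ``vertex coverage'' bounds give $\dim S\ge|\bigcup_i E(\vG^{(i)})|-nJ\ge\tfrac{\hm}{2}-nJ\ge\tfrac{3}{8}\hm$ (again using $\hm\ge 8nJ$), for $c$ small we have $\dim S'\ge\tfrac{3}{16}\hm\ge 2\ctight\hm$. Since $S'\subseteq T_0$ and $\tgK$ is symmetric and convex, a standard Gaussian comparison (Anderson's inequality: restricting a centered Gaussian to a subspace only increases the measure of a symmetric convex set) gives $\gamma_{S'}(\tgK)\ge\gamma_{T_0}(\tgK)\ge\exp(-c\hm)$; since $S'\subseteq S$, \Cref{lemma:atilequala} gives $\tgK\cap S'=\gK\cap S'$ for $\gK=\{\xx:\normop{\sum_e\xx_e\AA_e^{\mathrm{sym}}}\le R_0\}$, hence $\gamma_{S'}(\gK)\ge\exp(-c\hm)$. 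Feeding this into \Cref{cor:partialcolor_variance} with target subspace $S'$ and body $\gK$ produces, with probability $1-\exp(-\Omega(\hm))$, a point $\xx\in\cset\gK\cap[-1,1]^{\hE}\cap S'$ with at least $\ctight\hm$ coordinates of modulus $1$; negating $\xx$ if needed (permissible since $\gK,S'$ are symmetric) we may take at least $\tfrac12\ctight\hm$ of these equal to $-1$. After renaming $\ctight$ and setting $\ceso=2\cset\ccolor$ (using that $\rho^{1/2}\log^{1/2}n$ is at most either expression inside the $\min$ of \eqref{eq:eso_bounds}), this $\xx$ satisfies \eqref{eq:eso_bounds} because $\normop{\sum_e\xx_e\AA_e}=\normop{\sum_e\xx_e\AA_e^{\mathrm{sym}}}\le\cset R_0$.

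Finally, for the runtime: $\LL_G^\dagger$ and its square root, the matrices $\AA_e$, and the projections $\PP_{\vG^{(i)}}$ are computed by $\poly(n)$ linear algebra; the point $\xx$ is obtained by drawing $\gg\sim\Nor(\vzero,\id)$ and Euclidean-projecting onto $\cset\gK\cap[-1,1]^{\hE}\cap S'$, a subspace intersected with a box intersected with an operator-norm ball (a spectrahedron), which is a $\poly(n,\log U)$-time convex program, rerun $O(\log\tfrac1\delta)$ times to amplify the $1-\exp(-\Omega(\hm))$ success probability to $1-\delta$; and $\ROalgo$ runs in $O(n)$ time by \Cref{lemma:rounding}. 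This yields the claimed $\poly(n,\log U,\log\tfrac1\delta)$ bound, completing the plan.
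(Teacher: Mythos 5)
Your proof is correct and follows the same strategy as the paper's: routine bookkeeping via \Cref{lemma:rounding} for Items 1--3 and the rounding part of Item 4, then \Cref{lemma:variance} (summed over edge-disjoint decomposition pieces) to feed $\sigma^2=\rho$ and a trace-based $f^2$ bound into \Cref{prop:bjm_measure} on the symmetric dilations, followed by \Cref{lemma:atilequala} and \Cref{cor:partialcolor_variance} to extract the partial coloring. The one step the paper leaves implicit and you correctly supply is the passage from the \Cref{prop:bjm_measure} subspace $T_0$ to $S'=S\cap T_0$ while preserving the Gaussian measure lower bound, which indeed follows from Anderson's inequality (the projection $\PP_{S'}\preceq\PP_{T_0}$ and $\tgK$ symmetric convex).
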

\begin{proof}
If \Cref{line:ifdense} does not pass, then \Cref{item:eso_1,item:eso_2,item:eso_3}
trivially hold and it only incurs the second term in the spectral error
(\Cref{item:eso_4}) due to \Cref{lemma:rounding}.
We then assume it does pass for the remainder of the proof.
We defer proving existence of $\xx, \ceso, \ctight$ in \eqref{eq:eso_bounds} until the end.
Since $\xx \in [-1, 1]^E$ and $\supp(\xx) \subseteq \hE$, no edge weight in
$\hE$ more than doubles, giving the first claim of \Cref{item:eso_1}.
Our definition of $D$ on \Cref{line:eso_d} and \RO ensures the second claim of
\Cref{item:eso_1}.
Next, since $\ww \circ \xx$ is only supported on $E' \defeq \bigcup_{i \in [I]}
E(G^{(i)})$ and $[\ww \circ \xx]_{E'}$ is the sum of disjoint circulations on
each $G_i$ by the definition of each $S_i$, $\ww \circ \xx$ is itself a
circulation on $G$.
Combining with the first guarantee of \Cref{lemma:rounding}, this implies
\Cref{item:eso_2}.
Since any $e \in \hE$ where $\xx_e = -1$ necessary has $\ww_e (1+\xx_e) = 0$ and
that \RO only introduces new non-zero entries on $E(T)$, \Cref{item:eso_3}
holds.
\Cref{item:eso_4} is follows from the definitions of $\ww'$ and
$D$, \eqref{eq:eso_bounds} and the third guarantee of \Cref{lemma:rounding}.

It remains to prove $\xx, \ceso, \ctight$ exist when \Cref{line:ifdense} passes. 
For each $e \in E'$, define $\AA_e$ and $\Atil_e$ as in the proof of
\Cref{lemma:variance} where $\vH$ is set to the partition piece $\vG^{(i)}$ with
$E(\vG^{(i)}) \ni e$. 
Summing the bound in \Cref{lemma:variance} over all pieces gives $\sigma^2 =
\rho$ in \Cref{prop:bjm_measure}, where we overload 
\[\Atil_e \gets \begin{pmatrix} & \Atil_e \\ \Atil_e^\top & \end{pmatrix} \]
in its use (padding with zeroes as necessary).
Correctness follows from the observations
\[\begin{pmatrix} & \AA \\ \AA^\top & \end{pmatrix}^2 = \begin{pmatrix} \AA \AA^\top &  \\  & \AA^\top \AA \end{pmatrix},\; \normop{\begin{pmatrix} & \AA \\ \AA^\top & \end{pmatrix}} = \normop{\AA}. \]
Further, we always have $f^2 \le \frac{n\sigma^2}{m}$ by linearity of trace and $\normsf{\Atil}^2 = \Tr(\Atil^2)$ for $\Atil \in \Sym^n$.
This gives a Gaussian measure lower bound on $\tgK$ restricted to a subspace
$S'$ of $S$.
By the characterization in \Cref{lemma:atilequala}, this also implies a Gaussian
measure lower bound on $\gK$ restricted to $S'$.
We next observe that $S$ is a subspace of $\R^{E'}$ where each $S_i$ enforces
$|V(G^{i})| - 1$ linear constraints (corresponding to weighted degrees in the subgraph).
By \Cref{def:er_partition}, the total number of such linear constraints is $\le
nJ$ and $|E'| \ge \frac{1}{2} \hm$.
The condition on \Cref{line:ifdense} then guarantees our final subspace has
sufficiently large dimension to apply \Cref{cor:partialcolor_variance}.
Finally, \Cref{cor:partialcolor_variance} guarantees existence of $\xx, \ctight, \ceso$ satisfying the guarantees in \eqref{eq:eso_bounds} (we may negate $\xx$ if it has more $1$s than $-1$s, and halve $\ctight$).

Lastly, we observe that Algorithm~\ref{alg:eps} is implementable in polynomial time. This is clear for
\RO and \Cref{line:xextend} to~\ref{line:eso_d}. The most computationally intensive
step is \Cref{line:xexist}, which consists of finding a subspace of large
Gaussian measure and solving a convex program. 
The latter is polynomial time \cite{GrotschelLS88}; the former is due to intersecting the explicit subspace from \Cref{line:sdefine} and the subspace from \Cref{prop:bjm_measure}.
The subspace from \Cref{prop:bjm_measure} is explicitly described in the proof of Lemma 3.1 of \cite{BansalJM23}; it is an eigenspace of a flattened second moment matrix.

All steps are deterministic except for the use of Corollary~\ref{cor:partialcolor_variance} in Line~\ref{line:xexist} (note that we can bypass Lemma~\ref{lem:approx_er} via exact linear algebra computations). This line succeeds with probability $\ge \half$ for a random draw. Finally, we can boost this line to have failure probability $\delta$ by running $\log(\frac 1 \delta)$ independent trials, as we can verify whether a run succeeds in $\poly(n, \log U)$ time.
\end{proof}

\begin{algorithm2e}[ht!]
\caption{$\ESalgo(\vG, \epsilon, \delta)$}
\label{alg:existsparse_new}
\DontPrintSemicolon
\codeInput Eulerian $\vG = (V, E, \ww)$ with $\ww_e \in [1, U]$ for all $e \in E$, $\eps \in (0, 1)$\;
$n \gets |V|$, $m \gets |E|$\;%
$T \gets $ arbitrary spanning tree of $G \defeq \und(\vG)$, $\hE \gets E \setminus E(T)$\;
$R \gets \lfloor \log_{1 - \frac{1}{2} \ctight} \frac 1 n \rfloor + 1$, $C_1 \gets
(\frac{256\ceso}{\ctight})^2, C_2 \gets (\frac{256\ceso}{\ctight})^{4/3}, C_3
\gets (\frac{256\ceso}{\ctight})^{3/2}$ for $\ctight, \ceso$ in \eqref{eq:eso_bounds}\;
$U_{\max} \gets U \cdot 2^R$, $J_{\max} \gets \log_2\Par{\frac{64mnRU_{\max}}{\eps}}$\;
$t \gets 0$, $\vG_0 \gets \vG$\;
\While{$\nnz([\ww_t]_{\hE}) > \max\{2 \ceso \ctight \cdot nJ_{\max}, \min\{C_1 \cdot \frac{n\log
n}{\eps^2} + C_2 \cdot \frac{n\log^{5/3}n}{\eps^{4/3}}, 2C_3 \cdot
\frac{n\log^{3/2} n}{\eps^{2}}\}\}$}{\label{line:while_es_start}
    $G_t \gets \und(\vG_t)$\;
    $S \gets \ERPalgo([G_t]_{\hE}, 2, \frac{\delta}{R})$ \Comment*{See \Cref{prop:er_partition}.}
    $\vG_{t+1} \defeq (V,E,\ww_{t+1}) \gets \ESOalgo(S,G_t,T,\frac{\eps}{8mnR},U_{\max})$\;
    $t \gets t+1$\;\label{line:while_es_end}
}
\Return{$\vH \gets (V,\supp(\ww_t),\ww_t)$}\;
\end{algorithm2e}

We are now ready to state and analyze our overall sparsification algorithm, $\ESalgo$ (Algorithm~\ref{alg:existsparse_new}). The following is a refined version of \Cref{thm:existential}.

\begin{restatable}{theorem}{restateexistentialdetailed}\label{thm:existentialdetailed}
    Given Eulerian $\vec{G} = (V,E,\ww)$ with $|V| = n$, $|E| = m$, $\ww \in [1,U]^E$ and $\epsilon \in (0,1)$, \ES (Algorithm~\ref{alg:existsparse_new}) returns Eulerian $\vec{H}$ such that $\vH$ is an $\eps$-approximate Eulerian sparsifier of $\vG$, and
    \begin{gather*}
    |\vH| = O\left(n\log U + \frac{n\log n}{\epsilon^2} \min\Brace{1 + (\epsilon\log n)^{\frac{2}{3}},\; {\log^{\frac 1 2}n}}\right),\\
    \log\Par{\frac{\max_{e \in \supp(\ww')}\ww'_e}{\min_{e \in \supp(\ww')}\ww'_e}} = O\Par{\log\Par{nU}}.
    \end{gather*}
    \ES succeeds with probability $\ge 1 - \delta$ and runs in time $\poly(n, \log U, \log\frac 1 \delta)$.
\end{restatable}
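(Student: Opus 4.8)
The plan is to analyze the \texttt{while} loop of \ESalgo{} iteration by iteration, maintaining a joint invariant controlling: (a) the edge weights, which pin down the parameters $\rho_t,J_t$ of each \ERPalgo{} call; (b) the non-tree edge count $\hm_t \defeq \nnz([\ww_t]_{\hE})$, which contracts geometrically; (c) exact degree balance; and (d) the accumulated directed-Laplacian error measured against $\LL_G$, where $G\defeq\und(\vG)$. First I would dispose of the degenerate case where the loop body never executes: then $\vH=\vG$ and all conclusions are immediate. Otherwise the loop runs at least once, which forces $\nnz([\ww]_{\hE}) = m-n+1$ above the loop threshold and hence $m\gtrsim n\eps^{-2}\log n$, so $\log\eps^{-1}=O(\log n)$; I will use this below to absorb $\log\eps^{-1}$ factors into $\log(nU)$.

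Next I would establish the weight and counting invariants by induction on $t$. By property~(1) of \Cref{lemma:eps} and property~(2) of \Cref{lemma:rounding}, after $t\le R$ iterations every non-tree weight lies in $(\ell, U\cdot 2^R]$ and every tree weight lies in $[\tfrac78, 2U]$, where $\ell = \tfrac{\eps}{8mnR}$ is the error parameter passed to \ESOalgo{} (the net tree-weight drift over $\le R$ iterations is bounded since $\sum_t n\hm_t\ell \le \tfrac\eps8$). Hence $W_t$, the weight ratio of $[G_t]_{\hE}$, is below $\tfrac{8mnR\cdot U\cdot 2^R}{\eps}$, so $\ERPalgo([G_t]_{\hE}, 2, \tfrac\delta R)$ returns a $(\rho_t,2,J_t)$-ER decomposition with $\rho_t\le\tfrac{16n\log(n+1)}{\hm_t}$ and $J_t = \log_2 W_t + 3 \le J_{\max}$ (\Cref{prop:er_partition}); and while the loop runs, $\hm_t$ is at least a sufficiently large constant multiple of $nJ_{\max}\ge nJ_t$, so the dense branch of \ESOalgo{} fires and property~(3) of \Cref{lemma:eps} yields $\hm_{t+1}\le(1-\ctight)\hm_t + \ceso nJ_t \le (1-\tfrac{\ctight}{2})\hm_t$, hence $\hm_t\le(1-\tfrac{\ctight}{2})^t m$. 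Since the loop threshold is at least $n$ and $(1-\tfrac{\ctight}{2})^R\le\tfrac1n$ by the definition of $R$, the loop stops after $T\le R=O(\log n)$ iterations. Then $|\vH| = |E(T)| + \hm_T \le (n-1) + \max\{2\ceso\ctight\, nJ_{\max},\ \min\{C_1\tfrac{n\log n}{\eps^2}+C_2\tfrac{n\log^{5/3}n}{\eps^{4/3}},\ 2C_3\tfrac{n\log^{3/2}n}{\eps^2}\}\}$; substituting $J_{\max}=O(\log(nU))$ (using $\log\eps^{-1}=O(\log n)$) and simplifying the inner $\min$ gives $|\vH| = O(n\log U + \tfrac{n\log n}{\eps^2}\min\{1+(\eps\log n)^{2/3},\ \log^{1/2}n\})$, and the stated weight-ratio bound follows from the weight ranges above. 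Finally, property~(2) of \Cref{lemma:eps} gives $\BB_{\vG}^\top\ww_t = \BB_{\vG}^\top\ww = \vzero$ throughout, so each $\vG_t$ and $\vH$ is Eulerian, with $E(\vH)\subseteq E$ and $V(\vH)=V$ by construction.

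For the spectral bound, write $\delta_t$ for the bound of property~(4) of \Cref{lemma:eps} at step $t$ --- fractional powers of $\rho_t$ with polylogarithmic coefficients, plus the additive term $n\hm_t\ell$. I would telescope $\vLL_{\vG} - \vLL_{\vH} = \sum_{t<T}(\vLL_{\vG_t} - \vLL_{\vG_{t+1}})$, where each summand equals $\BB_{\vG}^\top(\WW_t - \WW_{t+1})\HH_{\vG}$ and is controlled in the $\LL_{G_t}$-seminorm by $\delta_t$. To convert to the $\LL_G$-seminorm: \Cref{fact:dirclose_undirclose} (applicable since $\BB_{\vG}^\top\ww = \BB_{\vG}^\top\ww_t = \vzero$) bounds $\normsop{\LL_G^{\dagger/2}(\LL_G - \LL_{G_t})\LL_G^{\dagger/2}}$ by $2\normsop{\LL_G^{\dagger/2}(\vLL_{\vG} - \vLL_{\vG_t})\LL_G^{\dagger/2}}$, and then \Cref{fact:opnorm_precondition} converts $\LL_{G_t}$-seminorms to $\LL_G$-seminorms at multiplicative cost $1 + O(\sum_{s<t}\delta_s)$. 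An induction on $t$ (valid as long as $\sum_{s<t}\delta_s\le\tfrac14$) then gives $\normsop{\LL_G^{\dagger/2}(\vLL_{\vG} - \vLL_{\vG_t})\LL_G^{\dagger/2}}\le 2\sum_{s<t}\delta_s$. It remains to verify $\sum_{t<T}\delta_t\le\tfrac\eps4$: the additive terms sum to $\sum_{t<T} n\hm_t\ell\le\tfrac\eps8$ by the choice of $\ell$; and since $\hm_t$ contracts geometrically the $\rho_t$ grow geometrically, so the remaining (coloring) part of $\sum_{t<T}\delta_t$ is, up to a factor depending only on $\ctight$, dominated by its last term --- a fractional power of $\rho_{T-1}\le\tfrac{16n\log(n+1)}{\hm_{T-1}}$ --- and since $\hm_{T-1}$ lies above the relevant branch of the loop threshold, the constants $C_1 = (256\ceso/\ctight)^2$, $C_2 = (256\ceso/\ctight)^{4/3}$, $C_3 = (256\ceso/\ctight)^{3/2}$ are precisely what makes this at most $\tfrac\eps8$. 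Thus the telescoped error is $\le\tfrac\eps2\le\eps$, establishing \eqref{eq:eulerian_sparsifier_def}. For efficiency, the loop runs $O(\log n)$ times, each iteration calling \ERPalgo{} and \ESOalgo{}, both $\poly(n,\log U,\log\tfrac1\delta)$ by \Cref{prop:er_partition} and \Cref{lemma:eps}; a union bound over the $O(\log n)$ randomized calls, each run with failure probability $O(\delta/\log n)$, gives success probability $\ge 1-\delta$.

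The main obstacle is the spectral-error bootstrap: making the telescoping and the geometric-sum estimate of $\sum_t\delta_t$ interlock so that the accumulated error is genuinely $O(\eps)$ with the precise log-exponents claimed, and so that the constants $C_1,C_2,C_3$ and the inner $\min$ of the loop condition line up with the two bounds \ESOalgo{} provides (one sharper for small $\eps$, one for the worst case). A secondary subtlety is the mild circularity between ``the loop runs at most $R$ times'' and ``weights stay in range so that $J_t\le J_{\max}$,'' which is resolved by the joint induction above.
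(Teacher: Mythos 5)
Your proposal is correct and follows essentially the same route as the paper's proof: the same joint induction over iterations using Items (1)–(4) of \Cref{lemma:eps} together with \Cref{prop:er_partition}, the same geometric contraction of $\hm_t$ bounding the iteration count by $R$, the same telescoping of the spectral error with \Cref{fact:dirclose_undirclose} and \Cref{fact:opnorm_precondition}, and the same last-term domination of the geometric error sum via the constants $C_1,C_2,C_3$. The only differences are presentational (e.g., your explicit handling of the degenerate no-iteration case and of absorbing $\log\frac1\eps$ into $\log(nU)$), which the paper treats implicitly.
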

\begin{proof}
Recall from Section~\ref{sec:prelims} that we assume without loss of generality that $G$ is connected. Throughout, condition on the event that all of the at most $R$ calls to
\ERP succeed, which happens with probability $\ge 1 - \delta$.
Because \ESO guarantees that no weight grows by more than a $2$ factor in each call, $U_{\max}$ is a valid upper bound for the maximum weight of any edge throughout the algorithm's execution.
Moreover, since no weight falls below $\frac{\eps}{8mnR}$ throughout by
\ESO, $J_{\max} \defeq \log_2(\frac{64mnRU_{\max}}{\eps})$ is an upper bound
on the number of decomposition pieces ever returned by $\ERPalgo$, by
\Cref{prop:er_partition}.

Next, note that under the given lower bound on $\nnz([\ww_t]_{\hE})$ in a given
iteration (which is larger than $2\ceso \ctight \cdot nJ_{\max}$), the sparsity progress
guarantee in \Cref{item:eso_3} of \Cref{lemma:eps} shows that the number of
edges in each iteration is decreasing by at least a $(1-\ctight) +
\frac{1}{2}\ctight = (1-\frac{1}{2}\ctight)$ factor until termination.
Since $m \le n^2$ and the algorithm terminates before reaching $n$ edges, $R$ is
a valid upper bound on the number of iterations before the second condition in
\Cref{line:while_es_start} fails to hold, which gives the sparsity claim.

Let $\hm_i \defeq \nnz([\ww]_{\hE})$.
To prove the spectral error bound, we show by induction that until the algorithm
terminates, the following conditions hold, where we use $t$ to denote the number
of times the while loop runs in total:
\begin{enumerate}
    \item $\BB^\top \ww_i = \BB^\top \ww_0$ \label{item:es_2}
    \item $\hm_i \leq (1-\frac{1}{2}\ctight)^i\hm_0$. \label{item:es_3}
    \item 
    $ 
    \normsop{\LL^{\dagger/ 2} \BB^\top(\WW_i - \WW_0) \HH \LL^{\dagger/ 2}} \leq 
    2\ceso\sum_{j=0}^{i-1} 
    \min\{(\frac{n\log n}{\hm_j})^{\frac 1 2} + (\frac n {\hm_j})^{\frac 3 4}\log^{\frac 5 4}n, (\frac n {\hm_j})^{\half}\log^{\frac 3 4}n + (\frac n {\hm_j})^{\frac 3 4}\log n\}
    + \frac{i}{4R}.
    $\label{item:es_4}
\end{enumerate}
Note that \Cref{item:es_2,item:es_3,item:es_4} all hold trivially for $i=0$. Suppose inductively all conditions above hold for all iterations $k \leq i < t$.
By our stopping condition, $n \le \hm_i \le (1-\frac{1}{2}\ctight)^{i-1}\hm_0$
and hence $i \leq \frac{\log n}{-\log(1-\frac{1}{2}\ctight)} < R$.
\Cref{item:eso_2,item:eso_3} of \Cref{lemma:eps} then implies \Cref{item:es_2,item:es_3} are satisfied for iteration $i+1$.
We also have by \Cref{item:eso_4} of \Cref{lemma:eps} that
\begin{gather*}
    \normop{\LL_{G_i}^{\frac \dagger 2} \BB^\top (\WW_{i+1} - \WW_i) \HH \LL_{G_i}^{\frac \dagger 2}} \\
    \leq
    \ceso
    \min\Brace{
    \sqrt{\frac{n\log n}{\hm_i}} + \left(\frac{n\log^{\frac 5 3}n}{\hm_i}\right)^{\frac 3 4},\; \sqrt{\frac n {\hm_i}} \log^{\frac 3 4}(n) + \Par{\frac{n}{\hm_i}}^{\frac 3 4} \log(n)
} + \frac{\eps}{8R},
\end{gather*}
where we define $\vG_i \defeq (V,E,\ww_i)$ and $G_i \defeq \und(\vG_i)$ for any $0 \le i \le t$.
Note that $\vG_0 = (V,E,\ww_0) = \vG$, the original input Eulerian graph.
Moreover, $\LL_{\vG_i} - \LL_{\vG_0} = \BB^\top (\WW_i - \WW_0) \HH$.
By our choice of $C_1,C_2$, the stopping condition, \Cref{item:es_3}, and
\Cref{lemma:eps},
\begin{gather*}
    2\ceso \sum_{j=0}^{i-1} 
\min\Brace{
    \sqrt{\frac{n\log n}{\hm_j}} + \left(\frac{n\log^{\frac 5 3}n}{\hm_j}\right)^{\frac 3 4},\; \sqrt{\frac n {\hm_j}} \log^{\frac 3 4}(n) + \Par{\frac{n}{\hm_j}}^{\frac 3 4} \log(n)} \\
    \le 2\ceso \sum_{j = 0}^{i - 1} \Par{1 - \frac{\ctight}{4}}^{i - 1 - j} \\
    \cdot  \min\Brace{
        \sqrt{\frac{n\log n}{\hm_{i - 1}}} + \left(\frac{n\log^{\frac 5 3}n}{\hm_{i - 1}}\right)^{\frac 3 4},\; \sqrt{\frac n {\hm_{i - 1}}} \log^{\frac 3 4}(n) + \Par{\frac{n}{\hm_{i - 1}}}^{\frac 3 4} \log(n)} \\
    \le \frac{8\ceso}{\ctight}  \min\Brace{
        \sqrt{\frac{n\log n}{\hm_{i - 1}}} + \left(\frac{n\log^{\frac 5
    3}n}{\hm_{i - 1}}\right)^{\frac 3 4},\; \sqrt{\frac n {\hm_{i - 1}}}
\log^{\frac 3 4}(n) + \Par{\frac{n}{\hm_{i - 1}}}^{\frac 3 4} \log(n)} \le \frac
\eps 8 \le \frac{1}{8}.
\end{gather*}
As we also have $\frac{i \eps}{4R} \le \frac{\eps}{4} \le \frac{1}{4}$,
\Cref{fact:dirclose_undirclose} then gives $\half \LL \preceq \LL_{G_i} \preceq \frac 3 2 \LL$. Consequently, $G_i$ has the same connected components as the original graph $G$, i.e., since we assumed $G$ is connected, so is $G_i$. 
Hence, \Cref{fact:opnorm_precondition} implies that
\begin{gather*}
    \normop{\LL^{\frac \dagger 2} \BB^\top (\WW_{i+1} - \WW_i) \HH \LL^{\frac \dagger 2}}
    \leq
    2 \cdot \normop{\LL_{G_i}^{\frac \dagger 2} \BB^\top (\WW_{i+1} - \WW_i) \HH \LL_{G_i}^{\frac \dagger 2}}
    \\
    \leq
    2\ceso\min\Brace{
        \sqrt{\frac{n\log n}{\hm_i}} + \left(\frac{n\log^{\frac 5 3}n}{\hm_i}\right)^{\frac 3 4},\; \sqrt{\frac n {\hm_i}} \log^{\frac 3 4}(n) + \Par{\frac{n}{\hm_i}}^{\frac 3 4} \log(n)
    } + \frac{\eps}{4R}.
\end{gather*}
This proves \Cref{item:es_4} in the inductive hypothesis, as desired, and also implies that after the $t^{\text{th}}$ loop,
\begin{equation}\label{eq:before_round_opnorm}
\normop{\LL^{\frac \dagger 2}\BB^\top\Par{\WW_t - \WW_0}\HH\LL^{\frac \dagger 2}} \le \eps.
\end{equation}
The sparsity bound follows by explicitly removing any $e \in E$ where $[\ww_t]_e = 0$ from $\vH$. 
In light of Lemma~\ref{lemma:eps}, we note that each of the $\poly(n)$ calls to $\ESalgo$ can be implemented in $\poly(n, \log U, \log \frac 1 \delta)$ time, and all steps of Algorithm~\ref{alg:existsparse_new} other than $\ESOalgo$ run in linear time. We adjust the failure probability by a $\poly(n)$ factor to account for the multiple uses of Corollary~\ref{cor:partialcolor_variance} via a union bound, giving the claim.
\end{proof}

Theorem~\ref{thm:existential} is one logarithmic factor in $nU$ away from being optimal, up to low-order terms in $\eps$. The extra logarithmic factor is due to the parameters of our ER decomposition in Proposition~\ref{prop:er_partition}, and the low-order terms come from the additive terms with polylogarithmic overhead in Proposition~\ref{prop:bjm_measure}. In Appendix~\ref{sec:conjectures}, we discuss routes towards removing this overhead, and relate them to known results and open problems in the literature on graph decomposition (i.e., the \cite{AlevALG18} decomposition scheme) and matrix discrepancy (i.e., the matrix Spencer conjecture).

\section{Eulerian sparsification in nearly-linear time}\label{sec:algos}

In this section, building upon our approach from Section~\ref{sec:existence}, we provide a nearly-linear time algorithm for sparsifying Eulerian directed graphs. We develop our algorithm via several reductions.
\begin{itemize}
    \item In Section~\ref{ssec:fastonce}, we develop $\BFSalgo$, a basic subroutine which takes as input an initial subgraph with bounded ER diameter (in the sense of Definition~\ref{def:er_partition}), and edge weights within a constant multiplicative range. It then returns a reweighting of the initial subgraph which decreases weights by a constant factor on average.
    \item In Section~\ref{ssec:twophase}, we give a two-phase algorithm which builds upon $\BFSalgo$. In the first phase, the algorithm calls $\BFSalgo$ $\approx \log\log n$ times, and we demonstrate that these applications decrease a constant fraction of the edge weights from the original subgraph by a $\polylog(n)$ factor. We separate out this small cluster of edges and pass it to the second phase, which applies $\BFSalgo$ $\approx \log n$ times to decrease a constant fraction of edge weights by a polynomial factor. We then apply $\ROalgo$ to fully sparsify these edge weights, incurring small spectral error. Our sparsity-spectral error tradeoff in the second phase loses a polylogarithmic factor over our final desired tradeoff; this is canceled out by the mild edge weight decrease from the first phase, and does not dominate.
    \item In Section~\ref{ssec:outer_algo}, we recursively call our ER decomposition algorithm from Section~\ref{sec:partition}, and the two-phase procedure described above. Each round of calls makes constant factor progress on the overall sparsity of our final graph, and hence terminates quickly.
\end{itemize}

As a preliminary, we provide tools in Section~\ref{ssec:approximation} to streamline handling of approximation error incurred by state-of-the-art undirected Laplacian solvers, when projecting into circulation space.

\subsection{Approximating modified circulations}\label{ssec:approximation}

In this section, we give a self-contained solution to the key computational bottleneck in Section~\ref{ssec:fastonce} when using approximate Laplacian system solvers. We begin by introducing some notation to simplify our presentation. Let $\vH$ be a subgraph of $\vG = (V, E, \ww)$ with edge set $F$.
We define $H \defeq \und(\vH)$ and $H^2 \defeq (V(H), F, \ww^2_F)$, where $\ww^2$ is $\ww$ with its entries squared. We further define
\begin{equation}\label{eq:pch_def}
\PP_{\vH} \defeq \II_{F} - \CC_{\vH}, \text{ where }
\CC_{\vH} \defeq \WW_{F} \BB_{\vH} \LL_{H^2}^\dagger \BB_{\vH}^\top \WW_{F},
\end{equation}
and where $\II_{F}, \WW_{F} \in \R^{E \times E}_{\ge 0}$ zero out entries of $\II_{E}, \WW$ which do not correspond to edges in $F$. In Section~\ref{ssec:fastonce}, we apply reweightings which are circulations on $\vH$, but which also are orthogonal to a specified vector $\vv$. We will eventually set $\vv$ to be a current weight vector, to enforce that the total weight of the edges remains unchanged. We hence define the modified projection matrix
\begin{equation}
\label{eq:ppw_def}
\PP_{\vH, \vv} \defeq \PP_{\vH} - \uu_{\vH, \vv} \uu_{\vH, \vv}^\top, \text{ where }
\uu_{\vH, \vv} \defeq \frac{1}{\sqrt{\vv^\top \PP_{\vH} \vv} }\PP_{\vH} \vv.
\end{equation}
We prove a basic fact about $\PP_{\vH, \vv}$, motivated by the Sherman-Morrison formula.
\begin{lemma}\label{lem:rankone_fix}
For any $\uu \in \R^E$, $\PP_{\vH, \vv}$ defined in \eqref{eq:ppw_def} satisfies 
\[\PP_{\vH, \vv} \vv = \vzero_E,\; \PP_{\vH, \vv}^2 = \PP_{\vH, \vv}, \text{ and } \BB_{\vH}^\top \WW_{E(\vH)} \PP_{\vH, \vv} \uu = \vzero_E.\]
\end{lemma}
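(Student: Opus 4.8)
The plan is to reduce all three identities to elementary linear-algebra manipulations, leaning on a single structural fact already established in the proof of \Cref{lemma:variance}: $\PP_{\vH}$ is an orthogonal projection matrix ($\PP_{\vH} = \PP_{\vH}^\top = \PP_{\vH}^2$), and moreover the argument there shows that every $\xx$ with $\PP_{\vH}\xx = \xx$ satisfies $\BB_{\vH}^\top \WW_F \xx = \vzero_V$ (equivalently, $\WW_F\xx$ is a circulation on $\vH$). As a first step I would record that $\uu_{\vH,\vv}$ is a unit vector lying in $\mathrm{Im}(\PP_{\vH})$: since $\PP_{\vH}$ is a symmetric idempotent, $\PP_{\vH}\uu_{\vH,\vv} = (\vv^\top \PP_{\vH}\vv)^{-1/2}\PP_{\vH}^2\vv = \uu_{\vH,\vv}$, and $\norms{\uu_{\vH,\vv}}_2^2 = (\vv^\top\PP_{\vH}\vv)^{-1}\vv^\top\PP_{\vH}^2\vv = 1$. (If $\vv^\top\PP_{\vH}\vv = 0$ then $\PP_{\vH}\vv = \vzero$ by positive semidefiniteness, the rank-one correction vanishes, $\PP_{\vH,\vv} = \PP_{\vH}$, and all three claims are immediate; so I would henceforth assume $\vv^\top\PP_{\vH}\vv > 0$, which is exactly the regime in which $\uu_{\vH,\vv}$ is well defined.)

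For the first identity, I would compute $\uu_{\vH,\vv}^\top\vv = (\vv^\top\PP_{\vH}\vv)^{-1/2}(\PP_{\vH}\vv)^\top\vv = (\vv^\top\PP_{\vH}\vv)^{1/2}$ using symmetry of $\PP_{\vH}$, so that $\uu_{\vH,\vv}(\uu_{\vH,\vv}^\top\vv) = (\vv^\top\PP_{\vH}\vv)^{1/2}\uu_{\vH,\vv} = \PP_{\vH}\vv$, whence $\PP_{\vH,\vv}\vv = \PP_{\vH}\vv - \PP_{\vH}\vv = \vzero_E$. For the second identity, $\PP_{\vH,\vv}$ is symmetric as a difference of symmetric matrices; expanding $\PP_{\vH,\vv}^2 = (\PP_{\vH} - \uu_{\vH,\vv}\uu_{\vH,\vv}^\top)^2$ and using $\PP_{\vH}^2 = \PP_{\vH}$, $\PP_{\vH}\uu_{\vH,\vv} = \uu_{\vH,\vv}$ (together with its transpose $\uu_{\vH,\vv}^\top\PP_{\vH} = \uu_{\vH,\vv}^\top$), and $\uu_{\vH,\vv}^\top\uu_{\vH,\vv} = 1$, the four terms collapse to $\PP_{\vH} - \uu_{\vH,\vv}\uu_{\vH,\vv}^\top = \PP_{\vH,\vv}$, giving idempotence.

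For the third identity, I would fix $\uu \in \R^E$ and write $\PP_{\vH,\vv}\uu = \PP_{\vH}\uu - (\uu_{\vH,\vv}^\top\uu)\,\uu_{\vH,\vv}$. Now $\PP_{\vH}\uu \in \mathrm{Im}(\PP_{\vH})$ since $\PP_{\vH}(\PP_{\vH}\uu) = \PP_{\vH}\uu$, and $\uu_{\vH,\vv} \in \mathrm{Im}(\PP_{\vH})$ by the first step; by the recalled fact both therefore lie in the linear subspace $\ker(\BB_{\vH}^\top\WW_F)$, so their linear combination $\PP_{\vH,\vv}\uu$ lies in it as well, i.e.\ $\BB_{\vH}^\top\WW_F\PP_{\vH,\vv}\uu = \vzero_V$. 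Since $F = E(\vH)$ this is the claim $\BB_{\vH}^\top\WW_{E(\vH)}\PP_{\vH,\vv}\uu = \vzero_V$.

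I do not expect a genuine obstacle here — this is precisely the Sherman–Morrison-style bookkeeping flagged in the surrounding text. The only two points meriting care are (a) invoking, rather than re-deriving, the inclusion $\mathrm{Im}(\PP_{\vH}) \subseteq \ker(\BB_{\vH}^\top\WW_F)$ from the proof of \Cref{lemma:variance}, which is what guarantees the rank-one correction keeps reweightings in circulation space; and (b) confirming well-definedness of $\uu_{\vH,\vv}$ and dispatching the degenerate case $\vv^\top\PP_{\vH}\vv = 0$.
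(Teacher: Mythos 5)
Your proposal is correct and follows essentially the same route as the paper: the first two identities are verified by the same direct computations using symmetry, idempotence, and $\uu_{\vH,\vv}\in\mathrm{Im}(\PP_{\vH})$ with unit norm, and the third identity uses exactly the observation the paper invokes, namely that $\BB_{\vH}^\top\WW_{E(\vH)}\PP_{\vH}$ is the zero operator (equivalently, $\mathrm{Im}(\PP_{\vH})\subseteq\ker(\BB_{\vH}^\top\WW_F)$). Your explicit handling of the degenerate case $\vv^\top\PP_{\vH}\vv=0$ is a small completeness bonus that the paper leaves implicit, and you correctly write the target of the third identity as $\vzero_V$ rather than $\vzero_E$, silently fixing a minor typo in the statement.
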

\begin{proof}
The first claim follows from directly computing $\uu_{\vH, \vv} \uu_{\vH, \vv}^\top \vv = \PP_{\vH} \vv$. The second follows similarly: since $\PP_{\vH}$ is an orthogonal projection matrix, $\uu_{\vH, \vv}$ is a unit vector, and we observe
\[\PP_{\vH} \uu_{\vH, \vv}\uu_{\vH, \vv}^\top = \uu_{\vH, \vv}\uu_{\vH, \vv}^\top \PP_{\vH} = \uu_{\vH, \vv}\uu_{\vH, \vv}^\top.\]
Finally, the last follows from the fact that $\BB_{\vH}^\top \WW_{E(\vH)} \PP_{\vH}$ is the zero operator on $\R^{E \times E}$.
\end{proof}
Thus, $\PP_{\vH, \vv}$ is the projection matrix into the subspace of $\PP_{\vH}$'s span that is orthogonal to $\vv$.

Algorithm~\ref{alg:approxflow} solves the following problem: on input $\xi > 0$, $\zz \in \R^E$ with $\supp(\zz) \subseteq F$, $\norm{\zz}_\infty \le 1$, return $\xx \in \R^E$ with
\begin{equation}\label{eq:approx_problem}
\supp(\xx) \subseteq F,\; \norm{\xx - \PP_{\vH, \vv} \zz}_\infty \le \xi,\; \norm{\BB_{\vG}^\top \WW \xx}_\infty \le \xi,\; |\inprod{\xx}{\vv}| \le \xi \norm{\vv}_2. 
\end{equation}
In other words, for an error parameter $\xi$, we wish to enforce that $\ww \circ \xx$ is an approximate circulation, and that $\xx$ is approximately orthogonal to $\vv$ and approximates the true $\PP_{\vH, \vv} \zz$ we wish to compute. We remark that $\xx = \PP_{\vH, \vv} \zz$ satisfies \eqref{eq:approx_problem} with $\xi = 0$. We will ultimately call Algorithm~\ref{alg:approxflow} with inverse-polynomially small $\xi$, and apply $\RO$ to incur small error when rounding the residual.

\begin{algorithm2e}[ht!]
\caption{$\PMROalgo(\vH, \vv, \zz, \delta, \xi)$}
\label{alg:approxflow}
\DontPrintSemicolon
\codeInput $\vH$, a subgraph of $\vG = (V, E, \ww)$ with $\|\ww\|_\infty \le u$ and $F \defeq E(\vH)$, $\vv,\zz \in \R^{E}$ with $\supp(\vv), \supp(\zz) \subseteq F$ and $\|\zz\|_\infty \leq 1$, $\delta, \xi \in (0, 1)$  \;
$n \gets |V|$\;
$\xi' \gets \frac{\xi}{9nu\sqrt{m}}$\;
$\aa \gets \xi'$-approximate solution to $\LL_{H^2} \aa = \BB^\top_{\vH} \WW_F \vv$, with probability $\ge 1 - \frac \delta 2$\; \label{line:aa}
$\bb \gets \xi'$-approximate solution to $\LL_{H^2} \bb = \BB^\top_{\vH} \WW_F \zz$, with probability $\ge 1 - \frac \delta 2$\;\label{line:bb}
$\uu \gets \frac{\WW_F\BB_{\vH}^\top \aa}{\|\WW_F \BB_{\vH}^\top \aa\|_2}$,
$\yy \gets \WW_F \BB_{\vH}^\top \bb$\;
\Return{$\xx \gets \zz - \yy - \inprods{\yy}{\uu}\uu$}\;
\end{algorithm2e}

Before giving our analysis in Lemma~\ref{lem:pmro}, we require one elementary helper calculation.

\begin{lemma}\label{lem:helper_unit_vector}
	Let $\aa, \aas \in \R^d$ satisfy $\norm{\aa - \aas}_2 \le \alpha \norm{\aa}_2$ for $\alpha \in (0, \frac 1 5)$. Then, for $\uu \defeq \frac{\aa}{\norm{\aa}_2}$ and $\uus \defeq \frac{\aas}{\norm{\aas}_2}$, we have $\norm{\uu - \uus}_2 \le 2\alpha$.
\end{lemma}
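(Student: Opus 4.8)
The plan is to exploit the scale-invariance of the claim together with two applications of the triangle inequality; no serious machinery is needed. Since $\uu$ and $\uus$ are each invariant under positive rescaling of $\aa$ and $\aas$ respectively, and the hypothesis $\norm{\aa - \aas}_2 \le \alpha\norm{\aa}_2$ is homogeneous of degree one, I would first normalize so that $\norm{\aa}_2 = 1$. Then $\uu = \aa$ and $\norm{\aa - \aas}_2 \le \alpha$.

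Before the main estimate I would check that $\uus$ is well defined, i.e.\ $\aas \neq \vzero$: by the reverse triangle inequality $\norm{\aas}_2 \ge \norm{\aa}_2 - \norm{\aa - \aas}_2 \ge 1 - \alpha > 0$, using $\alpha < \tfrac{1}{5} < 1$. This is the only place the quantitative hypothesis on $\alpha$ is used, and in fact any $\alpha < 1$ would suffice.

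The main step is the decomposition
\[
\norm{\uu - \uus}_2 = \norm{\aa - \tfrac{\aas}{\norm{\aas}_2}}_2 \le \norm{\aa - \aas}_2 + \norm{\aas - \tfrac{\aas}{\norm{\aas}_2}}_2 = \norm{\aa - \aas}_2 + \bigl|\,\norm{\aas}_2 - 1\,\bigr|.
\]
The first term is at most $\alpha$ by hypothesis; the second equals $\bigl|\,\norm{\aas}_2 - \norm{\aa}_2\,\bigr| \le \norm{\aas - \aa}_2 \le \alpha$, again by the reverse triangle inequality. Summing gives $\norm{\uu - \uus}_2 \le 2\alpha$, as claimed.

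I do not expect any real obstacle here: once the normalization is in place the argument is two lines, and the only point that needs a moment's care is the well-definedness of $\uus$, which is where the (very weak) hypothesis $\alpha \in (0, \tfrac{1}{5})$ enters.
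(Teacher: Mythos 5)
Your proof is correct and follows essentially the same route as the paper's: normalize so $\norm{\aa}_2 = 1$ (hence $\uu = \aa$), apply the triangle inequality with $\aas$ as the intermediate point, and bound the second term by $\abs{\norm{\aas}_2 - 1} \le \alpha$ via the reverse triangle inequality. Your extra remark that any $\alpha < 1$ suffices is accurate; the paper simply inherits the $\alpha < \tfrac{1}{5}$ restriction from its calling context.
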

\begin{proof}
	The problem statement is invariant under scaling $\aa$, so without loss of generality assume $\uu = \aa$, which implies $\norm{\aas} \in [1 - \alpha, 1 + \alpha]$. The conclusion follows by triangle inequality:
	\begin{align*}
		\norm{\uu - \uus}_2 \le \norm{\uu - \aas}_2 + \norm{\aas - \uus}_2 \le \alpha + \Abs{\norm{\aas} - 1} \le 2\alpha.
	\end{align*}
\end{proof}

\begin{lemma}\label{lem:pmro}
Under the stated input assumptions, \PMRO (Algorithm~\ref{alg:approxflow}) using Proposition~\ref{prop:js21} in Lines~\ref{line:aa}-\ref{line:bb} returns $\xx$ satisfying \eqref{eq:approx_problem} in time $\bO(|F| \log\frac{nu}{\xi\delta})$ with probability $\ge 1 - \delta$.
\end{lemma}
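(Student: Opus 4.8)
The plan is to analyze \PMROalgo{} first in the idealized regime where the two linear systems in Lines~\ref{line:aa}--\ref{line:bb} are solved exactly, check that it then outputs $\PP_{\vH,\vv}\zz$ on the nose, and finally track the error incurred by the $\xi'$-accurate solver of Proposition~\ref{prop:js21}. So I would introduce the exact solutions $\aas \defeq \LL_{H^2}^\dagger \BB_{\vH}^\top \WW_F \vv$ and $\bbs \defeq \LL_{H^2}^\dagger \BB_{\vH}^\top \WW_F \zz$ and write $\uus, \yys, \xxs$ for the corresponding exact values of $\uu, \yy, \xx$. From $\CC_{\vH} = \WW_F \BB_{\vH} \LL_{H^2}^\dagger \BB_{\vH}^\top \WW_F$ (see \eqref{eq:pch_def}) and $\supp(\vv), \supp(\zz) \subseteq F$ one gets $\WW_F\BB_{\vH}\aas = \CC_{\vH}\vv$ and $\WW_F\BB_{\vH}\bbs = \CC_{\vH}\zz$, so the exact ``residual'' is $\zz - \WW_F\BB_{\vH}\bbs = \PP_{\vH}\zz$, the exact normalized direction is $\uu_{\vH,\vv}$ as in \eqref{eq:ppw_def}, and hence $\xxs = \PP_{\vH}\zz - \inprod{\PP_{\vH}\zz}{\uu_{\vH,\vv}}\uu_{\vH,\vv} = \PP_{\vH,\vv}\zz$. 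By \Cref{lem:rankone_fix}, $\xxs$ then satisfies all four conditions of \eqref{eq:approx_problem} with $\xi=0$: it is supported on $F$ (as $\zz$, $\yy$, $\uu$ all are), it equals $\PP_{\vH,\vv}\zz$, it satisfies $\BB_{\vG}^\top\WW\xxs = \BB_{\vH}^\top\WW_F\PP_{\vH,\vv}\zz = \vzero$, and $\inprod{\xxs}{\vv} = \inprod{\zz}{\PP_{\vH,\vv}\vv} = 0$ since $\PP_{\vH,\vv}$ is symmetric and annihilates $\vv$.

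The main step is the perturbation estimate. The workhorse identity is $\norm{\WW_F\BB_{\vH}\qq}_2 = \norm{\qq}_{\LL_{H^2}}$ for every $\qq\in\R^V$ (since $\BB_{\vH}^\top\WW_F^2\BB_{\vH} = \LL_{H^2}$), which transports the \emph{multiplicative} guarantee of Proposition~\ref{prop:js21} cleanly: $\norm{\WW_F\BB_{\vH}(\aa - \aas)}_2 \le \xi'\norm{\CC_{\vH}\vv}_2$ and $\norm{\WW_F\BB_{\vH}(\bb - \bbs)}_2 \le \xi'\norm{\CC_{\vH}\zz}_2 \le \xi'\sqrt m$, the last step using $\norm{\zz}_\infty \le 1$ and that $\CC_{\vH}$ is an orthogonal projection. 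Feeding the first inequality into \Cref{lem:helper_unit_vector} for the unnormalized vectors behind $\uu$ and $\uus$ — their difference is $\WW_F\BB_{\vH}(\aa-\aas)$ of relative size at most $\xi'$, so the magnitude $\norm{\CC_{\vH}\vv}_2$ cancels — gives $\norm{\uu - \uus}_2 = O(\xi')$ with no dependence on how well $\vv$ is conditioned against the subspace. Combining this with $\norm{(\zz - \WW_F\BB_{\vH}\bb) - \PP_{\vH}\zz}_2 \le \xi'\sqrt m$, $\norm{\PP_{\vH}\zz}_2 \le \norm{\zz}_2 \le \sqrt m$, and $\norm{\uu}_2 = \norm{\uus}_2 = 1$, and expanding $\xx - \xxs$ into the residual error plus the error in the rank-one correction, one obtains $\norm{\xx - \PP_{\vH,\vv}\zz}_2 = O(\xi'\sqrt m)$; crucially no factor of $u$ survives, because the $\aa$-errors enter only through the normalized $\uu$. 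I expect this normalization-and-correction analysis to be the main obstacle: one has to exploit the multiplicative nature of Proposition~\ref{prop:js21} (so that short vectors are not a problem) and then bookkeep every $n$, $u$, $\sqrt m$ factor incurred downstream.

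It remains to cash out $\norm{\xx - \xxs}_2 = O(\xi'\sqrt m)$ into \eqref{eq:approx_problem} using $\xi' = \tfrac{\xi}{9nu\sqrt m}$. Support is immediate; the $\ell_\infty$-closeness follows from $\norm{\xx - \PP_{\vH,\vv}\zz}_\infty \le \norm{\xx - \PP_{\vH,\vv}\zz}_2 = O(\xi'\sqrt m) \le \xi$. Since $\xxs\perp\vv$, the approximate-orthogonality bound is $\Abs{\inprod{\xx}{\vv}} = \Abs{\inprod{\xx - \xxs}{\vv}} \le \norm{\xx-\xxs}_2\norm{\vv}_2 \le \xi\norm{\vv}_2$, which needs only $\norm{\xx-\xxs}_2 \le \xi$ (this constraint is relative, so no bound on $\norm{\vv}_\infty$ is needed). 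For the approximate-circulation bound, $\xx,\xxs$ are supported on $F$ and $\BB_{\vH}^\top\WW_F\xxs = \vzero$, so $\norm{\BB_{\vG}^\top\WW\xx}_\infty = \norm{\BB_{\vH}^\top\WW_F(\xx-\xxs)}_\infty \le \normop{\BB_{\vH}^\top\WW_F}\norm{\xx-\xxs}_2$, and $\normop{\BB_{\vH}^\top\WW_F}^2 = \normop{\LL_{H^2}} \le 2u^2 n$ by Gershgorin (the maximum weighted degree of $H^2$ is at most $(n-1)u^2$), giving $O(u\sqrt n\cdot\xi'\sqrt m) \le \xi$; the extra $u$ and $n$ in $\xi'$ are exactly what absorb the two constraint blow-ups. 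For the runtime, \PMROalgo{} makes two calls to Proposition~\ref{prop:js21} on $\LL_{H^2}$, which has $O(|F|)$ nonzero entries (as $H^2$ has edge set $F$), each at accuracy $\xi'$ and failure probability $\tfrac\delta2$, costing $\bO(|F|\log\tfrac{1}{\xi'\delta}) = \bO(|F|\log\tfrac{nu}{\xi\delta})$ (using $m\le n^2$); the remaining matrix--vector products, the normalization, and the inner product take $O(|F|)$ time, and a union bound over the two solves gives total failure probability $\le\delta$.
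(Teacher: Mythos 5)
Your proposal is correct and follows essentially the same route as the paper's proof: compare against the exact quantities $\aas,\bbs,\uus,\yys,\xxs=\PP_{\vH,\vv}\zz$, transport the multiplicative guarantee of Proposition~\ref{prop:js21} via $\norm{\qq}_{\LL_{H^2}}=\norm{\WW_F\BB_{\vH}\qq}_2$, control the normalized direction with Lemma~\ref{lem:helper_unit_vector}, expand the rank-one correction to get $\norm{\xx-\xxs}_2=O(\xi'\sqrt m)$, and let $\xi'=\frac{\xi}{9nu\sqrt m}$ absorb the blowups. The only (harmless) deviation is your bound on the circulation error via $\normsop{\BB_{\vH}^\top\WW_F}=\normsop{\LL_{H^2}}^{1/2}\le u\sqrt{2n}$, where the paper instead uses $\norm{\BB^\top}_{\infty\to\infty}\norm{\WW}_{\infty\to\infty}\le nu$ applied to $\norm{\xx-\xxs}_\infty$; both fit within the chosen $\xi'$.
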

\begin{proof}
The problem definition and error guarantee \eqref{eq:approx_problem} are invariant under scaling $\vv$, so we assume $\norm{\vv}_2 = 1$ without loss of generality. Further, the problem is identical if we eliminate all coordinates on $E \setminus F$ (as the input and output are supported in $F$), so we only handle the case $E = F$. Finally, for simplicity in this proof, we let $\LL \defeq \LL_{H^2}$, $\BB \defeq \BB_{\vH}$, $\WW \defeq \WW_F$, $\II \defeq \II_F$, and $n \defeq |V|, m \defeq |F|$, and define the ideal vectors (which would be computed in the algorithm if $\xi = 0$):
\begin{gather*}
\aas \defeq \LL^\dagger \BB^\top\WW \vv,\; \bbs \defeq \LL^\dagger \BB^\top\WW \zz,\\
\uus \defeq\frac{\WW\BB\aas}{\|\WW\BB\aas\|_2} = \uu_{\vH,\vv},\; \yys \defeq \WW\BB\bbs,\; \xxs \defeq \zz - \yys - \inprods{\yys}{\uus}\uus = \PP_{\vH,\vv}\zz.
\end{gather*}
First, by the definition of approximate solutions (see \Cref{prop:js21}), we have
\[
    \norm{\WW\BB(\aa - \aas)}_2 = \norm{\aa - \aas}_{\LL} \leq \xi' \norm{\aas}_{\LL} = \xi' \norm{\WW\BB\aas}_2.
\]
Hence, by applying Lemma~\ref{lem:helper_unit_vector}, we have $\norm{\uu - \uus}_2 \le 2\xi'$.
Similarly, 
\[
    \norm{\yy - \yys}_2 =\norm{\WW\BB(\bb - \bbs)}_2 = \norm{\bb-\bbs}_{\LL} \leq \xi' \norm{\bbs}_{\LL} = \xi' \norm{\WW\BB\bbs}_2 = \xi' \norm{\yys}_2
    \leq \xi' \norm{\zz}_2,
\]
where the last equality follows by $\yys = \CC_{\vH} \zz$ and the fact that $\CC_{\vH}$ is a orthogonal projection.
Now,
\begin{align*}
    \xx - \xxs 
    &=
    (\yys - \yy) + (\inprod{\yys}{\uus}\uus - \inprod{\yy}{\uu}\uu)
    \\
    &=
    (\yys - \yy) + \inprod{\yys}{\uus-\uu}\uus + \inprod{\yys-\yy}{\uu}\uus + \inprod{\yy}{\uu}(\uus-\uu),
\end{align*}
so that by the triangle and Cauchy-Schwarz inequalities, the first conclusion in \eqref{eq:approx_problem} holds:
\begin{align*}
    \norm{\xx-\xxs}_\infty 
    &\leq \norm{\xx-\xxs}_2
    \\
    &\leq 
    \norm{\yys - \yy}_2 + \norm{\uus-\uu}_2 \norm{\yys}_2 \norm{\uus}_2 +
    \norm{\yys-\yy}_2 \norm{\uu}_2 \norm{\uus}_2 + 
    \norm{\uus-\uu}_2 \norm{\yy}_2 \norm{\uu}_2
    \\
    &\leq
    \norm{\yys - \yy}_2 + 2\xi' \norm{\yys}_2 + \norm{\yys-\yy}_2 + 2\xi' (\norm{\yys}_2 + \norm{\yys-\yy}_2)
    \\
    &\leq
    9\xi' \norm{\zz}_2 < 9\xi'\sqrt{m} \norm{\zz}_\infty \le 9\xi'\sqrt{m} < \xi,
\end{align*}
given that $\xi' < 1$.
Moreover, letting $\norm{\AA}_{\infty \to \infty} \defeq \sup_{\norms{\xx}_\infty = 1} \norm{\AA \xx}_\infty$ be the largest $\ell_1$ norm of a row of $\AA$, and noting that $\norm{\BB}_{\infty \to \infty} \le n$ and $\norm{\WW}_{\infty \to \infty} \le u$, we have
\begin{align*}
\norm{\BB^\top \WW\xx}_\infty &\stackrel{(a)}{=} \norm{\BB^\top \WW(\xx-\xxs)}_\infty \\
&\le \norm{\BB^\top}_{\infty \to \infty} \norm{\WW}_{\infty \to \infty} \norm{\xx - \xxs}_\infty \le nu\norm{\xx - \xxs}_\infty \le 9nu\xi'\sqrt{m},
\\
|\inprod{\xx}{\vv}| &\stackrel{(b)}{=} |\inprod{\xx-\xxs}{\vv}| \leq \norm{\xx-\xxs}_2 \norm{\vv}_2 \leq 9\xi'\sqrt{m} \norm{\vv}_2.
\end{align*}
Here, both (a) and (b) followed from \Cref{lem:rankone_fix}.
By our choice of $\xi' = \frac{\xi}{9nu\sqrt{m}} < 1$, we can guarantee all the desired bounds in \eqref{eq:approx_problem}.
Finally, the runtime bound follows directly from \Cref{prop:js21}.
\end{proof}

\subsection{Basic partial sparsification}\label{ssec:fastonce}

In this section, we give the basic subroutine of our fast sparsification algorithms, which modifies the edge weights on a well-controlled subgraph (formally, see Definition~\ref{def:cluster}). We first require stating several standard helper matrix concentration results from the literature.
\begin{lemma}[Theorem 7.1, \cite{Tropp11}]\label{lem:matrix_azuma}
Let $\delta \in (0, 1)$ and let $\Brace{\MM_k}_{k \in [K]} \in \R^{d \times d}$ be a sequence of matrices, and let $\ss \in \Brace{\pm 1}^K$ be a martingale sequence of Rademachers, i.e., $\ss_k$ is a Rademacher random variable conditioned on $\{\ss_j\}_{j \in [k - 1]}$ for all $k \in [K]$. Further, suppose for $\sigma \ge 0$,
\begin{equation}\label{eq:matrix_variance_azuma}\sum_{k \in [K]} \MM_k \MM_k^\top \preceq \sigma^2 \II_d,\; \sum_{k \in [K]} \MM_k^\top \MM_k \preceq \sigma^2 \II_d. \end{equation}
Then with probability $\ge 1 - \delta$,
\[\normop{\sum_{k \in [K]} \ss_k \MM_k} \le \sigma \sqrt{8\log\Par{\frac{2d}{\delta}}}.\]
\end{lemma}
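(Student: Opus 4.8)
Because this lemma is quoted directly as a known result (Theorem~7.1 of \cite{Tropp11}), the proof I would give is really a short reduction from the Hermitian statement treated in that reference to the general square-matrix, operator-norm formulation stated here. The plan is to pass to the \emph{Hermitian dilation}: for each $k \in [K]$ set
\[
\widetilde{\MM}_k \defeq \begin{pmatrix} \mathbf{0} & \MM_k \\ \MM_k^\top & \mathbf{0}\end{pmatrix} \in \Sym^{2d},
\]
and observe that $\{\ss_k \widetilde{\MM}_k\}_{k \in [K]}$ is a symmetric matrix martingale difference sequence: conditioned on $\{\ss_j\}_{j < k}$ the factor $\ss_k$ has mean zero, while $\widetilde{\MM}_k$ is deterministic (or, more generally, predictable). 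I would then compute the predictable quadratic variation, using $\ss_k^2 = 1$:
\[
\sum_{k \in [K]} \Par{\ss_k \widetilde{\MM}_k}^2 = \sum_{k \in [K]} \widetilde{\MM}_k^2 = \begin{pmatrix} \sum_{k \in [K]}\MM_k\MM_k^\top & \mathbf{0} \\ \mathbf{0} & \sum_{k \in [K]}\MM_k^\top\MM_k\end{pmatrix} \pleq \sigma^2 \II_{2d},
\]
where the final inequality is exactly hypothesis~\eqref{eq:matrix_variance_azuma} read off blockwise.

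With this in hand I would invoke the matrix Azuma inequality (the Hermitian statement in \cite{Tropp11}) applied in dimension $2d$, which gives $\Pr[\lam_{2d}(\sum_k \ss_k\widetilde{\MM}_k) \ge t] \le 2d\exp(-t^2/(8\sigma^2))$ for all $t \ge 0$. The last step uses the elementary fact that the eigenvalues of the Hermitian dilation of a $d \times d$ matrix $\NN$ are $\pm\sigma_1(\NN), \ldots, \pm\sigma_d(\NN)$, so that $\lam_{2d}(\sum_k \ss_k\widetilde{\MM}_k) = \normop{\sum_k \ss_k\MM_k}$; substituting and choosing $t = \sigma\sqrt{8\log(2d/\delta)}$ makes the right-hand side equal to $\delta$, which is the claimed bound.

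There is no real obstacle here, since the analytic heart of the argument (the matrix Laplace transform / Lieb concavity bound underlying matrix Azuma) lives in \cite{Tropp11}; the only points to get right are bookkeeping. First, one must verify that the hypothesis format matches: Tropp's matrix Azuma asks for a \emph{fixed} semidefinite upper bound $\XX_k^2 \pleq \AA_k^2$ on each increment, which holds here because $(\ss_k\widetilde{\MM}_k)^2 = \widetilde{\MM}_k^2$ is deterministic once the $\MM_k$ are (and if the $\MM_k$ are only predictable, the same argument still goes through using the uniform deterministic bound $\sigma^2\II_{2d}$, since \eqref{eq:matrix_variance_azuma} is assumed as a fixed bound on the whole sum). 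Second, one must track the dimension doubling $d \mapsto 2d$ introduced by the dilation, which is precisely why the final bound carries $\log(2d/\delta)$ rather than $\log(d/\delta)$.
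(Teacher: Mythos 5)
The paper does not prove this lemma---it is quoted directly as Theorem~7.1 of \cite{Tropp11}---so there is no paper proof to compare against, and your task is really to supply the (standard, implicitly understood) reduction from Tropp's Hermitian statement to the rectangular operator-norm form. Your dilation argument does this correctly: the Hermitian dilation $\widetilde{\MM}_k$ satisfies $\lambda_{\max}\bigl(\sum_k\ss_k\widetilde{\MM}_k\bigr) = \normsop{\sum_k\ss_k\MM_k}$, the squared dilation is the block-diagonal matrix with blocks $\MM_k\MM_k^\top$ and $\MM_k^\top\MM_k$, so the summed quadratic variation is dominated by $\sigma^2\II_{2d}$ exactly by hypothesis~\eqref{eq:matrix_variance_azuma}, and substituting $t = \sigma\sqrt{8\log(2d/\delta)}$ into the Hermitian tail bound $2d\exp(-t^2/(8\sigma^2))$ gives probability $\delta$. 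The dimension-doubling and the constant $8$ are both tracked correctly.

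The one point that deserves more care is your parenthetical about predictable $\MM_k$. Tropp's Theorem~7.1 requires the dominating sequence $\{\AA_k\}$ to be a \emph{fixed} sequence with $\XX_k^2 \preceq \AA_k^2$ per term, while hypothesis~\eqref{eq:matrix_variance_azuma} only bounds the total sum. If you literally take $\AA_k = \sigma\II_{2d}$ for every $k$, the variance parameter $\normsop{\sum_k\AA_k^2}$ becomes $K\sigma^2$, losing a factor of $K$---so that is not the right fix. When the $\MM_k$ are adapted (as they actually are in the paper's use of this lemma inside Lemma~\ref{lem:bfs_guarantee}, where the $\Atil_e$ depend on the history), the correct handling is either a stopping-time argument that freezes the martingale increments once the running quadratic variation would exceed $\sigma^2\II_{2d}$, or an appeal to a matrix Freedman-type inequality (with a bound on $\lambda_{\max}(\XX_k)$). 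The paper implicitly does the former, by zeroing out future weight updates once a bad event $\event_t$ occurs, so that the deterministic sum bound is maintained. None of this affects the validity of your dilation reduction in the deterministic setting, but as written the parenthetical overstates how automatically the adapted case follows from Theorem~7.1 verbatim.
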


\begin{lemma}\label{lem:proj_sign_small}
Let $\delta \in (0, 1)$, let $\PP \in \R^{d \times d}$ be an orthogonal projection matrix, and let $\ss \in \{\pm 1\}^d$ have independent Rademacher entries. There is a universal constant $\csign$ such that 
\[\norm{\PP \ss}_\infty \le \csign\sqrt{\log \frac d \delta} \text{ with probability} \ge 1 - \delta.\]
\end{lemma}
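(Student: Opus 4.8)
The plan is to bound each coordinate $[\PP\ss]_i = \PP_{i:}^\top \ss$ individually via a scalar Hoeffding-type (sub-Gaussian) tail bound, and then union bound over the $d$ coordinates. First I would fix $i \in [d]$ and observe that $[\PP\ss]_i = \sum_{j \in [d]} \PP_{ij} \ss_j$ is a weighted sum of independent Rademachers, hence a mean-zero sub-Gaussian random variable whose variance proxy is $\sum_{j \in [d]} \PP_{ij}^2 = \norm{\PP_{i:}}_2^2 = \PP_{ii}$, where the last equality uses $\PP^2 = \PP$ and $\PP = \PP^\top$ (so the $i$-th diagonal entry of $\PP = \PP\PP^\top$ equals $\norm{\PP_{i:}}_2^2$). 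Since $\PP$ is an orthogonal projection, $\vzero \pleq \PP \pleq \II$, so $\PP_{ii} \le 1$ and the variance proxy is at most $1$. Hoeffding's inequality then gives $\Pr[|[\PP\ss]_i| \ge t] \le 2\exp(-t^2/2)$ for all $t \ge 0$.

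Next I would take $t = \csign \sqrt{\log(d/\delta)}$ for a suitable universal constant $\csign$ (e.g. $\csign = 2$ suffices: then $2\exp(-t^2/2) = 2 (d/\delta)^{-2} \le \delta/d$ whenever $\delta/d \le$ a small constant, and one can absorb the remaining trivial range of parameters into the constant). Applying this bound for each $i \in [d]$ and taking a union bound over the $d$ coordinates yields $\Pr[\norm{\PP\ss}_\infty \ge \csign\sqrt{\log(d/\delta)}] \le d \cdot (\delta/d) = \delta$, which is exactly the claim.

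I do not anticipate a genuine obstacle here — this is a standard concentration argument. The only mild subtlety is making sure the diagonal-entry identity $\norm{\PP_{i:}}_2^2 = \PP_{ii}$ is invoked correctly (it relies on $\PP$ being symmetric and idempotent, i.e.\ a genuine \emph{orthogonal} projection rather than an oblique one), and choosing the universal constant $\csign$ large enough that the bound $2\exp(-t^2/2) \le \delta/d$ holds uniformly over all valid $d \ge 1$ and $\delta \in (0,1)$; a constant like $\csign = 3$ comfortably handles all edge cases, including small $d$.
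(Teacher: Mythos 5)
Your proposal is correct and follows essentially the same route as the paper: each coordinate $[\PP\ss]_i$ is a Rademacher sum that is sub-Gaussian with parameter $\norm{\PP_{i:}}_2 \le 1$ (the paper invokes this row-norm bound directly, you derive it via $\norm{\PP_{i:}}_2^2 = \PP_{ii} \le 1$, which is equivalent), followed by a Hoeffding tail bound and a union bound over the $d$ coordinates. No gaps.
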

\begin{proof}
For any fixed $j \in [d]$, the random variable $X \defeq \ee_j^\top \PP \ss$ is sub-Gaussian with parameter $\sigma \defeq \norms{\PP_{j:}}_2 \le 1$. Standard sub-Gaussian concentration bounds (e.g., \cite{Vershynin18}, Proposition 2.5.2) now imply that with probability $\ge 1 - \frac \delta {d}$, we have for a universal constant $\csign$,
$X \le \csign \sqrt{\log \frac {d} \delta}$.
Applying a union bound for all $j \in [d]$ concludes the proof.
\end{proof}

We also use the following helper scalar concentration inequality.

\begin{lemma}\label{lem:moment_error}
Let $X$ be a $1$-sub-Gaussian random variable with $\E X = 0$, and let $\event$ be an event on the outcome of $X$ with $\Pr[\event] \ge 1 - \delta$ where $\delta \le \frac 1 {10}$. Then, $ \Abs{\E\Brack{X^2 - \E\Brack{X^2} \mid \event}} \le 300\sqrt{\delta}$.
\end{lemma}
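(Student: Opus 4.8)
The plan is to prove \Cref{lem:moment_error}, which asserts that conditioning a mean-zero $1$-sub-Gaussian random variable $X$ on a high-probability event $\event$ barely changes its second moment.

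\textbf{Approach.} The key identity is that for any event $\event$ with $\Pr[\event] = 1 - p$ where $p \defeq 1 - \Pr[\event] \le \delta$, we have the decomposition
\[\E[X^2] = \Pr[\event]\,\E[X^2 \mid \event] + \Pr[\event^c]\,\E[X^2 \mid \event^c].\]
Rearranging gives
\[\E[X^2 \mid \event] - \E[X^2] = \frac{(1 - \Pr[\event])\E[X^2] - \Pr[\event^c]\E[X^2 \mid \event^c]}{\Pr[\event]} = \frac{p\,\E[X^2] - p\,\E[X^2 \mid \event^c]}{1 - p}.\]
So it suffices to bound $p\,\E[X^2]$, $p\,\E[X^2 \mid \event^c]$, and use $1 - p \ge 1 - \delta \ge \frac{9}{10}$ in the denominator. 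The first term is easy: since $X$ is $1$-sub-Gaussian with mean zero, $\E[X^2] = \Var(X) \le C$ for a universal constant (one can take $C = 2$ via the standard moment bounds for sub-Gaussians, e.g.\ \cite{Vershynin18}), so $p\,\E[X^2] \le C\delta$. The harder term is $p\,\E[X^2\mid \event^c]$, because conditioning on a small-probability event $\event^c$ can inflate the second moment; this is the main obstacle.

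\textbf{Handling the conditional term.} To control $p\,\E[X^2 \mid \event^c] = \E[X^2 \1_{\event^c}]$, I would use the layer-cake / Cauchy–Schwarz route. By Cauchy–Schwarz, $\E[X^2 \1_{\event^c}] \le \sqrt{\E[X^4]}\sqrt{\E[\1_{\event^c}]} = \sqrt{\E[X^4]}\sqrt{p}$. Since $X$ is $1$-sub-Gaussian, its fourth moment is bounded by a universal constant $C_4$ (again via standard sub-Gaussian moment bounds, e.g.\ $\E[X^4] \le 16 \cdot 2$ or a similar explicit value from \cite{Vershynin18}, Proposition 2.5.2). Hence $\E[X^2\1_{\event^c}] \le \sqrt{C_4}\sqrt{p} \le \sqrt{C_4}\sqrt{\delta}$. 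Combining,
\[\Abs{\E[X^2 \mid \event] - \E[X^2]} \le \frac{C\delta + \sqrt{C_4\delta}}{1 - \delta} \le \frac{10}{9}\Par{C\delta + \sqrt{C_4 \delta}} \le 300\sqrt{\delta},\]
where in the last step I use $\delta \le \frac{1}{10}$ (so $\delta \le \sqrt{\delta}\cdot\frac{1}{\sqrt{10}} \le \sqrt\delta$) and plug in explicit constants — e.g.\ $C = 2$, $C_4 = 32$ gives $\frac{10}{9}(2 + \sqrt{32}) < 10 < 300$, so the stated constant $300$ is comfortably loose. I would double-check the precise sub-Gaussian moment constants against whichever reference the paper uses, but since the final constant has enormous slack, any standard bound suffices; the only real content is the Cauchy–Schwarz step to tame the conditional second moment on the bad event.
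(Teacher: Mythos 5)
Your proof is correct and follows essentially the same route as the paper's: both reduce to a Cauchy--Schwarz bound on the integral over the bad event $\event^c$, invoking sub-Gaussian moment bounds for $X$. The paper centers $X^2$ to $Z = X^2 - \E[X^2]$ and uses $\E[Z]=0$ to flip $\E[Z\,\1_\event]$ to $-\E[Z\,\1_{\event^c}]$ before applying Cauchy--Schwarz with $\E[Z^2]^{1/2}$, whereas you apply the law of total expectation and bound the $p\,\E[X^2]$ and $\E[X^2\1_{\event^c}]$ pieces separately via $\E[X^4]^{1/2}$ --- a cosmetic difference in bookkeeping.
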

\begin{proof}
Let $\ind_{\event}$ and $\ind_{\event^c}$ denote the $0$-$1$ indicator variables for $\event$ and its complement $\event^c$. Further, we will assume $\Pr[\event^c] = \delta$ as the stated bound is monotone in $\delta$. The random variable $Z \defeq X^2 - \E[X^2]$ is $16$-sub-exponential (Lemma 1.12, \cite{RigolletH17}), so applying the Cauchy-Schwarz inequality and standard sub-exponential moment bounds (Lemma 1.10, \cite{RigolletH17}) yields
\begin{align*}
\Abs{\E\Brack{Z \mid \event}} = \frac 1 {\Pr[\event]}\Abs{\E[Z \cdot \ind_{\event}]} = \frac 1 {1 - \delta} \Abs{\E[Z \cdot \ind_{\event^c}]}\le \frac 1 {1 - \delta} \E[Z^2]^{\half} \E\Brack{\ind_{\event^c}}^{\half} \le 300\sqrt{\delta}.
\end{align*}
\end{proof}

Finally, to simplify the statement of the input to our algorithm, we give a useful definition.

\begin{definition}[Cluster]\label{def:cluster}
We say $\vH$ is a $(\bw, \rho)$-cluster in $\vG = (V, E, \ww)$ if $\vH$ is a subgraph of $\vG$, $\ww_e \in [\bw, 2\bw]$ for all $e \in E(\vH)$, and letting $G \defeq \und(\vG)$,
\[\Par{\max_{e \in E(\vH)} \ww_e} \cdot \Par{\max_{u, v \in V(\vH)} \ER_G(u, v)} \le \rho.\]
\end{definition}
By definition, any piece in a $(\rho, 2, J)$-ER decomposition of $G = \und(\vG)$ (Definition~\ref{def:er_partition}) is a $(\bw, \rho)$-cluster in $\vG$, for some $\bw$.
We now state our main algorithm in this section, $\BFSalgo$. 

\begin{algorithm2e}[ht!]\label{alg:basic_fast_sparsify}
\caption{$\BFSalgo(\vH, \vG, \wws, \ell, \delta, \eps, F, T)$}
\DontPrintSemicolon
\codeInput $\ell \in (0, 1)$, $\vH$ a subgraph of $\vG = (V, E, \ww)$ with $|E(\vH)| \ge 40 |V(\vH)|$, $\wws \in \R^E$ with
\begin{equation}\label{eq:wws_reqs} \frac{\norms{\ww}_1}{\norm{\wws}_1} \in [0.99, 1.01],\; [\wws]_{E \setminus E(\vH)} = \ww_{E \setminus E(\vH)}, \text{ and } \frac \ell 2 [\wws]_e \le \ww_e \le 60[\wws]_e \text{ for all } e \in E(\vH),\end{equation}
and $\vHs \defeq (V(\vH), E(\vH), [\wws]_{E(\vH)})$ is a $(\bw, \rho)$-cluster in $\vGs \defeq (V, E, \wws)$ and $0.9\LL_G \preceq \LL_{\Gs} \preceq 1.1 \LL_G$ for $\Gs \defeq \und(\vGs)$, $\delta, \eps \in (0, \frac 1 {100})$, $F \subseteq E(\vH)$ with $|F| \ge \frac{|E(\vH)|}{4}$, $T$ a tree subgraph of $G \defeq \und(\vG)$ with $\min_{e \in E(T)} \ww_e \ge 1$ \;
$m \gets |E(\vG)|$, $n \gets |V(\vG)|$\;
$\xi \gets \min(\frac \ell {10}, \frac{1}{1000\csign\log(\frac{60m\tau}{\delta})}, \frac{\eps}{200mn^2\tau})$, for $\csign$ from Lemma~\ref{lem:proj_sign_small} \;
$\eta \gets \frac 1 {20\csign \sqrt{\log \frac {60m\tau}{\delta}}}$, $\tau \gets \lceil \frac {720} {\eta^2}\rceil$\;
$t \gets 0$, $L_t \gets \{e \in F \mid [\ww_{t}]_e \ge 50\min([\wws]_e, \frac{\|\ww_F\|_1}{|F|})\}$, $S_t \gets \{e \in F \mid [\ww_{t}]_e \le \ell[\wws]_e \}$\;
\While{$|S_t| < \frac 1 4 |F|$ $\mathbf{and}$ $\sum_{e \in E(\vH)} \log([\ww_t]_e) - \log([\ww_0]_e) > -|E(\vH)|$}{\label{line:while_bfs_start}
$\ww_0 \gets \ww$\;
\For{$0 \le t \le \tau$}{\label{line:while_inner_start}
	 $L_t \gets \{e \in F \mid [\ww_{t}]_e \ge 50\min([\wws]_e, \frac{\|\ww_F\|_1}{|F|})\}$, $S_t \gets \{e \in F \mid [\ww_{t}]_e \le \ell[\wws]_e \}$\;
\If{$|S_t| < \frac 1 4 |F|$ $\mathbf{and}$ $\sum_{e \in E(\vH)} \log([\ww_t]_e) - \log([\ww_0]_e) > -|E(\vH)|$}{\label{line:if_bfs_start}
$\vH_t \gets (V(\vH_t), F \setminus (S_t \cup L_t), [\ww_t]_{F \setminus (S_t \cup L_t)})$\;
$\ss \gets $ random vector in $\{-1, 0, 1\}^E$, where $\ss_e$ is an independent $\pm 1$ random variable for all $e \in E(\vH_t)$, and $\ss_e = 0$ for all $e \in E \setminus E(\vH_t)$ \;
$\xx_t \gets \PMROalgo(\vH_t, \ww_t,  \eta \ss, \frac{\delta}{4\tau\log_2(\frac 4 \delta)}, \xi)$\;\Comment*{That is, $\xx_t \approx [\xxs]_t \defeq \eta \PP_{\vH_t, \ww_t} \ss$.}\label{line:pmro}
$\ww_{t + 1} \gets \ww_t \circ (\vone_E+ \xx_t)$\;
}\label{line:if_bfs_end}
\Else{\label{line:else_bfs_start}
$\ww_{t + 1} \gets \ww_t$\;
}\label{line:else_bfs_end}
}\label{line:while_inner_end}
}\label{line:while_bfs_end}
$\dd \gets \BB_{\vG}^\top (\ww - \ww_{t})$\;\label{line:dd_define}
$\yy \gets $ unique vector in $\RR^E$ with $\supp(\yy) \subseteq E(T)$ and $\BB_{\vG}^\top \yy = \dd$\;
$\ww_{t} \gets \ww_{t} + \yy$\;\label{line:degree_fixing}
\Return{$\ww' \gets \ww_t$}
\end{algorithm2e}

Intuitively, $\BFSalgo$ randomly reweights a current subset of edges in each of $\tau$ iterations, after removing any edge whose weight has significantly changed with respect to a reference vector $\wws$. In each loop of Lines~\ref{line:while_bfs_start} to~\ref{line:while_bfs_end}, the algorithm terminates if either a constant fraction of edge weights in $E(\vH)$ have decreased by an $\ell$ factor compared to $\wws$, or a certain potential function bounding the change in weights has decreased significantly. Moreover, each reweighting adds a circulation (and hence preserves degrees), while maintaining that $\|\ww_t\|_1$ is unchanged, up to an inverse-polynomial approximation error due to our subroutine $\PMROalgo$. The algorithm simply iterates this loop until termination. We now analyze Algorithm~\ref{alg:basic_fast_sparsify}, by bounding the spectral error and showing that each loop of Lines~\ref{line:while_bfs_start} to~\ref{line:while_bfs_end} is likely to terminate.

\begin{lemma}\label{lem:bfs_guarantee}
There is a universal constant $\cbfs$ such that if $\cbfs \cdot \alpha\rho\log(\frac m \delta) \le 1$, where
\[\alpha \defeq \frac{\norm{\ww_F}_1}{|F|\bw},\]
\BFS (Algorithm~\ref{alg:basic_fast_sparsify}) returns $\ww'$ satisfying, with probability $\ge 1 - \delta$:
\begin{enumerate}
\item $\BB_{\vG}^\top \ww' = \BB_{\vG}^\top \ww$ and $\frac{\norm{\ww'}_1}{\norm{\ww}_1} \in [1 - \eps, 1 + \eps]$.\label{item:bfs_1}
\item $\ww'_e \in [\frac \ell 2 [\wws]_e, 60 [\wws]_e]$ for all $e \in E(\vH)$. \label{item:bfs_2}
\item Either $|\{e \in E(\vH) \mid \ww'_e \le \ell[\wws]_e\}| \ge \frac 1 4 |F|$, or 
$\sum_{e \in E(\vH)} \log\Par{\frac{\ww'_e}{\ww_e}} \le -|E(\vH)|$.\label{item:bfs_3}
\item $\normsop{\LL_G^{\dagger/ 2} \BB_{\vG}^\top (\WW' - \WW) \HH_{\vG} \LL_G^{\dagger/ 2}} \le \cbfs \cdot \sqrt{\alpha\rho\log(\frac m \delta)} + \eps$ , where $G \defeq \und(\vG)$.\label{item:bfs_4}
\end{enumerate}
The runtime of \BFS is, for $Z \sim \textup{Geom}(p)$ where $p \in [\half, 1]$,\footnote{The $\polyloglog$ factors hidden by the $\bO$ notation will be $\polyloglog(nU)$ factors where $U$ is the edge weight ratio of the original graph we sparsify in Section~\ref{ssec:outer_algo}, as discussed in that section.}
\[\bO\Par{|E(\vH)| \log\Par{\frac{n}{\delta\eps\ell}}\log\Par{\frac n \delta} \cdot Z + |V|}.\]
\end{lemma}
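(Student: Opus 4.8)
The plan is to verify the four itemized guarantees in order and then read off the runtime, with essentially all the probabilistic content concentrated in a potential–drift analysis of the outer loop. Two design choices drive everything. First, $\eta = \tfrac{1}{20\csign\sqrt{\log(60m\tau/\delta)}}$ is calibrated exactly so that $\norm{\eta\PP_{\vH_t,\ww_t}\ss}_\infty \le \tfrac1{20}$ holds at every step (by \Cref{lem:proj_sign_small}), while $\tau = \lceil 720/\eta^2\rceil$ makes $\eta^2\tau = \Theta(1)$; second, $\xi$ is inverse-polynomially small, so by \Cref{lem:pmro} each $\PMROalgo$ call returns $\xx_t$ with $\norm{\xx_t - \eta\PP_{\vH_t,\ww_t}\ss}_\infty \le \xi$, $\norm{\BB_{\vG}^\top\WW_t\xx_t}_\infty \le \xi$, and $\abs{\inprods{\xx_t}{\ww_t}} \le \xi\norm{\ww_t}_2$, all of which will be absorbed into the $\eps$ slack.

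\emph{Invariants (\Cref{item:bfs_1,item:bfs_2}).} On the event of \Cref{lem:proj_sign_small}, $\norm{\xx_t}_\infty \le \tfrac1{20}+\xi \le \tfrac1{10}$ at every step, so $\ww_{t+1,e} = \ww_{t,e}(1+\xx_{t,e})>0$ and no weight doubles. Edges of $E(\vH)\setminus F$ are never in any $\vH_t$ and keep their input values, which lie in $[\tfrac\ell2[\wws]_e,60[\wws]_e]$ by \eqref{eq:wws_reqs}; an edge of $F$ is reweighted only while $\ell[\wws]_e < \ww_{t,e} < 50\min([\wws]_e,\norm{\ww_{t,F}}_1/\abs F)$, and a single $(1\pm\tfrac1{10})$-step out of this window lands it in $[\tfrac9{10}\ell[\wws]_e,\,55[\wws]_e]$ and then freezes it — here I would use that $[\wws]_e\in[\bw,2\bw]$ (since $\vHs$ is a $(\bw,\rho)$-cluster) so $50\min(\cdot)\le 100\bw\le 60[\wws]_e$ after accounting for the $\le 1.01$ fluctuation of $\norm{\ww_{\cdot,F}}_1$ — which gives \Cref{item:bfs_2}. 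For \Cref{item:bfs_1}: each $\ww_t\circ\xx_t$ is an exact circulation plus an $E(\vG)$-degree residual of $\ell_\infty$-norm $\le\xi$ (by \Cref{lem:rankone_fix} and the $\PMROalgo$ guarantee), so the final tree-routing step makes $\BB_{\vG}^\top\ww' = \BB_{\vG}^\top\ww$ hold \emph{exactly}; and since all weights stay positive, $\norm{\ww_{t+1}}_1 = \norm{\ww_t}_1 + \inprods{\ww_t}{\xx_t}$ with $\abs{\inprods{\ww_t}{\xx_t}}\le\xi\norm{\ww_t}_1$, so iterating over the (polynomially many, on the termination event) steps and using $\xi\le\tfrac\eps{200mn^2\tau}$ gives $\norm{\ww'}_1/\norm{\ww}_1\in[1-\eps,1+\eps]$.

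\emph{Spectral error (\Cref{item:bfs_4}).} Writing $\XX_t = \diag{\xx_t}$, we have $\WW'-\WW = \sum_t\WW_t\XX_t + \YY$, and the routing term contributes operator norm $\le\tfrac\eps2$ by \Cref{lemma:rounding} (using $\norm{\BB_{\vG}^\top(\ww-\ww_{\mathrm{end}})}_\infty\le(\#\mathrm{steps})\,\xi$ and the choice of $\xi$). For each reweighting step, the identity from the proof of \Cref{lemma:variance} gives $\LL_G^{\dagger/2}\BB_{\vG}^\top\WW_t\XX_t\HH_{\vG}\LL_G^{\dagger/2} = \eta\sum_{f\in E(\vH_t)}\ss_{t,f}\Atil_f^{(t)} + E_t$, where $\Atil_f^{(t)}$ are the matrices of \Cref{lemma:variance} for the cluster $\vH_t$ with weights $\ww_t$ and $\normsop{E_t}\le\poly(n)\,\xi$. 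Summing over steps produces a matrix Rademacher martingale; applying \Cref{lem:matrix_azuma} after the symmetrizing embedding $\Atil\mapsto\begin{pmatrix} & \Atil \\ \Atil^\top & \end{pmatrix}$ (which preserves operator norm and has block-diagonal square), the error is $\sigma\sqrt{8\log(4n/\delta)}$ with $\sigma^2\pleq\eta^2\sum_t\rho_t\,\LL_G^{\dagger/2}\LL_{H_t}\LL_G^{\dagger/2}$ by \Cref{lemma:variance}. Here $\rho_t=\Par{\max_{e\in E(\vH_t)}\ww_{t,e}}\Par{\max_{u,v\in V(\vH_t)}\ER_G(u,v)}=O(\alpha\rho)$: the first factor is $\le 50\norm{\ww_{t,F}}_1/\abs F = O(\alpha\bw)$ since $E(\vH_t)$ excludes $L_t$, and the second is $O(\rho/\bw)$ since $\vHs$ is a $(\bw,\rho)$-cluster and $0.9\LL_G\pleq\LL_{\Gs}$. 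Combining $\LL_{H_t}\pleq\LL_G$ with $\eta^2\cdot(\#\mathrm{reweighting\ steps})=O(1)$ (valid once the outer loop is shown to run $O(1)$ times) yields $\sigma^2=O(\alpha\rho)$ and hence the claimed $O(\sqrt{\alpha\rho\log(m/\delta)})+\eps$.

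\emph{Termination, the dichotomy and the runtime (\Cref{item:bfs_3}).} I would track $\Phi_t\defeq\sum_{e\in E(\vH)}\log\ww_{t,e}$. On a reweighting step, $\log(1+x)\le x-\tfrac13 x^2$ for $\abs x\le\tfrac1{10}$ gives $\Phi_{t+1}-\Phi_t\le\inprods{\xx_t}{\vone_{E(\vH_t)}}-\tfrac13\norm{\xx_t}_2^2$, whose conditional expectation is $\le O(\xi m)-\tfrac13\eta^2\Tr(\PP_{\vH_t,\ww_t})\le -c\eta^2\abs F$ — since the $S_t,L_t$ filtering keeps $\abs{E(\vH_t)}=\Theta(\abs F)\gg\abs{V(\vH_t)}$ so $\PP_{\vH_t,\ww_t}$ has rank $\Theta(\abs F)$ — and whose conditional variance is $O(\eta^2\abs{E(\vH)})$. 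Hence over the $\tau=\lceil 720/\eta^2\rceil$ inner steps the expected potential drop is $\ge c'\abs{E(\vH)}$ (with $c'$ made $\ge 2$ by the $720$) while its martingale fluctuation is only $O(\sqrt{\abs{E(\vH)}})$, so with probability $\ge p\ge\tfrac12$ one of the two loop-exit conditions — $\abs{S_t}\ge\tfrac14\abs F$, or $\Phi$ has dropped by $\abs{E(\vH)}$ relative to that iteration's $\ww_0$ — is met; consequently the number of outer iterations is stochastically dominated by $\textup{Geom}(p)$, which is the $Z$ in the runtime. Conditioning additionally on the (high-probability, as drift beats fluctuation) event that the outer iterations which do \emph{not} exit the loop each strictly decrease $\Phi$, the per-iteration bound upgrades to the global statement of \Cref{item:bfs_3}. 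The runtime of one reweighting step is dominated by its $\PMROalgo$ call, $\bO(\abs{E(\vH)}\log\tfrac{nu}{\xi\delta})=\bO(\abs{E(\vH)}\log\tfrac n{\delta\eps\ell})$ by \Cref{lem:pmro}; there are $\tau=\bO(\log\tfrac m\delta)=\bO(\log\tfrac n\delta)$ inner steps, $Z$ outer iterations, and an additive $O(\abs V)$ for the tree routing, giving the stated bound. The step I expect to be the main obstacle is keeping the accumulated spectral error at $O(\sqrt{\alpha\rho\log(m/\delta)})$ despite the \emph{randomized} iteration count: the martingale variance scales with the total number of reweighting steps $\approx\tau Z$ while $\eta^2\tau=\Theta(1)$, so one really needs $Z=O(1)$ with overwhelming probability, which rests on the delicate balance between the large gap $\eta^2\tau\ge 720$ forcing drift to dominate fluctuation, the $S_t,L_t$ filtering keeping $\abs{E(\vH_t)}$ a constant fraction of $\abs F$ and every active weight $\Theta(\bw)$, and the input invariant $\abs{E(\vH)}\ge 40\abs{V(\vH)}$.
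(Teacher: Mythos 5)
Most of your proposal tracks the paper's argument (the $\ell_\infty$ control of $\xx_t$ via \Cref{lem:proj_sign_small}, the $S_t,L_t$ freezing for Item~\ref{item:bfs_2}, \Cref{lem:rankone_fix} plus the tree routing for Item~\ref{item:bfs_1}, and \Cref{lemma:variance} with the symmetric dilation and \Cref{lem:matrix_azuma} for Item~\ref{item:bfs_4}), but the obstacle you flag at the end is where the proposal genuinely breaks, and your proposed resolution cannot work. You treat the weights as persisting across the outer while-loop iterations, so that the martingale underlying $\WW'-\WW$ has $\approx \tau Z$ steps, and you then assert that one ``really needs $Z=O(1)$ with overwhelming probability.'' That is unattainable: each run of Lines~\ref{line:while_bfs_start}--\ref{line:while_bfs_end} succeeds only with probability $\ge \tfrac12$ (this is all the contradiction/potential argument gives), so $Z\sim\textup{Geom}(p)$ with $p$ possibly as small as $\tfrac12$, and to reach failure probability $\delta$ one can only say $Z\le \log_2(\tfrac4\delta)$. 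Under your reading the variance statistic would pick up a factor $\log(\tfrac1\delta)$ and Item~\ref{item:bfs_4} would degrade to $O(\sqrt{\alpha\rho}\log(\tfrac m\delta))$, which is not the claimed bound.

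The actual resolution in the paper is structural rather than probabilistic: each pass of the outer loop begins with $\ww_0\gets\ww$, i.e.\ every attempt restarts from the \emph{input} weights, and the inner if/else freezes the weights as soon as either exit condition is met. Hence the returned vector $\ww'$ differs from $\ww$ by at most $\tau$ projected-sign reweightings (those of the single successful attempt, plus the final tree routing), so the matrix-Azuma variance is $\le \tau\eta^2\cdot O(\alpha\rho)=O(\alpha\rho)$ independent of $Z$; the geometric number of attempts enters only the runtime. (The paper then conditions on termination within $k=\log_2(\tfrac4\delta)$ attempts and union-bounds the per-attempt concentration and the per-attempt events $\event_t$, and uses \Cref{lem:moment_error} to handle the conditioning on $\event_t$ in the drift estimate \eqref{eq:potdrop_exact} — a step you also elide.) Incidentally, with this loop structure Item~\ref{item:bfs_3} is immediate from the exit condition, so no ``drift beats fluctuation'' event is needed there; and in your Item~\ref{item:bfs_2} argument the inequality $100\bw\le 60[\wws]_e$ is false when $[\wws]_e<\tfrac53\bw$ — the correct (and simpler) bound is that an active edge has $[\ww_t]_e<50\min([\wws]_e,\tfrac{\norms{\ww_F}_1}{|F|})\le 50[\wws]_e$, so one last $1.1$-factor step keeps it below $60[\wws]_e$.
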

\begin{proof}
Let $\hm \defeq |E(\vH)|$. Because the algorithm continues looping Lines~\ref{line:while_bfs_start} to~\ref{line:while_bfs_end} until the condition in Item~\ref{item:bfs_3} is met, the conclusion that Item~\ref{item:bfs_3} holds is immediate. 
The remainder of the proof proceeds as follows. We first prove the runtime claim by giving a constant lower bound on the probability a single run of Lines~\ref{line:while_bfs_start} to~\ref{line:while_bfs_end} ever fails to enter the else branch on Line~\ref{line:else_bfs_start}, assuming for simplicity that all calls to $\PMROalgo$ are exact, i.e., that every time Line~\ref{line:pmro} is run,
\begin{equation}\label{eq:pmro_exact}\xx_t = [\xxs]_t = \eta\PP_{\vH_t, \vv_t} \ss.\end{equation}
We next prove that Items~\ref{item:bfs_1},~\ref{item:bfs_2}, and~\ref{item:bfs_4} hold with the requisite failure probability. Finally, we modify the argument to handle approximation error due to inexactness in Line~\ref{line:pmro}.

\paragraph{Runtime bound.} 
Our goal in this part of the proof is to establish that each run of Lines~\ref{line:while_bfs_start} to~\ref{line:while_bfs_end} results in the else branch on Line~\ref{line:else_bfs_start} being entered with probability $\ge \half$. We use this claim to obtain our runtime bound. In the following discussion, fix a single run of Lines~\ref{line:while_bfs_start} to~\ref{line:while_bfs_end}. We let $\calE_t$ denote the event that $\norm{\xx_t}_\infty \le \frac 1 {10}$ conditioned on the randomness of all iterations $0 \le s < t$.
We also let $\calF_t$ denote the event that the algorithm enters the if branch on Line~\ref{line:if_bfs_start} on iteration $t$, and
\begin{equation}\label{eq:potential_def}
p_t \defeq \Pr\Brack{\bigcup_{0 \le s < t} \calF_s \mid \bigcup_{0 \le s < t} \event_s},\; \Phi_t \defeq \E\Brack{\sum_{e \in E(\vH)}\log\Par{\frac{[\ww_{t + 1}]_e}{[\ww_0]_e}} \mid \bigcup_{0 \le s \le t} \event_s},
\end{equation}
where both definitions in \eqref{eq:potential_def} are taken with respect to all randomness used in the current run of Lines~\ref{line:while_bfs_start} to~\ref{line:while_bfs_end}. In other words, $p_t$ is the probability the algorithm has not entered the else branch on Line~\ref{line:else_bfs_start} in any iteration $0 \le s < t$, and $\Phi_t$ is an expected potential function tracking edge weights over iterations $0 \le s \le t$, both conditioned on $\bigcup_{0 \le s \le t} \event_s$ occurring. Also, note that by Lemma~\ref{lem:proj_sign_small}, $\Pr[\event_t] \ge 1- \frac \delta {4\tau}$, so $\Pr[\bigcup_{0 \le t \le \tau} \event_t] \ge 1 - \frac{\delta\tau}{4\tau} \ge \frac 3 4$. Thus, if we can show $p_{\tau} \ge \frac 2 3$, we have our goal:
\begin{equation}\label{eq:terminate_half}\Pr\Brack{\bigcup_{0 \le s \le \tau} \mathcal{F}_t} = \Pr\Brack{\bigcup_{0 \le s \le \tau} \calF_t \mid \bigcup_{0 \le s \le \tau} \event_t}\Pr\Brack{\bigcup_{0 \le s \le \tau}\calE_t} \ge \frac 2 3\cdot \frac 3 4 = \half.\end{equation}

Suppose for contradiction that $p_{\tau} \le \frac 2 3$, so that $p_t \le \frac 2 3$ for all $0 \le t \le \tau$. First, we compute, following the convention that $[\xx_t]_e = 0$ if $e \not\in E(\vH_t)$ or we run the else branch in iteration $t$,
\begin{equation}\label{eq:pot_drop}
	\begin{aligned}
\Phi_t - \Phi_{t - 1} &= \E\Brack{\sum_{e \in E(\vH_t)} \log\Par{1 + [\xx_t]_e} \mid \bigcup_{0 \le s \le t} \event_s} \\
&\le \E\Brack{\sum_{e \in E(\vH_t)} [\xx_t]_e - \frac 1 3[\xx_t]_e^2 \mid \bigcup_{0 \le s \le t} \event_s} \\
&= \Par{1 - p_t}  \E\Brack{\sum_{e \in E(\vH_t)} [\xx_t]_e - \frac 1 3 [\xx_t]_e^2 \mid \event_t \cup \bigcup_{0 \le s < t} (\event_s \cup \calF_s)} \\
&\le \frac 1 3\max\Par{\E\Brack{\sum_{e \in E(\vH_t)} [\xx_t]_e - \frac 1 3 [\xx_t]_e^2 \mid \event_t \cup \bigcup_{0 \le s < t} (\event_s \cup \calF_s)}, 0}.
\end{aligned}
\end{equation}
The second line used the approximation $\log(1 + x) \le x - \frac 1 3 x^2$ for $|x| \le \frac 1 {10}$, the third line used that no weight changes if we enter the else branch, and the last line used our assumption $p_t \le \frac 2 3$. 

We next upper bound the right-hand side of \eqref{eq:pot_drop}. Observe that the definition of $\xx_t$ (assuming \eqref{eq:pmro_exact}) ensures $\sum_{e \in E} [\xx_t \circ \ww_t]_e = 0$ using Lemma~\ref{lem:rankone_fix}, so $\norm{\ww_t}_1 = \norm{\ww_0}_1$ in every iteration. 
Since any $e \in L_t$ due to $[\ww_t]_e \ge 50[\wws]_e$ must have $[\ww_t]_e \ge 50\bw$, and $\norm{\vv_t}_1 \le 1.01\norm{\wws}_1 \le 2.02\hm \bw$, there can be at most $\frac{\hm} {24}$ such edges. Similarly, at most $\frac{\hm} {50}$ edges $e \in F$ can have $[\ww_t]_e \ge \frac{50\norm{\ww}_1}{|F|}$, so $|L_t| \le \frac 1 4 |F|$ throughout the algorithm. Hence under $\bigcup_{0 \le s < t} (\event_s \cup \calF_s)$, which also implies $|S_t| \le \frac 1 4 |F|$, we always have $|E(\vH_t)| \ge \half |F|$. Moreover, note that since $\xx_t = \eta\PP_{\vH_t, \vv_t} \ss$ for Rademacher $\ss$,
\begin{equation}\label{eq:unconditional_logbound}\E\Brack{\sum_{e \in E(\vH_t)} [\xx_t]_e} = 0,\; \E\Brack{\sum_{e \in E(\vH_t)}[\xx_t]_e^2 } = \eta^2 \E\norm{\PP_{\vH_t,\vv_t}}_2^2 = \eta^2\Tr\Par{\PP_{\vH_t,\vv_t}}.\end{equation}
However, note that the dimension of the subspace spanned by $\PP_{\vH_t,\vv_t}$ is at least
 \[|E(\vH_t)| - (|V(\vH)| - 1) - 1 \ge \frac{\hm}{8} - \frac{\hm}{40} = \frac{\hm}{10},\]
 under the assumption $|E(\vH_t)| \ge \half |F| \ge \frac{\hm}{8}$,
since it has $|V(\vH)| - 1$ degree constraints and one orthogonality constraint to $\ww_t$. 
We now handle conditioning on the event $\event_t$, which satisfies $1 - \Pr[\event_t] \le \frac 1 {6000^2}$. Combining \eqref{eq:unconditional_logbound} with the above, and using that each $[\xx_t]_e$ is $1$-sub-Gaussian (Lemma~\ref{lem:proj_sign_small}) and the set of $\ss$ satisfying $\event_t$ is closed under negation, applying Lemma~\ref{lem:moment_error} shows
\begin{equation}\label{eq:potdrop_exact}\E\Brack{\sum_{e \in E(\vH_t)} [\xx_t]_e \mid \event_t} = 0,\; \E\Brack{\sum_{e \in E(\vH_t)}[\xx_t]_e^2 \mid \event_t } \ge \eta^2\Par{\frac{\hm}{10} - \hm \cdot \Par{300 \cdot \frac 1 {6000}}} = \frac{\eta^2 \hm}{20}.\end{equation}
Therefore, combining with \eqref{eq:pot_drop} shows that $\Phi_t$ decreases by at least $\frac {\eta^2 \hm} {180}$ for each of the first $\tau$ iterations. However, we also have that with probability $1$,
\[\sum_{e \in E(\vH)}\log\Par{\frac{[\ww_{\tau}]_e}{[\ww_0]_e}} \mid \bigcup_{0 \le s \le \tau} \event_s \ge -2\hm. \]
This is because the algorithm freezes the weights $\ww_{t}$ as soon as $\sum_{e \in E(\vH)}\log\Par{\frac{[\ww_{t}]_e}{[\ww_0]_e}} \le -\hm$, and the potential can only change by $-\hm$ in an iteration $t$ assuming $\event_t$, since then $\log(1 + [\xx_t]_e) \ge -1$ entrywise for $e \in F$. This is a contradiction since $\tau \ge \frac{360}{\eta^2}$ (indeed, we choose $\tau$ larger by a constant factor to account for inexactness in $\PMROalgo$ later), so $p_\tau \ge \frac 2 3$ as claimed.
The runtime follows from Lemma~\ref{lem:pmro}, as the number of runs of Lines~\ref{line:while_bfs_start} to~\ref{line:while_bfs_end} is $Z \sim \textup{Geom}(p)$ for $p \ge \half$.

\paragraph{Items~\ref{item:bfs_1},~\ref{item:bfs_2}, and~\ref{item:bfs_4}.} We have shown that with probability $\ge 1 - \frac \delta 4$, Lines~\ref{line:while_bfs_start} to~\ref{line:while_bfs_end} terminate after
\[k \defeq \log_2\Par{\frac 4 \delta}\]
loops. Conditional on this event and following our earlier notation, the probability of $\bigcup_{0 \le t \le \tau} \event_t$ all occurring in each of the at most $k$ loops is at least $1 - \frac \delta 4$ by our choice of $\eta$ and Lemma~\ref{lem:proj_sign_small}. Under these events (i.e.\ that there are at most $k$ loops and all $\norm{\xx_t}_\infty$ are small), Item~\ref{item:bfs_2} is immediate, since edges $e$ with $[\ww_t]_e \not\in [\ell[\wws]_e, 50[\wws]_e]$ are removed from consideration in a current iteration $t$, and no edge weight changes by more than a $1.1$ factor multiplicatively. Also, assuming \eqref{eq:pmro_exact}, Item~\ref{item:bfs_1} is also immediate (we will analyze the inexactness tolerance later).

We now prove Item~\ref{item:bfs_4}. For all $0 \le t \le \tau$, let $\vG_t \defeq (V, E, \ww_t)$ and let $G_t \defeq \und(\vG_t)$. We assumed that $\vHs$ was a $(\bw, \rho)$-cluster in $\vGs$, and no entry of $\ww_t$ restricted to $E(\vH_t) = F \setminus (S_t \cup L_t)$ is larger than $50\alpha\bw$ by definition of $L_t$, so
\[\Par{\max_{e \in E(\vH_t)} [\ww_t]_e} \cdot \Par{\max_{u, v \in V(\vH)} \ER_G(u, v)}  \le 75\alpha\rho \text{ for all } 0 \le t \le \tau.\]
Here we used that $\ER_G(u, v) \le 1.5 \ER_{\Gs}(u, v)$ for all $u, v$ by assumption. By applying Lemma~\ref{lemma:variance} for all iterations $0 \le t' \le t$ to the sequence of matrices $\Atil_e$ in \eqref{eq:edge_circ_def} for $e \in E(\vH_t)$, we inductively apply Lemma~\ref{lem:matrix_azuma} to show that with probability $1 - \frac{\delta t}{4\tau k}$, on any of the $k$ runs of Lines~\ref{line:while_bfs_start} to~\ref{line:while_bfs_end},
\[\normop{\LL_{G}^{\frac \dagger 2} \BB_{\vG}^\top \Par{\WW_t - \WW} \HH_{\vG} \LL_{G}^{\frac \dagger 2}} 
\le \frac{1}{20\csign\sqrt{\log \frac{60m\tau}{\delta}}}\cdot 4\sqrt{75\alpha\rho t\log\Par{\frac{8m\tau k}{\delta t}}} \le \frac{4}{\csign} \cdot \sqrt{\alpha\rho t}. \]
There are a few subtleties in the above calculation. First, observe that Lemma~\ref{lemma:variance} implies that if the $\Atil_e$ are defined with respect to $\PP_{\vH_t, \ww_t}$ rather than $\PP_{\vH_t}$ (as in Algorithm~\ref{alg:basic_fast_sparsify}), the variance bound still holds, because Lemma~\ref{lemma:varianceproj} applies to $\PP_{\vH_t, \ww_t}$ as well. Second, inductively using the guarantee above with Fact~\ref{fact:dirclose_undirclose} shows that $0.9 \LL_{G}\preceq \LL_{G_t} \preceq 1.1 \LL_G$ for all iterations $t$, where we used the assumption on $\alpha\rho$ for a large enough choice of $\cbfs$, so we adjusted the right-hand side by a constant factor. Third, note that the above argument holds with probability $\ge 1 - \frac \delta {4k}$ for each of the $\le k$ runs of Lines~\ref{line:while_bfs_start} to~\ref{line:while_bfs_end}, so it holds with probability $\ge 1 - \frac \delta 4$ for all of them by a union bound.

Finally, we need to condition on all $\event_t$ holding in all loops. We give a simple argument which removes this conditioning. If any $\event_t$ fails, we set all future weight updates to zero. Therefore, regardless of whether the $\event_t$ occur, the matrix variance \eqref{eq:matrix_variance_azuma} in our application of Lemma~\ref{lem:matrix_azuma} is bounded as we claimed. In particular, in an iteration $t$, as long as no $\event_s$ has occured for $0 \le s < t$, Lemma~\ref{lemma:variance} holds, and if any have occured, the variance is trivially bounded by $0$. 

The overall failure probability of $\le \delta$ comes from union bounding on the three events we have conditioned on so far (finishing in $k$ loops, all $\event_t$ holding in all loops, Item~\ref{item:bfs_4} holding), and the event that all of the $\le k\tau$ executions of Line~\ref{line:pmro} succeeed, which occurs with probability $\ge 1 - \frac \delta 4$.

\paragraph{Inexactness of projection.} It remains to discuss the effect of replacing our exact projections with our approximation through $\PMROalgo$. Because we ensured $\xi \le \frac \ell {10}$, the first bound in \eqref{eq:approx_problem} shows that entrywise $\xx_t$ is not affected by more than $\frac \ell {10}$ by approximation, so accounting for slack in our earlier argument Item~\ref{item:bfs_2} remains true. Next, using 
\[-\frac 1 3[\xx_t]_{e}^2 \le -\frac {1} {3.3} [ [\xxs]_t]_e^2 + 4[\xx_t - [\xxs]_t]_e^2 \le -\frac {1} {3.3} [\xxs]_t]_e^2 + 4\xi^2,  \]
we have by $\xi \le \frac{1}{1000\csign\log(\frac{60m\tau}{\delta})}$ that the approximation negligibly affects the argument in \eqref{eq:potdrop_exact}, which we accommodated in the constant factors in $\tau$, so it is still the case that Lines~\ref{line:while_bfs_start} to~\ref{line:while_bfs_end} terminate with probability $\ge \half$ in each loop. Regarding Item~\ref{item:bfs_1}, note that
\[\BB_{\vG}^\top \ww_t + \BB_{\vG}^\top \yy = \BB_{\vG}^\top \ww\]
in each iteration after applying the degree fixing in Line~\ref{line:degree_fixing}, so the invariant on degrees holds as claimed. The bound $\norm{\ww_t}_2 \le \sqrt{m}\norm{\ww_t}_\infty \le 120\sqrt{m} \bw$, combined with the last claim in \eqref{eq:approx_problem} and $\xi \le \frac{\eps}{200\sqrt{m}\tau}$, shows the $\ell_1$ norm of the weights cannot grow by more than $\eps \bw$ throughout. Moreover, the assumption $\xi \le \frac{\eps}{mn^3\tau}$ with the second guarantee in \eqref{eq:approx_problem} shows that in each iteration, the total degree imbalance $\norm{\dd}_1 \le \frac \eps {3mn^2\tau}$, and the error vector $\zz$ (in the context of Lemma~\ref{lemma:rounding}) satisfies $\norm{\zz}_1 \le m\xi \le \frac \eps {3n^2\tau}$. Lemma~\ref{lemma:rounding} then shows that $\norm{\yy}_1 \le m\norm{\yy}_\infty \le m\norm{\dd}_1 \le \frac \eps {3n\tau}$. The last two guarantees in Lemma~\ref{lemma:rounding} combined with the triangle inequality show that in each iteration, the additional spectral error due to approximate solves is $\frac{2\eps}{3\tau}$, and the additional error due to rounding is $\frac{\eps}{3\tau}$
giving the additional spectral error term in Item~\ref{item:bfs_4} after accumulating over all iterations. Finally, the runtime follows directly from Lemma~\ref{lemma:rounding} (for computing $\yy$), and Lemma~\ref{lem:pmro}.
\end{proof}

We provide one additional result which helps in disjoint applications of $\BFSalgo$.

\begin{corollary}\label{cor:multiple_cluster}
Consider calling \BFS $I$ times, with shared parameters $\vG, \wws, \ell, \delta, \eps$, but on edge-disjoint subgraphs $\{\vH_i\}_{i \in [I]}$ through $\vG$, so that the corresponding $[\vHs]_i$ are all $(\bw_i, \rho)$-clusters in $\vGs$ for some value of $\bw_i$. Then with probability $\ge 1 - \delta I$, the total operator norm error (i.e., Item~\ref{item:bfs_4}) incurred by all calls is bounded by \[\cbfs \cdot \sqrt{\rho \log\Par{\frac m \delta}} + \eps I.\]
\end{corollary}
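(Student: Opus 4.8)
The plan is to mirror the proof of \Cref{lem:bfs_guarantee}, but to analyze \emph{all} of the reweightings performed across the $I$ calls as a single matrix martingale, and to exploit the edge-disjointness of the $\{\vH_i\}_{i\in[I]}$ so that its total variance matches that of just one call --- up to constants, and with no dependence on $I$.

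First I would dispatch the failure probability and the additive $\eps I$ term. Each call to \BFS\ succeeds with probability $\ge 1-\delta$, so a union bound gives that all $I$ calls succeed with probability $\ge 1-\delta I$; condition on this. The additive $\eps$ appearing in \Cref{item:bfs_4} of \Cref{lem:bfs_guarantee} arises only from (i) the inexactness of the $\PMROalgo$ calls on \Cref{line:pmro} (controlled by \Cref{lem:pmro} and \eqref{eq:approx_problem}) and (ii) the final tree-rerouting on \Cref{line:degree_fixing} (controlled by \Cref{lemma:rounding}); both are local to a single call and act on disjoint edge sets together with the shared tree $T$, so their contributions to $\normsop{\LL_G^{\dagger/2}\BB_{\vG}^\top(\WW'-\WW)\HH_{\vG}\LL_G^{\dagger/2}}$ add by the triangle inequality, totalling at most $\eps I$. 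Demand preservation $\BB_{\vG}^\top\ww' = \BB_{\vG}^\top\ww$ likewise follows call-by-call.

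For the leading term $\cbfs\sqrt{\rho\log(m/\delta)}$, the key step is to reorganize the computation. After conditioning that every call terminates within $k \defeq \log_2(4/\delta)$ outer loops (a union bound over the $I$ calls costing probability $\le \delta I/4$), each call consists of at most $k\tau$ active reweighting steps supported on $E(\vH_i)$; since these edge sets are disjoint, running the $I$ calls is equivalent, as far as the final output is concerned, to a process of at most $k\tau$ \emph{rounds} in which at round $t$ every call $i$ simultaneously applies its $t$-th reweighting. Following the exact-projection reduction in the proof of \Cref{lem:bfs_guarantee} and conditioning on the $\ell_\infty$-norm events of \Cref{lem:proj_sign_small}, the normalized cumulative change equals $\eta\sum_t\sum_{i\in[I]}\sum_{e\in E(\vH_{i,t})}[\ss^{(i,t)}]_e\,\Atil_e^{(i,t)}$ plus the lower-order tree terms, where $\vH_{i,t}$ is the active subgraph of call $i$ in round $t$ and $\Atil_e^{(i,t)}$ is as in \eqref{eq:edge_circ_def} for $\PP_{\vH_{i,t},\ww_t}$; we apply \Cref{lem:matrix_azuma} to its symmetrization. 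For the variance, observe that any $e\in E(\vH_{i,t})$ was \emph{not} removed into $L_t$, hence $[\ww_t]_e < 50[\wws]_e \le 100\bw_i$, while the cluster hypothesis on $[\vHs]_i$ in $\vGs$ together with $0.9\LL_G\preceq\LL_{\Gs}\preceq1.1\LL_G$ gives $\max_{u,v\in V(\vH_i)}\ER_G(u,v)\le 1.5\rho/\bw_i$; the product is $\le 150\rho$, a bound \emph{independent of $\bw_i$} and thus uniform over all pieces, so \Cref{lemma:variance} (with \Cref{lemma:varianceproj} applied to $\PP_{\vH_{i,t},\ww_t}$ in place of $\PP_{\vH_{i,t}}$) yields $\sum_{e\in E(\vH_{i,t})}\Atil_e\Atil_e^\top,\ \sum_{e\in E(\vH_{i,t})}\Atil_e^\top\Atil_e\ \preceq\ 150\rho\cdot\LL_G^{\dagger/2}\LL_{H_{i,t}}\LL_G^{\dagger/2}$, where $H_{i,t}\defeq\und(\vH_{i,t})$. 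Summing over $i$ at a fixed round $t$, the subgraphs $\{H_{i,t}\}_{i\in[I]}$ are edge-disjoint subgraphs of the current graph $G_t$, so $\sum_i\LL_{H_{i,t}}\preceq\LL_{G_t}$, and an induction on $t$ via \Cref{fact:dirclose_undirclose} (exactly as in \Cref{lem:bfs_guarantee}) maintains $\LL_{G_t}\preceq1.1\LL_G$; hence each round contributes at most $165\rho\,\eta^2\cdot\II$ to the variance --- the same as a single \BFS\ call, and with no factor of $I$. Summing over the $\le k\tau$ rounds and invoking \Cref{lem:matrix_azuma} with $\sigma^2 = O(\rho\,\eta^2 k\tau)$, and using $\eta^2\tau=O(1)$ together with the same constant-and-logarithm bookkeeping as in \Cref{lem:bfs_guarantee}, bounds the combined operator-norm error by $\cbfs\sqrt{\rho\log(m/\delta)}$; combining with the $\eps I$ term proves the claim.

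The main difficulty is the circular dependence in the variance step: controlling $\sum_i\LL_{H_{i,t}}\preceq\LL_{G_t}\preceq1.1\LL_G$ needs the accumulated operator-norm error through round $t-1$ to already be small, which is exactly what the martingale bound is producing. As in \Cref{lem:bfs_guarantee}, this is resolved by an induction on $t$ whose hypothesis $0.9\LL_G\preceq\LL_{G_t}\preceq1.1\LL_G$ simultaneously licenses the transfer $\ER_G\le1.5\ER_{\Gs}$ and, via \Cref{fact:opnorm_precondition}, converts error measured against $\LL_{G_t}$ back to error against $\LL_G$. A secondary bookkeeping point is that distinct calls may run different numbers of outer loops and may interleave their rerouting steps on the shared tree $T$: the former is handled by padding with zero-variance rounds up to $k\tau$ (which does not change the variance sum), and the latter by linearity of \ROalgo\ together with the inverse-polynomial $\ell_1$ bound on each call's residual demand from \eqref{eq:approx_problem}, so the accumulated rounding error remains $O(\eps I)$. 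The genuinely new ingredient relative to \Cref{lem:bfs_guarantee} is the realization that using the $\bw_i$-independent variance bound $150\rho$ \emph{together with} edge-disjointness makes the whole batch behave, variance-wise, like one call.
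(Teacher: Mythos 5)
Your proposal is correct and follows essentially the same approach as the paper's proof: treat the $\tau$ reweighting rounds in parallel across the edge-disjoint clusters, observe that the uniform cluster parameter $\rho$ together with $\sum_{i}\LL_{H_{i,t}}\preceq\LL_{G_t}$ keeps the combined variance at the level of a single call, then run the matrix-Azuma argument once over the pooled martingale while union-bounding the other per-call failure events over $I$. Your write-up is considerably more explicit than the paper's terse proof (which only gestures at setting $\vH$ to the union of the clusters) — in particular you correctly fill in the step of applying \Cref{lemma:variance} per piece and then summing, rather than applying it to the union directly, which is the right reading of what the paper intends.
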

\begin{proof}
The claim is that we do not incur an $I$ factor overhead in the operator norm error on the first term in the spectral error, and also do not incur an $I$ factor overhead on the $|V|$ term in the runtime. Note that the bound came from combining the variance bound in Lemma~\ref{lemma:variance} with the high-probability guarantee in Lemma~\ref{lem:matrix_azuma}. By treating each of the at most $\tau$ reweightings applied by Algorithm~\ref{alg:basic_fast_sparsify} in parallel across the edge-disjoint clusters, the combined variance in the sense of Lemma~\ref{lemma:variance}, where $\vH$ is set to the union of all clusters, is still bounded. The failure probability is by a union bound over $I$ calls. For the runtime, note that we can compute the degree imbalances in Line~\ref{line:dd_define} for all clusters simultaneously, and route them on $T$ in time $O(|V|)$ per iteration.
\end{proof}

\subsection{Sparsifying an ER decomposition}\label{ssec:twophase}

In this section, we state and analyze $\PSalgo$, which is a two-phase application (with different parameters) of $\BFSalgo$ to components of an ER decomposition.

\begin{algorithm2e}[ht!]\label{alg:partition_sparsify}
\caption{$\PSalgo(\{\vG_i\}_{i \in [I]}, \vG, T, \delta, \eps, W)$}
\DontPrintSemicolon
\codeInput $\{\vG^{(i)}\}_{i \in [I]}$, subgraphs of simple $\vG = (V, E, \ww)$ with $\max_{e \in \supp(\ww)} \ww_e \le W$, and such that $\{G^{(i)} \defeq \und(\vG^{(i)})\}_{i \in [I]}$ are a $(\rho, 2, J)$-ER decomposition of $G \defeq \und(\vG)$, $T$ a tree subgraph of $G$ with $\min_{e \in E(T)} \ww_e \ge 1$, $\delta, \eps \in (0, \frac 1 {100})$ \;
$m \gets E(\vG)$, $n \gets V(\vG)$, $R \gets \emptyset$\;
\For{$i \in [I]$}{
$\vH \gets \vG^{(i)}$, $\hm \gets |E(\vH)|$, $\hn \gets |V(\vH)|$, $\wws \gets \ww$\;
\If{$\hm \ge 40\hn$}{
$\ww_0 \gets \ww$, $\vG_0 \gets \vG$, $\vH_0 \gets \vH$, $\ell_1 \gets \frac{1}{2\log^{2}(\frac{nW}{\eps})}$, $\tau_1 \gets \log(\frac 2 {\ell_1})$\;
\For{$0 \le t < \tau_1$}{\label{line:while_ps_1_start}
$\ww_{t + 1} \gets \BFSalgo(\vH_t, \vG_t, \wws, \ell_1, \frac{\delta}{4I\tau_1}, \frac{\eps}{4I\tau_1}, E(\vH), T)$\;
$\vG_{t + 1} \gets (V, E, \ww_{t + 1})$, $\vH_{t + 1} \gets (V(\vH), E(\vH), [\ww_{t + 1}]_{E(\vH)})$
\label{line:while_ps_1_end}\;
}
$F \gets \{e \in E(\vH) \mid [\ww_t]_e \le \ell_1 [\wws]_e\}$\;\label{line:F_def}
$\ww_0 \gets \ww_t$, $\vG_0 \gets \vG_t$, $\vH_0 \gets \vH$, $\ell_2 \gets \frac{\eps}{4nmW}$, $\tau_2 \gets \log(\frac 2 {\ell_2})$\;
\For{$0 \le t < \tau_2$}{\label{line:while_ps_2_start}
$\ww_{t + 1} \gets \BFS(\vH_t, \vG_t, \wws, \ell_2, \frac{\delta}{4I\tau_2}, \frac{\eps}{4I\tau_2}, F, T)$\;
$\vG_{t + 1} \gets (V, E, \ww_{t + 1})$, $\vH_{t + 1} \gets (V(\vH), E(\vH), [\ww_{t + 1}]_{E(\vH)})$\;
\label{line:while_ps_2_end}
}
$R \gets R \cup \{e \in E(\vH) \mid [\ww_t]_e \le \frac \eps {4nm}\}$, $\ww \gets \ww_t$\;
}
}
\Return{$\vG' \gets (V, E, \ww_{E \setminus R} + $\RO{$\vG, \ww_{R}, T)$}}
\;
\end{algorithm2e}

We use the following scalar concentration inequality to bound the runtime with high probability.

 \begin{lemma}\label{lem:sum_of_geom}
	Let $\delta \in (0, 1)$, and let $\{Z_i\}_{i \in [I]} \subset \N$ be distributed as $Z_i \mid \{Z_j\}_{j < i} \sim \textup{Geom}(p_i)$ where $p_i \in [\half, 1]$ for all $i \in [I]$. Then for $S \defeq \sum_{i \in [I]} Z_i$,
	\[\Pr\Brack{S > 5\Par{I + \log\Par{\frac 1 \delta}}} \le \delta.\]
\end{lemma}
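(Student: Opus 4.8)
The plan is a standard exponential-moment (Chernoff) argument, using the tower property to accommodate the adaptive dependence of each parameter $p_i$ on the history. Fix $\lambda \defeq \log \frac 3 2 > 0$, chosen so that $e^\lambda = \frac 3 2 < 2$; this guarantees that the moment generating function of every $\textup{Geom}(p)$ with $p \ge \frac 1 2$ is finite at $\lambda$, since then $(1 - p)e^\lambda \le \frac 3 4 < 1$.

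First I would record, for a single $Z \sim \textup{Geom}(p)$ supported on $\N$, the identity $\E[e^{\lambda Z}] = \frac{p e^\lambda}{1 - (1 - p)e^\lambda}$, obtained by summing the geometric series. A one-line derivative check (writing $q \defeq 1 - p$, the derivative in $q$ has numerator $e^\lambda(e^\lambda - 1) > 0$) shows this quantity is nonincreasing in $p$ over $[\frac 1 2, 1]$, hence maximized at $p = \frac 1 2$, where it equals $\frac{e^\lambda}{2 - e^\lambda} = 3$. Since $p_i$ is measurable with respect to $\mathcal{F}_{i - 1} \defeq \sigma(Z_1, \dots, Z_{i - 1})$ and lies in $[\frac 1 2, 1]$, this yields $\E[e^{\lambda Z_i} \mid \mathcal{F}_{i - 1}] \le 3$ almost surely.

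Next I would peel off the factors of $e^{\lambda S} = \prod_{i \in [I]} e^{\lambda Z_i}$ one at a time: conditioning on $\mathcal{F}_{I - 1}$ bounds the conditional expectation of the last factor by $3$, and induction on $I$ gives $\E[e^{\lambda S}] \le 3^I$. Markov's inequality then gives $\Pr[S > t] \le e^{-\lambda t}\E[e^{\lambda S}] \le 3^I e^{-\lambda t}$ for every $t$. Substituting $t = 5(I + \log \frac 1 \delta)$ reduces the claim to the numerical inequality $I\Par{\log 3 - 5 \log \tfrac 3 2} \le \Par{5 \log \tfrac 3 2 - 1}\log \tfrac 1 \delta$, which holds since the left-hand side is nonpositive (as $\log 3 \approx 1.10 < 2.03 \approx 5 \log \frac 3 2$) and the right-hand side is nonnegative (as $5 \log \frac 3 2 > 1$ and $\delta \in (0, 1)$).

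I do not anticipate a genuine obstacle; the only points needing care are (i) picking $\lambda$ strictly below $\log 2$ so the geometric moment generating function converges uniformly over $p \in [\frac 1 2, 1]$, and (ii) verifying that the constant $5$ in the statement dominates both $\frac{\log 3}{\log(3/2)} \approx 2.71$ and $\frac{1}{\log(3/2)} \approx 2.47$. Any constant exceeding $e$ would suffice; $5$ is taken for a cleaner bound.
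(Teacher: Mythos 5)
Your proof is correct and follows essentially the same route as the paper's: pick $\lambda = \log\tfrac32$, bound $\E[e^{\lambda S}] \le \bigl(\tfrac{e^\lambda}{2 - e^\lambda}\bigr)^I = 3^I$, and finish with Markov. The one difference is how the adaptive dependence of $p_i$ on the past is handled: the paper invokes a coupling to i.i.d.\ $\textup{Geom}(\tfrac12)$ variables that dominate the $Z_i$ and then computes a product MGF, while you bound each conditional MGF by its worst case $p_i = \tfrac12$ and peel factors off via the tower property — these are two ways of formalizing the same monotonicity, and your version is arguably cleaner since it avoids constructing the coupling explicitly. Your final numerical check is also slightly more careful: you substitute $t = 5(I + \log\tfrac1\delta)$ with the natural log as the statement requires (and verify $5\log\tfrac32 > 1$), whereas the paper's displayed substitution uses $\log_2\tfrac1\delta$, which gives a \emph{larger} threshold and hence does not literally imply the stated bound; your derivation closes that small gap.
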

\begin{proof}
	It suffices to handle the case where $p_i = \half$ for all $i \in [I]$, since otherwise we can couple $Z_i$ to an instance of $\textup{Geom}(\half)$ which never exceeds $Z_i$. 
	Then we compute the moment generating function of $S$: for $\lam < \log(2)$, $\E \exp(\lambda S) = (\frac{\exp(\lambda )}{2 - \exp(\lambda)})^I$, so by Markov's inequality, for $t \defeq 5(I + \log_2(\frac 1 \delta))$,
	\[\Pr\Brack{S > t} < \exp\Par{-\lambda t}\Par{\frac{\exp(\lambda )}{2 - \exp(\lambda)}}^I = \Par{\frac 2 3}^t 3^I < \delta,\]
	where we use the choice $\lambda = \log(\frac 3 2)$ and substituted our choice of $t $.
\end{proof}

We now state our guarantee on Algorithm~\ref{alg:partition_sparsify} and provide its analysis.

\begin{lemma}\label{lem:ps_guarantee}
There is a universal constant $\cps$ such that if $\cps \cdot \rho\log(\frac {nW} {\delta\eps})\log^2\log(\frac{nW}{\eps}) \le 1$, \PS (Algorithm~\ref{alg:partition_sparsify}) returns $\vG' = (V, E, \ww')$ satisfying, with probability $\ge 1 - \delta$,
\begin{equation}\label{eq:ps_guarantee_new}
\begin{gathered}\BB_{\vG}^\top \ww' = \BB_{\vG}^\top \ww,\quad \nnz(\ww') \le \frac{31}{32} \nnz(\ww)+ \cps \cdot nJ,\\
\text{and }\normop{\LL_{G}^{\frac \dagger 2}\BB_{\vG}^\top\Par{\WW' - \WW}\HH_{\vG}\LL_G^{\frac \dagger 2}} \le \cps \sqrt{\rho\log\Par{\frac {nW} {\delta\eps}}}\log\log\Par{\frac{nW}{\eps}} + \eps.
\end{gathered}
\end{equation}
Moreover, $\max_{e \in E} \frac{\ww'_e}{\ww_e} \le \cps$. The runtime of $\PSalgo$ is
\[\bO\Par{|E|\log^2\Par{\frac{nW}{\delta\eps}}\log\Par{\frac{nW}{\eps}}}.\]
\end{lemma}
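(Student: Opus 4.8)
The plan is to track a sequence of intermediate graphs $\vG_0 = \vG, \vG_1, \ldots$ through the two \textbf{for} loops of Algorithm~\ref{alg:partition_sparsify} for each decomposition piece $\vG^{(i)}$, and show that the cumulative effect of all the $\BFSalgo$ calls is to make a constant fraction of the edges in each dense piece tiny, while the accumulated spectral error is controlled. First I would verify the hypothesis of Lemma~\ref{lem:bfs_guarantee} is met at each call: each $\vH^{(i)}_t$ is (by Definition~\ref{def:er_partition} and Definition~\ref{def:cluster}) a $(\bw, \rho)$-cluster in the appropriate $\vGs$, the weight-ratio and $\ell_1$-norm preconditions in \eqref{eq:wws_reqs} hold because no weight moves by more than a $(1+\eps_{\text{inner}})$-factor per call and only $\tau_1+\tau_2 = O(\log(nW/\eps))$ calls are made per piece (so a crude telescoping keeps all weights within the window $[\tfrac \ell 2 \wws, 60\wws]$), and the crucial smallness condition $\cbfs \alpha\rho\log(\tfrac m\delta)\le 1$ follows from the global hypothesis $\cps \rho\log(\tfrac{nW}{\delta\eps})\log^2\log(\tfrac{nW}\eps)\le 1$ with $\cps$ large enough, using that $\alpha = \tfrac{\norm{\ww_F}_1}{|F|\bw} = O(1)$ on the relevant edge set (weights in $E(\vH)$ stay within a constant factor of $\bw$ by the cluster property after removing $L_t$).

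Next, for the first phase (Lines~\ref{line:while_ps_1_start}--\ref{line:while_ps_1_end}): I claim $\tau_1 = \log(2/\ell_1)$ repetitions of $\BFSalgo$ with target $\ell_1 = \tfrac 1{2\log^2(nW/\eps)}$ drive a constant fraction of edges of $E(\vH)$ below $\ell_1 \wws$. The argument is the potential-function dichotomy from Item~\ref{item:bfs_3}: each $\BFSalgo$ call either already leaves $\ge \tfrac14|F|$ edges small, or it decreases $\sum_{e\in E(\vH)}\log \ww_e$ by at least $|E(\vH)|$; since each weight starts at most a constant factor above $\bw$ and cannot drop below $\ell_1\wws$ without being frozen, after $O(\log(1/\ell_1))$ such decrements a constant fraction must have hit the floor, by an averaging/Markov argument on the log-weights. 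This yields the set $F$ on Line~\ref{line:F_def} with $|F| \ge c|E(\vH)|$ for a universal constant $c$ (I would aim for $|F| \ge \tfrac{|E(\vH)|}{4}$ to match the precondition of $\BFSalgo$ in the second phase). The second phase (Lines~\ref{line:while_ps_2_start}--\ref{line:while_ps_2_end}) then runs $\BFSalgo$ restricted to $F$ with the much smaller target $\ell_2 = \tfrac{\eps}{4nmW}$; by the same dichotomy, $O(\log(1/\ell_2)) = O(\log(nmW/\eps))$ calls drive $\ge\tfrac14|F|$ of these edges below $\tfrac{\eps}{4nm}$, landing them in $R$. Summing over all pieces $i$ with $\hm\ge 40\hn$, the total number of edges added to $R$ is $\ge \tfrac{c}{4}\sum_i |E(\vG^{(i)})| \ge \tfrac{c}{8}\cdot\tfrac{m}{2}$ by the edges-cut property (Item~\ref{item:effres:partition:cut}) and the fact that non-dense pieces contribute $O(\sum \hn) = O(nJ)$ edges total (absorbed into the $\cps nJ$ term); choosing constants gives $\nnz(\ww')\le\tfrac{31}{32}\nnz(\ww)+\cps nJ$ after deleting $R$ and routing its residual through $T$ via $\ROalgo$.

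For the spectral error: each $\BFSalgo$ call contributes $\cbfs\sqrt{\alpha\rho\log(\tfrac m\delta)}+\eps_{\text{inner}}$ by Item~\ref{item:bfs_4}, but the key point — exactly as in Corollary~\ref{cor:multiple_cluster} — is that across the edge-disjoint pieces $\{\vG^{(i)}\}$ the $\sqrt{\rho}$-terms do \emph{not} accumulate an $I$ factor, because the underlying matrix-variance bound of Lemma~\ref{lemma:variance} is additive over disjoint edge sets and Lemma~\ref{lem:matrix_azuma} is applied once to the union. Thus within a single round (fixed $t$ across the two phases) the $\sqrt\rho$-contribution over all pieces is a single $\cbfs\sqrt{\rho\log(\tfrac{mI}{\delta})}$; summing over the $\tau_1+\tau_2 = O(\log(nW/\eps))$ rounds gives $O(\sqrt{\rho\log(\tfrac{nW}{\delta\eps})}\log\log(\tfrac{nW}\eps))$ after the improvement I describe below, plus $\eps$ from the choices $\eps_{\text{inner}} = \tfrac{\eps}{4I\tau_j}$ telescoping, plus the $\ROalgo$ rounding error which is $O(nm\cdot\tfrac{\eps}{nmW}\cdot W) = O(\eps)$ by Lemma~\ref{lemma:rounding} applied to $\ww_R$ (each entry $\le\tfrac\eps{4nm}$, so $\norm{\ww_R}_1\le\tfrac\eps{4m}\cdot m$ and the spectral error is $\le n\norm{\ww_R}_1$, which is too big — so here I must be more careful and instead use that $|R|\le m$ with each entry $\le\tfrac\eps{4nm}$ gives $\norm{\ww_R}_1\le\tfrac\eps{4n}$ hence error $\le n\cdot\tfrac\eps{4n} = \tfrac\eps4$). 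The degree-preservation $\BB_{\vG}^\top\ww' = \BB_{\vG}^\top\ww$ holds because every $\BFSalgo$ call preserves it (Item~\ref{item:bfs_1}) and $\ROalgo$ re-balances the final residual. The weight-ratio bound $\tfrac{\ww'_e}{\ww_e}\le\cps$ follows from the per-call $60$-factor window and the $O(\log(nW/\eps))$ bound on the number of calls touching any fixed edge — wait, that would give $60^{O(\log)}$, so instead I use Item~\ref{item:bfs_2}'s reference-vector window directly: $\wws$ is reset only at the start of each piece, so within a piece $\ww'_e\in[\tfrac{\ell_2}2\wws_e, 60\wws_e]$, and since only one piece contains $e$, $\tfrac{\ww'_e}{\ww_e}\le 60 = \cps$. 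The runtime is $\sum_i\sum_{\text{phases}}\tau_j$ calls to $\BFSalgo$, each costing $\bO(|E(\vH^{(i)})|\log(\tfrac n{\delta\eps\ell})\log(\tfrac n\delta)\cdot Z + |V|)$ with $Z\sim\textup{Geom}(p)$, $p\ge\tfrac12$; summing the $|E(\vH^{(i)})|$ over disjoint pieces gives one $|E|$ factor per round, $O(\log(nW/\eps))$ rounds gives $\log(nW/\eps)$, two more log-factors from the per-call cost, and the $\sum Z$ terms concentrate via Lemma~\ref{lem:sum_of_geom}, while the $|V|$ terms are handled per-round simultaneously across pieces as in Corollary~\ref{cor:multiple_cluster}, yielding the claimed $\bO(|E|\log^2(\tfrac{nW}{\delta\eps})\log(\tfrac{nW}\eps))$.

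The main obstacle I expect is the \emph{error telescoping across the two-phase schedule} and in particular squeezing out the $\polylog$ loss that the naive accounting would incur: phase two runs $\tau_2 = O(\log(nmW/\eps))$ calls of $\BFSalgo$, and if each contributed a full $\sqrt{\rho\log(\tfrac m\delta)}$ one would get an error of $\sqrt\rho\cdot\log(nmW/\eps)$, far worse than the target $\sqrt{\rho\log(\tfrac{nW}{\delta\eps})}\log\log(\tfrac{nW}\eps)$. The resolution — and the reason the first phase exists — is that after phase one a constant fraction of edge weights have already shrunk by a $\polylog(n)$ factor, so that the $\alpha$ parameter ($= \tfrac{\norm{\ww_F}_1}{|F|\bw}$) governing phase-two calls is itself down by a $\polylog$ factor; since the error scales as $\sqrt\alpha$, each phase-two call contributes only $\sqrt{\rho\log(\tfrac m\delta)/\polylog(n)}$, and summing the $O(\log(nmW/\eps))$ of them recovers just $\sqrt{\rho\log(\tfrac{nW}{\delta\eps})}\cdot O(\log\log(\tfrac{nW}\eps))$. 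Making this rigorous requires carefully re-establishing that the cluster parameter $\rho$ is effectively replaced by $\alpha\rho$ with $\alpha = O(\ell_1^{-1})^{-1}\cdot(\text{const}) = O(\ell_1) = O(1/\log^2(nW/\eps))$ on the edge set $F$ passed to phase two — i.e. that the edges in $F$ genuinely have average weight a $\polylog$-factor below $\bw$ and this persists through the phase-two updates (it does, because $\BFSalgo$ with target $\ell_2 \ll \ell_1$ only decreases weights further on $F$ until they are frozen). Threading this bookkeeping through the failure-probability union bounds (a $\poly(n)$ blow-up, absorbed by a constant-factor change in $\delta$) and verifying the cluster/precondition invariants survive the $O(\log(nW/\eps))$ sequential $\BFSalgo$ calls per piece is the technical heart of the lemma.
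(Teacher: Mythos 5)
Your proposal follows essentially the same approach as the paper's own proof: both verify that the $\BFSalgo$ preconditions hold inductively, use the potential-function dichotomy of Item 3 of Lemma~\ref{lem:bfs_guarantee} to derive $|F|\ge\hm/4$ after phase one (and the $|R|$ guarantee after phase two), appeal to Corollary~\ref{cor:multiple_cluster} to avoid the $I$-factor in the $\sqrt\rho$-terms, and -- crucially -- track that the effective $\alpha$ drops to $O(1/\log^2(nW/\eps))$ in phase two because the edges of $F$ have weight $\le\ell_1\wws$, so that phase two's $\tau_2=O(\log(nW/\eps))$ calls cost only a constant times the per-call phase-one error; summing then yields the $\log\log$ factor, matching the paper's display $O\bigl(\tau_1\sqrt{\rho\log(m\tau_1/\delta)}+\tau_2\sqrt{\tfrac{\rho}{\log^2(nW/\eps)}\log(m\tau_2/\delta)}\bigr)$. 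You also correctly catch yourself on the $\ROalgo$ error ($n\norm{\ww_R}_1\le\eps/4$) and on the weight-ratio bound (using the fixed reference vector $\wws$ and edge-disjointness of pieces, rather than compounding a $60$-factor per call). The remaining gaps in your sketch -- the exact contradiction bookkeeping for $|F|\ge\hm/4$, the $0.9\LL_{G_t}\preceq\LL_G\preceq1.1\LL_{G_t}$ invariant needed to treat each $\vG^{(i)}$ as a $(1.2\rho,2)$-cluster through the current $\LL_{G_t}$, and the fact that the failure probability is already budgeted explicitly via the $\delta/(4I\tau_j)$ parameters passed to $\BFSalgo$ rather than being absorbed post hoc -- are exactly the points the paper fills in, and your plan would reach them on execution.
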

\begin{proof}
Throughout the proof, condition on all calls to $\BFSalgo$ succeeding assuming their input conditions are met (i.e., the guarantees in Lemma~\ref{lem:bfs_guarantee} hold, with total spectral error controlled by Corollary~\ref{cor:multiple_cluster}), which gives a failure probability of $\frac \delta 2$. We claim that every $\vG_t$ used in calls to $\BFSalgo$ satisfies $0.9\LL_{G_t} \preceq \LL_{G} \preceq 1.1\LL_{G_t}$, where $G \defeq \und(\vG)$ for $\vG$ the original input to the algorithm, and $G_t \defeq \und(\vG_t)$. We defer the proof of this claim to the end.

Next, fix $i \in [I]$ and consider the $\tau_1$ loops of Lines~\ref{line:while_ps_1_start} to~\ref{line:while_ps_1_end}. In all calls to $\BFSalgo$, the conditions on $\wws$ are met by assumption (i.e., each $\vG^{(i)}$ is an ER decomposition piece with parameters $(1.2\rho, 2)$ in $\vG_t$, since we claimed $0.9\LL_{G_t} \preceq \LL_{G} \preceq 1.1\LL_{G_t}$). Moreover, $\BFSalgo$ is only called if $\hm \ge 40\hn$, and the conditions in \eqref{eq:wws_reqs} are preserved inductively by Lemma~\ref{lem:bfs_guarantee}, since the $\ell_1$ norm of the weights does not change by more than a $\frac{\eps}{4\tau_1}$ factor in each iteration. This shows that the $\tau_1$ loops of Lines~\ref{line:while_ps_1_start} to~\ref{line:while_ps_1_end} all have their input conditions met, so we may assume they succeed. We claim that in this case, $F$ on Line~\ref{line:F_def} must have $|F| \ge \frac{\hm} 4$. To see this, suppose $|F| < \frac{\hm} 4$, which means the second part of Item~\ref{item:bfs_3} in Lemma~\ref{lem:bfs_guarantee} holds for all iterations $0 \le t < \tau_1$. However, since Lemma~\ref{lem:bfs_guarantee} also guarantees
\[\sum_{e \in E(\vH)} \log\Par{\frac{[\ww_\tau]_e}{\ww_e}} > -\hm\log\Par{\frac 2 \ell} = -\hm \tau_1,\]
we arrive at a contradiction after $\tau_1$ iterations, so the first part of Item~\ref{item:bfs_3} must have held at some point. With this size bound (showing $F$ is a valid input), an analogous argument shows that after the $\tau_2$ loops in Lines~\ref{line:while_ps_2_start} to~\ref{line:while_ps_2_end} have finished, at least $\frac{\hm}{16}$ edges are added to $R$. Observe that each component $\vG^{(i)}$ with $\hm_i$ edges and $\hn_i$ vertices either has $\frac 1 {16}$ of its edges added to $R$ or $\hm_i \le 40\hn_i$, and further $\sum_{i \in [I]} \hn_i \le nJ$. Since all edges from $R$ are zeroed out in the final weighting $\ww'$, and at most half the edges do not belong to any $\vG^{(i)}$, this gives the bound on $\nnz(\ww')$. Similarly, if all calls to $\BFSalgo$ succeed, since applying $\RO$ at the end of the algorithm preserves degrees, recursively applying Item~\ref{item:bfs_1} in Lemma~\ref{lem:bfs_guarantee} shows that $\BB_{\vG}^\top \ww' = \BB_{\vG}^\top \ww$.

It remains to show the spectral error bound. Observe that we have $\alpha = 2$ in the first $\tau_1$ calls to $\BFSalgo$ for each cluster (in Lines~\ref{line:while_ps_1_end} to~\ref{line:while_ps_1_end}), and $\alpha = \frac{1}{\log^2(\frac{nW}{\eps})}$ in the last $\tau_2$ calls (in Lines~\ref{line:while_ps_2_start} to~\ref{line:while_ps_2_end}). Therefore, taking note of Corollary~\ref{cor:multiple_cluster} and since $I \le m$, the spectral error in all intermediate iterations across all decomposition pieces is bounded by
\[O\Par{\sqrt{\rho \log\Par{\frac {m\tau_1} \delta}} \cdot \tau_1 + \sqrt{\frac{\rho}{\log^2\Par{\frac{nW}{\eps}}} \log\Par{\frac{m\tau_2}{\delta}}}\cdot \tau_2} = O\Par{\sqrt{\rho\log\Par{\frac {mW}{\delta\eps}}}\log\log\Par{\frac{nW}{\eps}} }.\]
Additionally, there is an $\frac{\eps}{4\tau_1 I} \cdot \tau_1 I + \frac{\eps}{4\tau_2 I} \cdot \tau_2 I$ additive error term which comes from Corollary~\ref{cor:multiple_cluster}, which is bounded by $\frac{2\eps}{3}$ after accounting for the change in the graph Laplacian (i.e., by Fact~\ref{fact:opnorm_precondition}).
For appropriate $\cps$, this both proves the desired spectral error bound by the triangle inequality, as well as the claimed $0.9\LL_{G_t} \preceq \LL_{G} \preceq 1.1\LL_{G_t}$ throughout the algorithm by Fact~\ref{fact:dirclose_undirclose}, which again implies that $G_t$ is connected under our assumption that $G$ is connected (see discussion in Section~\ref{sec:prelims}). Finally, applying $\ROalgo$ incurs at most $\frac \eps 3$ spectral error through the final graph by Lemma~\ref{lemma:rounding}, which is at most $\eps$ spectral error through the original graph by Fact~\ref{fact:opnorm_precondition}. The guarantee on the weight increase is clear as we only modify weights within clusters, and Item~\ref{item:bfs_2} of Lemma~\ref{lem:bfs_guarantee} shows no edge weight grows by more than a factor of $60$. This concludes the correctness proof.

For the runtime, the total number of times we call $\BFSalgo$ on each piece of the ER decomposition is $\tau_1 + \tau_2 = O(\log\frac{nW}{\eps})$. Thus, Lemma~\ref{lem:sum_of_geom} shows that with probability $\le \frac \delta {2}$, the number of times Lines~\ref{line:while_bfs_start} to~\ref{line:while_bfs_end} runs is $O(\log\frac{nW}{\delta\eps})$, for all decomposition pieces simultaneously. This gives the first term in the runtime via Lemma~\ref{lem:bfs_guarantee}, as all decomposition pieces have disjoint edges. For the second term in the runtime, it suffices to note that Lines~\ref{line:dd_define} to~\ref{line:degree_fixing} can be applied in parallel (after summing the degree imbalances $\dd$ in Line~\ref{line:dd_define}) for all decomposition pieces which terminate in a given run of Lines~\ref{line:while_bfs_start} to~\ref{line:while_bfs_end}, so we do not pay a multiplicative overhead of $|I|$ on the runtime of Lemma~\ref{lemma:rounding}. The total failure probability is via a union bound over Lemmas~\ref{lem:bfs_guarantee} and~\ref{lem:sum_of_geom}.
\end{proof}

\subsection{Complete sparsification algorithm}\label{ssec:outer_algo}

We now provide our complete near-linear time Eulerian sparsification algorithm. Our algorithm iteratively applies the ER decomposition from Proposition~\ref{prop:er_partition}, sparsifies the decomposition using Algorithm~\ref{alg:partition_sparsify}, and calls Algorithm~\ref{alg:round} on small-weight edges to maintain a bounded weight ratio. The following theorem gives a refined version of \Cref{thm:fastsparsify}.

\begin{algorithm2e}[ht!]
\caption{$\FSalgo(\vG, \epsilon, \delta)$}
\label{alg:fastsparse}
\DontPrintSemicolon
\codeInput Eulerian $\vG = (V, E, \ww)$ with $\ww_e \in [1, U]$ for all $e \in E$, $\eps, \delta \in (0, 1)$  \;
$n \gets |V|$, $m \gets |E|$\;
$T \gets $ arbitrary spanning tree of $G \defeq \und(\vG)$, $\hE \gets E \setminus E(T)$\;
$R \gets 6\log n$, $U_{\max} \gets U \cdot \cps^R$ for $\cps$ in Lemma~\ref{lem:ps_guarantee} \;
$t \gets 0$, $\ww_0 \gets \ww$\;
\While{$t < R$ $\mathbf{and}$ $\nnz([\ww_t]_{\hE}) > n\log(n)\log(\frac{32R^2mnU_{\max}}{\delta\eps})\log^2\log(\frac{32RmnU_{\max}}{\eps})\cdot \frac{2^{22}\cps^2}{\eps^2}$} {\label{line:while_start_fs}
$\vG_t \gets (V, E, \ww_t)$, $G_t \gets \und(\vG_t)$\;
$S \gets \ERPalgo([G_t]_{\hE}, 2, \frac \delta {2R})$  \Comment*{See \Cref{prop:er_partition}.}
$\vG'_{t} \defeq (V, E, \ww'_t) \gets \PSalgo(S, \vG_t, T, \frac{\delta}{2R}, \frac{\eps}{4R}, U_{\max})$ \;
$D \gets \{e \in \hE \mid [\ww'_t]_e \le \frac{\eps}{4mn}\}$\;\label{line:d_define}
$\ww_{t + 1} \gets [\ww'_{t}]_{E \setminus D} + \ROalgo(\vG'_{t}, [\ww'_{t}]_D, T)$\;\label{line:round_fs}
$t \gets t + 1$\;\label{line:while_end_fs}
}
\Return{$\vH \gets (V, E, \ww_t)$}\;
\end{algorithm2e}

\begin{restatable}{theorem}{restatefastsparsifydetailed}
    \label{thm:fastsparsifydetailed}
    Given Eulerian $\vG = (V, E, \ww)$ with $|V| = n$, $|E| = m$, $\ww \in [1, U]^E$ and $\eps, \delta \in (0, 1)$, $\FSalgo$ (Algorithm~\ref{alg:fastsparse}) returns Eulerian $\vH$ such that with probability $\ge 1 - \delta$, $\vH$ is an $\eps$-approximate Eulerian sparsifier of $\vG$, and
    \[
       |E(\vH)| = 
        O\Par{\frac{n}{\eps^2}\log(n)\log\Par{\frac{nU}{\delta}}\log^2\log\Par{nU}},\;
        \log\Par{\frac{\max_{e \in \supp(\ww')}\ww'_e}{\min_{e \in \supp(\ww')}\ww'_e}} = O\Par{\log\Par{nU}}.
        \]
    The runtime of $\FSalgo$ is $\bO\Par{m\log^2\Par{\frac{nU}{\delta}}\log\Par{nU}}$.
\end{restatable}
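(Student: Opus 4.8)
The plan is to run the inductive argument from the proof of Theorem~\ref{thm:existentialdetailed}, but with the near-linear-time primitives $\ERPalgo$ (Proposition~\ref{prop:er_partition}) and $\PSalgo$ (Lemma~\ref{lem:ps_guarantee}) replacing the brute-force $\ESOalgo$. As in Section~\ref{sec:prelims} I assume $G \defeq \und(\vG)$ is connected; I also assume $\eps \ge n^{-c}$ for a suitable constant $c$, since otherwise the target sparsity already exceeds $m$ and returning $\vG$ is valid, and this makes $\log\frac{1}{\eps} = O(\log n)$ and $U_{\max} = U\cps^{R} = \poly(nU)$. Condition on the event that all at most $R$ calls to $\ERPalgo$ and $\PSalgo$ succeed, which by a union bound fails with probability $\le \delta$. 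Writing $\vG_t = (V,E,\ww_t)$, $G_t = \und(\vG_t)$, $\hm_t = \nnz([\ww_t]_{\hE})$, and letting $J_{\max} = O(\log(nU))$ be the piece bound from Proposition~\ref{prop:er_partition} with $r = 2$, $W = U_{\max}$, I prove by induction on $t$ (while the loop at Line~\ref{line:while_start_fs} runs) that: (i) $\BB_{\vG}^\top \ww_t = \vzero_V$; (ii) $\hm_t \le (\tfrac{31}{32})^{t} m + O(nJ_{\max})$; (iii) every nonzero entry of $\ww_t$ lies in $[\tfrac{\eps}{4mn},\, U_{\max}]$ and $\min_{e \in E(T)}[\ww_t]_e \ge \tfrac12$; and (iv) $\normsop{\LL_G^{\dagger/2}\BB_{\vG}^\top(\WW_t - \WW)\HH_{\vG}\LL_G^{\dagger/2}} \le 2\cps\sum_{s=0}^{t-1}\sqrt{\rho_s\log(\tfrac{nU_{\max}}{\delta\eps})}\,\log\log(\tfrac{nU_{\max}}{\eps}) + \tfrac{t\eps}{4R}$, where $\rho_s = \Theta(\tfrac{n\log n}{\hm_s})$ is the ER-diameter parameter of the iteration-$s$ decomposition (the decomposition of the non-tree subgraph $[G_s]_{\hE}$ is also valid input for $\PSalgo$ on $\vG_s$, since effective resistances through $G_s$ are no larger than through $[G_s]_{\hE}$ and covering a constant fraction of the non-tree edges is what drives the sparsity progress).

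For the inductive step I first check the hypotheses of $\ERPalgo$ and $\PSalgo$. The binding one is $\cps\cdot\rho_t\log(\tfrac{nU_{\max}}{\delta\eps})\log^2\log(\tfrac{nU_{\max}}{\eps}) \le 1$: since the loop has not yet exited, $\hm_t$ exceeds the stated threshold $\Theta(\tfrac{n\log n}{\eps^2}\cps^2\log(\tfrac{nU_{\max}}{\delta\eps})\log^2\log(\tfrac{nU_{\max}}{\eps}))$, so $\rho_t = \Theta(\tfrac{n\log n}{\hm_t})$ is $O(\tfrac{\eps^2}{\cps^2\log(\tfrac{nU_{\max}}{\delta\eps})\log^2\log(\tfrac{nU_{\max}}{\eps})})$ and the $2^{22}\cps^2$ factor in the threshold is chosen precisely to absorb the constants. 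Invoking Lemma~\ref{lem:ps_guarantee} (whose statement already folds in Corollary~\ref{cor:multiple_cluster}) gives $\BB_{\vG}^\top\ww'_t = \BB_{\vG}^\top\ww_t$, $\nnz(\ww'_t)\le\tfrac{31}{32}\nnz(\ww_t)+\cps nJ_{\max}$, a weight blowup of at most $\cps$, and per-iteration error $\le \cps\sqrt{\rho_t\log(\tfrac{nU_{\max}}{\delta\eps})}\log\log(\tfrac{nU_{\max}}{\eps}) + \tfrac{\eps}{4R}$ in the $\LL_{G_t}$-norm. The $D$-step and $\ROalgo$ call on Lines~\ref{line:d_define}--\ref{line:round_fs} only delete non-tree edges and reroute a total weight $\le\tfrac{\eps}{4n}$ through $T$, which by Lemma~\ref{lemma:rounding} preserves $\BB_{\vG}^\top\ww$, moves tree weights by $o(1)$, and adds only $o(\tfrac{\eps}{R})$ further error; with $\hm_{t+1} \le \nnz(\ww'_t) \le \tfrac{31}{32}(\hm_t+n)+\cps nJ_{\max}$ this establishes (i), (ii), (iii) at $t+1$. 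For (iv), the crux is that $\hm_s$ decays geometrically, so $\rho_s$ \emph{grows} geometrically, so $\sum_{s<t}\sqrt{\rho_s}$ is a geometric series dominated by its last term $\sqrt{\rho_{t-1}} = O(\tfrac{\eps}{\cps\sqrt{\log(\cdot)}\log\log(\cdot)})$ (using that $\hm_{t-1}$ is at least the threshold up to constants); hence the first summand of (iv) is $O(\eps)$ with a small constant, and together with $\tfrac{t\eps}{4R}\le\tfrac{\eps}{4}$ the total $\LL_{G_t}$-norm error stays below $\tfrac{\eps}{2}$. Fact~\ref{fact:dirclose_undirclose} then forces $\tfrac12\LL_G\preceq\LL_{G_t}\preceq\tfrac32\LL_G$ (so $G_t$ stays connected and $\ER_{G_t}$ matches $\ER_G$ up to a factor $2$, justifying the $\rho_t$ bounds used inside $\PSalgo$), and Fact~\ref{fact:opnorm_precondition} converts the fresh per-iteration bound from the $\LL_{G_t}$-norm to the $\LL_G$-norm up to a constant, closing the induction on (iv).

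It remains to assemble the conclusions and bound the runtime. The geometric decay in (ii) (starting from $\hm_0\le m\le n^2$) drives $\hm_t$ below the while-loop threshold within $R$ iterations, so the loop exits through the sparsity condition; then $|E(\vH)| \le \hm_t + n$ equals the threshold up to a constant, i.e., $O(\tfrac{n}{\eps^2}\log(n)\log(\tfrac{nU}{\delta})\log^2\log(nU))$ after discarding zero-weight edges. From (i), $\vH$ is Eulerian with $V(\vH)=V$ and $E(\vH)\subseteq E$; from (iii) its weight ratio is $\poly(nU)$; and from (iv), since $\vLL_{\vH}-\vLL_{\vG}=\BB_{\vG}^\top(\WW_t-\WW)\HH_{\vG}$, we get $\normsop{\LL_G^{\dagger/2}(\vLL_{\vG}-\vLL_{\vH})\LL_G^{\dagger/2}}\le\eps$, i.e., $\vH$ is an $\eps$-approximate Eulerian sparsifier per Definition~\ref{def:eulerian_sparsifier}. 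For the runtime, iteration $t$ costs $\bO(\hm_t\log(\tfrac{n}{\delta}) + n\log n\log(nU))$ for $\ERPalgo$ and $\bO(\hm_t\log^2(\tfrac{nU}{\delta\eps})\log(\tfrac{nU}{\eps}))$ for $\PSalgo$ (the graphs passed to them have $O(\hm_t)$ nonzero edges, since there are $\le n$ tree edges and $\hm_t\ge n$ throughout), plus $O(\hm_t+n)$ for the $D$-step and $\ROalgo$; summing the $\hm_t$ terms geometrically gives $\bO(m\cdot\polylog)$, and the $R$ copies of $O(n\log n\log(nU))$ sum to $\bO(n\log^2 n\log(nU)) = \bO(m\cdot\polylog)$ since $m\ge n-1$, so using $\eps\ge n^{-c}$ the total is $\bO(m\log^2(\tfrac{nU}{\delta})\log(nU))$.

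The main obstacle is item (iv): one must simultaneously (a) keep every ER-diameter parameter $\rho_t$ small enough for Lemma~\ref{lem:ps_guarantee}'s hypothesis — which is exactly what pins down the form (and the $2^{22}\cps^2$ constant) of the while-loop threshold — (b) exploit the geometric growth of $\rho_t$ so that the accumulated spectral error is governed by the single last term rather than by $R$ equal terms (the latter would cost a $\log n$ factor and inflate the sparsity past the claimed bound), and (c) maintain $\tfrac12\LL_G\preceq\LL_{G_t}\preceq\tfrac32\LL_G$ throughout, which is needed even to state the $\rho_t$ bounds and to transfer errors between the $\LL_{G_t}$- and $\LL_G$-norms via Facts~\ref{fact:dirclose_undirclose} and~\ref{fact:opnorm_precondition}. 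Everything else is bookkeeping with the black-box guarantees of $\ERPalgo$, $\PSalgo$, and $\ROalgo$.
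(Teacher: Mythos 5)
Your proposal is correct and takes essentially the same route as the paper's proof: condition on success of all $\le R$ calls to $\ERPalgo$ and $\PSalgo$, track a geometrically decaying sparsity $\hm_t$ (hence geometrically increasing $\rho_t$, so the cumulative spectral error is governed by the last term of a geometric series), transfer per-iteration $\LL_{G_t}$-norm errors to the $\LL_G$-norm via Fact~\ref{fact:dirclose_undirclose} and Fact~\ref{fact:opnorm_precondition}, use $U_{\max} = U\cps^R$ and $J_{\max} = O(\log(nU))$ to bound the piece count, and conclude the runtime by a geometric sum over $\hm_t$. Your explicit reduction to $\eps \ge n^{-c}$ (returning $\vG$ otherwise) is a tidy way to justify that the $\log\frac{1}{\eps}$ factors from Lemma~\ref{lem:ps_guarantee}'s runtime collapse into the stated $\bO(m\log^2(\frac{nU}{\delta})\log(nU))$; the paper implicitly relies on this but does not spell it out. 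Your observation that the ER decomposition of $[G_t]_{\hE}$ remains valid for $\PSalgo$ on the full $\vG_t$ (because effective resistances only shrink when passing to a supergraph) is also the right justification for that step, which the paper leaves implicit.
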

\begin{proof}
Throughout, condition on the event that all of the at most $R$ calls to $\ERPalgo$ and $\PSalgo$ succeed, which happens with probability $\ge 1 - \delta$. Because $\PSalgo$ guarantees that no weight grows by more than a $\cps$ factor in each call, $U_{\max}$ is a valid upper bound for the maximum weight of any edge throughout the algorithm's execution. Moreover, we explicitly delete any edge whose weight falls below $\frac{\eps}{4mn}$ throughout the algorithm in Line~\ref{line:d_define}, and these edges never appear in a call to $\ERPalgo$ again. Hence, $J_{\max} \defeq \log_2(\frac{32mnU_{\max}}{\eps})$ is a valid upper bound on the number of decomposition pieces ever returned by $\ERPalgo$, by Proposition~\ref{prop:er_partition}.

Next, note that under the given lower bound on $[\ww_t]_{\hE}$ in a given iteration (which is larger than $2\cps \cdot nJ_{\max}$), the sparsity progress guarantee in \eqref{eq:ps_guarantee_new} shows that the number of edges in each iteration is decreasing by at least a $\frac 1 {64}$ factor until termination. Since $m \le n^2$ and the algorithm terminates before reaching $n$ edges, $R$ is a valid upper bound on the number of iterations before the second condition in Line~\ref{line:while_start_fs} fails to hold, which gives the sparsity claim. Moreover, because the first term in the spectral error bound in \eqref{eq:ps_guarantee_new} decreases by a geometric factor of $1 - \frac 1 {256}$ in each round (as $\rho$ scales inversely in the current support size of $\ww_t$), the sum of all such terms contributes at most $256$ times the final contribution before termination. By plugging in the bound $\rho \le \frac{33n\log(n)}{m}$ from Proposition~\ref{prop:er_partition} with the lower bound on $m$ throughout the algorithm, the total contribution of these terms is at most $\frac \eps 4$. Similarly, the second additive term in \eqref{eq:ps_guarantee_new} contributes at most $\frac \eps 4$ throughout the $R$ rounds, and the rounding on Line~\ref{line:round_fs} also contributes at most $\frac \eps 4$ by Lemma~\ref{lemma:rounding}. Here we remark that once an edge is rounded on Line~\ref{line:round_fs}, it is removed from the support of $\ww_t$ for the rest of the algorithm. Adjusting these error terms by a $\frac 4 3$ factor (i.e., because of Fact~\ref{fact:dirclose_undirclose} which shows $\LL_{G_t}$ for $G_t \defeq \und(\vG_t)$ is stable throughout the algorithm, and Fact~\ref{fact:opnorm_precondition} which shows how this affects the error terms), we have the claimed spectral error guarantee. The sparsity bound follows again by explicitly removing any $e \in E$ where $[\ww_t]_e = 0$ from $\vH$. 

Finally, the runtime follows from combining Proposition~\ref{prop:er_partition} (which does not dominate), and Lemma~\ref{lem:ps_guarantee}. Here we note that we do not incur an extra logarithmic factor over Lemma~\ref{lem:ps_guarantee} because the edge count is a geometrically decreasing sequence (with constant ratio).
\end{proof}

\section{Applications}\label{sec:apps}
A direct consequence of our improved nearly-linear time Eulerian sparsifier in \Cref{thm:fastsparsifydetailed} is a significant improvement in the runtime of solving Eulerian Laplacian linear systems due to Peng and Song~\cite{PengS22}. In turn, combined with reductions in \cite{CohenKPPSV16}, our improved Eulerian system solver implies faster algorithms for a host of problems in directed graphs. We summarize these applications in this section. As a starting point, we state the reduction of~\cite{PengS22} from solving Eulerian Laplacian linear systems to sparsifying Eulerian graphs.
\newcommand{\cT}{\mathcal{T}}
\newcommand{\cS}{\mathcal{S}}
\begin{proposition}[Theorem~1.1, \cite{PengS22}]\label{prop:ps22}
Suppose there is an algorithm which takes in Eulerian $\vG = (V, E, \ww)$ with $n = |V|$, $m = |E|$, $\ww \in [1, U]^E$, and returns an $\eps'$-approximate Eulerian sparsifier with $\cS(n,U,\eps')$ edges with probability $\ge 1 - \delta$, in time $\cT(m,n,U,\eps', \delta)$. 
Then given Eulerian $\vG = (V, E, \ww)$ with $n = |V|$, $m = |E|$, $\ww \in [1, U]^E$, $\bb \in \R^V$, and error parameter $\eps \in (0,1)$, there is an algorithm running in time 
\begin{gather*}
O\Par{m\log\Par{\frac{nU}{\eps}} + \cT\Par{m, n, U, 1, \frac{\delta}{\log nU}}} \\
+ \bO\Par{\cT\Par{\cS\Par{n, U, 1}, n, U, 1, \frac{\delta}{\log nU}}\log(nU) + \cS(n, U, 1)\log(nU)\log\Par{\frac{nU} \eps}}
\end{gather*}
which returns $\xx \in \R^V$ satisfying, with probability $\ge 1 - \delta$,
\begin{equation}\label{eq:eulerian_accuracy}
\norm{\xx - \vLL_{\vG}^\dagger \bb}_{\LL_G} \leq \eps \norm{\vLL_{\vG}^\dagger \bb}_{\LL_G},\text{ where } G \defeq \und(\vG).
\end{equation}
\end{proposition}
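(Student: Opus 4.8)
The plan is to reduce solving the Eulerian system to $O(\log(nU))$ constant-accuracy Eulerian sparsifications, via two nested reductions after a normalization step. First, rescale $\vG$ so all weights and degrees lie in $[1,\poly(nU)]$ and work with a normalized directed Laplacian whose symmetric part is $\thalf\LL_G$ (a consequence of Fact~\ref{lemma:circeq}, using that the weight vector of an Eulerian graph is a circulation); this costs $O(m\log(nU/\eps))$ and perturbs the target norm by bounded factors. Next, invoke the hypothesized sparsifier once with $\eps'=\Theta(1)$ to obtain an Eulerian $\vH$ with $\cS\defeq\cS(n,U,1)$ edges and $\poly(nU)$ weights; by Fact~\ref{fact:dirclose_undirclose} together with the preconditioned iterative-refinement framework for directed Laplacians of~\cite{CohenKPPSV16}, a constant-accuracy Eulerian sparsifier is a constant-quality preconditioner for $\vLL_{\vG}$ in the $\LL_G$-norm once the symmetric part is handled by a near-linear-time undirected Laplacian solve (Proposition~\ref{prop:js21}). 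Thus $O(\log(nU/\eps))$ refinement steps — each using one $O(m)$-time product with $\vLL_{\vG}$, one near-linear undirected solve, and one \emph{constant-accuracy} solve with $\vLL_{\vH}$ — produce $\xx$ with $\|\xx-\vLL_{\vG}^\dagger\bb\|_{\LL_G}\le\eps\|\vLL_{\vG}^\dagger\bb\|_{\LL_G}$ (the additive $\log(nU)$ in the step count reflects the $\poly(nU)$ spread between $\|\bb\|$ and $\|\vLL_{\vG}^\dagger\bb\|$ that the refinement must first overcome). Everything then reduces to building, in $\bO(\log(nU)\cdot\cT(\cS,n,U,1,\delta/\log nU))$ time, a routine that solves $\vLL_{\vH}\xx=\cc$ to constant accuracy in $\bO(\cS\log(nU))$ time.

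For that sparse solver, construct a chain $\vH=\vH_0,\vH_1,\dots,\vH_K$ with $K=O(\log(nU))$, where $\vH_{i+1}$ is the output of the black-box sparsifier at accuracy $\eps'=\Theta(1)$ applied to a directed Laplacian obtained from $\vLL_{\vH_i}$ by one block-elimination step: pick a constant fraction $F_i$ of the vertices on which the principal block $\vLL_{\vH_i}[F_i,F_i]$ is strongly diagonally dominant (hence invertible to constant accuracy by $O(1)$ Jacobi steps), form the Schur complement onto the complement — which is again an Eulerian directed Laplacian, possibly dense, with $\poly(nU)$ weights — and sparsify it to get $\vH_{i+1}$. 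Geometric vertex shrinkage gives $K=O(\log nU)$, total chain edge count $\bO(\cS)$, and build cost $\bO(\log nU\cdot\cT(\cS,\ldots))$. To solve $\vLL_{\vH_i}\xx=\cc$: apply the exact block-elimination identity, reducing to one cheap solve with $\vLL_{\vH_i}[F_i,F_i]$ plus a solve with the Schur complement; replace the latter by a bounded number $b$ of inexact-preconditioned-refinement steps that use the recursively-built level-$(i{+}1)$ solver as preconditioner, re-boosting the constant-factor per-level Schur-complement error to constant accuracy. Choosing the elimination fraction large enough that the per-level shrinkage outruns the $b$-fold branching makes the recursion-tree work geometric, so one constant-accuracy solve through the chain costs $\bO(\cS\log(nU))$. (A squaring-based chain $\WW^{2^{k+1}}=(\WW^{2^k})^2$ is an alternative, but without a layered W-cycle the errors across $O(\log nU)$ levels force $1/\polylog$-accuracy sparsification rather than constant accuracy.)

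The hard part is everything forced by non-symmetry. The closeness notion \eqref{eq:eulerian_sparsifier_def} is one-sided/asymmetric, and the pseudoinverse of a directed Laplacian that is $\eps'$-close to $\LL_G$ can still have operator norm $\poly(nU)$, so every claim of the form ``$\vH_{i+1}$ is a good preconditioner for $\vLL_{\vH_i}$'s Schur complement'' must be routed through the symmetric part $\thalf\LL_G$ exactly as in~\cite{CohenKPPSV16} — this is why the error in \eqref{eq:eulerian_accuracy} is measured in $\|\cdot\|_{\LL_G}$ and why iteration counts carry a $\log(nU)$. One must also (i) establish that the Schur complement of an Eulerian Laplacian is Eulerian and that \eqref{eq:eulerian_sparsifier_def}-closeness is preserved along the chain's elimination–sparsification steps; (ii) prove that the inexact-preconditioned-refinement recursion converges for non-symmetric operators and is robust to the inner constant-accuracy solves, so that a black-box constant-accuracy sparsifier genuinely suffices at every level; (iii) verify the directed analogue of the strong-dominance step, that a constant fraction of vertices is always eliminable with $O(1)$-conditioned principal blocks; and (iv) track the $\poly(nU)$ weight blow-up under Schur complements, which pins $K=O(\log nU)$ and keeps every sparsifier call within the stated parameters. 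Summing the costs — one top-level sparsification $\cT(m,\ldots)$, $O(\log nU)$ chain sparsifications $\cT(\cS,\ldots)$, $\bO(\cS\log nU)$ per chain solve, and $O(\log(nU/\eps))$ outer refinements at $O(m)+\bO(\cS\log nU)$ each — yields the claimed running time.
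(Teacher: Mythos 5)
The paper does not actually prove Proposition~\ref{prop:ps22}; it is imported verbatim as Theorem~1.1 of \cite{PengS22}, used as a black box, so there is no in-paper proof to compare against.

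That said, your sketch is a faithful high-level reconstruction of the Peng--Song argument: reduce to constant-accuracy preconditioning via the framework of \cite{CohenKPPSV16} (routing the non-symmetric error through the symmetric part $\tfrac12 \LL_G$ so the $\|\cdot\|_{\LL_G}$ norm is the right one), then build a constant-accuracy sparse solver for the preconditioner by a chain of block elimination (Schur complement onto a strongly diagonally dominant block) interleaved with black-box constant-accuracy Eulerian sparsification, and finally account for costs as one top-level sparsification plus $O(\log nU)$ chain sparsifications on $\cS$-edge graphs plus $O(\log(nU/\eps))$ outer refinement steps. That accounting matches the stated runtime term by term. The caveat is that your final paragraph explicitly defers the four substantive lemmas --- (i) Eulerianity and closeness are preserved by Schur complement and sparsification, (ii) preconditioned refinement converges for the non-symmetric operator with constant-accuracy inner solves, (iii) a constant fraction of vertices with $O(1)$-conditioned principal blocks is always eliminable, and (iv) weight ratios stay $\poly(nU)$ along the chain --- and those are precisely the technical content of \cite{PengS22} (and of \cite{CohenKPPSV16, CohenKKPPRSV18} before it). So this reads as an accurate summary of the cited proof rather than a self-contained one. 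Two small points worth tightening if you wanted to flesh it out: the $\log(nU)$ factor in the outer iteration count is more naturally attributed to the condition-number-like quantity $\kappa(\LL_G)$ (and thus $\poly(nU)$) entering the Richardson initialization, not merely the spread between $\|\bb\|$ and $\|\vLL_{\vG}^\dagger\bb\|$; and the ``$b$-fold branching vs.\ shrinkage'' argument that keeps per-level work geometric is the step that in \cite{PengS22} is handled by a careful layered W-cycle analysis and needs the elimination fraction and the inner-solve accuracy to be coordinated --- your parenthetical correctly identifies this as the crux, but it is the part of the proof you would actually have to do.
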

Plugging \Cref{thm:fastsparsifydetailed} into Proposition~\ref{prop:ps22}, we obtain our faster solver for Eulerian Laplacians. The following corollary is a refined version of~\Cref{cor:eulsolver}.

\begin{restatable}[Eulerian Laplacian solver]{corollary}{restateeulsolverdetailed}
    \label{cor:eulsolverdetailed}
    Given Eulerian $\vG = (V,E,\ww)$ with $|V|=n, |E|=m, \ww \in [1,U]^E$, 
    $\bb \in \R^V$, and error parameter $\eps \in (0,1)$, 
    there is an algorithm running in time
    \[
    \bO\Par{m\log^2\Par{\frac{nU}{\delta}}\log\Par{\frac{nU}{\eps}} + n\log^2\Par{nU}\log^3\Par{\frac{nU}{\delta}}\log\Par{\frac{nU}{\eps}}}
    \]
    which returns $\xx \in \R^V$ satisfying, with probability $\ge 1 - \delta$,
    \[
    \norm{\xx - \vLL_{\vG}^\dagger \bb}_{\LL_G} \leq \eps \norm{\vLL_{\vG}^\dagger \bb}_{\LL_G},\text{ where } G \defeq \und(\vG).
    \]
\end{restatable}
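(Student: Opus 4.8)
The plan is to obtain \Cref{cor:eulsolverdetailed} as an immediate consequence of combining our nearly-linear time Eulerian sparsifier (\Cref{thm:fastsparsifydetailed}) with the black-box reduction of Peng and Song (\Cref{prop:ps22}), which reduces solving Eulerian Laplacian systems to repeatedly computing \emph{constant-accuracy} Eulerian sparsifiers. The only real work is to instantiate the abstract parameters $\cS$ and $\cT$ appearing in \Cref{prop:ps22} with the guarantees of \Cref{thm:fastsparsifydetailed} and then simplify the resulting runtime expression; the accuracy guarantee \eqref{eq:eulerian_accuracy} is inherited verbatim.

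First I would record the instantiation. \Cref{thm:fastsparsifydetailed} shows that $\FSalgo$, run on an Eulerian graph with $n$ vertices, $m$ edges, weights in $[1,U]$, and target accuracy the constant $1$ (as the reduction requires), produces with probability $\ge 1-\delta'$ an Eulerian sparsifier with
\[\cS(n, U, 1) = O\Par{n\log(n)\log\Par{\frac{nU}{\delta'}}\log^2\log(nU)} = \bO\Par{n\log(n)\log\Par{\frac{nU}{\delta'}}}\]
edges, in time $\cT(m,n,U,1,\delta') = \bO\Par{m\log^2(nU/\delta')\log(nU)}$; crucially, \Cref{thm:fastsparsifydetailed} also guarantees that the output weight ratio stays $\exp(O(\log nU))$, so the sparsifier is a valid input to the recursive calls inside the reduction. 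I would also note that the input hypotheses of \Cref{prop:ps22} (Eulerian, $\ww \in [1,U]^E$) match those of \Cref{thm:fastsparsifydetailed} exactly, so no preprocessing is needed.

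Next I would substitute these into the runtime bound of \Cref{prop:ps22}, which with inner failure parameter $\delta/\log(nU)$ reads
\[O\Par{m\log\Par{\frac{nU}{\eps}} + \cT\Par{m,n,U,1,\tfrac{\delta}{\log nU}}} + \bO\Par{\cT\Par{\cS(n,U,1),n,U,1,\tfrac{\delta}{\log nU}}\log(nU) + \cS(n,U,1)\log(nU)\log\Par{\frac{nU}\eps}}.\]
Plugging in $\cT$ and $\cS$ and using the elementary bounds $\log(nU\log(nU)/\delta) = O(\log(nU/\delta))$ together with $\log n \le \log(nU) \le \log(nU/\eps)$ to absorb lower-order terms (and $\polyloglog(nU)$ factors into $\bO$), the first line becomes $\bO\Par{m\log^2(nU/\delta)\log(nU/\eps)}$, while the two second-line terms become $\bO\Par{n\log(n)\log^3(nU/\delta)\log^2(nU)}$ and $\bO\Par{n\log(n)\log(nU/\delta)\log(nU)\log(nU/\eps)}$ respectively, both of which are dominated by $n\log^2(nU)\log^3(nU/\delta)\log(nU/\eps)$. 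Collecting the surviving terms yields exactly the claimed runtime; the final failure probability is $\delta$ by a union bound over the $O(\log nU)$ invocations of the sparsifier, each run with failure probability $\delta/\log(nU)$.

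Since every ingredient is already established, there is no genuine mathematical obstacle here; the one thing requiring care is the bookkeeping of nested logarithmic factors — in particular verifying that the $\log\log(nU)$ terms introduced by passing $\delta/\log(nU)$ into $\cS$ and $\cT$ are legitimately absorbed into the $\bO(\cdot)$ notation and into the stated exponents of $\log(nU/\delta)$ and $\log(nU/\eps)$, and that the loose upper bound $n\log^2(nU)\log^3(nU/\delta)\log(nU/\eps)$ really does dominate each individual term produced by the reduction.
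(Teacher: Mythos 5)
Your proposal is correct and matches the paper's own route exactly: the paper obtains Corollary~\ref{cor:eulsolverdetailed} precisely by plugging the guarantees of Theorem~\ref{thm:fastsparsifydetailed} (with constant accuracy $\eps'=1$) into the Peng--Song reduction of Proposition~\ref{prop:ps22} and simplifying the logarithmic factors as you do. Your bookkeeping of the resulting terms and the absorption of the $\delta/\log(nU)$ and $\polyloglog$ factors is accurate, so there is nothing to add.
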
    
We remark that there is a more precise runtime improving upon Corollary~\ref{cor:eulsolverdetailed} in the logarithmic terms when $\delta, \eps$ are sufficiently small or $U$ is sufficiently large,
but we state the simpler variant for the following applications and for readability purposes.
Plugging our primitive in Corollary~\ref{cor:eulsolverdetailed} into black-box reductions from \cite{CohenKPPSV16} then gives algorithms to solve linear systems in row-or-column diagonally dominant matrices, which we now define.

\begin{definition}
We say $\MM \in \R^{n \times n}$ is row-column diagonally dominant (RCDD) if $\MM_{ii} \geq \sum_{j \neq i} |\MM_{ij}|$ and $\MM_{ii} \geq \sum_{j \neq i} |\MM_{ji}|$ for all $i \in [n]$. We say $\MM \in \R^{n \times n}$ is row-or-column diagonally dominant (ROCDD) if either $\MM_{ii} \geq \sum_{j \neq i} |\MM_{ij}|$ for all $i \in [n]$, or $\MM_{ii} \geq \sum_{j \neq i} |\MM_{ji}|$ for all $i \in [n]$.
\end{definition}

Most notably, Eulerian Laplacians are RCDD, and all directed Laplacians are ROCDD. In \cite{CohenKPPSV16} (see also \cite{AhmadinejadJSS19} for an alternative exposition), the following reduction was provided.

\begin{proposition}[Theorem 42, \cite{CohenKPPSV16}]\label{prop:ckppsv16}
Let $\MM \in \R^{n \times n}$ be ROCDD, and suppose both $\MM$ and its diagonal have multiplicative range at most $\kappa$ on their nonzero singular values. There is an algorithm which, given $\MM$, $\bb \in \textup{Im}(\MM)$, and error parameter $\eps \in (0, 1)$, solves $\log^2(\frac{n\kappa}{\eps})$ Eulerian linear systems to relative accuracy $\poly(\frac \eps {n\kappa})$ (in the sense of \eqref{eq:eulerian_accuracy}) and returns $\xx \in \R^n$ satisfying \begin{equation}\label{eq:rocdd_approx}\norm{\MM \xx - \bb}_2 \le \eps \norm{\bb}_2.\end{equation} 
Moreover, if $\MM$ is RCDD, a single such Eulerian linear system solve suffices.
\end{proposition}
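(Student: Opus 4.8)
The plan is to realize $\MM$, after a $\poly(n\kappa)$-conditioned change of variables, as (a grounded principal submatrix of) an Eulerian Laplacian, and to absorb (i) the slack in the diagonal-dominance inequalities and (ii) the possible one-sidedness of the dominance by short iterative-refinement loops. Concretely I would: (a) reduce solving a general ROCDD system to solving \emph{strictly} RCDD systems — the step that costs the $\log^2\Par{\tfrac{n\kappa}\eps}$ factor and from which exactly-RCDD input is exempt; (b) reduce one strictly-RCDD solve to one Eulerian Laplacian solve (e.g.\ via \Cref{cor:eulsolverdetailed}) by a graph gadget; and (c) track how the $\LL_G$-accurate guarantee \eqref{eq:eulerian_accuracy} of the Eulerian solver propagates back to the $\ell_2$ guarantee \eqref{eq:rocdd_approx}, which dictates the $\poly\Par{\tfrac\eps{n\kappa}}$ target accuracy for the Eulerian solves.

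\emph{The gadget.} Given $\alpha$-RCDD $\MM$ with nonnegative diagonal, write $\MM = \DD + \AA^- + \AA^+$, where $\DD$ is the diagonal and $\AA^-$ (resp.\ $\AA^+$) holds the negative (resp.\ positive) off-diagonal entries. Introduce a ground vertex $0$ and, for the ``wrong-sign'' entries of $\AA^+$, auxiliary vertices through which those entries are rerouted with the correct orientation as two-edge paths; entries of $\AA^-$ become single directed edges; and the strictly positive diagonal excesses $\DD_{ii}-\sum_{j\ne i}|\MM_{ij}|$ and $\DD_{ii}-\sum_{j\ne i}|\MM_{ji}|$ become edges to and from $0$. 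Two-sided dominance is the natural hypothesis precisely because one side supplies the slack to balance weighted out-degrees and the other the slack to balance in-degrees, so after adding self-loops every vertex becomes degree-balanced and the resulting directed graph $\vG'$ is Eulerian. A direct computation shows that, after eliminating the gadget vertices and a fixed signed coordinate permutation, $\MM$ agrees with the Laplacian $\vLL_{\vG'}$ grounded at $0$; hence $\MM^{\dagger}\bb$ is an explicit linear function of $\vLL_{\vG'}^{\dagger}\bb'$, where $\bb'$ is $\bb$ zero-padded and balanced at $0$ (and $\bb\in\mathrm{Im}(\MM)$ lifts to $\bb'\in\mathrm{Im}(\vLL_{\vG'})$, so the pseudoinverses are compatible). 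Since $\MM$ and $\DD$ have multiplicative range at most $\kappa$, the underlying undirected graph $G'\defeq\und(\vG')$ has $\poly(n\kappa)$-bounded weight ratio, hence (cf.\ \Cref{lemma:reffentry}) $\poly(n\kappa)$-bounded effective resistances, and the maps connecting $\MM^\dagger$ to $\vLL_{\vG'}^\dagger$ have operator norm $\poly(n\kappa)$. A single Eulerian solve then recovers $\MM^\dagger\bb$ well enough for \eqref{eq:rocdd_approx}, once the accuracy is set correctly — this is the ``moreover'' (RCDD) case.

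\emph{Accuracy.} The Eulerian solver returns $\widetilde\xx$ with $\norm{\widetilde\xx-\vLL_{\vG'}^\dagger\bb'}_{\LL_{G'}}\le\eps'\norm{\vLL_{\vG'}^\dagger\bb'}_{\LL_{G'}}$, while \eqref{eq:rocdd_approx} asks for an $\ell_2$ residual bound on $\MM$. Converting between the $\LL_{G'}$-seminorm and $\ell_2$ costs $\poly(n\kappa)$ in each direction, since the nonzero spectrum of $\LL_{G'}$ lies in a $\poly(n\kappa)$ range (manipulations in the spirit of \Cref{fact:opnorm_precondition}), and pushing the error through the grounding and gadget-elimination maps costs another $\poly(n\kappa)$; chasing these losses shows $\eps'=\poly\Par{\tfrac\eps{n\kappa}}$ suffices.

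\emph{ROCDD $\to$ RCDD and the main obstacle.} When $\MM$ is only row- (or only column-) dominant the gadget fails, since the missing side of dominance leaves no slack to balance the corresponding degrees. I would reduce to the RCDD case in two nested $O(\log\tfrac{n\kappa}\eps)$-length phases: an outer preconditioned-Richardson loop that kills the residual of $\MM$ using a fixed RCDD matrix of the same support (plus a $\poly\Par{\tfrac\eps{n\kappa}}$ diagonal perturbation) as preconditioner, and an inner loop that — before an RCDD solve can be turned into an Eulerian solve — computes to sufficient accuracy the diagonal rescaling (a stationary-distribution-type quantity) rendering the sparsity pattern realizable by an Eulerian graph; this yields $\log^2\Par{\tfrac{n\kappa}\eps}$ Eulerian solves overall. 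I expect this ROCDD$\to$RCDD reduction, together with the exact double-logarithmic accounting and the bookkeeping for non-strongly-connected or singular $\MM$ (matching $\ker(\vLL_{\vG'})$ with $\ker(\MM)$), to be the main obstacle; since the statement is quoted from \cite{CohenKPPSV16}, I would import that argument rather than reprove it.
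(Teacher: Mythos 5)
The paper does not prove this proposition: it is imported as a black box (Theorem~42 of \cite{CohenKPPSV16}), and the surrounding text contains only the statement together with its downstream applications. There is thus no in-paper argument for your sketch to be measured against, and your closing decision to defer to the reference is exactly what the paper itself does.

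On the merits of your reconstruction: the degree-balance bookkeeping for the RCDD gadget is correct. With nonpositive off-diagonals, edges of weight $r_i=\MM_{ii}-\sum_{j\ne i}|\MM_{ij}|$ from $i$ to a ground vertex and $c_i=\MM_{ii}-\sum_{j\ne i}|\MM_{ji}|$ from the ground to $i$ balance each original vertex at total weighted degree $\MM_{ii}$, and $\sum_i r_i=\sum_i c_i$ balances the ground, so the lift is Eulerian; $\poly(n\kappa)$ conditioning of the lifted Laplacian and of the grounding/elimination maps is the right order given the bounded weight ratio. Two points your sketch should not be trusted on without checking \cite{CohenKPPSV16}: the gadget there for positive off-diagonal entries is, to my recollection, a $2n\times 2n$ block-doubling construction that restores nonpositive off-diagonals, rather than per-entry rerouting through auxiliary vertices along two-edge paths; your variant would need its own conditioning and kernel-matching analysis. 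And a diagonal rescaling alone does not turn a general ROCDD matrix into an RCDD one — the stationary-distribution-type scaling you invoke may fail to exist when $\MM$ is singular — so the reduction must route through a controlled diagonal perturbation together with an outer preconditioned-Richardson loop, a distinction your ``inner loop'' paragraph compresses. Since the proposition is quoted, neither is a gap in the paper; both are places where your sketch is not yet a self-contained substitute for the citation, which is consistent with your own conclusion.
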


Combining Corollary~\ref{cor:eulsolverdetailed}, Proposition~\ref{prop:ckppsv16}, and a union bound then yields the following.

\begin{corollary}[Directed Laplacian solver]\label{cor:dirlapsolver}
Given $\vG = (V,E,\ww)$ with $|V|=n, |E|=m, \ww \in [1,U]^E$, 
$\bb \in \R^V$, and error parameter $\eps \in (0,1)$, 
there is an algorithm running in time
\[
\bO\Par{m\log^2\Par{\frac{nU}{\delta\eps}}\log^3\Par{\frac{nU}{\eps}} + n\log^2\Par{nU}\log^3\Par{\frac{nU}{\delta\eps}}\log^3\Par{\frac{nU}{\eps}}}
\]
which returns $\xx \in \R^V$ satisfying, with probability $\ge 1 - \delta$,
\[
\norm{\xx - \vLL_{\vG}^\dagger \bb}_{\LL_G} \leq \eps \norm{\vLL_{\vG}^\dagger \bb}_{\LL_G},\text{ where } G \defeq \und(\vG).
\]
\end{corollary}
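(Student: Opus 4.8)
The plan is to obtain Corollary~\ref{cor:dirlapsolver} as an essentially black-box consequence of our fast Eulerian Laplacian solver (Corollary~\ref{cor:eulsolverdetailed}) and the reduction of \cite{CohenKPPSV16} recorded in Proposition~\ref{prop:ckppsv16}. First I would note that the directed Laplacian $\vLL_{\vG} = \BB_{\vG}^\top \WW \HH_{\vG}$ is column-diagonally dominant --- each column sums to zero, the diagonal entry of column $v$ is the weighted out-degree of $v$, and the off-diagonal entries are nonpositive --- and hence ROCDD in the sense of Proposition~\ref{prop:ckppsv16}. Applying that proposition with $\MM = \vLL_{\vG}$ reduces computing an approximation to $\vLL_{\vG}^\dagger\bb$ to solving $O(\log^2(n\kappa/\eps'))$ Eulerian Laplacian systems, each to relative accuracy $\eps'' = \poly(\eps'/(n\kappa))$ in the sense of \eqref{eq:eulerian_accuracy}; this is exactly the guarantee that Corollary~\ref{cor:eulsolverdetailed} supplies. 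Here $\kappa$ is the multiplicative range of $\vLL_{\vG}$ and of its diagonal on their nonzero singular values, and the first key step is to argue that for $\ww \in [1,U]^E$ on a connected graph one has $\kappa = \poly(nU)$, so that $\log\kappa = O(\log(nU))$.

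Second, I would track the parameters of the Eulerian subproblems. The Eulerian graphs built by the reduction (first from ROCDD to RCDD, then from RCDD to Eulerian by absorbing the diagonal slack into one auxiliary vertex) live on $O(n)$ vertices and, since $\vG$ is connected so $m \ge n-1$, on $O(m)$ edges, with weight ratio $U' = \poly(nU)$ so $\log U' = O(\log(nU))$. Run Corollary~\ref{cor:eulsolverdetailed} on each such system with accuracy $\eps''$ and failure probability $\delta' \defeq \delta/\Theta(\log^2(nU/\eps))$, and take a union bound over the $O(\log^2(nU/\eps))$ solves; since the reduction of Proposition~\ref{prop:ckppsv16} is itself deterministic given the subsolver, the overall failure probability is at most $\delta$. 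Because $\log(1/\eps'') = O(\log(nU/\eps))$ and $\log(1/\delta') = O(\log(nU/(\delta\eps)))$ (the $\polylog(nU/\eps)$ overhead inside $\delta'$ contributing only $\polyloglog$ factors absorbed by $\bO$), the cost of each Eulerian solve is $\bO(m\log^2(nU/(\delta\eps))\log(nU/\eps) + n\log^2(nU)\log^3(nU/(\delta\eps))\log(nU/\eps))$, and multiplying by the number of solves yields precisely the runtime claimed in the corollary.

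Third, I would convert the accuracy guarantee. Proposition~\ref{prop:ckppsv16} returns $\xx$ with small $\ell_2$ residual $\norm{\vLL_{\vG}\xx - \bb}_2 \le \eps'\norm{\bb}_2$ (after first projecting $\bb$ onto $\textup{Im}(\vLL_{\vG})$, which only alters a component annihilated by $\vLL_{\vG}^\dagger$ and is cheap to compute), whereas the corollary asks for the solution-error bound $\norm{\xx - \vLL_{\vG}^\dagger\bb}_{\LL_G} \le \eps\norm{\vLL_{\vG}^\dagger\bb}_{\LL_G}$ with $G \defeq \und(\vG)$. Writing $\rr \defeq \vLL_{\vG}\xx - \bb$, so that $\xx - \vLL_{\vG}^\dagger\bb = \vLL_{\vG}^\dagger\rr$ modulo the kernel, one has
\[
\norm{\xx - \vLL_{\vG}^\dagger\bb}_{\LL_G} \le \norm{\LL_G}^{1/2}\,\norm{\vLL_{\vG}^\dagger}\,\norm{\rr}_2 \le \poly(nU)\cdot\eps'\cdot\norm{\bb}_2 \le \poly(nU)\cdot\eps'\cdot\norm{\vLL_{\vG}^\dagger\bb}_{\LL_G},
\]
using the polynomial bounds on $\norm{\LL_G}$, $\norm{\LL_G^\dagger}$, $\norm{\vLL_{\vG}}$, and $\norm{\vLL_{\vG}^\dagger}$ available for polynomially weighted connected graphs. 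Setting $\eps' = \eps/\poly(nU)$ in Proposition~\ref{prop:ckppsv16} then gives the desired bound, and since $\log(1/\eps')$ remains $O(\log(nU/\eps))$ this leaves the runtime unchanged up to constants in the exponents of the logarithms.

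The main obstacle --- really the only non-mechanical point --- is the condition-number bookkeeping: establishing that the parameter $\kappa$ entering Proposition~\ref{prop:ckppsv16} for $\MM = \vLL_{\vG}$ (and, relatedly, the constants converting $\ell_2$ residual into $\LL_G$-seminorm error) are $\poly(nU)$ rather than something larger, so that all their logarithms collapse to $O(\log(nU))$ and the runtime arithmetic produces exactly the stated expression. Everything else --- verifying column diagonal dominance, bounding the sizes and weight ratios of the Eulerian subproblems, the union bound over $O(\log^2(nU/\eps))$ solves, and multiplying the per-solve runtime of Corollary~\ref{cor:eulsolverdetailed} through --- is routine. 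Conceptually, the corollary is immediate: it is Theorem~\ref{thm:fastsparsifydetailed} (via Corollary~\ref{cor:eulsolverdetailed}) substituted into Proposition~\ref{prop:ckppsv16}.
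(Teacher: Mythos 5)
Your proposal is correct and takes exactly the approach the paper has in mind: the paper proves Corollary~\ref{cor:dirlapsolver} with the single sentence ``Combining Corollary~\ref{cor:eulsolverdetailed}, Proposition~\ref{prop:ckppsv16}, and a union bound then yields the following,'' which is precisely the recipe you carry out. Your additional work --- verifying that $\vLL_{\vG}$ is column-diagonally dominant and hence ROCDD, bounding $\kappa = \poly(nU)$ so that $\log\kappa = O(\log(nU))$, tracking the sizes, weight ratios, and accuracy/failure parameters of the Eulerian subproblems, multiplying through by the $O(\log^2(nU/\eps))$ solves, and converting the $\ell_2$-residual guarantee of Proposition~\ref{prop:ckppsv16} into the $\LL_G$-seminorm guarantee at the cost of a further $\poly(nU)$ loss absorbed into $\eps'$ --- is exactly the bookkeeping the paper elides, and your arithmetic reproduces the stated runtime. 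The one point to be careful about, which you correctly identify as the ``only non-mechanical'' step, is the condition-number bound and the kernel structure of a general (non-Eulerian) $\vLL_{\vG}$; the paper implicitly inherits the same conventions as \cite{CohenKPPSV16} here, so this is not a gap in your argument relative to the paper's.
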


Finally, we mention a number of results from \cite{CohenKPPSV16,CohenKPPRSV17,AhmadinejadJSS19} which leverage RCDD solvers as a black box. Plugging Corollary~\ref{cor:eulsolverdetailed}, Proposition~\ref{prop:ckppsv16}, and Corollary~\ref{cor:dirlapsolver} into these results, we obtain the following runtimes. For simplicity, we only consider problems with $\poly(n)$-bounded conditioning and $\poly(\frac 1 n)$-bounded failure probability, and let $\tsolve(m, n, \eps) \defeq \bO(m\log^2(n)\log(\frac n \eps) + n\log^5(n)\log(\frac n \eps))$ be the runtime of our Eulerian Laplacian solver.

\begin{itemize}
\item \textbf{Stationary distributions.} We can compute a vector within $\ell_2$ distance $\eps$ of the stationary distribution of a random walk on a directed graph in time $\tsolve(m,n, 1) \cdot O(\log^2(\frac n \eps))$.
\item \textbf{Random walks.} We can compute the escape probability, hitting times and commute times for a random walk on a directed graph to $\eps$ additive error in time $\tsolve(m,n, 1) \cdot O(\log^2(\frac n \eps))$.
\item \textbf{Mixing time.} We can compute an $\eps$-multiplicative approximation of the mixing time of a random walk on a directed graph in time $\tsolve(m,n, 1)\cdot O(\log^2(\frac n \eps))$.
\item \textbf{PageRank.} We can compute a vector within $\ell_2$ distance $\eps$ of the Personalized PageRank vector with restart probability $\beta$ on a directed graph in time $\tsolve(m,n, 1)\cdot O(\log^2(\frac n {\beta})+\log(\frac 1 \eps))$. 
\item \textbf{M-matrix linear systems.} We can compute a vector achieving relative accuracy $\eps$ (in the sense of \eqref{eq:rocdd_approx}) to a linear system in an M-matrix $\MM$ in time  
\[\tsolve(m,n, \eps)\cdot O\Par{\log^2(n)\log\Par{\frac {\norms{\MM^{-1}}_{1 \to 1} + \norms{\MM^{-1}}_{\infty \to \infty}} \eps}}.\]
\item \textbf{Perron-Frobenius theory.} Given a nonnegative matrix $\AA \in \R^{n \times n}$ with $m$ nonzero entries, we can find $s \in \R$ and $\vv_l, \vv_r \in \R^n$ such that $\frac{s}{\rho(\AA)} \in [1, 1+\eps]$,\footnote{$\rho(\AA)$ is the spectral radius of $\AA$: $\rho(\AA) \defeq \lim_{k \rightarrow \infty} \normsop{\AA^k}^{1/k}$.} $\norm{\AA \vv_r - s \vv_r}_\infty \leq \eps \norm{\vv_r}_\infty$, and $\norms{\AA^\top \vv_l - s \vv_l}_\infty \leq \eps \norm{\vv_l}_\infty$, in time 
\[\tsolve(m,n, \eps) \cdot O\Par{\log^3\Par{\frac{\norm{\AA}_{1 \to 1} + \norm{\AA}_{\infty \to \infty}}{\eps \rho(\AA)}}}.\]
\end{itemize}

\section{Graphical spectral sketches}\label{sec:sketch}

In this section, we give an additional application of the techniques we developed for efficiently constructing Eulerian sparsifiers in Sections~\ref{sec:existence} and~\ref{sec:algos}. Specifically, we show that they yield improved constructions of the following graph-theoretic object, originally introduced in \cite{AndoniCKQWZ16, JambulapatiS18, ChuGPSSW18} in the undirected graph setting.

\begin{definition}[Graphical spectral sketch]\label{def:sketch_undir}
Given a undirected graph $G = (V,E,\ww)$, a distribution $\gH$ over random
undirected graphs $H = (V,E',\ww')$ with $E' \subseteq E$ is said to be a $(\eps, \delta)$-\emph{graphical spectral
sketch for $G$} if for any fixed vector $\xx \in \R^V$, with probability $\ge 1 - \delta$
over the sample $H \sim \gH$, we have 
\[
   \Abs{ \xx^\top (\LL_H - \LL_G) \xx} \le \eps \cdot \xx^\top \LL_G \xx.
\]
\end{definition}

We generalize Definition~\ref{def:sketch_undir} to the Eulerian graph setting (which to our knowledge has not been studied before), and show that our primitives extend to capture this generalization.

\begin{definition}[Eulerian graphical spectral sketch]\label{def:sketch_directed}
Given an Eulerian graph $\vG = (V,E,\ww)$, a distribution $\gH$ over random
Eulerian graphs $\vH = (V,E',\ww')$ with $E' \subseteq E$ is said to be a \emph{$(\eps, \delta)$-Eulerian graphical spectral
sketch for $\vG$} if for any fixed vectors $\aa,\zz \in \R^V$, with
probability $\ge 1 - \delta$ over the sample $\vH \sim \gH$, we have for $G
\defeq \und(\vG)$,
\begin{equation}\label{eq:dgss}
    \Abs{\aa^\top (\vLL_{\vH} - \vLL_{\vG}) \zz} 
    \le \eps \cdot \norm{\aa}_{\LL_G} \norm{\zz}_{\LL_G}.
\end{equation}
\end{definition}

Our algorithm closely follows the framework of \cite{JambulapatiS18,ChuGPSSW18}.
We aim to recursively reduce a constant fraction of the edges while keeping a small additive error
for a bilinear form applied to fixed vectors $\aa, \zz$, as in \eqref{eq:dgss}.
Similar to our spectral sparsification algorithm in Section~\ref{sec:algos}, we repeat this process 
for $O(\log n)$ phases. Within each phase, we accomplish our goal by first using an expander decomposition from prior work \cite{AgassyDK23}, and then within each piece, we restrict to a subgraph on vertices with sufficiently large combinatorial (unweighted) degrees.
At this point, Cheeger's inequality (\Cref{lemma:cheeger}) gives us an effective resistance diameter bound on the decomposition piece as well, so we can use most of the guarantees from Section~\ref{sec:algos} directly. We are able to obtain the tighter per-vector pair parameter tradeoff required by spectral sketches by exploiting a tighter connection between the Laplacian and degree matrices within expanders, as in \cite{JambulapatiS18} which used this for undirected spectral sketches. This is used alongside a key degree-based spectral inequality from \cite{ChuGPSSW18} (see Lemma~\ref{lemma:ex_deg}).

\subsection{Degree-preserving primitives}\label{ssec:bilift}

In this section, we give several basic helper results which we use to ensure degree-preserving properties of our algorithms by working with bipartite lifts. Given a directed graph $\vG = (V,E,\ww)$, we let the directed graph
$\vG^\uparrow \defeq \blift(\vG)$ be its bipartite lift, which is defined so that $V_{\vG^\uparrow}
= V \cup \bar{V}$ where $\bar{V}$ is a copy of $V$, and $E_{\vG^\uparrow} = \{f =
(u,\bar{v}) | (u,v) \in E\}$ with $\ww_{u,\bar{v}} = \ww_{(u,v)}$.
Notice that our definition gives a canonical bijection between $E_{\vG^\uparrow}$ and $E_{\vG}$.

\begin{lemma} \label{lemma:blift_spectral}
Let $\vG = (V,E,\ww)$ be a directed graph and let its bipartite lift be
$\vG^\uparrow \defeq (V\cup V',E^\uparrow,\ww) = \blift(\vG)$, with 
$G \defeq \und(\vG)$ and $G^\uparrow \defeq \und(\vG^\uparrow)$. 
Suppose that for some $\eps > 0$, $\ww' \in \R^E_{>0}$ satisfies
\[
    \BB_{\vG^\uparrow}^\top \ww' = \BB_{\vG^\uparrow}^\top \ww, \quad
    \normop{\LL_{G^\uparrow}^{\frac \dagger 2} \BB_{\vG^\uparrow}^\top (\WW' - \WW) 
    \HH_{\vG^\uparrow} \LL_{G^\uparrow}^{\frac \dagger 2}} 
    \le \eps.
\]
Then, letting $|\BB_{\vG}|$ apply the absolute value entrywise,
\[
    \BB_{\vG}^\top \ww' = \BB_{\vG}^\top \ww,\quad |\BB_{\vG}|^\top \ww' = |\BB_{\vG}|^\top \ww, \quad
    \normop{\LL_G^{\frac \dagger 2} \BB_{\vG}^\top (\WW' - \WW) \HH_{\vG} 
    \LL_G^{\frac \dagger 2}} 
    \le \eps.
\]
\end{lemma}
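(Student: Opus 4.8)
The plan is to transfer each conclusion from the bipartite lift $\vG^\uparrow$ down to $\vG$ by exploiting the canonical edge bijection between $E_{\vG^\uparrow}$ and $E_{\vG}$. First I would set up notation: under the bijection $f = (u,\bar v) \leftrightarrow e = (u,v)$, the incidence structure decomposes nicely. Since every edge of $\vG^\uparrow$ goes from an unbarred vertex to a barred one, we have $\HH_{\vG^\uparrow} = \begin{pmatrix} \HH_{\vG} & \vzero \end{pmatrix}$ and $\TT_{\vG^\uparrow} = \begin{pmatrix} \vzero & \TT_{\vG} \end{pmatrix}$ (as maps into $\R^{V \cup \bar V} = \R^V \oplus \R^{\bar V}$), so $\BB_{\vG^\uparrow} = \begin{pmatrix} \HH_{\vG} & -\TT_{\vG} \end{pmatrix}$. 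The key structural point is that $\und(\vG^\uparrow)$ is exactly the bipartite double cover of $\und(\vG)$, and therefore $\LL_{G^\uparrow} = \begin{pmatrix} \DD_G & -\AA_G \\ -\AA_G^\top & \DD_G \end{pmatrix}$ where $\DD_G$ is the (undirected) degree matrix of $G$ and $\AA_G$ its weighted adjacency matrix — more usefully, there is a clean relation between $\LL_{G^\uparrow}$ and $\LL_G$: writing $\PP_\pm = \frac{1}{\sqrt2}\begin{pmatrix} \id_V \\ \pm \id_V \end{pmatrix}$, one has $\PP_+^\top \LL_{G^\uparrow} \PP_+ = \LL_G$ and the map $\xx \mapsto \PP_+ \xx$ embeds $\R^V$ isometrically (with respect to the relevant quadratic forms) into $\R^{V \cup \bar V}$.

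Next I would establish the two equality statements. For $\BB_{\vG}^\top \ww' = \BB_{\vG}^\top \ww$: the hypothesis $\BB_{\vG^\uparrow}^\top \ww' = \BB_{\vG^\uparrow}^\top \ww$ reads, block by block, as $\HH_{\vG}^\top \ww' = \HH_{\vG}^\top \ww$ (from the unbarred coordinates) and $\TT_{\vG}^\top \ww' = \TT_{\vG}^\top \ww$ (from the barred coordinates); subtracting gives $\BB_{\vG}^\top \ww' = \BB_{\vG}^\top \ww$ and adding gives $|\BB_{\vG}|^\top \ww' = |\BB_{\vG}|^\top \ww$ since $|\BB_{\vG}| = \HH_{\vG} + \TT_{\vG}$ (each row of $\BB_{\vG}$ has one $+1$ and one $-1$, so the entrywise absolute value sums them). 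This handles the first two conclusions with no estimation at all.

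For the operator norm bound, the plan is to relate $\LL_G^{\dagger/2}\BB_{\vG}^\top(\WW'-\WW)\HH_{\vG}\LL_G^{\dagger/2}$ to the lifted operator via the embedding $\PP_+$. Using $\HH_{\vG} = \HH_{\vG^\uparrow}\begin{pmatrix}\id_V \\ \vzero\end{pmatrix}$ and a similar identity expressing $\BB_{\vG}$ in terms of $\BB_{\vG^\uparrow}$ restricted to appropriate coordinate blocks, one can sandwich the $\vG$-operator between projection/inclusion maps applied to the $\vG^\uparrow$-operator. Concretely, I expect to show that for any test vectors $\xx, \yy \in \R^V$, the bilinear form $\yy^\top \LL_G^{\dagger/2}\BB_{\vG}^\top(\WW'-\WW)\HH_{\vG}\LL_G^{\dagger/2}\xx$ equals $(\text{lift of }\yy)^\top \LL_{G^\uparrow}^{\dagger/2}\BB_{\vG^\uparrow}^\top(\WW'-\WW)\HH_{\vG^\uparrow}\LL_{G^\uparrow}^{\dagger/2}(\text{lift of }\xx)$ where the lifts have the same norm as $\xx,\yy$ (because $\LL_{G^\uparrow}$ restricted to the "symmetric" subspace $\mathrm{Im}(\PP_+)$ is isometric to $\LL_G$, and $\WW' - \WW$ acts diagonally respecting the edge bijection). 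Taking a supremum over unit $\xx,\yy$ then yields $\normop{\LL_G^{\dagger/2}\BB_{\vG}^\top(\WW'-\WW)\HH_{\vG}\LL_G^{\dagger/2}} \le \normop{\LL_{G^\uparrow}^{\dagger/2}\BB_{\vG^\uparrow}^\top(\WW'-\WW)\HH_{\vG^\uparrow}\LL_{G^\uparrow}^{\dagger/2}} \le \eps$.

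The main obstacle I anticipate is getting the pseudoinverse square roots to interact correctly with the lift — specifically, verifying that $\LL_{G^\uparrow}^{\dagger/2}\PP_+ = \PP_+ \LL_G^{\dagger/2}$ (up to handling kernels carefully, since $G^\uparrow$ may be disconnected even when $G$ is connected, as the bipartite double cover of a non-bipartite connected graph is connected but of a bipartite one is not). I would handle this by checking the identity on the eigenspaces: $\PP_+$ intertwines $\LL_G$ and the restriction of $\LL_{G^\uparrow}$ to $\mathrm{Im}(\PP_+)$, and since $\mathrm{Im}(\HH_{\vG^\uparrow})$ and the relevant column spaces all lie in appropriate invariant subspaces, the pseudoinverses match after composing with $\PP_+$. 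The rest is bookkeeping with the block decomposition. One should also double check that $\ww' \in \R^E_{>0}$ (given) ensures all the graphs involved are genuinely well-defined with positive weights so the Laplacian identities hold as stated.
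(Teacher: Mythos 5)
Your argument for the two degree-preservation identities (block decompose $\BB_{\vG^\uparrow} = \begin{pmatrix}\HH_{\vG} & -\TT_{\vG}\end{pmatrix}$, read off $\HH_{\vG}^\top(\ww'-\ww)=\vzero$ and $\TT_{\vG}^\top(\ww'-\ww)=\vzero$, then subtract and add) is correct and matches the paper. The operator-norm step, however, has a genuine gap. The intertwining $\LL_{G^\uparrow}^{\frac \dagger 2}\PP_+ = \PP_+\LL_G^{\frac \dagger 2}$ you propose to verify is false in general, and the eigenspace argument you sketch cannot establish it because $\textup{Im}(\PP_+)$ is \emph{not} an $\LL_{G^\uparrow}$-invariant subspace. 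Writing $\QQ = \begin{pmatrix}\id_V\\\id_V\end{pmatrix}$ (your $\PP_+$ up to scale), only the quadratic-form identity $\QQ^\top\LL_{G^\uparrow}\QQ = \LL_G$ holds; the operator identity $\LL_{G^\uparrow}\QQ = \QQ\LL_G$ fails because $\LL_{G^\uparrow}\QQ = \begin{pmatrix}\HH_{\vG}^\top\WW\BB_{\vG}\\-\TT_{\vG}^\top\WW\BB_{\vG}\end{pmatrix}$ lands in $\textup{Im}(\QQ)$ only when $(\HH_{\vG}+\TT_{\vG})^\top\WW\BB_{\vG} = \vzero$ --- a condition that already fails for a single directed edge. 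Your intuition is misled by identifying $\und(\vG^\uparrow)$ with the bipartite double cover of $G$: it is not, as it has one lifted edge per directed edge of $\vG$, not two. Its Laplacian is $\begin{pmatrix}\HH_{\vG}^\top\WW\HH_{\vG} & -\HH_{\vG}^\top\WW\TT_{\vG}\\-\TT_{\vG}^\top\WW\HH_{\vG} & \TT_{\vG}^\top\WW\TT_{\vG}\end{pmatrix}$, whose diagonal blocks are the out- and in-degree matrices (not both $\DD_G$) and whose off-diagonal block is the asymmetric directed weighted adjacency --- not the symmetric block structure you write, which is precisely what would make $\textup{Im}(\QQ)$ invariant.

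The paper avoids the intertwining entirely by working with seminorms in the variational characterization. For $\xx,\yy\perp\vone_V$, set $\aa = \LL_G^{\frac \dagger 2}\xx$, $\bb = \LL_G^{\frac \dagger 2}\yy$, so the test ratio becomes $|\aa^\top\BB_{\vG}^\top(\WW'-\WW)\HH_{\vG}\bb|/(\norm{\aa}_{\LL_G}\norm{\bb}_{\LL_G})$. Lift the numerator via the factorizations $\BB_{\vG} = \BB_{\vG^\uparrow}\QQ$ and $\HH_{\vG} = \HH_{\vG^\uparrow}\QQ$ (both with the same $\QQ$ --- not the one-sided inclusion $\begin{pmatrix}\id_V\\\vzero\end{pmatrix}$ you mention), and use $\QQ^\top\LL_{G^\uparrow}\QQ = \LL_G$ to get $\norm{\QQ\aa}_{\LL_{G^\uparrow}} = \norm{\aa}_{\LL_G}$, matching the denominator. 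Provided $\QQ\aa,\QQ\bb\perp\ker\LL_{G^\uparrow}$, each ratio is a valid test ratio for $\LL_{G^\uparrow}^{\frac \dagger 2}\BB_{\vG^\uparrow}^\top(\WW'-\WW)\HH_{\vG^\uparrow}\LL_{G^\uparrow}^{\frac \dagger 2}$ with matching Euclidean norms, and taking the supremum yields the bound $\le\eps$ with no commutation of pseudoinverse square roots required.
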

\begin{proof}
Consider the edge-vertex incidence matrix of $\vG$, $\BB_{\vG} = \HH_{\vG} - \TT_{\vG}$.
The edge-vertex incidence matrix of $\vG^\uparrow$ is then
$\BB_{\vG^\uparrow} = \begin{pmatrix} \HH_{\vG} & -\TT_{\vG} \end{pmatrix}$.
Hence, any vector $\xx$ satisfying $\BB_{\vG^\uparrow}^\top \xx = \vzero_V$ must have
$\HH_{\vG}^\top \xx = \vzero_V, \TT_{\vG}^\top \xx = \vzero_V$, giving us
preservation of both the difference between in and out degrees and the
sum of in and out degrees $\vG$, i.e.,
\[
    \BB_{\vG}^\top \xx = \vzero_V,\; |\BB_{\vG}|^\top \xx = \vzero_V,
\]
where we used that $\BB_{\vG} = \HH_{\vG} - \TT_{\vG}$ and $|\BB_{\vG}| = \HH_{\vG} + \TT_{\vG}$.
Taking $\xx \defeq \ww' - \ww$ then gives the the first two claims.
We remark that the directed graph $\vG$ need not be Eulerian.

We proceed to prove the third claim. For ease of notation, we omit $\vG$ in the subscripts of the matrices and
denote $\AA \defeq \AA_{\vG}$, $\AA_\uparrow \defeq \AA_{\vG^\uparrow}$ for all
matrices $\AA$.
We also use the following equivalent definition of operator norms with the convention
that the fraction is 0 if the numerator is 0:
\[
    \normop{\AA} = 
    \max_{\xx,\yy} \frac{|\xx^\top \AA \yy|}{\norm{\xx}_2
    \norm{\yy}_2},
\]
where the $\max$ is over $\xx, \yy$ of compatible dimensions.
Also, we let $\QQ \in \R^{(V \cup \bV) \times V}$ be defined by 
\[\QQ\ee_v = \begin{pmatrix} \ee_v \\ \ee_{\bv} \end{pmatrix} \]
for all $v \in V$, where $\bv$ is identified with $v$. 
Notice that $\HH = \HH_\uparrow\QQ$, $\TT = \TT_\uparrow\QQ$ and $\BB = \BB_\uparrow\QQ$.
Then,
\[
    \QQ^\top \LL_\uparrow \QQ
    =
    \QQ^\top \BB_\uparrow^\top \WW \BB_\uparrow \QQ
    =
    \BB^\top \WW \BB
    =
    \LL.
\]
Finally, for any non-trivial vectors $\xx,\yy \in \R^V$ satisfying $\xx,\yy \perp
\vone_V$, and defining 
\[\aa \defeq \LL^{\frac \dagger 2}\xx,\; 
\bb \defeq \LL^{\frac \dagger 2}\yy,\; \hxx \defeq \LL_\uparrow^{\frac \dagger 2}\QQ \aa,\; 
\hyy \defeq \LL_\uparrow^{\frac \dagger 2}\QQ \bb,\]
 we have
\begin{align*}
  \frac{|\xx^\top \LL^{\frac \dagger 2} \BB^\top (\WW'-\WW) \HH 
    \LL^{\frac \dagger 2} \yy|}{\norm{\xx}_2 \norm{\yy}_2} &= 
    \frac{|\aa^\top \BB^\top (\WW'-\WW) \HH \bb|}{\norm{\aa}_{\LL}
    \norm{\bb}_{\LL}}\\
    &=
    \frac{|\aa^\top \QQ^\top \BB_\uparrow^\top (\WW'-\WW) \HH_\uparrow \QQ
    \bb|}{\norm{\QQ \aa}_{\LL_\uparrow} \norm{\QQ \bb}_{\LL_\uparrow}} \\
    &=
    \frac{|\hxx^\top \LL_\uparrow^{\frac \dagger 2} \BB_\uparrow^\top 
        (\WW'-\WW) \HH_\uparrow \LL_\uparrow^{\frac \dagger 2} \hyy|}
        {\norm{\hxx}_2 \norm{\hyy}_2},
\end{align*}
giving us the desired operator norm bound.
We note that $\QQ \aa,\QQ \bb \perp \ker(\LL_\uparrow)$.

Finally, we record a consequence of this proof we will later use. Recall that we have shown
\[\HH = \HH_{\uparrow}\QQ,\quad
\TT = \TT_{\uparrow}\QQ,\quad
\BB = \BB_{\uparrow}\QQ,\quad
\QQ^\top \LL_{\uparrow} \QQ = \LL.\]
Therefore, suppose that for some $\eps > 0$ and fixed vectors $\xx,\yy \in \R^V$,  $\ww' \in \R^E_{>0}$ satisfies
\begin{equation}\label{eq:lift_imply_1}
\Abs{\xx^\top \QQ^\top \BB_{\uparrow}^\top (\WW' - \WW) \HH_{\uparrow} \QQ \yy} \le
\eps \cdot \norm{\QQ\xx}_{\LL_{\uparrow}} \norm{\QQ\yy}_{\LL_{\uparrow}}.
\end{equation}
Then, we also have the bound in the unlifted graph $G$,
\begin{equation}\label{eq:lift_imply_2}
 \Abs{\xx^\top \BB^\top (\WW' - \WW) \HH \yy} \le
\eps \cdot \norm{\xx}_{\LL} \norm{\yy}_{\LL}.
\end{equation}
\end{proof}
 
\subsection{Expander decomposition and sketching by degrees}

In this section, we provide guarantees on our earlier $\BFSalgo$ (Algorithm~\ref{alg:basic_fast_sparsify}) which hold when the algorithm is passed a expander graph, that is also a bipartite lift, as input. We first recall the definition of an expander graph, parameterized by a minimum conductance threshold $\phi$.

\begin{definition}[Expander graphs]\label{def:expander}
    Let $G = (V,E,\ww)$ be an undirected graph, and let $\dd \in \R_{\geq 0}^V$ be its weighted
    degrees.
    For a set $S \subseteq V$, let $\vol(S) = \sum_{v \in S} \dd_v$ be the sum of
    weighted degrees in $S$, and let $\partial S = \{e = (u, v)\in E \mid u \in S \text{ and }
    v \not\in S\}$ be the edge boundary of $S$.
    We define the cut value of $S$ by $\ww(\partial S) \defeq \sum_{e \in \partial S} \ww_e$, and
    the conductance of $S$ by
    \[
        \Phi(S) = \frac{\ww(\partial S)}{\min\{\vol(S), \vol(V\backslash S)\}}.
    \]
    Finally, we say $G$ is a \emph{$\phi$-expander} if $\Phi(S) \geq \phi$ for all 
    $S \subseteq V$.
\end{definition}

An important algorithmic primitive related to Definition~\ref{def:expander} is an expander decomposition.

\begin{definition}[Expander decomposition]\label{def:exp_partition}
We call $\{G_i\}_{i \in [I]}$ a \emph{$(\phi, r, J)$-expander decomposition} if $\{G_i\}_{i \in [I]}$ are edge-disjoint subgraphs of $G = (V, E, \ww)$, and the following hold.
\begin{enumerate}
    \item \label{item:exp:partition:weight} \emph{Bounded weight ratio}: For all $i \in [I]$, $\frac{\max_{e \in E(G_i)}\ww_e}{\min_{e \in E(G_i)}\ww_e} \le r$.
    \item \label{item:exp:partition:phi} \emph{Conductance}: For all $i \in [I]$, $\Phi(G_i) \ge \phi$.
    \item \label{item:exp:partition:cut} \emph{Edges cut}: $|E(G) \setminus ( \bigcup_{i \in [I]} E(G_i) )| \leq \frac m 2$.
    \item \label{item:exp:partition:vertex} \emph{Vertex coverage}: Every vertex $v \in V(G)$ appears in at most $J$ of the subgraphs.
\end{enumerate}
\end{definition}

We recall the state-of-the-art expander decomposition algorithm in the literature, which will later be used in conjunction with the subroutines developed in this section.
\begin{proposition}[Theorem 4.4, \cite{AgassyDK23}]\label{lemma:ex_partition}
    There is an algorithm $\EPalgo(G,r,\delta)$ that, given as input undirected $G =
    (V,E,\ww)$ with $\frac{\max_{e\in \supp(\ww)}{\ww_e}}{\min_{e\in
    \supp(\ww)}{\ww_e}} \le W$ and $r \geq 1$, computes in time 
\[O\Par{m \log^6(n)\log\Par{\frac n \delta}}\]
    a $(\cadk\log^{-2}(n), r, \log_r W + 3)$-expander decomposition of $G$ with
    probability $\ge 1-\delta$, for a universal constant $\cadk$.\footnote{The algorithm of \cite{AgassyDK23} is stated for $n^{-O(1)}$ failure probabilities, but examining Section 5.3, the only place where randomness is used in the argument, shows that we can obtain failure probability $\delta$ at the stated overhead. The vertex coverage parameter $\log_r(W) + 3$ is due to bucketing the edges by weight, analogously to the proof of Proposition~\ref{prop:er_partition}. We note that there is no $\log_r(W)$ overhead in the runtime, as the edges in each piece are disjoint.}
\end{proposition}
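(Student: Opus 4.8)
The statement is a restatement of Theorem~4.4 of \cite{AgassyDK23} after two adjustments: forcing each output piece to have weight ratio at most $r$ for an arbitrary overall weight ratio $W$ (via bucketing edges by weight scale), and sharpening the failure-probability dependence from $n^{-O(1)}$ to $\delta$. The plan is to treat \cite{AgassyDK23} essentially as a black box and layer these two reductions on top.

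\emph{Single-scale guarantee.} First I would record the black-box guarantee in the form we need: applied to an undirected graph $G' = (V', E', \ww')$, Theorem~4.4 of \cite{AgassyDK23} runs in $O(|E'|\log^6(n)\log(n))$ time and, with probability $1 - n^{-O(1)}$, partitions $V'$ into vertex sets whose induced subgraphs are each $\phi$-expanders (in the sense of Definition~\ref{def:expander}) with $\phi = \Omega(\log^{-2} n)$, while the number of edges crossing the partition is at most $(\text{conductance parameter}) \cdot \polylog(n) \cdot |E'|$. Choosing the universal constant $\cadk$ small enough forces this cut bound to be at most $\tfrac{|E'|}{2}$ when the algorithm is run with conductance parameter $\cadk \log^{-2}(n)$, so a single invocation yields vertex-disjoint subgraphs of $G'$ satisfying Items~\ref{item:exp:partition:phi} and~\ref{item:exp:partition:cut} of Definition~\ref{def:exp_partition} with vertex coverage $1$.

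\emph{Bucketing across weight scales.} To additionally guarantee that each output piece has weight ratio at most $r$ (Item~\ref{item:exp:partition:weight}) for a general ratio $W$, I would mimic the proof of Proposition~\ref{prop:er_partition}: partition $E = \bigsqcup_{j_{\min}\le j\le j_{\max}} F_j$ with $F_j \defeq \{e \in E \mid \ww_e \in (r^j, r^{j+1}]\}$, where $j_{\max}-j_{\min}+1 \le \log_r(W)+3$ by the same computation as in \eqref{eq:jcount}. Apply the single-scale algorithm to each $G_{F_j}$ to obtain vertex-disjoint pieces $\{G_i^j\}_{i\in[K_j]}$, and output $\{G_i^j\}$ over all $j,i$. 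Then Items~\ref{item:exp:partition:weight} and~\ref{item:exp:partition:phi} hold per piece by construction and the single-scale guarantee; Item~\ref{item:exp:partition:cut} holds since the $F_j$ partition $E$ and at most $\tfrac{|F_j|}{2}$ edges of each $F_j$ are cut, summing to $\tfrac m 2$; and Item~\ref{item:exp:partition:vertex} holds since for fixed $j$ the pieces $\{G_i^j\}_i$ are vertex-disjoint, so any vertex lies in at most $\log_r(W)+3$ of them. Summing the $O(|F_j|\log^6(n)\log(n/\delta))$ per-bucket runtimes over the disjoint $F_j$ gives $O(m\log^6(n)\log(n/\delta))$, with no $\log_r(W)$ factor.

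\emph{Failure probability and the main obstacle.} Finally I would justify the claimed $\delta$-dependence. Per the footnote, randomness in the \cite{AgassyDK23} algorithm is confined to one subroutine (their Section~5.3); amplifying its success probability to $1 - \tfrac{\delta}{j_{\max}-j_{\min}+1}$ costs only an $O(\log\tfrac1\delta)$ multiplicative overhead (either by directly tuning its failure parameter or by $O(\log\tfrac1\delta)$ independent repetitions together with a verifiable success test), which is absorbed into the $\log(n/\delta)$ factor; a union bound over the $O(\log_r W)$ buckets then gives overall failure probability $\le \delta$. I expect this last step to demand the most care, since it is the one place where we cannot use \cite{AgassyDK23} purely as a black box: one must check that the randomized component of their argument is genuinely localized and that its guarantee degrades gracefully as its failure parameter shrinks. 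Everything else is routine bookkeeping and a direct reuse of the weight-bucketing device already developed for Proposition~\ref{prop:er_partition}.
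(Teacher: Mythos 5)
Your proposal is correct and follows essentially the same route as the paper, which itself only justifies this proposition via the footnote: invoke \cite{AgassyDK23} as a black box, bucket edges by weight scale exactly as in the proof of Proposition~\ref{prop:er_partition} to get the $\log_r(W)+3$ vertex-coverage bound without runtime overhead, and sharpen the failure probability to $\delta$ by noting the randomness in \cite{AgassyDK23} is confined to their Section~5.3. Your additional bookkeeping (per-bucket cut bound summing to $m/2$, union bound over buckets) is the intended, routine filling-in of that sketch.
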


We further require two spectral inequalities based on the expansion.

\begin{lemma}[Cheeger's inequality] \label{lemma:cheeger}
If $G$ is a $\phi$-expander with Laplacian $\LL_G$ and degrees $\DD_G$, then
\[
    \lambda_2\Par{\DD_G^{\frac \dagger 2} \LL_G \DD_G^{\frac \dagger 2}} 
    \geq \frac {\phi^2} 2.
\]
\end{lemma}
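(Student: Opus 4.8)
\textbf{Proof proposal for Cheeger's inequality (Lemma~\ref{lemma:cheeger}).}

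The plan is to invoke the standard ``easy'' direction of Cheeger's inequality for the normalized Laplacian. Recall that the normalized Laplacian of $G$ is $\mathcal{L}_G \defeq \DD_G^{\dagger/2}\LL_G\DD_G^{\dagger/2}$, and that $\lam_1(\mathcal{L}_G) = 0$ with eigenvector $\DD_G^{1/2}\vone$ (restricted to the support of $\DD_G$; since we may assume $G$ is connected, this is the only zero eigenvalue). I would first recall that the conductance of $G$ as defined in Definition~\ref{def:expander} coincides (up to the convention on which side the volume is measured) with the standard edge conductance $\Phi_G \defeq \min_{S}\frac{\ww(\partial S)}{\min\{\vol(S),\vol(V\setminus S)\}}$, so the hypothesis that $G$ is a $\phi$-expander is exactly the statement $\Phi_G \ge \phi$.

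The main step is the variational characterization of $\lam_2(\mathcal{L}_G)$ together with the classical lower bound $\lam_2(\mathcal{L}_G) \ge \frac{\Phi_G^2}{2}$. One way to carry this out: write $\lam_2(\mathcal{L}_G) = \min_{\xx \perp \DD_G^{1/2}\vone}\frac{\xx^\top\mathcal{L}_G\xx}{\xx^\top\xx}$; substituting $\yy = \DD_G^{\dagger/2}\xx$ (so $\yy \perp \DD_G\vone$, i.e., $\sum_v \dd_v \yy_v = 0$) turns this into $\lam_2(\mathcal{L}_G) = \min_{\sum_v \dd_v\yy_v = 0}\frac{\sum_{(u,v)\in E}\ww_{uv}(\yy_u-\yy_v)^2}{\sum_v \dd_v\yy_v^2}$, the usual Rayleigh quotient. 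Then the standard argument applies: given any $\yy$ in this space, split it into positive and negative parts, pass to the better of the two one-sided vectors $\yy_+$, normalize, and use the Cauchy--Schwarz / ``co-area'' argument on the sweep sets $\{v : \yy_+(v) > \theta\}$ to produce a set $S$ with $\frac{\ww(\partial S)}{\vol(S)} \le \sqrt{2\,\frac{\sum_{(u,v)}\ww_{uv}(\yy_u-\yy_v)^2}{\sum_v\dd_v\yy_v^2}}$, where $\vol(S) \le \frac12\vol(V)$. This yields $\Phi_G \le \sqrt{2\lam_2(\mathcal{L}_G)}$, i.e., $\lam_2(\mathcal{L}_G) \ge \frac{\Phi_G^2}{2} \ge \frac{\phi^2}{2}$, which is exactly the claimed bound since $\mathcal{L}_G = \DD_G^{\dagger/2}\LL_G\DD_G^{\dagger/2}$.

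I do not expect any genuine obstacle here, as this is a textbook fact; the only things to be slightly careful about are (i) matching the conductance convention in Definition~\ref{def:expander} to the symmetric min-volume form used in the classical statement (a factor-of-at-most-$2$ issue that does not affect the squared bound in the direction we need), and (ii) handling the pseudoinverse bookkeeping when $\DD_G$ has zero entries (isolated vertices), which is harmless since such vertices contribute nothing to $\LL_G$ either. Accordingly, I would simply cite the easy direction of Cheeger's inequality (e.g., from a standard reference on spectral graph theory) and note that it applies verbatim to $\DD_G^{\dagger/2}\LL_G\DD_G^{\dagger/2}$ with the expansion parameter $\phi$, rather than reproducing the full sweep-cut computation.
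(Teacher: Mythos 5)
Your proposal is correct, and it matches the paper's treatment: the paper states this lemma as a classical fact (labeled ``Cheeger's inequality'') without proof, just as you would cite it from a standard reference after sketching why it applies to $\DD_G^{\dagger/2}\LL_G\DD_G^{\dagger/2}$. One small nomenclature slip: the bound $\lambda_2 \ge \Phi_G^2/2$ (equivalently $\Phi_G \le \sqrt{2\lambda_2}$) is conventionally the \emph{hard} direction of discrete Cheeger, requiring exactly the sweep-cut / co-area argument you describe; the ``easy'' direction is $\lambda_2 \le 2\Phi_G$, obtained by testing with an indicator-like vector. The substance of your argument is the right one for the direction needed, so this is purely a naming quibble, not a gap.
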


\begin{lemma}[Lemma 6.6, \cite{ChuGPSSW18}] \label{lemma:ex_deg}
If $G = (V,E,\ww)$ is a $\phi$-expander that satisfies, for some $\bw > 0$, $\ww_e
\in [\bw,2\bw]$ for all $e \in E$, then for any $\xx \in \R^V$,
\[
    \norm{\xx}_{\LL_G}^2  \ge 
    \frac {\phi^2 \bw} 2 \cdot \sum_{v \in V} [\deg_{G}]_v (\xx_v - \hx)^2,
\]
where $\deg_G \in \N^V_{\ge 0}$ is the combinatorial (unweighted) degrees of $G$, and 
$\hx \defeq \frac{\dd_G^\top\xx}{\norms{\dd_G}_1}$.
\end{lemma}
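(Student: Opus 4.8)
Looking at the final statement, it's Lemma 6.6 from \cite{ChuGPSSW18} (labeled \texttt{lemma:ex\_deg}), which bounds the Laplacian quadratic form from below by a degree-weighted sum of squared deviations, for an expander with bounded weight ratio. Let me sketch a proof.

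\textbf{Proof plan.}

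The plan is to combine Cheeger's inequality (\Cref{lemma:cheeger}) with the bounded-weight-ratio hypothesis to translate between the \emph{weighted} degree matrix $\DD_G$ that naturally appears in the spectral bound and the \emph{combinatorial} degree matrix $\deg_G$ that appears in the statement. First, I would recall that Cheeger's inequality gives $\lambda_2(\DD_G^{\dagger/2}\LL_G\DD_G^{\dagger/2}) \ge \phi^2/2$. Writing $\xx = \hx\,\vone_V + \yy$ where $\yy \perp \DD_G\vone_V$ (i.e., decomposing $\xx$ into its $\DD_G$-weighted mean component and the orthogonal remainder), and noting $\LL_G\vone_V = \vzero_V$, we get $\norm{\xx}_{\LL_G}^2 = \norm{\yy}_{\LL_G}^2$. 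Since $\DD_G^{1/2}\yy$ is orthogonal to $\ker(\DD_G^{\dagger/2}\LL_G\DD_G^{\dagger/2}) \supseteq \mathrm{span}(\DD_G^{1/2}\vone_V)$ (on the support of $\DD_G$), the variational characterization of $\lambda_2$ yields
\[
\norm{\yy}_{\LL_G}^2 = (\DD_G^{1/2}\yy)^\top (\DD_G^{\dagger/2}\LL_G\DD_G^{\dagger/2})(\DD_G^{1/2}\yy) \ge \frac{\phi^2}{2}\norm{\DD_G^{1/2}\yy}_2^2 = \frac{\phi^2}{2}\sum_{v\in V}[\DD_G]_v(\xx_v - \hx)^2.
\]
There is a subtlety if some vertex has degree zero, but expanders are connected (or we restrict to $\supp(\DD_G)$), so I would note every vertex in $V$ appearing has positive weighted degree.

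\textbf{Second step: replace weighted degrees by combinatorial degrees.} By the hypothesis $\ww_e \in [\bw, 2\bw]$ for all $e \in E$, the weighted degree of any vertex $v$ satisfies $[\DD_G]_v = \sum_{e \ni v}\ww_e \ge \bw\cdot[\deg_G]_v$. Substituting this pointwise into the sum above immediately gives
\[
\norm{\xx}_{\LL_G}^2 \ge \frac{\phi^2}{2}\sum_{v\in V}[\DD_G]_v(\xx_v-\hx)^2 \ge \frac{\phi^2\bw}{2}\sum_{v\in V}[\deg_G]_v(\xx_v-\hx)^2,
\]
which is exactly the claimed inequality. One point of care is that $\hx$ in the lemma statement is defined as $\hx = \dd_G^\top\xx/\norm{\dd_G}_1$ where $\dd_G$ is the \emph{weighted} degree vector, which is precisely the $\DD_G$-weighted mean used in the decomposition above, so the two notions of mean coincide and no adjustment is needed.

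\textbf{Main obstacle.} I expect the only genuinely delicate point is justifying the application of the $\lambda_2$ variational bound: one must verify that $\DD_G^{1/2}\yy$ genuinely lies in the span of eigenvectors of $\DD_G^{\dagger/2}\LL_G\DD_G^{\dagger/2}$ with eigenvalue $\ge\lambda_2$, i.e., that it is orthogonal to both $\ker(\DD_G)$-type directions and to $\DD_G^{1/2}\vone_V$. This is where connectedness of the expander (so $\lambda_2 > 0$ and the kernel is exactly one-dimensional) is used, and it is the step most likely to need an extra sentence. Everything else is a direct pointwise comparison of degree quantities, which is routine given the bounded weight ratio.
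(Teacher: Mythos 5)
The paper states this lemma by citing \cite{ChuGPSSW18} (Lemma 6.6) and does not reproduce a proof, so there is no in-paper argument to compare against. Your proof is correct and takes the standard route: write $\xx = \hx\vone_V + \yy$ with $\dd_G^\top\yy = \dd_G^\top\xx - \hx\norm{\dd_G}_1 = 0$, note $\norm{\xx}_{\LL_G}^2 = \norm{\yy}_{\LL_G}^2$ since $\LL_G\vone_V = \vzero_V$, apply the variational characterization of $\lambda_2(\DD_G^{\dagger/2}\LL_G\DD_G^{\dagger/2})$ (valid since $\DD_G^{1/2}\yy \perp \DD_G^{1/2}\vone_V$ and $\phi > 0$ with positive degrees forces connectedness so the kernel is one-dimensional), invoke Cheeger (Lemma~\ref{lemma:cheeger}), and finally replace each weighted degree $[\DD_G]_{vv}$ by the lower bound $\bw\,[\deg_G]_v$, which is exactly where the bounded weight ratio enters. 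The edge case of zero-degree vertices that you flagged is harmless: such a vertex contributes zero to the right-hand side and has an all-zero row/column in $\LL_G$, so it can be dropped without affecting either side. No gaps.
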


Importantly, Lemma~\ref{lemma:ex_deg} allows us to obtain improved tradeoffs on how well spectral sketch guarantees are preserved in Lemma~\ref{lemma:ess_guarantee}, by first lower bounding degrees of vertices under consideration. Next, we show that expander graphs with small weight ratio form clusters (Definition~\ref{def:cluster}), which make them compatible with our algorithm $\BFSalgo$.

\begin{lemma} \label{lemma:ex_to_er}
Let $\vG = (V,E,\ww)$ and let $G = \und(\vG)$. Suppose $G$ is a $\phi$-expander and
that for all $e \in E$, $\ww_e \in [\bw,2\bw]$ for some $\bw > 0$.
Given any $\beta > 0$, let $U \subseteq V$ be the set of vertices with
$[\deg_G]_u \ge \beta$ for every $u \in U$.
Then, the subgraph $\vG[U]$ is a $(\bw,8\beta^{-1}\phi^{-2})$-cluster in $\vG$.
\end{lemma}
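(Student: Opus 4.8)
The goal is to verify the two conditions in Definition~\ref{def:cluster} for the subgraph $\vG[U]$, with cluster parameters $(\bw, 8\beta^{-1}\phi^{-2})$. The weight condition is immediate: since $\vG[U]$ keeps only edges of $\vG$, all its edges still have weight in $[\bw, 2\bw]$, so the first requirement of Definition~\ref{def:cluster} holds. The substance is the effective-resistance diameter bound: we must show $(\max_{e \in E(\vG[U])} \ww_e) \cdot (\max_{u,v \in V(\vG[U])} \ER_G(u,v)) \le 8\beta^{-1}\phi^{-2}$. Since $\max_{e} \ww_e \le 2\bw$ and $V(\vG[U]) = U$, it suffices to prove $\ER_G(u,v) \le 4\beta^{-1}\bw^{-1}\phi^{-2}$ for all $u,v \in U$.

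The plan is to bound $\ER_G(u,v) = \bb_{(u,v)}^\top \LL_G^\dagger \bb_{(u,v)}$ by relating $\LL_G^\dagger$ to $\DD_G^\dagger$ via Cheeger's inequality (Lemma~\ref{lemma:cheeger}). First I would note that for any $u,v$, $\bb_{(u,v)} = \ee_u - \ee_v \perp \vone_V$ (assuming $G$ connected, which we may by the standing reduction in Section~\ref{sec:prelims}), so it lies in the span of $\DD_G^{\dagger/2}$'s nonzero eigenvectors, and Lemma~\ref{lemma:cheeger} gives $\DD_G^{\dagger/2} \LL_G \DD_G^{\dagger/2} \pgeq \tfrac{\phi^2}{2} \PPi$, where $\PPi$ projects onto that span. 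Equivalently, writing $\yy \defeq \DD_G^{1/2} \LL_G^\dagger \bb_{(u,v)}$ (which also lies in the relevant span since $\bb_{(u,v)} \perp \vone_V$), one obtains $\LL_G^\dagger \pleq \tfrac{2}{\phi^2} \DD_G^\dagger$ as quadratic forms on the orthogonal complement of $\vone_V$. Hence
\[
\ER_G(u,v) = \bb_{(u,v)}^\top \LL_G^\dagger \bb_{(u,v)} \le \frac{2}{\phi^2} \bb_{(u,v)}^\top \DD_G^\dagger \bb_{(u,v)} = \frac{2}{\phi^2}\Par{\frac{1}{[\dd_G]_u} + \frac{1}{[\dd_G]_v}}.
\]

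To finish, I would bound the weighted degrees $[\dd_G]_u, [\dd_G]_v$ from below: for $u \in U$ we have $[\deg_G]_u \ge \beta$ combinatorial edges incident to $u$, each of weight at least $\bw$, so $[\dd_G]_u \ge \beta \bw$, and likewise for $v$. Substituting gives $\ER_G(u,v) \le \tfrac{2}{\phi^2} \cdot \tfrac{2}{\beta \bw} = \tfrac{4}{\beta \bw \phi^2}$, and multiplying by $\max_e \ww_e \le 2\bw$ yields the claimed bound $8\beta^{-1}\phi^{-2}$. The only mild subtlety — and the step I'd be most careful about — is justifying the operator inequality $\LL_G^\dagger \pleq \tfrac{2}{\phi^2}\DD_G^\dagger$ restricted to the correct subspace: one should argue via the substitution $\xx = \DD_G^{1/2} \zz$ that $\zz^\top \DD_G^{1/2}\LL_G^\dagger \DD_G^{1/2}\zz \le \tfrac{2}{\phi^2}\|\PPi\zz\|_2^2$ follows from inverting the Cheeger bound on the range of $\DD_G^{\dagger/2}\LL_G\DD_G^{\dagger/2}$, which requires checking that $\DD_G^{1/2}$ maps $\mathrm{span}(\vone_V)^\perp$ appropriately and that $\ker(\LL_G) = \ker(\DD_G^{\dagger/2}\LL_G\DD_G^{\dagger/2})$ on this subspace. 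This is routine linear algebra given connectivity, but worth stating cleanly rather than waving through.
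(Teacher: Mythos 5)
Your proposal is correct and follows essentially the same route as the paper: apply Cheeger's inequality (Lemma~\ref{lemma:cheeger}) to get $\bb_{(u,v)}^\top \LL_G^\dagger \bb_{(u,v)} \le 2\phi^{-2}\,\bb_{(u,v)}^\top\DD_G^{-1}\bb_{(u,v)}$, expand the right-hand side as $2\phi^{-2}([\dd_G]_u^{-1}+[\dd_G]_v^{-1})$, lower bound the weighted degrees via $[\dd_G]_v \ge \bw\,[\deg_G]_v \ge \bw\beta$, and multiply by $\max_e \ww_e \le 2\bw$. The subspace-restriction subtlety you flag (inverting Cheeger's bound only on $\mathrm{span}(\vone_V)^\perp$) is real but handled implicitly by the paper; your instinct to state it cleanly is sound and does not alter the argument.
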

\begin{proof}
Let $\DD_G \in \R^{V \times V}$ be the diagonal matrix whose diagonal is the weighted degrees of $G$.
By Cheeger's inequality (\Cref{lemma:cheeger}), for any pair of distinct
vertices $a,b \in U$, we have the desired
\begin{align*}
    \ww_e \ER_G(a, b) &\le 2\bw \bb_{a,b}^\top \LL_G^{\dagger} \bb_{a,b}
    \le
    4\bw \phi^{-2} \bb_{(a,b)}^\top \DD_G^{-1} \bb_{(a,b)}
    \\
    &\le
    4 \phi^{-2} \Par{\frac{1}{[\deg_G]_a} + \frac{1}{[\deg_G]_b}}
    \le
    8 \phi^{-2} \beta^{-1}.
\end{align*}
\end{proof}

Finally, we state one additional sketching property enjoyed by $\BFSalgo$ in Lemma~\ref{lemma:bfs_fix}. We mention that this is the key step in our proof where we require that our input graph is a bipartite lift of a directed graph. We exploit this property by employing machinery from Section~\ref{ssec:bilift}.

\begin{lemma}\label{lemma:bfs_fix}
Suppose \BFSalgo~is given input $\vC$ instead of $\vG$ where $\vC$ is a subgraph of
$\vG=(V,E,\ww)$, a bipartite lift of a directed graph, satisfying $V = A \cup B$ and $E \subseteq A \times B$, 
$\vHs$ is a $(\bw,\rho)$-cluster of 
$\vCs = (V(\vC),E(\vC),[\wws]_{E(\vC)})$, and $T$ is a subgraph of $G \defeq
\und(\vG)$ and $E(T) \cap E(C) = \emptyset$, where $C \defeq \und(\vC)$.
Under the same assumptions as \Cref{lem:bfs_guarantee},
with probability $\ge 1-\delta$,
\Cref{item:bfs_1,item:bfs_2,item:bfs_3,item:bfs_4} and the runtime of
\Cref{lem:bfs_guarantee} still hold.

In addition, $|\BB_{\vG}| \ww = |\BB_{\vG}| \ww'$, and 
for fixed $\aa,\zz \in \R^V$,
\begin{equation}\label{eq:error_per_vecs}
    \Abs{\aa^\top \BB_{\vG}^\top (\WW'-\WW) \HH_{\vG} \zz}
    \le
    \cbfs \cdot \alpha \bw \sqrt{\log\Par{\frac 1 \delta}} 
    \norm{\aa}_{\PPi_{V(\vC)}} \norm{\zz}_{\PPi_V(\vC)}
    + \eps \cdot \norm{\aa}_{\LL_G} \norm{\zz}_{\LL_G}.
\end{equation}
\end{lemma}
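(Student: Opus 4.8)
The first part of \Cref{lemma:bfs_fix} --- namely that \Cref{item:bfs_1,item:bfs_2,item:bfs_3,item:bfs_4} and the runtime bound still hold --- is essentially immediate: $\BFSalgo$ never uses the Eulerian property of its graph argument (it only uses that $T$ is a spanning tree subgraph with unit-lower-bounded weights on $E(T)$, that the reweighted pieces are clusters, and that the reference vector $\wws$ satisfies \eqref{eq:wws_reqs}), so running it on $\vC$ rather than $\vG$ and invoking \Cref{lem:bfs_guarantee} verbatim gives all the stated conclusions. The claim $|\BB_{\vG}| \ww = |\BB_{\vG}| \ww'$ follows from \Cref{lemma:blift_spectral}: since $\vG$ is a bipartite lift, $\BB_{\vG^\uparrow}$ has block form $(\HH_{\vG} \mid -\TT_{\vG})$, and $\BB_{\vG}^\top(\ww'-\ww) = \vzero$ (Item~\ref{item:bfs_1}) together with the lift structure forces $\HH_{\vG}^\top(\ww'-\ww) = \TT_{\vG}^\top(\ww'-\ww) = \vzero$, hence $|\BB_{\vG}|^\top(\ww'-\ww) = (\HH_{\vG}+\TT_{\vG})^\top(\ww'-\ww) = \vzero$. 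Actually here $\vG$ \emph{is} the bipartite lift, so $\BB_{\vG} = (\HH \mid -\TT)$ already and the two degree identities are a direct consequence of $\BB_{\vG}^\top(\ww'-\ww)=\vzero$ restricted to the two sides $A$ and $B$ of the bipartition.

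The substance is the per-vector-pair bound \eqref{eq:error_per_vecs}. The plan is to track the total reweighting $\ww' - \ww$ as a sum of $\approx \tau$ circulation updates $\ww_t \circ \xx_t$ (ignoring the rounding term, which I'll handle at the end), and bound, for fixed $\aa,\zz \in \R^V$, the scalar
\[
\sum_t \aa^\top \BB_{\vG}^\top (\WW_t \XX_t) \HH_{\vG} \zz = \sum_t \sum_{e \in E(\vH_t)} [\xx_t]_e [\ww_t]_e\, (\aa_{h(e)} - \aa_{t(e)})\, \zz_{h(e)}.
\]
Since within each iteration $\xx_t = \eta\PP_{\vH_t,\ww_t}\ss$ is (an approximation to) a projected Rademacher vector, I will apply scalar Azuma/Bernstein concentration (as in \Cref{lem:matrix_azuma}, but for the $1\times 1$ case, i.e.\ ordinary Azuma) to the martingale $\sum_t \langle \xx_t, \cc_t\rangle$ where $[\cc_t]_e \defeq [\ww_t]_e(\aa_{h(e)}-\aa_{t(e)})\zz_{h(e)}$ for $e \in E(\vH_t)$. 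The key is to bound the per-step variance $\sum_t \|\PP_{\vH_t,\ww_t}\cc_t\|_2^2 \le \sum_t \|\cc_t\|_2^2$; crucially, since $\WW_t \xx_t$ is a circulation on $\vH_t$ with all vertices in $V(\vC)$, I may first replace $\aa$ by $\PPi_{V(\vC)}\aa$ and $\zz$ by $\PPi_{V(\vC)}\zz$ (the circulation property kills the all-ones and the off-$V(\vC)$ parts, exactly as in the proof of \Cref{lemma:variance}). Then $\sum_{e\in E(\vH_t)}[\ww_t]_e^2(\aa_{h(e)}-\aa_{t(e)})^2\zz_{h(e)}^2 \le (\max_e [\ww_t]_e) \cdot \max_v \zz_v^2 \cdot \sum_e [\ww_t]_e(\aa_{h(e)}-\aa_{t(e)})^2 \le \alpha\bw \cdot \|\zz\|_{\infty,V(\vC)}^2 \cdot \|\aa\|_{\LL_{H_t}}^2$, and here the bipartite-lift/cluster structure plus \Cref{lemma:ex_deg}-type degree control is what converts the $\ell_\infty$ and $\LL_{H_t}$ norms into the $\PPi_{V(\vC)}$-norms appearing in \eqref{eq:error_per_vecs} (a $\phi$-expander lift on large-degree vertices gives $\|\zz\|_{\infty}^2 \lesssim \|\zz\|_{\PPi}^2/\beta$ and $\|\aa\|_{\LL_{H_t}}^2 \lesssim \bw\cdot$ (degree-weighted variance) $\lesssim \bw\cdot\|\aa\|_{\PPi}^2$-ish), so that the accumulated variance over all $\tau$ steps is $\lesssim \alpha^2\bw^2 \|\aa\|_{\PPi_{V(\vC)}}^2\|\zz\|_{\PPi_{V(\vC)}}^2 \cdot \tau\eta^2 = O(\alpha^2\bw^2\|\aa\|_{\PPi}^2\|\zz\|_{\PPi}^2)$ up to the $\log(1/\delta)$ from $\tau,\eta$. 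Azuma then gives the first term of \eqref{eq:error_per_vecs}, the $\sqrt{\log(1/\delta)}$ coming out of the concentration tail (note there is no $\log m$ here since this is a scalar, not a matrix, statement --- which is precisely why the spectral-sketch guarantee is stronger per-vector than the worst-case spectral bound).

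Finally, I account for the two sources of error outside the clean circulation sum: the inexactness of $\PMROalgo$ (each $\xx_t$ differs from $\eta\PP_{\vH_t,\ww_t}\ss$ by $\le \xi$ entrywise, with $\xi$ inverse-polynomially small), and the degree-fixing vector $\yy$ supported on $T$ with $\|\yy\|_1$ inverse-polynomially small, both by \Cref{lem:bfs_guarantee}/\Cref{lemma:rounding}. These contribute $\le \|\BB_{\vG}^\top(\WW'-\WW)\HH_{\vG}\|_{\mathrm{op}}$-type error against $\|\aa\|_{\LL_G}\|\zz\|_{\LL_G}$ --- more precisely, $|\aa^\top \MM \zz| \le \|\aa\|_{\LL_G}\|\zz\|_{\LL_G}\cdot\|\LL_G^{\dagger/2}\MM\LL_G^{\dagger/2}\|_{\mathrm{op}}$ for $\MM$ the residual correction --- and by the same bounds used to get the $+\eps$ term in Item~\ref{item:bfs_4} these are $\le \eps\|\aa\|_{\LL_G}\|\zz\|_{\LL_G}$, which I fold into the second term of \eqref{eq:error_per_vecs}. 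I also use \eqref{eq:lift_imply_1}--\eqref{eq:lift_imply_2} from the proof of \Cref{lemma:blift_spectral} to pass between the lifted and unlifted forms of the bilinear error where convenient. \textbf{The main obstacle} I anticipate is the variance bookkeeping in the middle paragraph: getting the right-hand side to come out exactly as $\alpha\bw\|\aa\|_{\PPi_{V(\vC)}}\|\zz\|_{\PPi_{V(\vC)}}$ (rather than, say, with an extra $\rho$ or an extra $\phi^{-2}$ factor) requires carefully using that $\max_{e}[\ww_t]_e \le \alpha\bw$ on the \emph{active} edges $E(\vH_t)$ (via the $L_t$-removal as in the proof of \Cref{lem:bfs_guarantee}) and chaining the expander/degree inequalities with exactly the constants that make $\beta^{-1}\phi^{-2} \asymp \rho$ collapse correctly --- and handling the asymmetry in $\AA_e = \ww_e\LL^{\dagger/2}\bb_e\ee_{h(e)}^\top\LL^{\dagger/2}$ (the head-only factor) in the scalar setting, which is why the $\|\zz\|_\infty$ appears squared but $\aa$ enters through a genuine difference $\aa_{h(e)}-\aa_{t(e)}$.
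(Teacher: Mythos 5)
Your overall scaffolding is correct and matches the paper's: the first part (Items~1--4, runtime) carries over unchanged; $|\BB_{\vG}|^\top\ww = |\BB_{\vG}|^\top\ww'$ follows from the bipartite-lift structure applied to $\BB_{\vG}^\top(\ww'-\ww)=\vzero$; and \eqref{eq:error_per_vecs} is proved via a scalar martingale bound on $\sum_t \langle \xx_t, \cc_t\rangle$ after projecting $\aa,\zz$ by $\PPi_{V(\vC)}$ using the circulation property. The $+\eps\|\aa\|_{\LL_G}\|\zz\|_{\LL_G}$ error term from $\PMROalgo$ inexactness and tree rounding is also handled the way you describe.

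However, there is a genuine gap in the variance computation, and it is exactly at the spot you flag as the ``main obstacle.'' You set $[\cc_t]_e = [\ww_t]_e(\aa_{h(e)}-\aa_{t(e)})\zz_{h(e)}$ and then try to bound $\|\cc_t\|_2^2$ via $\|\zz\|_\infty^2\|\aa\|_{\LL_{H_t}}^2$. This cannot close to the target $\alpha^2\bw^2\|\aa\|_{\PPi_{V(\vC)}}^2\|\zz\|_{\PPi_{V(\vC)}}^2$. Expanding $(\aa_{h(e)}-\aa_{t(e)})^2 \le 2\aa_{h(e)}^2 + 2\aa_{t(e)}^2$ splits the sum into a cross term $\sum_e \ww_e^2\aa_{t(e)}^2\zz_{h(e)}^2$ (which factors cleanly, since heads and tails lie on opposite sides of the bipartition so distinct edges give distinct pairs $(h(e),t(e))$), and a diagonal term $\sum_e\ww_e^2\aa_{h(e)}^2\zz_{h(e)}^2 \asymp \max_e\ww_e^2\sum_{u}\deg(u)\aa_u^2\zz_u^2$, which carries an uncontrolled degree factor. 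No expander hypothesis is available to kill it: \Cref{lemma:bfs_fix} is stated purely in terms of $\alpha,\bw$ with no $\phi$ or $\beta$, and the quantities $\beta^{-1}\phi^{-2}$ you invoke only enter later, inside the proof of \Cref{lemma:ess_guarantee}, after \Cref{lemma:bfs_fix} has already been applied to each expander piece. So the expander/degree patch you propose is at the wrong level of abstraction and would change the form of \eqref{eq:error_per_vecs}.

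The missing idea is the paper's bipartite-lift cancellation, Eq.~\eqref{eq:cancel_heads}. Because $\WW_t[\PP_{\vH_t,\ww_t}]_{:e}$ is a circulation on a subgraph of a bipartite lift, it preserves not only $\BB_{\vG}^\top$-degrees but also $|\BB_{\vG}|^\top$-degrees (the Lemma~\ref{lemma:blift_spectral} argument), and this forces $\sum_f[\PP_{\vH_t,\ww_t}]_{fe}\,\ww_f\,\hh_f\hh_f^\top = \vzero$. Consequently one may replace the 2-sparse $\bb_f = \hh_f - \tt_f$ by the 1-sparse $\pm\tt_f$ in the directed Laplacian increment, yielding $\xx_e^{(t)} = [\ww_t]_e[\PPi\aa]_{t(e)}[\PPi\zz]_{h(e)}$ in place of your $[\cc_t]_e$. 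The diagonal term above then never appears, and $\|\xx^{(t)}\|_2^2 \le (\max_{e \in E(\vH_t)}\ww_e)^2\|\PPi\aa\|_2^2\|\PPi\zz\|_2^2 \lesssim \alpha^2\bw^2\|\aa\|_{\PPi_{V(\vC)}}^2\|\zz\|_{\PPi_{V(\vC)}}^2$ follows immediately, with no degree or expansion assumptions. Without this cancellation, the claimed form of \eqref{eq:error_per_vecs} does not hold; this is the structural reason the lemma is stated for bipartite lifts rather than arbitrary Eulerian graphs.
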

\begin{proof}
\Cref{item:bfs_2,item:bfs_3} and the second claim in \Cref{item:bfs_1} of
\Cref{lem:bfs_guarantee} are not affected by the change in input.
Further, the first claim in \Cref{item:bfs_1} follows by \Cref{lemma:rounding}, where the relevant
edge-vertex incidence matrix remains $\BB_{\vG}$ (since $T$ is a subgraph of
$G$).
By the same argument in the proof of the second equation in
\Cref{lemma:blift_spectral}, the assumption that $\vG$ is a bipartite lift gives
$|\BB_{\vG}|\ww = |\BB_{\vG}|\ww'$.
Thus, it suffices to discuss Item~\ref{item:bfs_4} and \eqref{eq:error_per_vecs}.

We now prove Item~\ref{item:bfs_4}. Note that the key difference is that we assume $\vH_\star$ is a $(\bw, \rho)$-cluster, when ERs are measured through $C$ instead of $G$. Under the assumption that \PMRO~is exact, the same argument as in the proof of
\Cref{lem:bfs_guarantee} gives that
\begin{equation} \label{eq:bfs_fix_spec}
    \normop{\LL_C^{\frac \dagger 2} \BB_{\vC}^\top (\WW'-\WW) \HH_{\vC}
    \LL_C^{\frac \dagger 2}} 
    \le
    \cbfs \sqrt{\alpha \rho \log\Par{\frac m \delta}}.
\end{equation}
Since $\supp(\ww' - \ww) \subseteq E(\vC)$ and $\LL_C \pleq \LL_G$, we obtain the
first term in the inequality in \Cref{item:bfs_4} of \Cref{lem:bfs_guarantee} (i.e., without the additive $\eps$) under an exact $\PMROalgo$. The error due to inexactness is then handled the same way as in \Cref{item:bfs_4} of \Cref{lem:bfs_guarantee}, since the spectral error guarantees of Lemma~\ref{lemma:rounding} are measured with respect to $G$, not $C$.

In the remainder of the proof, we handle \eqref{eq:error_per_vecs}. We first consider the sketching error assuming \PMRO~is exact.
In iteration $t$, the difference to the directed Laplacian is:
\begin{equation}\label{eq:dir_lap_diff}
\begin{aligned}
    \BB_{\vC}^\top (\WW_{t+1} - \WW_t) \HH_{\vC}
    &=
    \eta \sum_{e \in E(\vH_t)} [\ww_t]_e \Brack{\PP_{\vH_t,\ww_t} \ss}_e \bb_e \hh_e^\top \\
    &=
    \eta \sum_{e \in E(\vH_t)} \ss_e \sum_{f \in E(\vH_t)} 
    [\PP_{\vH_t,\ww_t}]_{fe} [\ww_t]_f \bb_f \hh_f^\top.
\end{aligned}
\end{equation}
By the third equality of \Cref{lem:rankone_fix}, $\BB_{\vC}^\top \WW_t
[\PP_{\vH_t,\ww_t}]_{:e} = \vzero_E$ for any $e \in E(\vH_t)$, i.e.,
$\WW_t[\PP_{\vH_t,\ww_t}]_{:e}$ is a circulation on the graph $\vC$.
Again, by the same argument in \Cref{lemma:blift_spectral}, we have
\begin{equation}\label{eq:column_circ}|\BB_{\vC}|^\top\WW_t[\PP_{\vH_t,\ww_t}]_{:e} = \vzero_V,\end{equation}
i.e., both in-degrees and
out-degrees are preserved.

Next, because all edges in $\vC$ are from $A$ to $B$, we have
$[|\BB_{\vC}|]_{:A} = \HH_{\vC}$.
Then, we compute that
\begin{equation}\label{eq:heads_diag}
	\begin{aligned}
    \sum_{f \in E(\vH_t)} [\PP_{\vH_t,\ww_t}]_{fe} \ww_f \hh_f \hh_f^\top &=
    [|\BB_{\vC}|]_{:A}^\top \WW_t \diag{[\PP_{\vH_t,\ww_t}]_{:e}} \HH_{\vC} \\
    &= \diag{[|\BB_{\vC}|]_{:A}^\top \WW_t [\PP_{\vH_t,\ww_t}]_{:e}}.
    \end{aligned}
\end{equation}
Combining \eqref{eq:column_circ} and \eqref{eq:heads_diag} shows:
\begin{equation}\label{eq:cancel_heads}
    \sum_{f \in E(\vH_t)} [\PP_{\vH_t,\ww_t}]_{fe} \ww_f \bb_f \hh_f^\top =
    \sum_{f \in E(\vH_t)} [\PP_{\vH_t,\ww_t}]_{fe} \ww_f \tt_f \hh_f^\top.
\end{equation}
In addition, we have by \Cref{lemma:circeq} and that $\WW_t [\PP_{\vH_t,\ww_t}]_{:e}$ is a circulation that the following hold:
\begin{equation}\label{eq:each_edge_kernel}
\begin{aligned}
    \BB_{\vC}^\top \WW_t \diag{[\PP_{\vH_t,\ww_t}]_{:e}} \HH_{\vC} \vone_V &= \BB_{\vC}^\top \WW_t [\PP_{\vH_t,\ww_t}]_{:e} = \vzero_V,
    \\
   \vone_V^\top \BB_{\vC}^\top \WW_t \diag{[\PP_{\vH_t,\ww_t}]_{:e}} \HH_{\vC}
    &=
    -\vone_V^\top \TT_{\vC}^\top \WW_t \diag{[\PP_{\vH_t,\ww_t}]_{:e}}
    \BB_{\vC} \\
    &= -[\PP_{\vH_t,\ww_t}]_{:e}^\top \WW_t \BB_{\vC}= \vzero_V^\top.
\end{aligned}
\end{equation}
Recalling the formula \eqref{eq:dir_lap_diff}, and summing \eqref{eq:each_edge_kernel} over all $e \in E(\vH_t)$, gives
\begin{equation}\label{eq:ones_in_kernel}
    \BB_{\vC}^\top (\WW_{t+1}-\WW_t) \HH_{\vC} \vone_V = \sum_{e \in E(\vH_t)} \ss_e \vzero_V = \vzero_V, \quad
    \vone_V^\top \BB_{\vC}^\top (\WW_{t+1}-\WW_t) \HH_{\vC} = \vzero_V^\top.
\end{equation}
Define for each $e \in E(\vH_t)$ a scalar $\xx_e^{(t)}$ by:
\[
    \xx_e^{(t)} \defeq 
    [\ww_t]_e \aa^\top \PPi_{V(\vC)} \tt_e \hh_e^\top \PPi_{V(\vC)} \zz,
\]
where we recall that $\PPi_{V(\vC)} = \II_{V(\vC)} -
\frac{1}{|V(\vC)|}\vone_{V(\vC)}\vone^\top_{V(\vC)}$. We showed in \eqref{eq:ones_in_kernel} that $\vone_{V}$, and hence $\vone_{V(\vC)}$, is in the left and right kernel of $\BB_{\vC}^\top (\WW_{t+1}-\WW_t) \HH_{\vC} $. Combining \eqref{eq:dir_lap_diff} and \eqref{eq:cancel_heads} then yields
\begin{equation}\label{eq:difference_az}
\begin{aligned}
    \aa^\top \BB_{\vC}^\top (\WW_{t+1} - \WW_t) \HH_{\vC} \zz &=
    \aa^\top \PPi_{V(\vC)} \BB_{\vC}^\top (\WW_{t+1} - \WW_t) \HH_{\vC} \PPi_{V(\vC)} \zz \\
    &=
    \eta \sum_{e \in E(\vH_t)} \ss_e \sum_{f \in E(\vH_t)} 
    [\PP_{\vH_t,\ww_t}]_{fe} \xx_f^{(t)}.
\end{aligned}
\end{equation}
Each $\eta \sum_{f \in E(\vH_t)} [\PP_{\vH_t,\ww_t}]_{fe}
\xx_f \cdot \ss_e$ is sub-Gaussian with parameter $\sigma_e \defeq \eta |\inprods{[\PP_{\vH_t,\ww_t}]_{:e}}{\xx^{(t)}}|$.
Therefore, the left-hand side of \eqref{eq:difference_az} is sub-Gaussian with parameter
\[
    \eta^2 \sum_{e \in E(\vH_t)} \inprod{[\PP_{\vH_t,\ww_t}]_{:e}}{\xx^{(t)}}^2
    =
    \eta^2 \xx^\top \PP_{\vH_t,\ww_t} \xx^{(t)}
    \le
    \eta^2 \norm{\xx^{(t)}}_2^2,
\]
where $\norm{\xx^{(t)}}_2^2$ can be bounded, using the definition of $\alpha$
and \Cref{item:bfs_1}, by
\begin{align*}
    \norm{\xx^{(t)}}_2^2 
    &= 
    \sum_{e \in E(\vH_t)} \ww_e^2 [\PPi_{V(\vC)}\aa]_{t(e)}^2 [\PPi_{V(\vC)}\zz]_{h(e)}^2 \\
    &\le 8\alpha^2 \bw^2 \Par{\sum_{v \in V(\vC)} [\PPi_{V(\vC)}\aa]_{v}^2 } 
    \Par{\sum_{u \in V(\vC)} [\PPi_{V(\vC)}\zz]_{u}^2}
    \\
    &= 
    8\alpha^2 \bw^2 \norm{\aa}_{\PPi_{V(\vC)}}^2 \norm{\zz}_{\PPi_{V(\vC)}}^2.
\end{align*}
Summing over at most $\tau$ iterations, the total sub-Gaussian parameter of $\aa^\top \BB_{\vC}^\top (\WW' - \WW) \HH_{\vC} \zz$ is:
\[
    \tau \cdot 8\alpha^2 \bw^2 \eta^2 \cdot \norm{\aa}_{\PPi_{V(\vC)}}^2 \norm{\zz}_{\PPi_{V(\vC)}}^2
    \le
    6000 \alpha^2 \bw^2 \cdot \norm{\aa}_{\PPi_{V(\vC)}}^2 \norm{\zz}_{\PPi_{V(\vC)}}^2.
\]
Standard sub-Gaussian concentration finally yields, with probability $1-\frac{\delta}{4}$, the desired
\[
    \Abs{\aa^\top \BB_{\vG}^\top (\WW'-\WW) \HH_{\vG} \zz}
    \le
    \cbfs \cdot \alpha \bw \sqrt{\log\Par{\frac 1 \delta}} 
    \norm{\aa}_{\PPi_{V(\vC)}}^2 \norm{\zz}_{\PPi_{V(\vC)}}^2.
\]
Following the notation in Lemma~\ref{lem:bfs_guarantee}, conditioning on the events $\bigcup_{0 \le t \le \tau} \event_t$ does not affect the proof, for the same reason as outlined in Lemma~\ref{lem:bfs_guarantee}: if any $\event_t$ fails, we set all future weight updates to zero in the scalar martingale. 
Finally, as $T$ is edge-disjoint from $C$, the additive spectral error term due to the inexactness of \PMRO~and the final rounding in each iteration is measured with respect
to $\LL_G$, as is done in Lemma~\ref{lem:pmro}.
Applying this additive spectral error to vectors $\aa$ and $\zz$ gives an
additive term of $\eps \cdot \norm{\aa}_{\LL_G} \norm{\zz}_{\LL_G}$.
This completes our proof.
\end{proof}

\begin{corollary}\label{cor:multiple_cluster_fix}
Consider calling \BFS $I$ times, all with shared parameters $\wws, \ell, \delta,
\eps$, but on edge-disjoint subgraphs $\{\vH_i\}_{i \in [I]}$ and $\{\vC_i\}_{i
\in [I]}$ of $\vG=(V,E,\ww)$, a bipartite lift of a directed graph, with $V = A \cup B$ and $E \subseteq A \times
B$, so that each corresponding $[\vHs]_i$ is a
$(\bw_i, \rho)$-cluster in $\vC_i$ for some value of $\bw_i$.
Then with probability $\ge 1 - \delta I$, the runtime and spectral error
guarantee in \Cref{cor:multiple_cluster} still hold.
In addition, $|\BB_{\vG}|^\top\ww = |\BB_{\vG}|^\top \ww'$, and 
for fixed $\aa,\zz \in \R^V$, for all $i \in I$,
\begin{align*}
    \Abs{\aa^\top \BB_{\vG}^\top (\WW'-\WW) \HH_{\vG} \zz}
    \le&\
    \cbfs \cdot \alpha \sqrt{\log\Par{\frac 1 \delta}} 
    \sum_{i \in [I]} \bw_i \norm{\aa}_{\PPi_{V(\vC_i)}}
    \norm{\zz}_{\PPi_{V(\vC_i)}} + \eps I\cdot \norm{\aa}_{\LL_G} \norm{\zz}_{\LL_G},
\end{align*}
where $G \defeq \und(G)$ and for each $i \in [I]$, $C_i \defeq \und(C_i)$.
\end{corollary}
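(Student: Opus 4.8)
The plan is to lift the single-cluster guarantee of Lemma~\ref{lemma:bfs_fix} to the setting of $I$ simultaneous, edge-disjoint applications of $\BFSalgo$, exactly mirroring how Corollary~\ref{cor:multiple_cluster} lifts Lemma~\ref{lem:bfs_guarantee}. First, I would observe that the runtime and the operator-norm spectral error bound (i.e.\ the analog of Item~\ref{item:bfs_4}) follow verbatim from the argument in Corollary~\ref{cor:multiple_cluster}: since the subgraphs $\{\vC_i\}$ (and hence the clusters $\{[\vHs]_i\}$) have disjoint edge sets, the $\tau$ reweightings applied across all clusters in parallel can be analyzed as a single martingale, and the combined variance bound from Lemma~\ref{lemma:variance} (applied with $\vH$ the edge-disjoint union of all clusters) is still controlled by $\rho$, so Lemma~\ref{lem:matrix_azuma} gives no $I$-factor overhead on the first spectral error term; the degree-fixing routing in Line~\ref{line:dd_define} of Algorithm~\ref{alg:basic_fast_sparsify} is done once on $T$ per iteration across all clusters, avoiding an $I$-factor on the $|V|$ runtime term. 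The failure probability $\delta I$ is a union bound. The preservation $|\BB_{\vG}|^\top\ww = |\BB_{\vG}|^\top\ww'$ is immediate from Lemma~\ref{lemma:bfs_fix} (it holds per cluster since $\ww'-\ww$ is supported on $\bigcup_i E(\vC_i)$, each a set of $A$-to-$B$ edges, and the bipartite-lift argument of Lemma~\ref{lemma:blift_spectral} applies).

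The substantive part is the per-vector-pair sketching bound. Here I would decompose $\WW' - \WW = \sum_{i \in [I]} (\WW'_i - \WW_i)$, where $\WW'_i - \WW_i$ records the weight change localized to $\vC_i$ (these are disjointly supported). By linearity, $\aa^\top\BB_{\vG}^\top(\WW'-\WW)\HH_{\vG}\zz = \sum_{i\in[I]} \aa^\top\BB_{\vG}^\top(\WW'_i-\WW_i)\HH_{\vG}\zz$, and for each $i$ I apply the scalar sub-Gaussian computation inside the proof of Lemma~\ref{lemma:bfs_fix} — specifically, the bound that $\aa^\top \BB_{\vC_i}^\top(\WW'_i - \WW_i)\HH_{\vC_i}\zz$ is sub-Gaussian with parameter at most $6000\,\alpha^2\bw_i^2\norm{\aa}_{\PPi_{V(\vC_i)}}^2\norm{\zz}_{\PPi_{V(\vC_i)}}^2$ (using $\tau = O(\eta^{-2})$ and the $\norm{\xx^{(t)}}_2^2$ estimate there). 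The key point is whether these $I$ scalar contributions are themselves jointly sub-Gaussian with the \emph{sum} of their parameters, rather than requiring an $\sqrt I$ or $I$ loss: this holds because the Rademacher vectors $\ss$ driving the distinct clusters are independent (disjoint edge supports), so the whole expression is a single martingale in the concatenated signs, and its sub-Gaussian parameter is additive over clusters and over the $\le\tau$ iterations. Taking the union of these martingales and applying standard sub-Gaussian concentration (as in Lemma~\ref{lemma:bfs_fix}) gives, with probability $\ge 1 - \delta/4$,
\[
\Abs{\aa^\top\BB_{\vG}^\top(\WW'-\WW)\HH_{\vG}\zz} \le \cbfs\cdot\alpha\sqrt{\log\tfrac 1\delta}\sum_{i\in[I]}\bw_i\norm{\aa}_{\PPi_{V(\vC_i)}}\norm{\zz}_{\PPi_{V(\vC_i)}},
\]
where I absorb the $\sqrt{6000}$ and the sum-over-iterations constants into $\cbfs$, and bound $\sqrt{\sum_i \sigma_i^2}$ by the triangle inequality on sub-Gaussian tails (or more simply, handle each $i$ with failure probability $\delta/(4I)$ and union bound, then rescale constants). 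The additive $\eps I$ term in the operator-norm sense comes from accumulating the $\PMROalgo$ inexactness and $\ROalgo$ rounding errors across the $I$ calls, each contributing $\eps\cdot\norm{\aa}_{\LL_G}\norm{\zz}_{\LL_G}$ as in Lemma~\ref{lemma:bfs_fix}; since these errors are measured through $\LL_G$ (not through the individual $C_i$), they simply add.

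The main obstacle I anticipate is getting the \emph{additive} (rather than $\ell_2$-aggregated or $\sqrt I$-inflated) dependence on $\sum_i \bw_i\norm{\aa}_{\PPi_{V(\vC_i)}}\norm{\zz}_{\PPi_{V(\vC_i)}}$ to come out cleanly. A naive union bound over the $I$ independent scalar sub-Gaussians with parameters $\sigma_i$ yields a bound proportional to $\max_i \sigma_i\sqrt{\log(I/\delta)}$ for a single cluster but $\sum_i\sigma_i\sqrt{\log(1/\delta)}$ is what the statement asks — the resolution is that the statement bounds each $\Abs{\aa^\top\BB_{\vG}^\top(\WW'-\WW)\HH_{\vG}\zz}$ (the full sum over $i$) by the full sum $\sum_i$, which is exactly the triangle-inequality/additive-variance bound on the concatenated martingale, so I should emphasize that the inner products against the projections $\PPi_{V(\vC_i)}$ onto \emph{different} vertex subsets do not interfere destructively, and the sub-Gaussian parameters add because of independence. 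The bookkeeping of which errors are measured through $G$ versus through each $C_i$ — already delicate in Lemma~\ref{lemma:bfs_fix} — is the other place to be careful, but it is mechanical given that $T$ is edge-disjoint from every $\vC_i$.
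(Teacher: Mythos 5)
Your overall plan is the same as the paper's: apply Lemma~\ref{lemma:bfs_fix} to each cluster and combine, reusing the Corollary~\ref{cor:multiple_cluster} argument for the operator-norm error and runtime. The paper's proof is essentially one sentence saying exactly this. However, the discussion you flag as the ``main obstacle'' reveals a genuine confusion worth correcting.

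You oscillate between two routes for the per-vector bound: (i) treating the $I$ per-cluster contributions as a single concatenated martingale and adding sub-Gaussian \emph{variance parameters}, and (ii) a union bound over per-cluster applications of Lemma~\ref{lemma:bfs_fix} at failure probability $\delta/(4I)$ each. Neither is quite right, and the correct resolution is simpler than either. Route (i) would give a bound proportional to $\bigl(\sum_i \sigma_i^2\bigr)^{1/2}\sqrt{\log(1/\delta)}$ (an $\ell_2$ aggregate of the per-cluster scales), which is a \emph{different and stronger} form than the stated $\sum_i \sigma_i\sqrt{\log(1/\delta)}$; claiming it yields the additive sum is a conflation of $\ell_2$ and $\ell_1$ aggregation. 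Route (ii) introduces a $\sqrt{\log(4I/\delta)}$ in place of $\sqrt{\log(1/\delta)}$, and this $\log I$ dependence is \emph{not} absorbable into a universal constant, so ``rescale constants'' does not repair it.

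The key observation you miss is that the corollary's conclusion holds with failure probability $\delta I$, not $\delta$. This means you should apply Lemma~\ref{lemma:bfs_fix} to each cluster with the \emph{same} failure parameter $\delta$ (not $\delta/I$), take a union bound to obtain total failure probability $\delta I$, and then bound
\[
\Abs{\aa^\top\BB_{\vG}^\top(\WW'-\WW)\HH_{\vG}\zz}\le\sum_{i\in[I]}\Abs{\aa^\top\BB_{\vG}^\top(\WW'_i-\WW_i)\HH_{\vG}\zz}
\]
by the triangle inequality, plugging in \eqref{eq:error_per_vecs} term by term. This immediately gives $\cbfs\alpha\sqrt{\log(1/\delta)}\sum_i\bw_i\norm{\aa}_{\PPi_{V(\vC_i)}}\norm{\zz}_{\PPi_{V(\vC_i)}}$ plus the $\eps I\norm{\aa}_{\LL_G}\norm{\zz}_{\LL_G}$ error accumulated across $I$ calls, with no $\log I$ anywhere. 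No joint concentration or martingale concatenation is required for the per-vector bound; that machinery is only needed for the operator-norm bound inherited from Corollary~\ref{cor:multiple_cluster}. Everything else in your proposal (the bipartite-lift degree preservation, the bookkeeping of errors measured through $G$ versus $C_i$) is correct and aligned with the paper.
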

\begin{proof}
This claim follows by an analogous argument as in the proof of
\Cref{cor:multiple_cluster}, where we use the sketching error claim \eqref{eq:error_per_vecs} from
\Cref{lemma:bfs_fix} summed across each subgraph.
\end{proof}

We are now ready to give the main algorithm of this section, as well as its analysis. To clarify the role of the expander decomposition (and degree-based spectral bounds), we note that the sketching guarantee provided to a fixed pair of vectors in \eqref{eq:sketch_err_bess} scales as $\beta^{-1}$, as opposed to the $\beta^{-\half}$ bound one would na\"ively apply from our ER decomposition-based guarantee in Lemma~\ref{lem:ps_guarantee}.

\begin{algorithm2e}[ht!]\label{alg:expander_spectral_sketching}
\caption{$\ESSalgo(\{\vG_i\}_{i \in [I]}, \vG, T, \delta, \eps, W, \beta)$}
\DontPrintSemicolon
\codeInput $\{\vG^{(i)}\}_{i \in [I]}$, subgraphs of simple %
$\vG = (V, E, \ww)$, with $V = A\cup B$, $E \subseteq A \times B$, 
$\max_{e \in \supp(\ww)} \ww_e \le W$, and such that $\{G^{(i)} \defeq \und(\vG^{(i)})\}_{i \in [I]}$ are a $(\phi, 2, J)$-expander decomposition of $G \defeq \und(\vG)$, $T$ a tree subgraph of $G$ with $\min_{e \in E(T)} \ww_e \ge 1$, $\delta, \eps \in (0, \frac 1 {100})$, $\beta > 0$ \;
$m \gets E(\vG)$, $n \gets V(\vG)$, $R \gets \emptyset$\;
\For{$i \in [I]$}{
$V_i \subseteq V(\vG^{(i)})$ be vertices in $\vG^{(i)}$ with combinatorial (unweighted) degrees at least $\beta$\;
$\vH \gets \vC \defeq \vG^{(i)}[V_i]$, $\hm \gets |E(\vH)|$, $\hn \gets |V(\vH)|$, $\wws \gets \ww$\;
\If{$\hm \ge 40\hn$}{
$\ww_0 \gets \ww$, $\vC_0 \gets \vC$, $\vH_0 \gets \vH$, $\ell_1 \gets \frac{1}{2\log^{2}(\frac{nW}{\eps})}$, $\tau_1 \gets \log(\frac 2 {\ell_1})$\;
\For{$0 \le t < \tau_1$}{\label{line:while_ess_1_start}
$\ww_{t + 1} \gets \BFS(\vH_t, \vC_t, \wws, \ell_1, \frac{\delta}{2I\tau_1},
\frac{\eps}{4I\tau_1}, E(\vH), T)$\;
$\vC_{t + 1} \gets (V(\vC), E(\vC), [\ww_{t + 1}]_{E(\vC)})$, $\vH_{t + 1} \gets (V(\vH), E(\vH), [\ww_{t + 1}]_{E(\vH)})$\;
\label{line:while_ess_1_end}
}
$F \gets \{e \in E(\vH) \mid [\ww_t]_e \le \ell_1 [\wws]_e\}$\;\label{line:ess_F_def}
$\ww_0 \gets \ww_t$, $t \gets 0$, $\vG_0 \gets \vG_t$, $\vH_0 \gets \vH$, $\ell_2 \gets \frac{\eps}{4nmW}$, $\tau_2 \gets \log(\frac 2 {\ell_2})$\;
\For{$0 \le t < \tau_2$}{\label{line:while_ess_2_start}
$\ww_{t + 1} \gets \BFS(\vH_t, \vC_t, \wws, \ell_2, \frac{\delta}{2I\tau_2},
\frac{\eps}{4I\tau_2}, F, T)$\;
$\vC_{t + 1} \gets (V(\vC), E(\vC), [\ww_{t + 1}]_{E(\vC)})$, $\vH_{t + 1} \gets (V(\vH), E(\vH), [\ww_{t + 1}]_{E(\vH)})$;
\label{line:while_ess_2_end}
}
$R \gets R \cup \{e \in E(\vH) \mid [\ww_t]_e \le \frac \eps {4nm}\}$, $\ww \gets \ww_t$\;
}
}
\Return{$\vG' \gets (V, E, \ww_{E \setminus R} + \ROalgo(\vG, \ww_{R}, T))$}\;
\end{algorithm2e}

\begin{lemma}\label{lemma:ess_guarantee}
There is a universal constant $\cess$ such that 
if $\cess \cdot \beta^{-1}\phi^{-2}\log(\frac {nW} {\delta\eps})\log^2\log(\frac{nW}{\eps}) \le 1$, and $\vG = (V, E, \ww)$ is a bipartite lift of a directed graph,
\ESS (Algorithm~\ref{alg:expander_spectral_sketching}) returns $\vG' = (V, E, \ww')$
satisfying the following guarantees with probability $\ge 1 - \delta$.
\begin{enumerate}
\item $\BB_{\vG}^\top \ww' = \BB_{\vG}^\top \ww$,
    $|\BB_{\vG}|^\top \ww' = |\BB_{\vG}|^\top \ww$. \label{item:ess_1}
\item $\nnz(\ww') \le \frac{31}{32} \nnz(\ww)+ \cess \cdot nJ\beta$.
    \label{item:ess_2}
\item For $G \defeq \und(\vG)$,
\[
\normop{\LL_{G}^{\frac \dagger 2}\BB_{\vG}^\top\Par{\WW' - \WW}\HH_{\vG}\LL_G^{\frac \dagger 2}} 
\le \cess \beta^{-\frac 1 2}\phi^{-1}\sqrt{\log\Par{\frac {nW} {\delta\eps}}}\log\log\Par{\frac{nW}{\eps}} + \eps.
\]
\label{item:ess_3}
\item For any fixed $\aa,\zz \in \R^V$, 
    \begin{equation}\label{eq:sketch_err_bess}
    \Abs{\aa^\top \BB_{\vG}^\top\Par{\WW' - \WW}\HH_{\vG} \zz} \le
    \Par{\cess \beta^{-1}\phi^{-2}\sqrt{\log\Par{\frac {nW}
    {\delta\eps}}}\log\log\Par{\frac{nW}{\eps}} + \eps} \cdot
    \norm{\aa}_{\LL_G} \norm{\zz}_{\LL_G}.
    \end{equation}\label{item:ess_4}
\end{enumerate}
Moreover, $\max_{e \in E} \frac{\ww'_e}{\ww_e} \le \cess$. 
The runtime of $\ESSalgo$ is
\[\bO\Par{|E|\log^2\Par{\frac{nW}{\delta\eps}}\log\Par{\frac{nW}{\eps}}}.\]
\end{lemma}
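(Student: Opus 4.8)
The plan is to follow the analysis of Lemma~\ref{lem:ps_guarantee} almost verbatim, making three substitutions. First, the ER decomposition of Proposition~\ref{prop:er_partition} is replaced by the $(\phi,2,J)$-expander decomposition of Proposition~\ref{lemma:ex_partition}; after discarding from each piece $\vG^{(i)}$ the vertices of combinatorial degree below $\beta$, Lemma~\ref{lemma:ex_to_er} certifies that $\vC_i \defeq \vG^{(i)}[V_i]$ is a $(\bw_i, 8\beta^{-1}\phi^{-2})$-cluster whose effective resistance diameter is controlled through the expander piece $\vG^{(i)}$, which is exactly the structural input $\BFSalgo$ needs, with $\rho = 8\beta^{-1}\phi^{-2}$ playing the role of the ER-diameter parameter that $\tfrac{33n\log n}{m}$ played in $\PSalgo$. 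Second, the calls to $\BFSalgo$ and their aggregate guarantee (Corollary~\ref{cor:multiple_cluster}) are replaced by the bipartite-lift versions, Lemma~\ref{lemma:bfs_fix} and Corollary~\ref{cor:multiple_cluster_fix}; this is legitimate since $\vG$ is a bipartite lift with $V = A \cup B$, $E \subseteq A \times B$, every $\vC_i$ (being a subgraph of $\vG$) inherits this structure, and $T$ is edge-disjoint from the $\vC_i$. Third, the per-vector-pair estimate of Item~\ref{item:ess_4} is deduced from the sketching bound \eqref{eq:error_per_vecs} using the degree-based spectral inequality of Lemma~\ref{lemma:ex_deg}. With these substitutions, Items~\ref{item:ess_1}--\ref{item:ess_3} follow by the same steps as Lemma~\ref{lem:ps_guarantee}: condition on all $\BFSalgo$ calls succeeding; bootstrap the invariant $0.9\LL_{G_t} \preceq \LL_G \preceq 1.1\LL_{G_t}$ along the execution against the accumulated spectral error using Facts~\ref{fact:dirclose_undirclose} and~\ref{fact:opnorm_precondition} and the connectedness reduction of Section~\ref{sec:prelims}; run the potential-function argument of Lemma~\ref{lem:ps_guarantee} to show that in each processed piece at least a $\tfrac14$-fraction of its edges enter $F$ after the $\tau_1$ first-phase calls and at least a $\tfrac1{16}$-fraction enter $R$ after the $\tau_2$ second-phase calls; and plug $\rho = 8\beta^{-1}\phi^{-2}$ into Corollary~\ref{cor:multiple_cluster_fix} with $\alpha = 2$ in phase one (dominant) and $\alpha = \Theta(\log^{-2}(\tfrac{nW}{\eps}))$ in phase two (lower order), yielding the operator-norm bound of Item~\ref{item:ess_3} once the $O(\eps/(\tau I))$-type additive and $\ROalgo$ errors are summed.

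For the sparsity bound (Item~\ref{item:ess_2}), beyond the $\le m/2$ edges cut by the decomposition and the $\le 40\sum_i \hn_i \le 40nJ$ edges in pieces too small to process ($\hm_i < 40\hn_i$), we also never touch the edges of $\vG^{(i)}$ incident to a vertex of combinatorial degree below $\beta$, of which there are at most $\beta|V(\vG^{(i)})|$ by a handshake count and hence at most $\beta nJ$ in total; every other edge lies in a processed cluster, a $\tfrac1{16}$-fraction of which is zeroed, giving $\nnz(\ww') \le \tfrac{31}{32}\nnz(\ww) + \cess nJ\beta$. The first equality of Item~\ref{item:ess_1}, $\BB_{\vG}^\top\ww' = \BB_{\vG}^\top\ww$, follows from Item~\ref{item:bfs_1} of $\BFSalgo$ and the degree-fixing guarantee of $\ROalgo$ (Lemma~\ref{lemma:rounding}); the second, $|\BB_{\vG}|^\top\ww' = |\BB_{\vG}|^\top\ww$, holds for each $\BFSalgo$ call by the corresponding clause of Lemma~\ref{lemma:bfs_fix}, and is preserved by the final $\ROalgo$ step: in a bipartite graph the rounded demand $\dd = \BB_{\vG}^\top\ww_R$ has a fixed sign pattern ($\dd_v \ge 0$ for $v \in A$ and $\dd_v \le 0$ for $v \in B$, since $\ww_R \ge 0$ and all edge heads lie in $A$), so the tree routing $\yy$ supported on the bipartite tree $T$ satisfies $|\BB_{\vG}|^\top\yy = |\dd| = |\BB_{\vG}|^\top\ww_R$.

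The one genuinely new ingredient is Item~\ref{item:ess_4}. Applying Corollary~\ref{cor:multiple_cluster_fix} in each of the $\tau_1 + \tau_2$ rounds, it suffices to bound $\sum_i \bw_i\norm{\aa}_{\PPi_{V(\vC_i)}}\norm{\zz}_{\PPi_{V(\vC_i)}}$ by $O(\beta^{-1}\phi^{-2})\norm{\aa}_{\LL_G}\norm{\zz}_{\LL_G}$. Since $V(\vC_i) = V_i$ consists of vertices of combinatorial degree $\ge \beta$ in $\vG^{(i)}$, and the unweighted mean minimizes the sum of squared deviations, $\norm{\aa}_{\PPi_{V_i}}^2 \le \sum_{v \in V_i}(\aa_v - \hat\aa)^2 \le \beta^{-1}\sum_{v \in V(\vG^{(i)})}[\deg_{G^{(i)}}]_v(\aa_v - \hat\aa)^2$, where $\hat\aa$ is the degree-weighted mean of $\aa$ over $\vG^{(i)}$; since the decomposition has weight ratio $\le 2$ and its pieces are edge-disjoint from $T$ (so their weights are untouched until processed), the reference-weighted piece $\vG^{(i)}_\star$ is a $\phi$-expander with weights in $[\bw_i, 2\bw_i]$, and Lemma~\ref{lemma:ex_deg} gives $\norm{\aa}_{\PPi_{V_i}}^2 \le \tfrac{2}{\beta\phi^2\bw_i}\norm{\aa}_{\LL_{\und(\vG^{(i)}_\star)}}^2$. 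Multiplying the analogous bounds for $\aa$ and $\zz$, applying Cauchy--Schwarz across the (edge-disjoint) pieces, and using $\sum_i \LL_{\und(\vG^{(i)}_\star)} \preceq \LL_G$ gives the claimed inequality; combined with $\alpha \le 2$ and $\tau_1 = O(\log\log(\tfrac{nW}{\eps}))$ in phase one (phase two contributing only a lower-order $\sqrt\alpha\,\tau_2$ factor) this produces the $\cess\beta^{-1}\phi^{-2}\sqrt{\log(\tfrac{nW}{\delta\eps})}\log\log(\tfrac{nW}{\eps})$ term of \eqref{eq:sketch_err_bess}, and the accumulated $O(\eps/(\tau I))$ additive terms give the $+\eps$.

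The runtime bound and the weight-increase bound $\max_e \ww'_e/\ww_e \le \cess$ are immediate from Lemma~\ref{lemma:bfs_fix}, Lemma~\ref{lem:sum_of_geom}, and Item~\ref{item:bfs_2} of $\BFSalgo$, exactly as for $\PSalgo$ (parallelizing the $\ROalgo$ degree-fixing across pieces so as not to pay a factor of $I$). The main obstacle is bookkeeping rather than a single hard step: one must simultaneously keep every subgraph fed to $\BFSalgo$ a valid cluster with $\rho = O(\beta^{-1}\phi^{-2})$ — which relies on the bootstrapped $O(1)$ spectral stability of $\LL_{G_t}$ so that effective resistances through the weight-perturbed expander pieces stay comparable — and arrange that Lemma~\ref{lemma:ex_deg} is applied to the frozen reference expander $\vG^{(i)}_\star$, which is why we track the fixed-vertex-set seminorm $\norm{\cdot}_{\PPi_{V(\vC_i)}}$ rather than any weight-dependent quantity; threading the error accounting through the two phases, all decomposition pieces in parallel, the internal $\BFSalgo$ reweightings, and the solver inexactness (absorbed into the $\eps$ terms exactly as in Lemma~\ref{lem:pmro} and Lemma~\ref{lem:bfs_guarantee}) is where the care concentrates.
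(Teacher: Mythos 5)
Your proposal is correct and follows essentially the same route as the paper's proof: expander decomposition plus the degree filter, Lemma~\ref{lemma:ex_to_er} to get $\rho = 8\beta^{-1}\phi^{-2}$, the bipartite-lift variants (Lemma~\ref{lemma:bfs_fix}, Corollary~\ref{cor:multiple_cluster_fix}) in place of the plain $\BFSalgo$ guarantees, the $\PSalgo$-style potential/counting argument with the bootstrapped $0.9$--$1.1$ Laplacian stability, and Lemma~\ref{lemma:ex_deg} with Cauchy--Schwarz across pieces for Item~\ref{item:ess_4}. Your only deviations are cosmetic (the explicit sign-pattern argument for $|\BB_{\vG}|^\top$-preservation under the final $\ROalgo$ call, where the paper just invokes the bipartite-lift circulation fact, and the explicit accounting of skipped low-edge-count pieces), and they do not change the argument.
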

\begin{proof}
    We closely follow the arguments in the proof of \Cref{lem:ps_guarantee}.
    In light of \Cref{lemma:ex_to_er}, we let $\rho \defeq 8\beta^{-1}\phi^{-2}$.
    Further, throughout the proof, we condition on the success of all calls to
    \BFSalgo, assuming their input conditions are met, which gives the failure probability.
    We claim that every $\vG_t$ satisfies $0.9 \LL_{G_t} \pleq \LL_G \pleq 1.1 \LL_{G_t}$, where $G \defeq \und(G)$ and $G_t \defeq \und(\vG_t)$.
    Again, we defer proving this statement to the end of the proof.

    For a fixed $i \in [I]$, consider the first $\tau_1$ loops from
    \cref{line:while_ess_1_start} to \cref{line:while_ess_1_end}.
    Since we claimed $0.9 \LL_{G_t} \pleq \LL_G \pleq 1.1 \LL_{G_t}$ for all
    $t$, \Cref{lemma:ex_to_er} gives that each $\vG^{(i)}[V_i]$ is
    an ER decomposition piece with parameters $(1.2\rho, 2)$.
    Then, in all calls to \BFSalgo, the conditions on $\wws$ are met by
    assumption.
    Moreover, \BFSalgo~is only called  if $\hm \ge 40\hn$, and the conditions in
    \Cref{eq:wws_reqs} are preserved inductively by
    \Cref{lem:bfs_guarantee}, since the $\ell_1$ norm of the weights does not
    change by more than a $\frac{\eps}{4\tau_1}$ factor in each iteration.
    Thus, the $\tau_1$ loops all satisfy their input conditions
    and we may assume they succeed.
    We then show that $F$ on \Cref{line:ess_F_def} must have $|F| \ge \frac \hm
    4$.
    Suppose for contradiction that $|F| < \frac \hm 4$, which means the second
    part of \Cref{item:bfs_3} in \Cref{lem:bfs_guarantee} holds for all
    iterations $0\le t<\tau_1$.
    However, since \Cref{lem:bfs_guarantee} also guarantees
    \[
        \sum_{e \in E(\vH)} \log\Par{\frac{[\ww_\tau]_e}{\ww_e}} > 
        -\hm\log\Par{\frac 2 \ell} = -\hm \tau_1,
    \]
    we arrive at a contradiction after $\tau_1$ iterations.
    By using a similar argument, we also show that after $\tau_2$ loops from
    \cref{line:while_ess_2_start} to \cref{line:while_ess_2_end} have finished,
    at least $\frac \hm {16}$ edges are added to $R$.
    Notice that for each $i \in [I]$, at most $n_i \beta$ edges are not included
    in $E(\vH^{(i)})$, we then have the total number of remaining edges (i.e.,
    $\nnz(\ww')$) is bounded by
    \[
        \Par{m - \sum_{i \in I} \hm_i} + \sum_{i \in I} \Par{\hn_i \beta + \frac{15}{16}\hm_i}
        \leq
        m - \frac{1}{16}\cdot \frac{1}{2}m + nJ\beta
        =
        \frac{31}{32}m + nJ\beta,
    \]
    where the inequality follows by \Cref{item:exp:partition:cut} of
    \Cref{def:exp_partition}. Similarly, conditioned on all calls to \BFSalgo~succeeding, 
    by \Cref{item:bfs_1} of \Cref{lem:bfs_guarantee} and the first additional
    guarantee in \Cref{lemma:bfs_fix}, we obtain \Cref{item:ess_1}.

    Now, consider both error bounds in \Cref{item:ess_3,item:ess_4}.
    Note that we have $\alpha=2$ in the first $\tau_1$ calls to \BFSalgo~for
    each cluster and $\alpha = \frac{1}{\log^2(\frac{nW}{\eps})}$
    in the last $\tau_2$ calls.
    Since $I \leq m$, we have by
    \Cref{cor:multiple_cluster,cor:multiple_cluster_fix} and our decomposition parameters that
    the total spectral error in all intermediate iterations across all decomposition
    pieces is bounded by
    \begin{equation}\label{eq:two_alphas}
        O\Par{
            \tau_1 \cdot \sqrt{\rho \log\Par{\frac{\tau_1}{\delta}}}
            + 
            \tau_2 \cdot \sqrt{\rho \frac{\log\Par{\frac{\tau_1}{\delta}}}{\log^2\Par{\frac{nW}{\eps}}}}
        } 
        =
        O\Par{\beta^{-\frac 1 2}\phi^{-1}} \cdot
        \sqrt{\log\Par{\frac{mW}{\delta\eps}}} \cdot
        \log\log\Par{\frac{mW}{\eps}},
    \end{equation}
    where we used $\rho = 8\beta^{-1}\phi^{-2}$.
    Additionally, there is an $\frac{\eps}{4\tau_1 I}\cdot \tau_1 I +
    \frac{\eps}{4\tau_2 I} \cdot \tau_2 I$ additive spectral error
    term, which is $\le \frac{2\eps}{3}$ after accounting for the change
    in the graph Laplacian by \Cref{fact:opnorm_precondition}.
    For appropriate $\cess$, this both proves the desired spectral error bound
    by the triangle inequality, as well as the claimed $0.9 \LL_{G_t} \pleq
    \LL_G \pleq 1.1 \LL_{G_t}$ throughout the algorithm by
    \Cref{fact:dirclose_undirclose}.

    Consider now the sketching error bound \eqref{eq:sketch_err_bess}.
    For each cluster $\vG^{(i)}$ with $G^{(i)} \defeq \und(\vG^{(i)})$, 
    let $\LL_i,\DD_i$ and $\hLL_i,\hDD_i$ be the corresponding undirected
    Laplacians and weighted degrees of $G^{(i)}$ and $\und(\vG^{(i)}[V_i])$
    respectively.
    Following the notation of Corollary~\ref{cor:multiple_cluster_fix}, and using that $\norm{\xx}_{\PPi_{V(\vC_i)}} \le \norms{[\xx - \hx \1]_{V(\vC_i)}}_2$ for any $\hx \in \R$, we have that if all calls to $\BFSalgo$ succeed,
    \begin{align*}\Abs{\aa^\top \BB_{\vG}^\top (\WW' - \WW) \TT_{\vG} \zz}
    &\le
    O\Par{\sqrt{\log\Par{\frac{mW}{\delta\eps}}} \cdot
    	\log\log\Par{\frac{mW}{\eps}}} \cdot
    \sum_{i \in [I]} \bw_i \norm{[\aa - \ha_i \vone]_{V_i}}_2 \norm{[\zz - \hz_i \vone]_{V_i}}_2 \\ 
    &+ \frac \eps 4 \norm{\aa}_{\LL_G} \norm{\zz}_{\LL_G},
    \end{align*}
    for any choices of scalars $\{\ha_i\}_{i \in [I]}$, $\{\hz_i\}_{i \in [I]}$. In the above, we used a calculation analogous to \eqref{eq:two_alphas} to bound the first term on the right-hand side. Now, we choose each $\ha_i$ and $\hz_i$ as defined in \Cref{lemma:ex_deg}, so that for all $i \in [I]$,
    \begin{equation}\label{eq:cluster_bound}
    \begin{aligned}
    \bw_i  \norm{[\aa - \ha_i \vone]_{V_i}}_2^2 &\le \bw_i \beta^{-1} \sum_{v \in V_i} [\deg_{G^{(i)}}]_v (\aa_v - \ha_i)^2 \le 2\beta^{-1} \phi^{-2} \aa^\top \LL_i \aa, \\
     \bw_i  \norm{[\zz - \hz_i \vone]_{V_i}}_2^2 &\le \bw_i \beta^{-1} \sum_{v \in V_i} [\deg_{G^{(i)}}]_v (\zz_v - \hz_i)^2 \le 2\beta^{-1} \phi^{-2} \zz^\top \LL_i \zz.
    \end{aligned}
    \end{equation}
    By the Cauchy-Schwarz inequality and \eqref{eq:cluster_bound}, we then obtain the first term in \eqref{eq:sketch_err_bess}:
    \begin{align*}
        \sum_{i \in [I]} \bw_i \norm{[\aa - \ha_i \vone]_{V_i}}_2 \norm{[\zz - \hz_i \vone]_{V_i}}_2 &\le \sqrt{\sum_{i \in [I]} \bw_i  \norm{[\aa - \ha_i \vone]_{V_i}}_2^2} \cdot \sqrt{\sum_{i \in [I]} \bw_i \norm{[\zz - \hz_i \vone]_{V_i}}_2^2}
        \\
        &\le
        2\beta^{-1} \phi^{-2} \cdot
        \sqrt{\aa^\top \Par{\sum_{i \in [I]} \LL_i} \aa }
        \cdot
        \sqrt{\zz^\top \Par{\sum_{i \in [I]} \LL_i} \zz} \\
        &\le
        2\beta^{-1} \phi^{-2} \cdot
        \norm{\aa}_{\LL_G} \norm{\zz}_{\LL_G}.
    \end{align*}
    The second term in \eqref{eq:sketch_err_bess} comes from our earlier application of \Cref{cor:multiple_cluster_fix}.

    Finally, \ROalgo~incurs at most $\frac \eps {4}$ spectral error through the
    final graph by \Cref{lemma:rounding}, which is at most $\frac \eps {3}$
    spectral error through the original graph by
    \Cref{fact:opnorm_precondition}. Here we again used that $G$ is connected (Section~\ref{sec:prelims}), which implies each $G_t$ is connected via our earlier bound $0.9\LL_{G_t} \preceq \LL_{G_t} \preceq 1.1 \LL_{G_t}$.
    Using an analogous argument from above, this also gives an additive
    sketching error of at most $\frac{\eps}{3} \cdot \norm{\aa}_{\LL_G}
    \norm{\zz}_{\LL_G}$.
    By the first claim in \Cref{lemma:rounding}, \Cref{item:ess_1} remains true.
    \Cref{item:ess_2} in the lemma statement is clear as we
    only modify weights within clusters, and \Cref{item:bfs_2} of
    \Cref{lem:bfs_guarantee} shows no edge weight grows by more than a factor
    of 60.
    The runtime follows by applying
    \Cref{cor:multiple_cluster_fix} to each of the
    $\tau_1 + \tau_2 = O(\log \frac{nW}{\eps})$ times we call \BFSalgo~on each
    expander.
\end{proof}

\subsection{Complete spectral sketching algorithm}

We are now ready to give our main guarantee on improved constructions of graphical spectral sketches (Definition~\ref{def:sketch_undir}), as well as their Eulerian generalization (Definition~\ref{def:sketch_directed}).

\begin{algorithm2e}[ht!]
\caption{$\SSalgo(\vG, \epsilon, \delta)$}
\label{alg:spectral_sketching}
\DontPrintSemicolon
\codeInput $\vG = (V, E, \ww)$ with $\ww_e \in [1, U]$ for all $e \in
E$, $\eps, \delta \in (0, \frac{1}{100})$  \;
$n \gets |V|$, $m \gets |E|$\;
$\vG^\uparrow \defeq (V\cup V',E^\uparrow,\ww) \gets \blift(\vG)$\;
$T \gets $ arbitrary spanning tree of $G^\uparrow \defeq \und(\blift(\vG))$, $\hE^\uparrow \gets E^\uparrow \setminus E(T)$\;
$R \gets 6\log n$, $U_{\max} \gets U \cdot \cess^R$ for $\cess$ in \Cref{lemma:ess_guarantee} \;
$\beta \gets \frac{400000\cess^2}{\cadk^2 \cdot \eps} \cdot \log^6 n
\log\Par{\frac{nU_{\max}}{\delta}} \log^2\log\Par{\frac{nU_{\max}}{\delta}}$,
$t \gets 0$, $\ww_0 \gets \ww$\; 
\While{$t < R$ $\mathbf{and}$ $\nnz([\ww_t]_{\hE^\uparrow}) > 4\cess n\beta \log_2(\frac{32mnU_{\max}}{\eps})$}{\label{line:while_start_ss}
$\vG_t^\uparrow \gets (V\cup V', E^\uparrow, \ww_t)$, $G_t^\uparrow \gets \und(\vG_t^\uparrow)$\;
$S \gets \EPalgo([G_t^\uparrow]_{\hE^\uparrow}, 2, \frac \delta {4R})$ \Comment*{See \Cref{lemma:ex_partition}.}
$(\vG_t^\uparrow)' \defeq (V\cup V', E^\uparrow, \ww'_t) \gets \ESSalgo(S, \vG_t^\uparrow, T, \frac{\delta}{4R}, \frac{\eps}{4R}, U_{\max}, \beta)$ \;
$D \gets \{e \in E^\uparrow \mid [\ww'_t]_e \le \frac{\eps}{4mn}\}$\;\label{line:d_define_ss}
$\ww_{t + 1} \gets [\ww'_{t}]_{E^\uparrow \setminus D} + \ROalgo(\vG^\uparrow, [\ww'_{t}]_D, T)$\; \label{line:round_ss}
$t \gets t + 1$\;\label{line:while_end_ss}
}
\Return{$\vH \gets (V, E, \ww_t)$}\;
\end{algorithm2e}

\begin{theorem}\label{thm:fastsketch}
Given Eulerian $\vG = (V, E, \ww)$ with $|V| = n$, $|E| = m$, $\ww \in [1, U]^E$
and $\eps, \delta \in (0, \frac{1}{100})$, $\SSalgo$
(Algorithm~\ref{alg:spectral_sketching}) returns a distribution over $\vH$
that is an $(\eps, \delta)$-Eulerian graphical sketch. Moreover, with probability $\ge 1 - \delta$, $\vH$ is a $\sqrt{\eps}$-approximate Eulerian sparsifier of $\vG$, and
\[
\begin{gathered}
|E(\vH)| =
O\Par{\frac{n}{\eps}\log^7(n)\log^2\Par{\frac{nU}{\delta}}
\log^2\log\Par{\frac{nU}{\delta}}},
\\
\log\Par{\frac{\max_{e \in \supp(\ww')} \ww_e'}{\min_{e \in \supp(\ww')} \ww_e}}
= O(\log(nU)).
\end{gathered}
\]
The runtime of $\SSalgo$ is 
\[\bO\Par{m\log^2\Par{\frac{nU}{\delta}}\log\Par{nU}
+m\log^8(n)\log\Par{\frac{n}{\delta}}}.\]
\end{theorem}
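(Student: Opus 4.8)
The plan is to follow the proof of Theorem~\ref{thm:fastsparsifydetailed} essentially verbatim, with three substitutions: the ER decomposition of Proposition~\ref{prop:er_partition} is replaced by the expander decomposition of Proposition~\ref{lemma:ex_partition}, the subroutine $\PSalgo$ is replaced by $\ESSalgo$ (whose guarantees are Lemma~\ref{lemma:ess_guarantee}), and all work is done on the bipartite lift $\vG^\uparrow = \blift(\vG)$ rather than on $\vG$. I would begin by conditioning on the event that all of the $\le R = 6\log n$ calls to $\EPalgo$ and $\ESSalgo$ succeed; as each is invoked with failure parameter $\Theta(\delta/R)$, this event has probability $\ge 1-\delta$. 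Exactly as in Theorem~\ref{thm:fastsparsifydetailed}, $U_{\max} = U\cess^{R}$ upper bounds the edge-weight ratio throughout (each $\ESSalgo$ call inflates weights by at most $\cess$), edges deleted on Line~\ref{line:d_define_ss} never reappear in a later decomposition, $J_{\max} = \log_2(32mnU_{\max}/\eps)$ bounds the vertex-coverage parameter of every expander decomposition produced, and $\phi = \cadk\log^{-2}n$ is their (fixed) conductance. The single fixed degree threshold $\beta$ is the crux: by Lemma~\ref{lemma:ex_to_er}, restricting an expander piece to vertices of combinatorial degree $\ge \beta$ makes it a $(\bw, 8\beta^{-1}\phi^{-2})$-cluster, and by Lemma~\ref{lemma:ex_deg} this restriction buys the sharper per-vector-pair error scaling $\beta^{-1}\phi^{-2}$ rather than $\beta^{-1/2}\phi^{-1}$; $\beta$ is chosen to simultaneously control sparsity ($\propto n\beta$), spectral error ($\propto \beta^{-1/2}\phi^{-1}$), and sketch error ($\propto \beta^{-1}\phi^{-2}$).

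For sparsity and termination: while $\nnz([\ww_t]_{\hE^\uparrow})$ exceeds the Line~\ref{line:while_start_ss} threshold (a large multiple of $\cess n\beta J_{\max}$), Item~\ref{item:ess_2} of Lemma~\ref{lemma:ess_guarantee} shrinks the non-tree edge count by a constant factor per phase, so (using $m \le n^2$) the loop exits within $R$ phases with $|E(\vH)| = O(n\beta J_{\max}+n)$; substituting the value of $\beta$, bounding $J_{\max} = O(\log(nU/\delta))$, and absorbing the low-order $\log(1/\eps)$ and $\log\cess$ contributions yields the claimed $O(\eps^{-1}n\log^7 n\log^2(nU/\delta)\log^2\log(nU/\delta))$ bound, while the weight-ratio bound is immediate from $U_{\max}$. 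The runtime follows by summing the costs of $\EPalgo$ (Proposition~\ref{lemma:ex_partition}) and of $\ESSalgo$ (Lemma~\ref{lemma:ess_guarantee}) over the $\le R$ phases, using that the edge count forms a geometrically decreasing sequence.

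For the Eulerian and $\sqrt{\eps}$-sparsifier claims: Item~\ref{item:ess_1} of Lemma~\ref{lemma:ess_guarantee} together with Lemma~\ref{lemma:rounding} preserves $\BB_{\vG^\uparrow}^\top\ww_t = \BB_{\vG^\uparrow}^\top\ww$ at every phase, and summing Item~\ref{item:ess_3} across the $\le R$ phases -- the additive $\eps/4R$ terms summing to $\le \eps/4$, and the main term, with $\beta$ plugged in, summing over $6\log n$ phases to $O(\sqrt\eps)$, while Facts~\ref{fact:dirclose_undirclose} and~\ref{fact:opnorm_precondition} keep $\LL_{G_t^\uparrow}$ within constant factors of $\LL_{G^\uparrow}$ as long as the running total stays below $1$ -- gives $\normop{\LL_{G^\uparrow}^{\dagger/2}\BB_{\vG^\uparrow}^\top(\WW_t-\WW)\HH_{\vG^\uparrow}\LL_{G^\uparrow}^{\dagger/2}} = O(\sqrt\eps)$, the final $\ROalgo$ on Line~\ref{line:round_ss} adding only $O(\eps)$ by Lemma~\ref{lemma:rounding}. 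Lemma~\ref{lemma:blift_spectral} then descends this to $\vG$: it yields $\BB_{\vG}^\top\ww_t = \BB_{\vG}^\top\ww = \vzero_V$, so $\vH$ is Eulerian, and $\normop{\LL_G^{\dagger/2}\BB_{\vG}^\top(\WW_t-\WW)\HH_{\vG}\LL_G^{\dagger/2}} = O(\sqrt\eps)$, which is the $\sqrt\eps$-approximate Eulerian sparsifier condition of Definition~\ref{def:eulerian_sparsifier} (after adjusting constants via the $\cess$/$\cadk$ dependence of $\beta$). For the sketch property, fix $\aa,\zz \in \R^V$ in advance and, in each phase, apply Item~\ref{item:ess_4} of Lemma~\ref{lemma:ess_guarantee} to the lifted vectors $\QQ\aa,\QQ\zz$ (with $\QQ$ duplicating a coordinate across the two lift copies), using $\norm{\QQ\aa}_{\LL_{G^\uparrow}} = \norm{\aa}_{\LL_G}$; summed over the $\le R$ phases, the main sketch term $\propto \beta^{-1}\phi^{-2}$ sums to $\ll \eps$ for the chosen $\beta$, the additive $\eps/4R$ terms to $\le \eps/4$, and $\ROalgo$ to $\le \eps/4$, so with probability $\ge 1-\delta$ (union over the phases' sketch and structural guarantees) one gets $\Abs{\QQ\aa^\top\BB_{\vG^\uparrow}^\top(\WW_t-\WW)\HH_{\vG^\uparrow}\QQ\zz} \le \eps\norm{\aa}_{\LL_G}\norm{\zz}_{\LL_G}$; the implication \eqref{eq:lift_imply_1}$\Rightarrow$\eqref{eq:lift_imply_2} recorded at the end of the proof of Lemma~\ref{lemma:blift_spectral} turns this into $\Abs{\aa^\top(\vLL_{\vH}-\vLL_{\vG})\zz} \le \eps\norm{\aa}_{\LL_G}\norm{\zz}_{\LL_G}$, i.e.\ the bound in Definition~\ref{def:sketch_directed}.

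The main obstacle I anticipate is bookkeeping rather than a deep new idea: one must check that a single fixed $\beta$ simultaneously pushes the accumulated spectral error to $O(\sqrt\eps)$ and the accumulated per-vector-pair sketch error to $O(\eps)$ across $\Theta(\log n)$ phases while the $n\beta J_{\max}$ edges surviving the high-degree restriction remain within the target sparsity -- this balancing, governed by the $\beta^{-1/2}\phi^{-1}$ versus $\beta^{-1}\phi^{-2}$ asymmetry between Items~\ref{item:ess_3} and~\ref{item:ess_4}, is precisely what forces the extra $\log n$ in the $\log^7 n$ power. The second delicate point is threading the lift/unlift transfer through the across-phase induction: every per-phase bound of Lemma~\ref{lemma:ess_guarantee} is stated relative to the current $\LL_{G_t^\uparrow}$, so one must first establish inductively (as in Theorem~\ref{thm:fastsparsifydetailed}) that $\LL_{G_t^\uparrow}$ stays spectrally close to $\LL_{G^\uparrow}$, and only then invoke Lemma~\ref{lemma:blift_spectral} to land both the sparsification and the sketch guarantees on $\vG$ itself.
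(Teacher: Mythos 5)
Your proposal is correct and follows essentially the same route as the paper's proof: condition on all $R$ calls to $\EPalgo$ and $\ESSalgo$ succeeding, bound $U_{\max}$ and $J_{\max}$, use Item 2 of Lemma~\ref{lemma:ess_guarantee} for geometric sparsity decrease and termination within $R$ phases, accumulate the per-phase spectral (Item 3) and per-vector-pair sketch (Item 4) errors with $\beta$ and $\phi$ plugged in while maintaining $\LL_{G_t^\uparrow} \approx \LL_{G^\uparrow}$ via Facts~\ref{fact:dirclose_undirclose} and~\ref{fact:opnorm_precondition}, and finally transfer from the bipartite lift back to $\vG$ through Lemma~\ref{lemma:blift_spectral} and the \eqref{eq:lift_imply_1}$\Rightarrow$\eqref{eq:lift_imply_2} implication. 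The only quibble is your closing remark that the $\beta^{-1/2}\phi^{-1}$ versus $\beta^{-1}\phi^{-2}$ asymmetry ``forces the extra $\log n$'' in $\log^7 n$ --- in fact, with $R = \Theta(\log n)$ the spectral (sparsifier) constraint on $\beta$ dominates the sketch constraint, so $\beta$ is pinned by Item 3 rather than by a genuine balancing, and the stated $\log^7 n$ has some slack; but this is side commentary and does not affect the correctness of the proof.
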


\begin{proof}
Throughout, we condition on the event that all of the $R$ calls to \EPalgo~and
\ESSalgo~succeed, which happens with probability $\ge 1-\delta$.
Since \ESSalgo~guarantees no weight grows by more than a $\cess$ factor in each
call, $U_{\max}$ is an upper bound for the maximum possible weight throughout
the algorithm.
As we remove any edge with weight below $\frac{\eps}{4mn}$ on
\cref{line:round_ss}, the number of expander pieces is upper bounded by 
$J_{\max} \defeq \log_2(\frac{32mnU_{\max}}{\eps})$ by \Cref{lemma:ex_partition}.

When $\nnz(\ww_t) \ge 4\cess \cdot n\beta J_{\max}$,
\Cref{item:exp:partition:cut} of \Cref{def:exp_partition} and \Cref{item:ess_2}
of \Cref{lemma:ess_guarantee} guarantees that the number of edges in each
iteration decreases by at least a $\frac{1}{64}$ factor.
Since $m \leq n^2$, we may assume for the rest of the proof that $\eps > \frac{1}{n^2}$. 
Therefore, after $R$ iterations, we are guaranteed that the number of edges at
termination is at most $O(n\beta J_{\max})$.
Plugging in the definition of $\beta$ and $\phi = \cadk\log^{-2}(m) \ge
\frac{1}{4}\cadk\log^{-2}(n)$ and noting that $\log(U_{\max}) = O(\log(nU))$ gives the
desired sparsity bound.
The runtime follows from combining \Cref{lemma:ex_partition},
\Cref{lemma:ess_guarantee} and noting that the number of edges decreases
geometrically until the lower bound on \cref{line:while_start_ss}.

The degree-preserving property follows from \Cref{item:ess_1} of
\Cref{lemma:ess_guarantee} and the first claim of \Cref{lemma:rounding}.
This guarantees that if $\vG$ is Eulerian, then $\vH$ is also Eulerian.

Next, consider the spectral error bound.
By \Cref{item:ess_3} of \Cref{lemma:ess_guarantee}, the total spectral error
incurred within each iteration of the while loop from \cref{line:while_start_ss}
to \cref{line:while_end_ss} with respect to the current Laplacian is bounded by
\[
    5\cess \cdot
    \beta^{-1/2}\phi^{-1}\sqrt{\log\Par{\frac{nU_{\max}}{\delta}}}
    \log\log(nU_{\max}) + \frac{\eps}{4R}
    \le
    \frac{1}{4}\frac{\sqrt{\eps}}{R} + \frac{1}{4}\frac{\eps}{R}
    \leq
    \frac{1}{2}\frac{\sqrt{\eps}}{R}.
\]
We condition on $0.9 \LL_G \pleq \LL_{G_t} \pleq 1.1\LL_G$ for all $t < R$,
which shows a total spectral error of at most $\frac{2}{3}\sqrt{\eps}$ over all
iterations due to \Cref{fact:dirclose_undirclose}.
Similarly, the rounding on \cref{line:round_ss} also contributes at most $\frac \eps 4$
by \Cref{lemma:rounding} and \Cref{fact:dirclose_undirclose}.
This also shows our assumption $0.9 \LL_G \pleq \LL_{G_t} \pleq 1.1\LL_G$ holds, as $\eps < \frac{1}{100}$. As before, we achieve the sparsity bound by dropping edges with zero weight in $\ww_t$.

Finally, consider the sketching error bound.
We take the same definition of $\QQ$ as in \Cref{lemma:blift_spectral}.
Let $\aa,\zz \in \R^V$ be arbitrary fixed vectors.
We have, by \Cref{item:ess_4} of \Cref{lemma:ess_guarantee}, the sketching error
for $\QQ\aa$ and $\QQ\zz$ in $\vG^\uparrow$ is bounded by
$\norm{\QQ\aa}_{\LL_{G^\uparrow}} \norm{\QQ\zz}_{\LL_{G^\uparrow}}$ times
\[
    \frac{4}{3} \cdot R \cdot \Par{5\cess \cdot \beta^{-1}\phi^{-2}
        \sqrt{\log\Par{\frac{nU_{\max}}{\delta}}}\log\log(nU_{\max}) +
    \frac{\eps}{4R}}
    \le
    \frac{2}{3} \eps,
\]
where the factor of $\frac{4}{3}$ again comes from the valid assumption of $0.9 \LL_G
\pleq \LL_{G_t} \pleq 1.1\LL_G$ and \Cref{fact:dirclose_undirclose} for each
factor of the form $\norm{\xx}_{\LL_{G_t}}$.
By an analogous argument in the proof of \Cref{lemma:ess_guarantee}, the
additive error by \ROalgo~on \cref{line:round_ss} is bounded by 
$\frac{4}{3} \cdot \frac{\eps}{4} \norm{\QQ\aa}_{\LL_{G^\uparrow}}
\norm{\QQ\zz}_{\LL_{G^\uparrow}}$.
Now, the fact that \eqref{eq:lift_imply_1} implies \eqref{eq:lift_imply_2} gives the desired sketching error bound in the
unlifted graph.
\end{proof}

Our spectral sketch algorithm has additional desirable properties in the undirected graph setting, where we can ensure that the sketched graph is also undirected via the following reduction.

\begin{lemma} \label{lemma:undir_to_dir}
For an undirected graph $G=(V,E,\ww)$, let $\vH=(V,E',\ww)$ be a directed graph
where each edge $e' \in E'$ has the same endpoints as an undirected edge $e \in E$ with an
arbitrary orientation.
Let $\ww' \in \R^E$ satisfy $\BB_{\vH}^\top \ww' = \BB_{\vH}^\top \ww$.
Then for any $\xx \in \R^V$,
\[
\begin{gathered}
    \normop{\LL_G^{\frac \dagger 2} \BB_G^\top (\WW' - \WW) \BB_G \LL_G^{\frac \dagger 2}}
    \le
    4 \normop{\LL_H^{\frac \dagger 2} \BB_{\vH}^\top (\WW' - \WW) \HH_{\vH}
    \LL_H^{\frac \dagger 2}}, 
    \\
    \Abs{\xx^\top \LL_G^{\frac \dagger 2} \BB_G^\top (\WW' - \WW) \BB_G \LL_G^{\frac
    \dagger 2} \xx} \le
    4 \Abs{\xx^\top \LL_H^{\frac \dagger 2} \BB_{\vH}^\top (\WW' - \WW)
    \HH_{\vH} \LL_H^{\frac \dagger 2} \xx},
\end{gathered}
\]
where $H \defeq \und(\vH) = (V,E,2\ww)$.
\end{lemma}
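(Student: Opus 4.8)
The plan is to reduce both inequalities to a single algebraic observation: after sandwiching by $\LL_G^{\frac \dagger 2}$, the undirected discrepancy matrix $\BB_G^\top(\WW'-\WW)\BB_G$ is exactly the symmetrization of the directed discrepancy matrix $\BB_{\vH}^\top(\WW'-\WW)\HH_{\vH}$, and the factor $2$ in the weights of $H=(V,E,2\ww)$ converts the ``factor $2$'' of symmetrization into the claimed factor $4$.

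First I would fix notation. Write $\BB\defeq\BB_{\vH}=\HH_{\vH}-\TT_{\vH}$, $\DD\defeq\WW'-\WW$ (a diagonal matrix), and $\dd\defeq\ww'-\ww$, so $\DD=\diag{\dd}$. The hypothesis $\BB_{\vH}^\top\ww'=\BB_{\vH}^\top\ww$ says precisely that $\BB^\top\dd=\vzero$, i.e.\ $\dd$ is a circulation in $\vH$. Since $G$ and $\vH$ have the same underlying edge set (the ``arbitrary'' orientations of $\vH$ versus the fixed orientation of $G$ only flip signs of rows of the incidence matrix, and these flips cancel in every bilinear form below), we have $\BB_G^\top\DD\BB_G=\BB^\top\DD\BB$ and $\LL_G=\BB_G^\top\WW\BB_G=\BB^\top\WW\BB$. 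Moreover, by the definition $H=(V,E,2\ww)$ we have $\LL_H=2\LL_G$, hence $\LL_H^{\frac \dagger 2}=\tfrac{1}{\sqrt2}\LL_G^{\frac \dagger 2}$.

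The key step is to rewrite $\BB^\top\DD\BB$ using the circulation structure. Applying \Cref{lemma:circeq} to the circulation $\dd$ with $\XX\defeq\DD$ gives $\BB^\top\DD\HH_{\vH}=-\TT_{\vH}^\top\DD\BB$; transposing (and using $\DD^\top=\DD$) yields $\BB^\top\DD\TT_{\vH}=-\bigl(\BB^\top\DD\HH_{\vH}\bigr)^\top$. Therefore
\[
\BB_G^\top\DD\BB_G=\BB^\top\DD\BB=\BB^\top\DD\HH_{\vH}-\BB^\top\DD\TT_{\vH}=\BB^\top\DD\HH_{\vH}+\bigl(\BB^\top\DD\HH_{\vH}\bigr)^\top.
\]
Conjugating by $\LL_G^{\frac \dagger 2}$ (which is symmetric) and setting $\MM\defeq\LL_G^{\frac \dagger 2}\BB^\top\DD\HH_{\vH}\LL_G^{\frac \dagger 2}$, we obtain $\LL_G^{\frac \dagger 2}\BB_G^\top(\WW'-\WW)\BB_G\LL_G^{\frac \dagger 2}=\MM+\MM^\top$, while $\LL_H^{\frac \dagger 2}\BB_{\vH}^\top(\WW'-\WW)\HH_{\vH}\LL_H^{\frac \dagger 2}=\tfrac12\MM$ by the normalization $\LL_H^{\frac \dagger 2}=\tfrac1{\sqrt2}\LL_G^{\frac \dagger 2}$.

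The two conclusions then follow immediately. For the operator norm, $\normop{\MM+\MM^\top}\le\normop{\MM}+\normop{\MM^\top}=2\normop{\MM}=4\normop{\tfrac12\MM}$. For the quadratic form, since $\xx^\top\MM^\top\xx=\xx^\top\MM\xx$ is a scalar, $\Abs{\xx^\top(\MM+\MM^\top)\xx}=2\Abs{\xx^\top\MM\xx}=4\Abs{\xx^\top(\tfrac12\MM)\xx}$. Substituting the two identifications above gives exactly the displayed bounds. I expect no real obstacle beyond bookkeeping: the only points that require a moment of care are that the arbitrary orientation of $\vH$ and the reference orientation of $G$ enter only through $\BB$-sandwiched forms invariant under per-edge sign flips, and that $\LL_H=2\LL_G$ (from the weights $2\ww$) is what upgrades the symmetrization factor $2$ to the stated $4$.
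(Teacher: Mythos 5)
Your proof is correct and follows essentially the same approach as the paper's: both use Fact~\ref{lemma:circeq} on the circulation $\ww'-\ww$ to express $\BB_G^\top(\WW'-\WW)\BB_G$ as a symmetrization of the directed discrepancy matrix, then invoke the triangle inequality (or the scalar identity for the quadratic form) together with $\LL_H = 2\LL_G$ to land on the factor $4$. Your explicit $\MM + \MM^\top$ packaging is a clean way to organize the same calculation the paper sketches, and your observation that the quadratic-form inequality is in fact an equality is a correct refinement.
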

\begin{proof}
Without loss of generality, we assume that orientations are chosen so that $\BB_G \defeq \BB_{\vH}$.
Since $\ww'-\ww$ is a circulation, we have by \Cref{lemma:circeq},
$\BB_{\vH}^\top (\WW'-\WW)\HH_{\vH} = -\TT_{\vH}^\top (\WW'-\WW) \BB_{\vH}$.
Further, as
\[
    \BB_G^\top (\WW'-\WW) \BB_G = \BB_{\vH}^\top (\WW'-\WW) \HH_{\vH} -
    \BB_{\vH}^\top (\WW'-\WW) \TT_{\vH},
\]
applying the triangle inequalities on operator norms and absolute values, combined with
the fact $\LL_H = 2\LL_G$, gives both desired inequalities.
\end{proof}

Moreover, we can use the following claim from \cite{ChuGPSSW18} to show that the output of our algorithm in the undirected case is an approximate \emph{inverse sketch} of $G$, i.e., it preserves quadratic forms with the Laplacian pseudoinverse. This is useful for approximating effective resistances.

\begin{lemma}[Lemma 6.8, \cite{ChuGPSSW18}] \label{lemma:sketch_inverse}
Let $\MM,\NN$ be symmetric PSD matrices of the same dimension, and let $\xx$ be a vector
of the same dimension such that for some $\eps \in (0,0.1)$,
\[
    \normop{\MM^{\frac \dagger 2}(\MM - \NN)\MM^{\frac \dagger 2}} \le \sqrt{\eps}, \quad
    |\xx^\top \MM^\dagger (\MM - \NN) \MM^\dagger \xx| \le \eps \cdot 
    \xx^\top \MM^\dagger \xx.
\]
Then,
\[
    |\xx^\top (\MM^\dagger-\NN^\dagger) \xx| \le 7 \eps \cdot 
    \xx^\top \MM^\dagger \xx.\footnotemark  
\]
\footnotetext{The original Lemma 6.8 in \cite{ChuGPSSW18} uses a different, more symmetric definition of $\eps$-approximation, involving multiplicative factors of $e^\eps$, but it is straightforward to check that the same constant factors hold for our definition.}
\end{lemma}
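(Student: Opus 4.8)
\textbf{Proof proposal for Lemma~\ref{lemma:sketch_inverse}.}

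The plan is to reduce everything to the case $\MM = \id$ by a standard change of basis, then expand $\NN^\dagger$ as a perturbation of $\MM^\dagger = \id$ and control the error terms using the two given hypotheses. First I would work in the image of $\MM$ (both quadratic forms of interest vanish on $\ker(\MM)$, and $\xx$ can be replaced by its projection onto $\textup{Im}(\MM)$ without changing either side), so without loss of generality $\MM \succ 0$. Conjugating by $\MM^{-1/2}$, set $\widehat{\NN} \defeq \MM^{-1/2}\NN\MM^{-1/2}$ and $\widehat{\xx} \defeq \MM^{1/2}\xx$; then the hypotheses become $\normsop{\id - \widehat{\NN}} \le \sqrt{\eps}$ and $\Abs{\widehat{\xx}^\top(\id - \widehat{\NN})\widehat{\xx}} \le \eps\norms{\widehat{\xx}}_2^2$, and the conclusion we want is $\Abs{\widehat{\xx}^\top(\id - \widehat{\NN}^{-1})\widehat{\xx}} \le 7\eps\norms{\widehat{\xx}}_2^2$. (One must first observe that $\normsop{\id - \widehat{\NN}} \le \sqrt{\eps} < 1$ forces $\widehat{\NN} \succ 0$, so $\widehat{\NN}^{-1}$ exists on $\textup{Im}(\MM)$ and $\NN^\dagger = \MM^{-1/2}\widehat{\NN}^{-1}\MM^{-1/2}$ is the relevant pseudoinverse; this is where the $\eps < 0.1$ assumption enters.)

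Next I would write $\EE \defeq \id - \widehat{\NN}$, so $\normsop{\EE}\le\sqrt{\eps}$, and use the Neumann-type identity
\[
\id - \widehat{\NN}^{-1} = -\sum_{k\ge 1}\EE^k = -\EE - \EE^2(\id-\EE)^{-1}.
\]
The first term $-\EE$ contributes exactly $-\widehat{\xx}^\top\EE\widehat{\xx}$, which by the second hypothesis has absolute value at most $\eps\norms{\widehat{\xx}}_2^2$. For the tail, bound
\[
\Abs{\widehat{\xx}^\top \EE^2(\id-\EE)^{-1}\widehat{\xx}} \le \normsop{\EE^2(\id-\EE)^{-1}}\norms{\widehat{\xx}}_2^2 \le \frac{\normsop{\EE}^2}{1-\normsop{\EE}}\norms{\widehat{\xx}}_2^2 \le \frac{\eps}{1-\sqrt{\eps}}\norms{\widehat{\xx}}_2^2,
\]
and since $\eps < 0.1$ gives $\sqrt{\eps} < 1/\sqrt{10} < 0.317$, we have $\frac{1}{1-\sqrt{\eps}} < 1.47 < 6$, so this term is at most $6\eps\norms{\widehat{\xx}}_2^2$. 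Adding the two contributions via the triangle inequality yields $\Abs{\widehat{\xx}^\top(\id-\widehat{\NN}^{-1})\widehat{\xx}} \le 7\eps\norms{\widehat{\xx}}_2^2 = 7\eps\cdot\xx^\top\MM^\dagger\xx$, which is the claim after undoing the change of basis.

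I expect the only genuine subtlety — rather than a true obstacle — is the handling of kernels and pseudoinverses: one must check that $\xx$, $\MM$, $\NN$, $\NN^\dagger$ all interact correctly on $\textup{Im}(\MM)$, in particular that the hypothesis $\normsop{\MM^{\dagger/2}(\MM-\NN)\MM^{\dagger/2}}\le\sqrt\eps$ already implies $\ker(\NN) = \ker(\MM)$ (equivalently $\textup{Im}(\NN)=\textup{Im}(\MM)$), so that restricting to $\textup{Im}(\MM)$ loses nothing and $\widehat{\NN}$ is genuinely invertible there. Everything else is the elementary Neumann series estimate above; the constant $7$ is comfortably met with room to spare, consistent with the footnote's remark that the same constants work for this asymmetric notion of approximation.
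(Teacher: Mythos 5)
The paper does not prove this lemma; it cites [ChuGPSSW18, Lemma 6.8] and only notes in a footnote that the proof adapts to the asymmetric one-sided approximation notion used here. Your Neumann-series argument — conjugate by $\MM^{-1/2}$, write $\EE \defeq \id - \widehat{\NN}$ and split $\id - \widehat{\NN}^{-1} = -\EE - \EE^2(\id - \EE)^{-1}$, bound the two terms separately — is essentially the proof underlying that reference and correctly verifies the footnote: your bound is roughly $2.5\eps$, comfortably inside the stated $7\eps$.

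Two corrections. First, a typo: the change of variables should read $\widehat{\xx} \defeq \MM^{-1/2}\xx$, not $\MM^{1/2}\xx$, so that $\norms{\widehat{\xx}}_2^2 = \xx^\top\MM^{-1}\xx$; everything that follows is consistent with this fix. Second, and more substantively, your claim that the operator-norm hypothesis $\normsop{\MM^{\dagger/2}(\MM-\NN)\MM^{\dagger/2}}\le\sqrt{\eps}$ ``already implies $\ker(\NN)=\ker(\MM)$'' is false: take $\NN = \MM + vv^\top$ with $0 \ne v \in \ker(\MM)$, so that $\MM^{\dagger/2}(\MM-\NN)\MM^{\dagger/2} = 0$, yet $\ker(\NN)\subsetneq\ker(\MM)$ and the conclusion of the lemma genuinely fails for $\xx = v$ (the left side is $v^\top\NN^\dagger v > 0$ while the right side is $0$). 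What the hypothesis does give is that $\MM^{\dagger/2}\NN\MM^{\dagger/2}$ restricted to $\textup{Im}(\MM)$ has eigenvalues in $[1-\sqrt{\eps},\,1+\sqrt{\eps}]$, hence is invertible there. You must still impose $\ker(\MM)\subseteq\ker(\NN)$ (equivalently $\textup{Im}(\NN)\subseteq\textup{Im}(\MM)$) as a separate standing assumption before passing to $\textup{Im}(\MM)$ costs nothing. In the paper's application $\MM = \LL_G$, $\NN = \LL_H$ are Laplacians of connected graphs on the same vertex set, so this holds automatically — which is what the ChuGPSSW18 statement tacitly assumes — but it should be stated as a hypothesis, not derived.
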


The following theorem is a refined version of \Cref{thm:fastsketch_undir}. To obtain these results, we crucially use the fact that the guarantees of \Cref{lemma:ess_guarantee} continue to hold even if the input directed graph is not Eulerian (as the signed variant of an undirected graph, as in \Cref{lemma:undir_to_dir}, need not be Eulerian). 
\begin{corollary} \label{thm:fastsketch_undir_detailed}
There is an algorithm that, given undirected graph $G = (V,E,\ww)$ with $|V| =
n$, $|E| = m$, $\ww \in [1, U]^E$ and $\eps, \delta \in (0, \frac{1}{100})$,
returns a distribution over graphs $H$ which is an $(\eps, \delta)$-graphical spectral sketch, and
\begin{gather*}
 |E(H)| =
	O\Par{\frac{n}{\eps}\log^7(n)\log^2\Par{\frac{nU}{\delta}}
		\log^2\log\Par{\frac{nU}{\delta}}},
	\\
	\log\Par{\frac{\max_{e \in \supp(\ww')} \ww_e'}{\min_{e \in \supp(\ww')} \ww_e}}
	= O(\log(nU)).
\end{gather*}
Moreover, with probability $\ge 1 - \delta$, $H$ is a $\sqrt{\eps}$-approximate
spectral sparsifier of $G$, and for an arbitrary fixed $\xx \in \R^V$,
$|\xx^\top (\LL_H^\dagger - \LL_G^\dagger) \xx|
\le \eps \cdot \xx^\top \LL_H^\dagger \xx$.
The runtime of the algorithm is $\bO\Par{m\log^2\Par{\frac{nU}{\delta}}\log\Par{nU}
+m\log^8(n)\log(\frac{n}{\delta})}$.
\end{corollary}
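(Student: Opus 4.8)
I would reduce the undirected problem to the directed spectral-sketch machinery already built for \Cref{thm:fastsketch}, noting that this machinery never actually uses Eulerianness of its input. Fix the arbitrary consistent orientation of $G=(V,E,\ww)$ underlying $\BB_G$ to obtain a directed graph $\vG_{\mathrm{dir}}=(V,E,\ww)$ with $\BB_{\vG_{\mathrm{dir}}}=\BB_G$ (which need not be Eulerian), and run $\SSalgo$ on $\vG_{\mathrm{dir}}$. The algorithm internally forms the bipartite lift $\vG^\uparrow=\blift(\vG_{\mathrm{dir}})$ and iterates, $R=O(\log n)$ times, the pair ``call $\EPalgo$ on $\und(\vG_t^\uparrow)$, then call $\ESSalgo$ on the lifted graph,'' rounding edges of weight below $\tfrac{\eps}{4mn}$ via $\ROalgo$, producing a reweighting $\ww'$. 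Because the analyses of $\ESSalgo$ (\Cref{lemma:ess_guarantee}), and hence of $\BFSalgo$ (\Cref{lem:bfs_guarantee}), the variance bound \Cref{lemma:variance}, and the sketch bound \Cref{lemma:bfs_fix}/\Cref{cor:multiple_cluster_fix}, require only that the input is a bipartite lift of a directed graph (all edges from $A$ to $B$) and \emph{not} that it is Eulerian, the bookkeeping from the proof of \Cref{thm:fastsketch} applies verbatim: $\nnz(\ww')$ drops by a constant factor per phase, terminating at $O(n\beta J_{\max})=O(n\eps^{-1}\polylog(nU/\delta))$ edges with the stated weight-ratio bound; with probability $\ge 1-\delta$ the accumulated operator-norm error on $\vG^\uparrow$ is $\le\sqrt\eps$; for any fixed $\aa,\zz$ the accumulated bilinear sketch error on $\vG^\uparrow$ is $\le\eps\,\norm{\QQ\aa}_{\LL_{G^\uparrow}}\norm{\QQ\zz}_{\LL_{G^\uparrow}}$; and $|\BB_{\vG^\uparrow}|^\top\ww'=|\BB_{\vG^\uparrow}|^\top\ww$.

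\textbf{Transferring the guarantees back.} \Cref{lemma:blift_spectral}, together with its recorded consequence that \eqref{eq:lift_imply_1} implies \eqref{eq:lift_imply_2} (with $\QQ$ as defined there), pushes both the operator-norm bound and the per-vector-pair bound from $\vG^\uparrow$ down to $\vG_{\mathrm{dir}}$ and yields $\BB_{\vG_{\mathrm{dir}}}^\top\ww'=\BB_{\vG_{\mathrm{dir}}}^\top\ww$, so $\ww'-\ww$ is a circulation on $\vG_{\mathrm{dir}}$. Set $H\defeq(V,E,\ww')$ viewed as undirected; then $E(H)\subseteq E$, $H$ has nonnegative weights throughout ($\BFSalgo$ keeps weights in $[\tfrac\ell2\wws,60\wws]$ and $\ROalgo$ only sets small weights to zero), and $\LL_H=\BB_G^\top\WW'\BB_G$. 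Applying \Cref{lemma:undir_to_dir} (using that $\ww'-\ww$ is a circulation on $\vG_{\mathrm{dir}}$) converts the directed bounds into $\normsop{\LL_G^{\dagger/2}(\LL_G-\LL_H)\LL_G^{\dagger/2}}\le 4\sqrt\eps$ and $|\xx^\top(\LL_G-\LL_H)\xx|\le 4\eps\,\xx^\top\LL_G\xx$ for any fixed $\xx$. Rescaling $\eps$ by an absolute constant gives the $\sqrt\eps$-approximate spectral sparsifier property and the $(\eps,\delta)$-graphical spectral sketch property; translating $\xx^\top\LL_G\xx$ to $\xx^\top\LL_H\xx$ on the right-hand side (as in \Cref{thm:fastsketch_undir}, item 2) costs only another constant via the sparsifier bound.

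\textbf{The inverse sketch.} For a fixed $\xx$, apply \Cref{lemma:sketch_inverse} with $\MM=\LL_G$, $\NN=\LL_H$: the first hypothesis is the sparsifier bound just established, and the second is the graphical sketch bound applied to the fixed vector $\LL_G^\dagger\xx$, since $\xx^\top\LL_G^\dagger(\LL_G-\LL_H)\LL_G^\dagger\xx=(\LL_G^\dagger\xx)^\top(\LL_G-\LL_H)(\LL_G^\dagger\xx)\le\eps\,(\LL_G^\dagger\xx)^\top\LL_G(\LL_G^\dagger\xx)=\eps\,\xx^\top\LL_G^\dagger\xx$. The lemma yields $|\xx^\top(\LL_G^\dagger-\LL_H^\dagger)\xx|\le 7\eps\,\xx^\top\LL_G^\dagger\xx$; converting $\xx^\top\LL_G^\dagger\xx$ to $\xx^\top\LL_H^\dagger\xx$ via the sparsifier bound and rescaling $\eps$ by a constant gives the claimed inverse-sketch bound. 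The sparsity, weight-ratio, and runtime bounds are inherited from \Cref{thm:fastsketch}, since the bipartite lift only doubles $n$ and $m$.

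\textbf{Main obstacle.} The error accumulation is routine given \Cref{thm:fastsketch}; the real content is justifying that the spectral-sketch pipeline is legitimately applicable to the non-Eulerian directed graph $\vG_{\mathrm{dir}}$. This is precisely what the bipartite lift buys us: in $\blift(\vG_{\mathrm{dir}})$ every edge runs from $A$ to $B$, so $\HH_{\vC}=[|\BB_{\vC}|]_{:A}$, and this identity is exactly what \Cref{lemma:bfs_fix} exploits both to preserve in-degrees and out-degrees simultaneously (so degree balance survives the $\ROalgo$ step and $H$ stays Eulerian-compatible / undirected) and to obtain the sharper $\beta^{-1}$ per-vector-pair tradeoff via \Cref{lemma:ex_deg}. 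Since Eulerianness is nowhere invoked in \Cref{lemma:variance}, \Cref{lem:bfs_guarantee}, \Cref{lemma:bfs_fix}, \Cref{cor:multiple_cluster_fix}, or \Cref{lemma:ess_guarantee} — only the cluster/expander-decomposition structure — all of these carry over unchanged, and the two reductions \Cref{lemma:blift_spectral} and \Cref{lemma:undir_to_dir} close the loop back to the undirected graph. Verifying carefully that each lemma's hypotheses are met by the lifted (non-Eulerian) input is the one place demanding attention.
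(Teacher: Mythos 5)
Your proposal is correct and matches the paper's proof, which is stated much more tersely: the paper simply cites \Cref{lemma:undir_to_dir}, \Cref{lemma:sketch_inverse}, and \Cref{thm:fastsketch}, remarks that ``the input directed graph\dots need not be Eulerian,'' and scales $\eps$ by a constant. Your writeup spells out exactly the chain the paper intends — arbitrary orientation, $\SSalgo$ on the (non-Eulerian) directed graph via bipartite lift, transfer by \Cref{lemma:blift_spectral}, reduction to the undirected quadratic form by \Cref{lemma:undir_to_dir}, and the pseudoinverse bound via \Cref{lemma:sketch_inverse} applied to the fixed vector $\LL_G^\dagger\xx$ — and the observation you flag as ``the one place demanding attention'' (that Eulerianness is never used in the sketching pipeline) is precisely the remark the paper makes in passing.
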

\begin{proof}
This is a direct consequence of \Cref{lemma:undir_to_dir,lemma:sketch_inverse}
and \Cref{thm:fastsketch}.
Here, instead of the standard transformation of doubling the edges and taking
both directions of each edge for $G$, we keep one edge each and set an arbitrary
direction as in \Cref{lemma:undir_to_dir}.
We remark that the input directed graph, say $\vG = (V,E',\ww)$, to $\SSalgo$
need not be Eulerian.
Let $\vG' = (V,E',\ww')$ be the resulting directed graph, then $\BB_{\vG}^\top \ww'=
\BB_{\vH}^\top \ww$ by \Cref{item:ess_1} of \Cref{lemma:ess_guarantee} and the first
claim of \Cref{lemma:rounding}.
Scaling $\eps$ by a factor of $\frac{1}{30}$ then guarantees our desired
approximation factors.
\end{proof}

\section*{Acknowledgments}

We thank Richard Peng \cite{Peng23} for clarifying the dependence in Proposition~\ref{prop:ps22} on $\delta$ and $U$. 
We thank the authors of \cite{CohenKPPRSV17} for helpful discussions.
Sushant Sachdeva’s research is supported by an Natural Sciences and Engineering Research Council of Canada
(NSERC) Discovery Grant RGPIN-2018-06398, an Ontario Early Researcher Award (ERA) ER21-16-283, and a Sloan
Research Fellowship. Aaron Sidford was supported in part by a Microsoft Research Faculty Fellowship, NSF CAREER Grant CCF-1844855, NSF Grant CCF-1955039, and a PayPal research award. 
Part of this work was conducted while authors were visiting the Simons Institute for the Theory of Computing Fall 2023 Program \emph{Data Structures and Optimization for Fast Algorithms}. The authors are grateful to the Simons Institute for its support.

\printbibliography

\begin{appendix}
\section{Deferred proofs from Section~\ref{sec:prelims}}\label{app:deferred}

\restatecirceq*
\begin{proof}
	We observe that $\HH^\top \XX \HH = \diag{\HH^\top \xx}$ and $\TT^\top \XX \TT = \diag{\TT^\top \xx}$. The first claim then follows from $\HH^\top \xx = \TT^\top \xx$ as $\xx$ is a circulation. The second claim then follows from
	\[\BB^\top \XX \HH = \HH^\top \XX \HH - \TT^\top \XX \HH = \TT^\top \XX \TT - \TT^\top \XX \HH = -\TT^\top \XX \BB.\] 
\end{proof}

\restatedirundir*
\begin{proof}
	Throughout the proof, let
	\[\BB \defeq \begin{pmatrix} \BB_{\vG}\\ \BB_{\vH}\end{pmatrix} \in \{0, 1\}^{(E \cup F) \times V},\;\ww \defeq \begin{pmatrix} \ww_{\vG} \\ -\ww_{\vH} \end{pmatrix} \in \R^{E \cup F},\]
	and define $\HH,\TT \in \{0,1\}^{(E \cup F) \times V}$ to be appropriate concatenations such that $\BB = \HH - \TT$. Observe that $\BB^\top \ww = \BB_{\vG}^\top \ww_{\vG} - \BB_{\vH}^\top \ww_{\vH} = \vzero_V$. By \Cref{lemma:circeq}, we have
	\[
	\vLL_{\vG} - \vLL_{\vH} = \BB^\top \WW \HH, \quad
	\vLL_{\rev(\vG)} - \vLL_{\rev(\vH)} = -\BB^\top \WW \TT = \HH^\top \WW \BB = \vLL_{\vG}^\top - \vLL_{\vH}^\top.
	\]
	It then suffices to apply the triangle inequality, that transposition preserves the operator norm, and the characterization $\LL_G = \vLL_{\vG} + \vLL_{\rev(\vG)}$ (with a similar equality for $\vH$ and $H$).
\end{proof}

\restatelinalgthree*
\begin{proof}
	Since $\LL_G$ and $\LL_H$ share a kernel, the given condition implies
	$(1 - \eps)\LL_G \preceq \LL_H \preceq (1 + \eps) \LL_G$. 
	Hence, $\|v\|_{\LL_G} \le 1$ implies $\|v\|_{\LL_H} \le \sqrt{1 + \eps}$, and so the conclusion follows from 
	\begin{align*}
		\normop{\LL_{G}^{\frac \dagger 2} \MM \LL_{G}^{\frac \dagger 2}} = \sup_{\substack{u, v \perp \vone_V \\ \|u\|_{\LL_G}, \|v\|_{\LL_G} \le 1}} u^\top \MM v \le (1 + \eps)\sup_{\substack{u, v \perp \vone_V \\ \|u\|_{\LL_H}, \|v\|_{\LL_H} \le 1}} u^\top \MM v = (1 + \eps)\normop{\LL_{H}^{\frac \dagger 2} \MM \LL_{H}^{\frac \dagger 2}}.
	\end{align*}
\end{proof}

\section{Rounding}\label{app:rounding}

In this section, we prove \Cref{lemma:rounding}, our guarantee on $\ROalgo$.

\restaterounding*
\begin{proof}
Throughout the proof we drop the subscripts $\vG$, $G$ from $\BB$, $\HH$, $\LL$ for simplicity.
The algorithm sets $\yy$ to be the unique flow on the edges of tree $T$ that satisfies $\BB^{\top}\yy = \dd.$ Such a vector $\yy$ can be constructed in $O(n)$ time by recursively computing the flow required at each leaf, and then removing the leaf.
By construction, $\supp(\yy) \subseteq T$.  Since $\dd \perp \vone_V$, we also have $\norm{\yy}_{\infty} \le \frac{1}{2}\norm{\dd}_1.$

Next, recall $\BB^\top \zz = \dd$, so $\norm{\dd}_1 = \norms{\BB^{\top}\zz}_1 \le 2 \norm{\zz}_1$, and $\yy - \zz$ is a circulation on $G$.
We now show that spectral error induced by this circulation $\yy-\zz$ is not significant in the directed Laplacians.
For every edge $e \notin T,$ we let $\cc^{(T,e)} \in \{0, 1\}^E$ denote the (signed) incidence vector of the unique cycle in $T \cup {e}.$
We observe that $\zz-\yy$ can be expressed uniquely as $\sum_{e \notin T} \zz_e \cc^{(T,e)}$, so
\begin{align*}
 \normop{\LL^{\frac \dagger 2} \BB^\top (\YY-\ZZ) \HH \LL^{\frac \dagger 2}} & \le \sum_{e \notin T}  |\zz_e| \normop{\LL^{\frac \dagger 2} \BB^\top \CC^{(T,e)} \HH \LL^{\frac \dagger 2}}.
\end{align*}
It suffices to show that each operator norm in the right-hand side is bounded by $n.$ Note that 
\begin{equation}
    \begin{aligned} \label{eq:cyclenorm}
\normop{\LL^{\frac \dagger 2} \BB^\top \CC^{(T,e)} \HH \LL^{\frac \dagger 2}} & = \sqrt{\normop{(\LL^{\frac \dagger 2} \BB^\top \CC^{(T,e)} \HH \LL^{\frac \dagger 2})(\LL^{\frac \dagger 2} \BB^\top \CC^{(T,e)} \HH \LL^{\frac \dagger 2})^{\top}}} \\
     & = \sqrt{\normop{\LL^{\frac \dagger 2} (\BB^\top \CC^{(T,e)} \HH) \LL^{\dagger} (\BB^\top \CC^{(T,e)} \HH)^{\top} \LL^{\frac \dagger 2}}}.
\end{aligned}
\end{equation}
We will bound the norm of the last matrix in the above expression.
Observe that $\BB^\top \CC^{(T,e)} \HH$ is just the directed Laplacian of the cycle with unit weights. Denote it $\MM$ for brevity. 
We further observe that $\MM^{\top}\MM$ is twice the undirected Laplacian of the cycle with unit weights. Since the cycle with unit weights is a downweighted subgraph of (the undirected graph) $G,$ we have $\MM^{\top} \MM \preceq 2\LL.$
Thus,
\[\MM \LL^{\dagger} \MM^{\top} \preceq 2\MM (\MM^{\top} \MM)^{\dagger} \MM^{\top} \preceq 2\II_V.\]
This implies
\[\LL^{\frac \dagger 2} (\BB^\top \CC^{(T,e)} \HH) \LL^{\dagger} (\BB^\top \CC^{(T,e)} \HH)^{\top} \LL^{\frac \dagger 2}
\preceq 2\LL^{\dagger} \preceq 2\LL_T^{\dagger}.\]
Since $T$ has edge weights $\ge 1$ and diameter $\le n,$
$\normsop{\LL_T^\dagger} \le \frac {n^2} 4$ \cite{Mohar91}. 
By using this bound in \eqref{eq:cyclenorm} and taking square roots, we obtain the third result.

To see the last result, we bound using the triangle inequality:
\[\normop{\LL^{\frac \dagger 2}\BB^\top \YY \HH\LL^{\frac \dagger 2}} \le \sum_{e \in T} |\yy|_e \normop{\LL^{\frac \dagger 2}\bb_e \ee_{h(e)}^\top \LL^{\frac \dagger 2}}.\]
Note that $\bb_e \ee_{h(e)}^\top \ee_{h(e)} \bb_e^\top = \bb_e\bb_e^\top \preceq \LL$. Therefore, using $\normsop{\LL_T^\dagger} \le \frac{n^2}{4} \le n^2$, we have the claim:
\begin{align*}
\normop{\LL^{\frac \dagger 2} \bb_e \ee_{h(e)}^\top \LL^{\frac \dagger 2}} &= \sqrt{\normop{\LL^{\frac \dagger 2} \bb_e \ee_{h(e)}^\top \LL^\dagger \ee_{h(e)} \bb_e^\top \LL^{\frac \dagger 2}}} \\
&\le \sqrt{\normop{\LL^{\frac \dagger 2} \bb_e \ee_{h(e)}^\top \LL_T^\dagger \ee_{h(e)} \bb_e^\top \LL^{\frac \dagger 2}}}\\
&\le n \sqrt{\normop{\LL^{\frac \dagger 2} \bb_e \ee_{h(e)}^\top \ee_{h(e)} \bb_e^\top \LL^{\frac \dagger 2}}} \le n.
\end{align*}
\end{proof}

\section{Potential improvements to Theorem~\ref{thm:existential}}
\label{sec:conjectures}

In this section, we discuss two natural avenues to improve the sparsity of our sparsifier construction in Theorem~\ref{thm:existential}: improving the matrix discrepancy result in Proposition~\ref{prop:bjm_measure}, and obtaining a graph decomposition with stronger guarantees than Proposition~\ref{prop:er_partition}.

\paragraph{Partial coloring matrix Spencer.} Consider the following conjecture.

\begin{conjecture}[Partial coloring matrix Spencer]\label{conj:partial_coloring}
	There is a constant $\gamma \in (0, 1)$ such that for $\{\AA_i\}_{i \in [m]} \subset \Sym^n$ with $\normsop{\sum_{i \in [m]} \AA_i^2} \le 1$, there exists $\xx \in [-1, 1]^m$ such that
	\[|\{i \in [m] \mid |\xx_i| = 1 \}| \ge \gamma m,\text{ and } \normop{\sum_{i \in [m]} \xx_i \AA_i} \le \frac 1 \gamma.\]
\end{conjecture}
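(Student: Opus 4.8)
The plan is to establish Conjecture~\ref{conj:partial_coloring} by the same two-ingredient pipeline that drives Theorem~\ref{thm:existential}: a Gaussian-measure lower bound on the operator-norm convex body, fed into the Reis--Rothvoss partial coloring machinery. Concretely, after rescaling so that $\normsop{\sum_{i\in[m]}\AA_i^2}\le 1$, fix a constant $C_0$ (to be determined) and consider the symmetric convex body $\mathcal{K} \defeq \{\xx\in\R^m : \normsop{\sum_{i\in[m]}\xx_i\AA_i}\le C_0\}$. The decisive first step is to prove $\gamma_m(\mathcal{K}) \ge \exp(-c_0 m)$ for a small universal constant $c_0$. Granting this, Proposition~\ref{prop:rr_partialcolor} applied with $S=\R^m$ produces $\xx\in \cset\mathcal{K}\cap[-1,1]^m$ with $|\{i : |\xx_i|=1\}|\ge\ctight m$; since $\mathcal{K}$ is a sublevel set of the seminorm $\xx\mapsto\normsop{\sum_i\xx_i\AA_i}$ we have $\cset\mathcal{K}=\{\xx:\normsop{\sum_i\xx_i\AA_i}\le\cset C_0\}$, so the desired vector exists with $\gamma \defeq \min\{\ctight,(\cset C_0)^{-1}\}$. (If fewer than half the tight coordinates equal $-1$ we negate $\xx$ and halve $\ctight$, exactly as in Lemma~\ref{lemma:eps}.)

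The heart of the argument is the measure bound. By the standard Gaussian-versus-Rademacher comparison it is enough to show an ``entropy form'' of matrix Spencer: that a $1-\exp(-\Omega(m))$ fraction of sign vectors $\ss\in\{\pm1\}^m$ satisfy $\normsop{\sum_i\ss_i\AA_i}\le C_0$. Here the obstruction is visible already from Lemma~\ref{lem:matrix_azuma}, which only gives $\Pr[\normsop{\sum_i\ss_i\AA_i}>t]\le 2n\exp(-t^2/8)$: to overcome the union bound over $n$ eigendirections one is forced to take $t=\Omega(\sqrt{\log n})$, and even the refined moment method behind Proposition~\ref{prop:bjm_measure} still pays a $\log^{1/4}n$ factor in the best case. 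A genuinely log-free bound would require a matrix concentration inequality that exploits the rigidity of the discrete cube and the constraint $\sum_i\AA_i^2\preceq\II$ \emph{without} charging for the ambient dimension --- e.g.\ a sharp generic-chaining estimate for the associated Gaussian chaos, or running the Bansal--Jiang--Meka trace-moment calculation with a dimension-free bound on $\Tr\big((\sum_i\ss_i\AA_i)^{2k}\big)$ at the relevant scale $k\approx m$.

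I do not expect an unconditional resolution: removing this polylogarithmic loss is precisely the matrix Spencer conjecture, and Proposition~\ref{prop:bjm_measure} is the current frontier. What I would actually deliver is a \emph{conditional} proof. Assume matrix Spencer in the form ``for $\{\AA_i\}_{i\in[m]}\subset\Sym^n$ with $\normsop{\sum_i\AA_i^2}\le1$ there is a signing $\xx\in\{\pm1\}^m$ with $\normsop{\sum_i\xx_i\AA_i}=O(1)$''. First bootstrap this single-signing statement to the entropy form: apply the hypothesis to the restriction of $\{\AA_i\}$ to a uniformly random coordinate subset of size $m/2$ (whose variance statistic only shrinks), take a greedy sign there, recurse, and count the resulting tree of consistent signings to conclude that at least $2^{(1-c_0)m}$ of the $2^m$ signings are good; this yields $\gamma_m(\mathcal{K})\ge\exp(-c_0 m)$. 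Then conclude with Proposition~\ref{prop:rr_partialcolor} as in the first paragraph. The one delicate point in the conditional route is making the bootstrapping uniform over all admissible matrix tuples, so that the ``many good signings'' conclusion is not lost when passing to random restrictions --- I expect this counting/restriction argument, rather than anything conceptual, to be the main technical hurdle; the rest is a direct reuse of Corollary~\ref{cor:partialcolor_variance}-style reasoning already developed in the paper.
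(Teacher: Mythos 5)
The statement you were asked to prove is not proved in the paper at all: it is stated as \Cref{conj:partial_coloring}, an open conjecture, and the paper explicitly notes it is \emph{stronger} than the matrix Spencer conjecture, with \Cref{prop:bjm_measure} (carrying its $\log^{1/4}n$ and $\sqrt{\sigma f}\,\polylog$ losses) described as the current frontier. The only unconditional progress the paper records is \Cref{cor:lowrank_matrix_spencer}, which settles the conjecture when all $\AA_i$ are supported in a common $r$-dimensional subspace with $m \ge r\log^3 n$. So your assessment that no unconditional proof is within reach of the paper's tools is accurate, and your diagnosis of the obstruction --- that the needed Gaussian-measure bound $\gamma_m(\mathcal{K})\ge\exp(-c_0 m)$ for a \emph{constant}-radius operator-norm body is exactly where every known matrix concentration estimate pays a polylogarithmic factor --- matches the paper's own discussion.

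That said, your conditional route has a structural problem worth flagging. The hypothesis you assume (``for $\{\AA_i\}_{i\in[m]}\subset\Sym^n$ with $\normsop{\sum_i\AA_i^2}\le1$ there is a signing $\xx\in\{\pm1\}^m$ with $\normsop{\sum_i\xx_i\AA_i}=O(1)$'') is the variance-form, full-coloring matrix Spencer statement, which already yields \Cref{conj:partial_coloring} immediately: the full signing itself has $|\xx_i|=1$ for \emph{all} $i\in[m]$ and operator norm $O(1)$, so one may take $\gamma$ to be the smaller of $1$ and the reciprocal of that constant. The entire bootstrapping apparatus --- random restrictions, counting a tree of consistent signings to reach an entropy form, converting to a Gaussian-measure bound, and invoking \Cref{prop:rr_partialcolor} --- is therefore unnecessary under your hypothesis, and the ``delicate point'' you identify (uniformity of the restriction argument) is a hurdle you created rather than one the problem demands. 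Moreover, that hypothesis is at least as strong as the conjecture you are trying to establish (the paper derives standard matrix Spencer \emph{from} the partial-coloring form via recursion, not the other way around), so the conditional argument does not constitute a genuine reduction. In short: you correctly recognize the statement is open, but your proposal neither proves it nor reduces it to anything weaker.
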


By observation, applying the posited coloring in
Conjecture~\ref{conj:partial_coloring} in place of
Proposition~\ref{prop:bjm_measure} and Corollary~\ref{cor:partialcolor_variance}
when designing our $\ESOalgo$ (see the proof of \Cref{lemma:eps})%
would remove the last low-order term in Theorem~\ref{thm:existential}, giving a sparsity bound of $O(n\log U + n\log(n) \cdot \eps^{-2})$, which is $O(n\log(n)\cdot \eps^{-2})$ for $U = \poly(n)$. Conjecture~\ref{conj:partial_coloring} has already been stated implicitly or explicitly in the literature in several forms (see e.g., Conjecture 3 in \cite{ReisR23} with $p = \infty$). Notably, it is stronger than the matrix Spencer conjecture, which asserts (in the most prominent special case) that for a set of matrices $\{\AA_i\}_{i \in [n]} \in \Sym^{n}$ with $\normsop{\AA_i} \le 1$ for all $i \in [n]$, there exists $\xx \in \{\pm 1\}^n$ such that $\normsop{\sum_{i \in [n]} \xx_i \AA_i} = O(\sqrt n)$. In the context of Conjecture~\ref{conj:partial_coloring}, considering the matrices $\frac{1}{\sqrt{n}}\AA_i,$ the assumption is satisfied since $\frac{1}{n} \sum_i \AA_i^2 \preceq \II,$ and hence Conjecture~\ref{conj:partial_coloring} implies a partial coloring with spectral discrepancy $O(1)$ (i.e., $\xx \in [-1, 1]^n$ with a constant fraction of coordinate magnitudes equal to $1$). Standard boosting techniques (see, e.g., \cite{Giannopoulos97} or Section 4 of \cite{Rothvoss17}) show that we can recurse upon this partial coloring scheme to obtain a full coloring in $\{\pm 1\}^n$, since the matrix variance decreases by a constant factor in each iteration.

We also note that Conjecture~\ref{conj:partial_coloring} has already been established in prominent settings, when the matrices $\{\AA_i\}_{i \in [m]} \subset \Sym^n$ are all low-rank. For example, Theorem 1.4 of \cite{KyngLS20} proves Conjecture~\ref{conj:partial_coloring} for rank-$1$ matrices (with a precise constant $\gamma = \frac 1 4$), and if all $\{\AA_i\}_{i \in [m]}$ have images supported in the same $O(\sqrt{n})$-dimensional subspace, Theorem 3.5 of \cite{HopkinsRS22} also proves the claim. For completeness, using tools recently developed in \cite{BansalJM23}, we provide a proof of Conjecture~\ref{conj:partial_coloring} in one of the strongest settings we are aware of known in the literature.

\begin{proposition}[Lemma 3.1, \cite{BansalJM23}]
	There is a constant $\gamma \in (0, 1)$ such that for $\{\AA_i\}_{i \in [m]} \subset \Sym^n$ with $\normsop{\sum_{i \in [m]} \AA_i^2} \le \sigma^2$ and with $\sum_{i \in [m]} \normsf{\AA_i}^2 \le mf^2$, there exists $\xx \in [-1, 1]^m$ such that
	\[|\{i \in [m] \mid |\xx_i| = 1 \}| \ge \gamma m,\text{ and } \normop{\sum_{i \in [m]} \xx_i \AA_i} \le \frac 1 \gamma\Par{\sigma + \sqrt{\sigma f}\log^{\frac 3 4}(n)}.\]
\end{proposition}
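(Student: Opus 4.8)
The plan is to derive this from Proposition~\ref{prop:bjm_measure} (the Gaussian-measure lower bound, also due to \cite{BansalJM23}) combined with the partial coloring machinery of Proposition~\ref{prop:rr_partialcolor}, in essentially the same way that Corollary~\ref{cor:partialcolor_variance} is extracted, but now applied to the full space $\R^m$ rather than a subspace. The target bound $\normsop{\sum_i x_i \AA_i} \le \gamma^{-1}(\sigma + \sqrt{\sigma f}\log^{3/4} n)$ is exactly the first of the two operator-norm thresholds appearing in the $\min$ inside the definition of $\widetilde{\gK}$ in Proposition~\ref{prop:bjm_measure}, so the body $\gK$ we want is (up to the constant $\ccolor$) precisely that sublevel set.

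\textbf{Key steps.} First, I would handle the trivial regime: if $m \le 2n$, then a direct argument (or a rescaling reducing to the case $m > 2n$ by padding with zero matrices, noting that padding does not change either $\sigma^2$ or $\sum_i \normsf{\AA_i}^2$ by more than a constant factor and keeps $f$ controlled) lets us assume $m > 2n$, so Proposition~\ref{prop:bjm_measure} applies. Second, fix a constant $c \in (0,\tfrac12)$; Proposition~\ref{prop:bjm_measure} then yields a constant $\ccolor$ and a subspace $T \subseteq \R^m$ with $\dim(T) \ge (1-c)m$ and $\gamma_T(\gK) \ge \exp(-cm)$, where $\gK \defeq \{\xx : \normsop{\sum_i x_i \AA_i} \le \ccolor(\sigma + \sqrt{\sigma f}\log^{3/4} n)\}$ (retaining only the first term of the $\min$, which is a superset of $\widetilde\gK$'s body and hence still has the measure bound). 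Third, apply Corollary~\ref{cor:partialcolor_variance} with $S \defeq T$: since $\dim(T) \ge (1-c)m \ge 2\ctight m$ for a suitable choice of $\ctight$ once $c$ is small enough, and $\gamma_T(\gK) \ge \exp(-cm)$, the corollary produces (with positive probability, hence in particular there \emph{exists}) a point $\xx \in \cset\gK \cap [-1,1]^m \cap T$ with $|\{i : |x_i| = 1\}| \ge \ctight m$. Finally, set $\gamma \defeq \min(\ctight, (\cset \ccolor)^{-1})$: then $\xx \in \cset \gK$ means $\normsop{\sum_i x_i \AA_i} \le \cset\ccolor(\sigma + \sqrt{\sigma f}\log^{3/4} n) \le \gamma^{-1}(\sigma + \sqrt{\sigma f}\log^{3/4} n)$, and $|\{i : |x_i|=1\}| \ge \ctight m \ge \gamma m$, which is exactly the claim. (One also uses $\Atil_i = \AA_i$ here since $\PP_S$ restricted to $T$ acts as the identity — the $\Atil$ vs.\ $\AA$ subtlety in Lemma~\ref{lemma:atilequala} is what makes Corollary~\ref{cor:partialcolor_variance} directly applicable with the body stated in terms of the original $\AA_i$.)

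\textbf{Main obstacle.} The genuinely substantive content is all imported: Proposition~\ref{prop:bjm_measure} is the hard analytic input (the flattened-second-moment eigenspace construction and the matrix concentration bound behind it), and Proposition~\ref{prop:rr_partialcolor} is the Reis--Rothvoss partial coloring theorem. Given those, the only care needed is bookkeeping the constants — ensuring the $c$ from the measure bound is chosen small enough that $(1-c)m \ge 2\ctight m$ so that Corollary~\ref{cor:partialcolor_variance}'s dimension hypothesis $\dim(S) \ge 2\ctight m$ is met, and tracking how $\cset$ (which depends on $\ctight$ and the measure exponent $c$) feeds into the final $\gamma$. None of this is deep; the ``hard part'' is simply recognizing that the first branch of the $\min$ in Proposition~\ref{prop:bjm_measure} is precisely the bound being asserted, so the proposition is essentially a restatement of a lemma already proved in \cite{BansalJM23} packaged in partial-coloring form, and the proof is the two-line combination above.
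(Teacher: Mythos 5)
The paper does not itself prove this proposition; it imports it with the citation ``[Lemma 3.1, \cite{BansalJM23}]'', in the same way it imports \Cref{prop:bjm_measure} (the key ingredient) and \Cref{prop:rr_partialcolor} from the discrepancy literature. So there is no in-paper proof to compare against, and what you have written is a reconstruction from the paper's stated tools.

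That reconstruction is essentially correct for $m > 2n$, and it mirrors exactly how the paper combines \Cref{prop:bjm_measure} with \Cref{cor:partialcolor_variance} in the proof of \Cref{lemma:eps} (take the first branch of the $\min$ in $\tgK$, feed the resulting subspace $T$ and measure bound $\exp(-cm)$ into the Reis--Rothvoss partial-coloring machinery with $S = T$, and chase constants). Your choice $c$ small so that $\dim(T)\ge(1-c)m\ge 2\ctight m$, the final $\gamma = \min(\ctight, (\cset\ccolor)^{-1})$, and the observation that $\tgK = \gK$ here since no projection is involved (making the \Cref{lemma:atilequala} subtlety vacuous) are all fine.

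The one genuine gap is the $m\le 2n$ case. Your padding proposal does not work as stated: padding with zero matrices makes $\gK$ impose \emph{no} constraint in the padded coordinates (it is all of $\R$ in each such direction), so the subspace $T$ produced by \Cref{prop:bjm_measure} and the partial coloring from \Cref{cor:partialcolor_variance} can concentrate all $\ctight m'$ tight coordinates on the padded indices, leaving zero tight coordinates among the original $m$. In fact the paper's own algorithmic use of these tools (\Cref{lemma:eps}, guarded by the test on Line~\ref{line:ifdense} that $\hm\ge 8nJ$) explicitly stays out of the small-$m$ regime precisely so \Cref{prop:bjm_measure} applies. If you want to close this, the regime $m < 1/\gamma$ is vacuous (take $\xx=\vzero$), but $1/\gamma\le m\le 2n$ needs a real argument (or an honest restriction on $m$ in the statement); ``padding'' as described is not it.
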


\begin{corollary}\label{cor:lowrank_matrix_spencer}
	If the images of all $\AA_i$ are supported in the same $r$-dimensional subspace and $m \ge r \cdot \log^3 n$, Conjecture~\ref{conj:partial_coloring} is true. 
\end{corollary}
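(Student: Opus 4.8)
The plan is to invoke the restated version of Lemma~3.1 of \cite{BansalJM23} (the Proposition immediately preceding this corollary) with $\sigma=1$ and a small choice of $f$, and then use the hypothesis $m \ge r\log^3 n$ to absorb the extra $\sqrt{\sigma f}\log^{3/4}(n)$ term into $\sigma$.

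First I would pin down the two variance parameters. The bound $\normsop{\sum_{i \in [m]} \AA_i^2} \le 1$ is exactly the hypothesis of Conjecture~\ref{conj:partial_coloring}, so we may take $\sigma=1$. For the Frobenius parameter, the key observation is that the shared low-rank structure passes to the squares: if every $\AA_i$ has image contained in a common $r$-dimensional subspace $W$, then $\mathrm{Im}(\AA_i^2) \subseteq \mathrm{Im}(\AA_i) \subseteq W$ for each $i$, hence $\mathrm{Im}\big(\sum_{i\in[m]} \AA_i^2\big) \subseteq W$, so $\sum_{i\in[m]}\AA_i^2$ is PSD of rank at most $r$. Therefore
\[
\sum_{i\in[m]} \normsf{\AA_i}^2 \;=\; \sum_{i\in[m]}\Tr\big(\AA_i^2\big) \;=\; \Tr\Big(\sum_{i\in[m]}\AA_i^2\Big) \;\le\; r\cdot \normsop{\sum_{i\in[m]}\AA_i^2} \;\le\; r,
\]
so we may take $f^2 = r/m$ in the Proposition.

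Plugging these into the Proposition yields a constant $\gamma\in(0,1)$ and a point $\xx\in[-1,1]^m$ with $|\{i\in[m]\mid |\xx_i|=1\}| \ge \gamma m$ and
\[
\normop{\sum_{i\in[m]} \xx_i\AA_i} \;\le\; \frac1\gamma\Big(1 + \sqrt f\,\log^{3/4}(n)\Big) \;=\; \frac1\gamma\Big(1 + (r/m)^{1/4}\log^{3/4}(n)\Big).
\]
Since $m \ge r\log^3 n$ gives $(r/m)^{1/4} \le \log^{-3/4}(n)$, the parenthetical is at most $2$, so $\normop{\sum_{i} \xx_i\AA_i} \le 2/\gamma$. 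Setting $\gamma' \defeq \gamma/2 \in (0,1)$, the same $\xx$ satisfies $|\{i\mid |\xx_i|=1\}| \ge \gamma m \ge \gamma' m$ and $\normop{\sum_i \xx_i\AA_i} \le 1/\gamma'$, which is precisely the assertion of Conjecture~\ref{conj:partial_coloring}.

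I do not expect a genuine obstacle here; the argument is routine once the restated Proposition is in hand. The only step that requires a moment's care is the trace bound: one must apply the rank-$r$ estimate to the \emph{sum} $\sum_i \AA_i^2$ (which is what makes $\sum_i\normsf{\AA_i}^2 \le r$), rather than edgewise to the individual $\AA_i$, and one should note that $\sqrt f\,\log^{3/4}(n) \le 1$ is exactly the quantitative content of the assumption $m \ge r\log^3 n$ (so the logarithmic factor in the Proposition's bound, even written with $\log n$ rather than $\log r$, is harmless).
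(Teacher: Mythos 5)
Your proof is correct and follows the paper's argument essentially line for line: both bound $\sum_i\normsf{\AA_i}^2 = \Tr(\sum_i \AA_i^2) \le r$ via the rank-$r$ bound on $\sum_i\AA_i^2$, take $f^2 = r/m$, and observe that $m \ge r\log^3 n$ forces the $(r/m)^{1/4}\log^{3/4} n$ term to be $O(1)$. Your explicit justification that $\sum_i \AA_i^2$ has image in the common subspace $W$ (hence rank $\le r$) is a helpful step the paper leaves implicit.
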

\begin{proof}
	By linearity of trace, we can choose $f$ such that
	\[f^2 = \frac 1 m \sum_{i \in [m]} \Tr\Par{\AA_i^2} = \frac 1 m \Tr\Par{\sum_{i \in [m]} \AA_i^2} \le \frac r m \normop{\sum_{i \in [m]} \AA_i^2} \le \frac {r} m,\]
	where we use that the rank of $\sum_{i \in [m]} \AA_i^2$ is at most $r$. The resulting discrepancy bound is
	\[O\Par{1 + \sqrt[4]{\frac r m} \cdot \log^{\frac 3 4}(n)}\]
	which proves the claim for sufficiently small $\gamma$, under the assumed parameter bounds.
\end{proof}

For example, while Corollary~\ref{cor:lowrank_matrix_spencer} does not establish Conjecture~\ref{conj:partial_coloring} in full generality, it does establish it when $m$ is larger than $n$ by a polylogarithmic factor, as we may take $r = n$.

\paragraph{Stronger effective resistance decomposition.} We further observe that another avenue to improving Theorem~\ref{thm:existential} is via strengthening Proposition~\ref{prop:er_partition}, the graph decomposition result it is based on. We present one source of optimism that the parameters in Proposition~\ref{prop:er_partition}, which gives an $(O(\frac{n\log n}{m}), O(1), O(\log U))$-ER decomposition, can be directly improved, though this remains an open question suggested by our work. In particular, we use the following claim in \cite{AlevALG18}.

\begin{proposition}[Theorem 3, \cite{AlevALG18}]\label{prop:aalg18}
	Given $G = (V, E, \ww)$ with $n = |V|$ and sufficiently large $C > 1$, there is a constant $\alpha \in (0, 1)$ and a polynomial-time algorithm which finds a partition $V = \{V_j\}_{j \in [J]}$ such that if $\{G_j \defeq G[V_j]\}_{j \in [J]}$ are the corresponding induced subgraphs, we have
	\begin{equation}\label{eq:aalg18_edges_cut}\sum_{e \in E \setminus \bigcup_{j \in [J]} E(G_j)} \ww_e \le \frac{\sum_{e \in E} \ww_e}{C\alpha},\end{equation}
	and
	\begin{equation}\label{eq:aalg18_er_diameter}\max_{u, v \in V_j} \ER_G(u, v) \le \frac{n}{C^3\alpha\sum_{e \in E} \ww_e} \text{ for all } j \in [J].\end{equation}
\end{proposition}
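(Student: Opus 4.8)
The plan is to build the partition by a ball-carving / region-growing argument in the effective resistance metric, using two standard facts. First, $\ER_G$ is a metric on $V$; in fact $\sqrt{\ER_G}$ embeds isometrically into $\ell_2$ via $v \mapsto \LL_G^{\dagger/2}\ee_v$, so $\ER_G$ is a metric of negative type ($\ell_2^2$). Second, $\sum_{e \in E}\ww_e \ER_G(e) = n - c$, where $c$ is the number of connected components of $G$, so this sum is at most $n$ (cf.\ the proof of Lemma~\ref{lem:effres_over}). We may assume $G$ is connected, arguing componentwise otherwise as elsewhere in the paper. Write $M \defeq \sum_{e\in E}\ww_e$ and fix the target effective-resistance diameter $\rho \defeq \tfrac{n}{C^3\alpha M}$.

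First I would set up the region-growing potential. Process the vertices one cluster at a time: from the current set $U$ of unclustered vertices pick any $v \in U$, and for $r \ge 0$ let $B(v,r) \defeq \{u \in U : \ER_G(v,u) \le r\}$, let $\mathrm{vol}(r)$ be a volume measure built from the effective-resistance-weighted edge mass inside $B(v,r)$ (plus a base seed term), and let $\partial(r)$ be the total $\ww$-weight of edges of $G$ with exactly one endpoint in $B(v,r)$. One then selects a radius $r \in [0,\rho/2)$ for which $\partial(r)$ is small relative to $\mathrm{vol}(\rho/2)$, carves $B(v,r)$ off as a cluster --- which has $\ER_G$-diameter at most $2r < \rho$ by the triangle inequality for $\ER_G$ --- and recurses on $U \setminus B(v,r)$. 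Since each edge is charged to at most one carved cluster and the total volume is controlled by $\sum_e \ww_e \ER_G(e) \le n$, summing the boundary bounds over all carved clusters yields a bound on the total cut weight, and the whole procedure is clearly polynomial time.

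The key step, and the main obstacle, is getting the \emph{quantitative} tradeoff right. The textbook region-growing analysis (as in Proposition~\ref{prop:region} / \cite{GargVY96}) loses a $\log n$ factor when turning $\int_0^{\rho/2} \tfrac{\partial(r)}{\mathrm{vol}(r)}\,\mathrm{d}r \le \log\tfrac{\mathrm{vol}(\rho/2)}{\mathrm{vol}(0)}$ into a pointwise bound, and this is exactly what separates the claimed guarantee from the weaker $(O(\tfrac{n\log n}{M}),O(1),\cdot)$-ER decomposition of Proposition~\ref{prop:er_partition}. To eliminate it one must genuinely exploit that $\ER_G$ is of negative type. I would either (i) replace the single-scale carving step by a recursive multi-scale scheme in which the diameter scale is geometrically refined while the allowed cut fraction is geometrically shrunk, so that the logarithmically many levels contribute only a constant overall overhead and produce precisely the $\tfrac{1}{C^3\alpha}$-versus-$\tfrac{1}{C\alpha}$ gap in the statement; or (ii) run a padded-decomposition / random-tessellation argument tailored to the $\ell_2$ embedding, combined with the \emph{aggregate} bound $\sum_e \ww_e\ER_G(e) \le n$ (rather than a per-edge distance bound passed through Cauchy--Schwarz), so that edge mass concentrated in low-$\ER_G$-diameter regions (e.g.\ near-cliques) is packed into single clusters at negligible cost. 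Either route requires the careful scale-by-scale bookkeeping of \cite{AlevALG18}; once that $\log$-loss is controlled, the remaining ingredients (the triangle-inequality diameter bound, the charging of cut edges to clusters, and polynomial running time) are routine.
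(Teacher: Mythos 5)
This proposition is a cited external result (Theorem 3 of \cite{AlevALG18}); the paper does not supply a proof of it, so there is no in-paper argument to compare against, and the question is whether your sketch could stand on its own.

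It cannot, though it locates the difficulty correctly. You correctly identify the two ingredients one would start from --- that $\ER_G$ is a metric of negative type via the embedding $v \mapsto \LL_G^{\dagger/2}\ee_v$, and that $\sum_{e\in E}\ww_e\ER_G(e)\le n$ --- and you correctly diagnose the obstruction: naive region growing in the style of \cite{GargVY96} loses a $\log n$ factor, and that $\log n$ is precisely the gap between Proposition~\ref{prop:er_partition} and Proposition~\ref{prop:aalg18}. But removing that $\log$ is the entire content of the cited theorem. Your proposal lists two candidate strategies (a multi-scale recursion, or an $\ell_2^2$-tailored padded decomposition) and then explicitly defers the decisive step to ``the careful scale-by-scale bookkeeping of \cite{AlevALG18}.'' A proof whose key step is ``do what the cited paper does'' is not a proof. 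Concretely: neither candidate strategy is carried far enough to show how the $\log n$ actually disappears, nor why the result comes out with the specific $C$-versus-$C^3$ tradeoff between the cut-weight bound \eqref{eq:aalg18_edges_cut} and the diameter bound \eqref{eq:aalg18_er_diameter}, which is the defining quantitative feature of the statement and must emerge from exactly the bookkeeping you postpone.
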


Proposition~\ref{prop:aalg18} immediately implies an improvement of Proposition~\ref{prop:er_partition} when $\ww$ is well-behaved.
\begin{corollary}\label{cor:strong_erd_constant_wratio}
	If $G = (V, E, \ww)$ has $\ww \in [1, U]^E$ for $U = \Theta(1)$, there exists a $(\frac n {\beta m}, \infty, 1)$-ER decomposition of $G$, for a constant $\beta \in (0, 1)$.
\end{corollary}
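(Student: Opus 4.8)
The plan is to invoke Proposition~\ref{prop:aalg18} directly with a sufficiently large constant $C$, and take $\{G_i\}_{i \in [I]}$ to be the induced subgraphs $\{G[V_j]\}_{j \in [J]}$ it produces, discarding any $V_j$ that induces no edges. Since the $V_j$ partition $V$, the induced subgraphs are pairwise edge-disjoint and vertex-disjoint, so Property~\ref{item:effres:partition:vertex} of Definition~\ref{def:er_partition} holds with $J = 1$, and Property~\ref{item:effres:partition:weight} holds vacuously with $r = \infty$. It then remains to verify the effective resistance diameter bound (Property~\ref{item:effres:partition:diameter}), the edges-cut bound (Property~\ref{item:effres:partition:cut}), and to check that the resulting $\rho$ has the claimed form $\frac{n}{\beta m}$ for a constant $\beta \in (0,1)$.

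For the diameter bound, I would first note $\sum_{e \in E} \ww_e \ge m$ since every $\ww_e \ge 1$, so \eqref{eq:aalg18_er_diameter} gives $\max_{u,v \in V_j} \ER_G(u,v) \le \frac{n}{C^3 \alpha \sum_{e \in E}\ww_e} \le \frac{n}{C^3 \alpha m}$ for each $j$. Combined with $\max_{e \in E(G_i)} \ww_e \le U$, the product appearing in Property~\ref{item:effres:partition:diameter} is at most $\frac{Un}{C^3\alpha m}$, so one may take $\rho = \frac{n}{\beta m}$ with $\beta \defeq \min\{\tfrac12,\, \tfrac{C^3\alpha}{U}\}$, which is a constant in $(0,1)$ because $U = \Theta(1)$ and $C,\alpha$ are constants. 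For the edges-cut bound, every cut edge has weight at least $1$, so the number of cut edges is at most $\sum_{e \in E \setminus \bigcup_j E(G_j)} \ww_e \le \frac{\sum_{e\in E}\ww_e}{C\alpha} \le \frac{Um}{C\alpha}$ by \eqref{eq:aalg18_edges_cut}; choosing $C$ large enough that $C\alpha \ge 2U$ makes this at most $\frac m2$, as required. All constants involved depend only on $U = \Theta(1)$ and the absolute constants $C,\alpha$, which gives the claim.

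The one point I expect to need care is the dependence of $\alpha$ on $C$ in Proposition~\ref{prop:aalg18}: the edges-cut step requires $C\alpha \ge 2U$ to be attainable, so I would confirm from \cite{AlevALG18} that $\alpha$ can be taken to be a universal constant (or at least that $C\alpha$ is unbounded as $C$ grows), which is consistent with their construction cutting only a $\Theta(1/C)$ fraction of the edge weight at the cost of a $\mathrm{poly}(C)$ blow-up in the effective resistance diameter. A minor bookkeeping point is that discarding empty pieces is needed only so that the maximum $\max_{e \in E(G_i)}\ww_e$ in Property~\ref{item:effres:partition:diameter} is over a nonempty set; this changes neither the edges-cut count nor the vertex-coverage parameter.
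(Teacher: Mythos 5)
Your proof is correct and follows essentially the same route as the paper: invoke Proposition~\ref{prop:aalg18} with a sufficiently large constant $C$ (depending on $U$ and $\alpha$), use $\min_e \ww_e \ge 1$ to convert the cut-weight bound into an edge-count bound, and use $\max_e \ww_e \le U$ together with $\sum_e \ww_e \ge m$ to turn the ER-diameter guarantee into the required $\rho = \Theta(n/m)$. Your explicit caveat about $\alpha$ being independent of $C$ is reasonable and is in fact assumed implicitly by the paper's proof, which sets $C = 2U/\alpha$.
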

\begin{proof}
	Let $m \defeq |E|$. Apply Proposition~\ref{prop:aalg18} to $G$ with parameter $C \gets \frac{2U}{\alpha} = \Theta(1)$. The guarantee \eqref{eq:aalg18_edges_cut} implies that the total cut weight is at most $\frac m 2$, so less than half the edges are cut as $\min_{e \in E} \ww_e \ge 1$. Further, \eqref{eq:aalg18_er_diameter} shows that the $\rho$ parameter in Definition~\ref{def:er_partition} is bounded by $U \cdot \frac{n}{C^3 \alpha m} = \Theta\Par{\frac{n}{m}}$,
	as desired. Each vertex appears in at most one decomposition piece by definition.
\end{proof}

The main difference between the statement of Proposition~\ref{prop:aalg18} and that needed to generalize Corollary~\ref{cor:strong_erd_constant_wratio} beyond the bounded weight ratio case is that Proposition~\ref{prop:aalg18} measures the cut edges by the amount of total weight cut, rather than the number of edges cut. Indeed, for a general $n$-vertex, $m$-edge graph $G = (V, E, \ww)$ with $\ww \in [1, U]^E$ but where $U$ may be superconstant, let $W \defeq \sum_{e \in E} \ww_e$, and let $G' = (V, E', \ww_{E'})$ where $E' \subseteq E$ removes any edge in $E$ with weight larger than $\frac{4W}{m}$ (so $|E'| \ge \frac{3m} 4$). Applying Proposition~\ref{prop:aalg18} with any constant $C$ on $G'$ yields
\[\Par{\max_{e \in E(G_j)} \ww_e}\Par{\max_{u, v \in V_j} \ER_G(u, v)} = O\Par{\frac W m} \cdot O\Par{\frac n W} = O\Par{\frac n m},\]
as desired. Unfortunately, the claim \eqref{eq:aalg18_edges_cut} does not imply few edges are cut in this case, though for sufficiently large $C$, it does imply only a small fraction of total weight is cut. 

We conclude this section by mentioning one barrier to improving the guarantees of \cite{AlevALG18}, towards obtaining a variant of Corollary~\ref{cor:strong_erd_constant_wratio} which holds for superconstant weight ratios $U$. 
In particular, no single decomposition of $G$'s vertices can simultaneously guarantee a bounded effective resistance diameter while cutting a small number of edges, as the following example demonstrates.

Let $H$ be a path graph with all edge weights $1$, and let $G$ equal $H$ plus a clique with edge weights $n^{-4}$. Since $\LL_H \preceq \LL_G \preceq 2 \LL_H$, we have $\frac 12 \ER_H(u,v) \leq \ER_G(u,v) \leq \ER_H(u,v)$ for any vertices $u,v$. We claim that any vertex-disjoint partition of $G$ which cuts at most $\frac m 2$ edges must have one component with resistance diameter $\Omega(n)$. Indeed, any partition $P_1, P_2, \ldots P_k$ with $|P_i| = n_i$ does not cut exactly $\sum_{i=1}^k \frac{n_i (n_i - 1)}{2}$ edges: as this must be more than $\frac m 2$, we have 
\[
\frac{m}{2} \leq \sum_{i\in[k]} \frac{n_i (n_i - 1)}{2} \leq \frac{1}{2} \left( \max_{i \in [k]} n_i - 1 \right) \sum_{i\in[k]} n_i = \frac{n}{2} \left( \max_{i \in [k]}n_i - 1 \right). 
\]
Since $m = \frac{n(n-1)}{2}$, the largest partition piece has $\ge \frac{n}{2}$ vertices, and since any path of length $k$ has resistance diameter $k$, this piece has diameter $\Omega(n)$. Thus any potential application of the techniques of \cite{AlevALG18} towards improving Proposition~\ref{prop:er_partition} must partition its input by both vertices and edges; extending \cite{AlevALG18}'s approach to subsets of edges is an intriguing open question.

Finally, we mention that the definition of a graph decomposition highlighted in this work, the ER decomposition of Definition~\ref{def:er_partition}, may not be the only useful notion of decomposition for constructing Eulerian sparsifiers. A potentially fruitful open direction is to explore other related decomposition notions, for which there may be better bounds bypassing difficulties with ER decompositions.

\section{Proof of Proposition~\ref{prop:bjm_measure}} \label{app:concentration}

In this section, we show how to modify the proof of Lemma 3.1 in \cite{BansalJM23} to yield the tighter concentration bound claimed in Proposition~\ref{prop:bjm_measure}. In particular, we show how to obtain the second argument in the minimum, since the first was already shown by \cite{BansalJM23}. To do so, we recall the following known concentration bounds from \cite{Tropp18, BandeiraBvH21}.

\begin{proposition}[Corollary 3.6, \cite{Tropp18}]\label{prop:tropp_second}
    Let $n \geq 8$ and $\{\AA_i\}_{i\in [m]} \in \Sym^n$ satisfy
    $\normsop{\sum_{i \in [m]} \AA_i^2} \leq \sigma^2$ and 
    $\max_{\UU,\VV,\WW \in \Uni^n}\normsop{\sum_{i,j \in [m]} \AA_i \UU \AA_j \VV
    \AA_i \WW \AA_j} \leq w$.
    Then, for $\gg \sim \Nor(\vzero_m,\II_m)$, there is a universal constant
    $\ctro$
    such that
    \[
        \E \normop{\sum_{i \in [m]} \gg_i \AA_i} \leq \ctro \cdot \Par{\sigma
        \log^{\frac 1 4} n + w \log^{\frac 1 2} n}
    \]
\end{proposition}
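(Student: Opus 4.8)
The statement is recalled verbatim as Corollary~3.6 of \cite{Tropp18}, so the honest option for this paper is to invoke that reference directly; the sketch below describes the argument I would reconstruct if a self-contained proof were wanted. The natural route is the trace-moment method, which is the standard way to get sharp matrix concentration for Gaussian series and is precisely what underlies Tropp's second-order inequalities. Write $X \defeq \sum_{i \in [m]} \gg_i \AA_i \in \Sym^n$. Pointwise $\normsop{X}^{2p} \le \Tr[X^{2p}]$ for every integer $p \ge 1$, so by Jensen $\E\normsop{X} \le (\E\Tr[X^{2p}])^{1/2p}$; I would eventually take $2p$ of order $\log n$, so that the dimension $n$ picked up by the trace costs only $n^{1/2p} = O(1)$.

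The first main step is to expand the moment. By multilinearity and the Gaussian Wick/Isserlis identity, only perfect matchings of the exponents survive, giving
\[
\E\Tr[X^{2p}] = \sum_{\pi \in \mathcal{M}_{2p}} \Tr\!\Brack{\sum_{i\,:\,\pi} \AA_{i_1}\AA_{i_2}\cdots \AA_{i_{2p}}},
\]
where $\mathcal{M}_{2p}$ is the set of perfect matchings of the cyclically ordered set $[2p]$ (cyclic because of the trace), and $i:\pi$ ranges over index tuples that are constant on each block of $\pi$. The task is now purely combinatorial: estimate each matched term $T_\pi \defeq \Tr[\sum_{i:\pi}\AA_{i_1}\cdots\AA_{i_{2p}}]$ and sum over $\pi$.

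The second main step is a genus/crossing classification of the matchings. For non-crossing $\pi$, collapsing innermost matched pairs $\AA_i\AA_i$ and using the hypothesis $\normsop{\sum_i \AA_i^2}\le \sigma^2$ telescopes down to $\Tr[(\sum_i\AA_i^2)^p]$, so $|T_\pi|\le n\sigma^{2p}$; there are $C_p \le 4^p$ such $\pi$, which with the crude Khintchine accounting yields $O(\sigma)$ after paying the familiar $\sqrt{\log n}$. Tropp's refinement sharpens this by distinguishing, within each matching, the contractions forced by the cyclic adjacency structure (handled losslessly) from the comparatively few that genuinely invoke the $\sigma^2$ variance proxy; only $O(\sqrt p)$ of the latter are needed, which is what upgrades $\sqrt{\log n}$ to $\log^{1/4}n$ on the $\sigma$ term. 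For a matching with $k\ge 1$ crossings, the standard genus bound supplies a dimension gain of $n^{-k}$, and the residual blocks involving four repeated indices are exactly controlled by $w=\max_{\UU,\VV,\WW}\normsop{\sum_{i,j}\AA_i\UU\AA_j\VV\AA_i\WW\AA_j}$, each ``double crossing'' contributing one factor of $w$ where two factors of $\sigma^2$ would otherwise appear. Since $w\le \sigma^4$ always, the crossing family is strictly lower order, and optimizing $p\asymp\log n$ collects $\E\normsop{X}\le \ctro(\sigma\log^{1/4}n + w\log^{1/2}n)$. Equivalently, one can bypass the combinatorics and feed the scalar-Gaussian parameters into Tropp's general second-order concentration theorem, proven via matrix exchangeable pairs and a second-order Taylor expansion of the moment functional; Proposition~\ref{prop:tropp_second} is its Corollary~3.6, and since the remainder of this paper uses the bound only as a black box, citing \cite{Tropp18} is all that is truly required.

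The main obstacle is exactly this refined matching combinatorics: getting the exponent $1/4$ rather than the Khintchine value $1/2$ on the $\sigma$ term needs the delicate separation of ``structural'' from ``variance-bounded'' contractions, together with the clean attribution of crossings to powers of $w$ rather than $\sigma$. This is the technical heart of \cite{Tropp18} (and of the related free-probabilistic estimates of Bandeira–Boedihardjo–van Handel), so in practice I would reuse their genus-expansion bookkeeping rather than rederive it.
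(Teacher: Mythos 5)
The paper does not prove this statement; it is imported verbatim as Corollary~3.6 of \cite{Tropp18}, exactly as you note, so your proposal (cite the reference, with an optional trace-moment/genus-expansion sketch of how Tropp's bound is obtained) matches the paper's treatment. Nothing further is needed, since the rest of the paper uses the bound only as a black box in Appendix~\ref{app:concentration}.
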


\begin{lemma}[Proposition 4.6, \cite{BandeiraBvH21}]\label{lem:wbound_bbh21}
    For $\{\AA_i\}_{i \in [m]} \in \Sym^n$,
    \begin{equation}
        \max_{\UU,\VV,\WW \in \Uni^n}\normop{\sum_{i,j \in [m]} \AA_i \UU \AA_j \VV
        \AA_i \WW \AA_j} 
        \leq
        \normop{\sum_{i \in [m]} \AA_i^2}^{\frac 1 2} \cdot
        \normop{\sum_{i \in [m]}
        \textup{vec}(\AA_i)\textup{vec}(\AA_i)^\top}^{\frac 1 2},
    \end{equation}
    where $\textup{vec}(\AA) \in \R^{n^2}$ is the vectorization of $\AA$.
\end{lemma}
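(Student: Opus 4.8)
The plan is to prove the operator-norm bound by testing the operator $\mathbf{X}:=\sum_{i,j\in[m]}\AA_i\UU\AA_j\VV\AA_i\WW\AA_j$ against unit vectors and exposing the \emph{vectorized} structure of the $\AA_i$. Write $\sigma^2:=\normsop{\sum_i\AA_i^2}$ and $S:=\sum_i\mathrm{vec}(\AA_i)\mathrm{vec}(\AA_i)^\top\succeq 0$, so $\normsop{S}$ is the second factor appearing on the right-hand side. Since $\normsop{\mathbf{X}}=\sup_{\|x\|=\|y\|=1}|x^\top\mathbf{X}y|$, I would fix unit $x,y\in\R^n$ and expand the two ``interior'' copies of the $\AA$'s — the $\AA_j$ inside $\UU\AA_j\VV$ and the $\AA_i$ inside $\VV\AA_i\WW$ — in the standard matrix basis, $\AA_j=\sum_{k,l}(\AA_j)_{kl}e_ke_l^\top$ and $\AA_i=\sum_{a,b}(\AA_i)_{ab}e_ae_b^\top$, so that the two remaining ``exterior'' copies are contracted linearly against $x$ and $y$. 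Using $(\AA_i)_{ab}=\langle\AA_i,e_ae_b^\top\rangle$ and $x^\top\AA_ie_k=\langle\AA_i,x e_k^\top\rangle$, each sum over $i$ (resp.\ $j$) becomes a bilinear form of $S$, and one obtains
\[
x^\top\mathbf{X}y=\sum_{a,b,k,l}(\VV)_{la}\,T^{(i)}_{abk}\,T^{(j)}_{klb},\qquad
T^{(i)}_{abk}:=\mathrm{vec}(e_ae_b^\top)^\top S\,\mathrm{vec}(x(\UU e_k)^\top),
\]
with $T^{(j)}_{klb}:=\mathrm{vec}(e_ke_l^\top)^\top S\,\mathrm{vec}((\WW^\top e_b)y^\top)$ defined analogously.

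The crux is to control the two $\ell^2$ masses $\sum_{a,b,k}(T^{(i)}_{abk})^2$ and $\sum_{k,l,b}(T^{(j)}_{klb})^2$. For the first, summing over the orthonormal index pair $(a,b)$ collapses it to $\sum_k\|S(x\otimes\UU e_k)\|_2^2=\sum_k(x\otimes\UU e_k)^\top S^2(x\otimes\UU e_k)$; applying $S^2\preceq\normsop{S}\,S$ and $\sum_k(x\otimes\UU e_k)(x\otimes\UU e_k)^\top=xx^\top\otimes\II_n$ (which uses unitarity of $\UU$) bounds this by $\normsop{S}\cdot\Tr\!\big(S(xx^\top\otimes\II_n)\big)=\normsop{S}\sum_i\|\AA_i x\|_2^2\le\normsop{S}\,\sigma^2$, where $\mathrm{vec}(\AA_i)^\top(xx^\top\otimes\II_n)\mathrm{vec}(\AA_i)=\|\AA_i x\|_2^2$ uses symmetry of $\AA_i$. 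The identical computation with $\WW,y$ replacing $\UU,x$ (using unitarity of $\WW$) gives $\sum_{k,l,b}(T^{(j)}_{klb})^2\le\normsop{S}\,\sigma^2$. Finally, apply Cauchy–Schwarz to $x^\top\mathbf{X}y=\sum_{a,b,k}T^{(i)}_{abk}\big(\sum_l(\VV)_{la}T^{(j)}_{klb}\big)$: since $\VV$ is orthogonal, for fixed $b,k$ the linear map $(T^{(j)}_{klb})_l\mapsto\big(\sum_l(\VV)_{la}T^{(j)}_{klb}\big)_a$ is an isometry, so the squared mass of the second factor equals $\sum_{k,l,b}(T^{(j)}_{klb})^2\le\normsop{S}\,\sigma^2$. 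Hence $|x^\top\mathbf{X}y|\le\normsop{S}\,\sigma^2$, and maximizing over $x,y$ and over $\UU,\VV,\WW$ yields the bound of Lemma~\ref{lem:wbound_bbh21} (in the form $\normsop{\mathbf{X}}\le\normsop{\sum_i\AA_i^2}\cdot\normsop{\sum_i\mathrm{vec}(\AA_i)\mathrm{vec}(\AA_i)^\top}$; this is the quoted inequality under the normalization of \cite{BandeiraBvH21}).

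I expect the main obstacle to be organizing the argument so that all three unitaries genuinely vanish: $\UU$ and $\WW$ must be absorbed into the definitions of $T^{(i)}$ and $T^{(j)}$, where they enter only through $\UU e_k$ and $\WW^\top e_b$ — rotations of an orthonormal basis, hence invisible to the $\ell^2$ masses — while $\VV$ must be handled purely through the orthogonal contraction that pairs the two tensors. A secondary point is to verify the two elementary ingredients, $S^2\preceq\normsop{S}\,S$ and the trace identity $\Tr\!\big(S(xx^\top\otimes\II_n)\big)=\sum_i\|\AA_i x\|_2^2$, against one fixed convention for $\mathrm{vec}$ and the Kronecker product, since the whole computation is bookkeeping-heavy and an inconsistent convention would silently break the orthogonality cancellations.
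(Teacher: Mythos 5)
Your argument is correct, and it is worth noting that the paper itself gives no proof of this lemma at all: it is imported as Proposition~4.6 of \cite{BandeiraBvH21}, so your write-up supplies a self-contained derivation in place of a black-box citation. The mechanics check out: expanding only the two interior factors in the standard basis, absorbing $\UU$ and $\WW$ into rotated orthonormal frames (which the $\ell^2$ masses cannot see), using $S^2 \preceq \normsop{S}\, S$ together with $\Tr\Par{S(xx^\top \otimes \II_n)} = \sum_{i \in [m]} \|\AA_i x\|_2^2 \le \normsop{\sum_{i\in[m]}\AA_i^2}$, and finishing with Cauchy--Schwarz after observing that contraction against $\VV^\top$ is an isometry, is essentially the same decoupling/Cauchy--Schwarz mechanism used in \cite{BandeiraBvH21}; so rather than a genuinely new route, you have reconstructed the cited proof, which is exactly what one would want here.

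One point to make explicit: what you prove is $\normsop{\sum_{i,j\in[m]} \AA_i \UU \AA_j \VV \AA_i \WW \AA_j} \le \normsop{\sum_{i\in[m]}\AA_i^2}\cdot\normsop{\sum_{i\in[m]}\textup{vec}(\AA_i)\textup{vec}(\AA_i)^\top}$, i.e., without the square roots on the right-hand side, and you are right to flag the normalization. The inequality as literally printed in the lemma is dimensionally inconsistent and false in general (take $m=1$ and $\AA_1 = c\,\II_n$ with $c^2 > \sqrt{n}$: the left side is $c^4$ while the right side is $c^2\sqrt{n}$). The intended reading is that the displayed maximum is Tropp's parameter $w(X)^4$, so the quoted bound is really $w(X)^2 \le \sigma\cdot\normsop{\sum_i \textup{vec}(\AA_i)\textup{vec}(\AA_i)^\top}^{1/2}$; your un-square-rooted statement is the correct and equivalent form, and it is precisely what \Cref{cor:concentration} needs, since $w^4 \le \sigma^2 f^2$ is what yields the $\sqrt{\sigma f}\log^{1/2} n$ term there. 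So your proof establishes the result in the form actually used downstream, and the discrepancy is a transcription artifact of the paper's statement, not a gap in your argument.
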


By combining Proposition~\ref{prop:tropp_second} and Lemma~\ref{lem:wbound_bbh21}, we have the following corollary.

\begin{corollary}\label{cor:concentration}
    Let $n \geq 8$ and $\{\AA_i\}_{i\in [m]} \in \Sym^n$ satisfy $\normsop{\sum_{i \in [m]} \AA_i^2} \leq \sigma^2$, $ \sum_{i \in [m]} \normsf{\ma_i}^2
    \leq mf^2$.
    Then, for $\gg \sim \Nor(\vzero_m,\II_m)$, there is a universal constant
    $\ctro$
    such that
    \[
        \E \normop{\sum_{i \in [m]} \gg_i \AA_i} \leq \ctro \cdot \Par{\sigma
        \log^{\frac 1 4} n + \sqrt{\sigma f} \log^{\frac 1 2} n}.
    \]
\end{corollary}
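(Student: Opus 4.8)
The plan is to obtain Corollary~\ref{cor:concentration} by substituting the fourth-order bound of Lemma~\ref{lem:wbound_bbh21} into the concentration inequality of Proposition~\ref{prop:tropp_second}. Proposition~\ref{prop:tropp_second} already delivers the desired leading term $\sigma\log^{\frac14}n$, and controls the remaining error by the ``alignment'' parameter $w = \max_{\UU,\VV,\WW\in\Uni^n}\normsop{\sum_{i,j}\AA_i\UU\AA_j\VV\AA_i\WW\AA_j}$; so the entire content is to show that, under the hypotheses $\normsop{\sum_i\AA_i^2}\le\sigma^2$ and $\sum_i\normsf{\AA_i}^2\le mf^2$, this parameter contributes at most (a universal constant times) $\sqrt{\sigma f}\log^{\frac12}n$ to the bound.

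To do this I would invoke Lemma~\ref{lem:wbound_bbh21}, which expresses $w$ through the product of $\normsop{\sum_i\AA_i^2}$ and $\normsop{\sum_i\textup{vec}(\AA_i)\textup{vec}(\AA_i)^\top}$. The first of these factors is at most $\sigma^2$ by assumption. The second is the crux: $\sum_i\textup{vec}(\AA_i)\textup{vec}(\AA_i)^\top$ is a sum of $m$ positive semidefinite rank-one matrices, so its operator norm is at most its trace, and $\Tr\bigl(\sum_i\textup{vec}(\AA_i)\textup{vec}(\AA_i)^\top\bigr) = \sum_i\norms{\textup{vec}(\AA_i)}_2^2 = \sum_i\normsf{\AA_i}^2 \le mf^2$. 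Substituting both estimates into Lemma~\ref{lem:wbound_bbh21}, then into Proposition~\ref{prop:tropp_second}, and absorbing universal constants into $\ctro$, produces the stated inequality; if $n<8$ one first pads the $\AA_i$ with zero rows and columns, which leaves every operator and Frobenius norm unchanged and only affects the constant.

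The main thing to be careful about is the bookkeeping of exponents and normalizations between the two cited statements: $w$ is a degree-four quantity in the $\AA_i$, while $\sigma$ and $f$ are degree-one, so one must check that the combination $\bigl(\normsop{\sum_i\AA_i^2}\cdot\sum_i\normsf{\AA_i}^2\bigr)$, after taking the root prescribed by Lemma~\ref{lem:wbound_bbh21} and Proposition~\ref{prop:tropp_second}, collapses precisely to $\sqrt{\sigma f}$ (with the correct power of $\log n$). Apart from this arithmetic, the argument is mechanical --- a trace bound on a sum of rank-one positive semidefinite matrices, and one substitution.
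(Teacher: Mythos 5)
Your overall plan --- combine Lemma~\ref{lem:wbound_bbh21} with Proposition~\ref{prop:tropp_second} and reduce the remaining estimate to a trace computation on $\sum_{i \in [m]}\textup{vec}(\AA_i)\textup{vec}(\AA_i)^\top$ --- is the same as the paper's, but the specific bound you use in that last step is too weak, and the arithmetic check you defer to the end of your argument does not go through. You bound the operator norm of $\sum_{i}\textup{vec}(\AA_i)\textup{vec}(\AA_i)^\top$ by its trace, i.e.\ by $\sum_i \normsf{\AA_i}^2 \le m f^2$. The paper instead uses the bound $\normop{\sum_i \textup{vec}(\AA_i)\textup{vec}(\AA_i)^\top} \le \frac{1}{m}\sum_i \normsf{\AA_i}^2 \le f^2$, which is a factor of $m$ sharper. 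That factor of $m$ is not a universal constant and propagates: tracing your estimate through Lemma~\ref{lem:wbound_bbh21} and Proposition~\ref{prop:tropp_second}, the alignment contribution is controlled by $(\sigma^2 \cdot m f^2)^{1/4} = m^{1/4}\sqrt{\sigma f}$ rather than $\sqrt{\sigma f}$, so your argument as written would only yield $\E\normop{\sum_i\gg_i\AA_i} \le \ctro(\sigma\log^{1/4}n + m^{1/4}\sqrt{\sigma f}\log^{1/2}n)$, which is not the statement of the corollary. The trivial PSD inequality $\normop{\MM} \le \Tr(\MM)$ is therefore not strong enough here; the whole point of the argument is to shave off a factor of $m$ in this step.

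Beware, though, that simply lifting the paper's $\frac{1}{m}\Tr$ estimate does not obviously fix this. The paper justifies it by noting $\sum_i\textup{vec}(\AA_i)\textup{vec}(\AA_i)^\top$ has rank at most $m$, but for a PSD matrix $\MM$ of rank $\le m$ the standard consequence is $\Tr(\MM) \le m\normop{\MM}$, i.e.\ $\normop{\MM} \ge \frac{1}{m}\Tr(\MM)$, which is the reverse of what is used (and indeed one always has $\normop{\sum_i\textup{vec}(\AA_i)\textup{vec}(\AA_i)^\top} \ge \max_i\normsf{\AA_i}^2 \ge \frac{1}{m}\sum_i\normsf{\AA_i}^2$, since the matrix dominates each rank-one summand). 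So the intermediate estimate $\normop{\sum_i\textup{vec}(\AA_i)\textup{vec}(\AA_i)^\top} \le f^2$ that the corollary really needs is not delivered by either your trace bound or a rank-$m$ trace bound; closing the gap would require a genuinely different argument for that inequality (or an additional structural hypothesis on the $\{\AA_i\}$).
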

\begin{proof}
It suffices to combine Proposition~\ref{prop:tropp_second} and Lemma~\ref{lem:wbound_bbh21}, where we use
\begin{align*}
\normop{\sum_{i \in [m]}
	\textup{vec}(\AA_i)\textup{vec}(\AA_i)^\top} \le \frac 1 m \Tr\Par{\sum_{i \in [m]} \textup{vec}(\ma_i)\textup{vec}(\ma_i)^\top} = \frac 1 m \sum_{i \in [m]}  \normf{\ma_i}^2.
\end{align*}
The first inequality used that the summed vectorized outer products has rank at most $m$.
\end{proof}

By replacing Theorem 1.2 of \cite{BandeiraBvH21} with \Cref{cor:concentration} in the proof of Lemma 3.1 in \cite{BansalJM23}, we obtain the second term in Proposition~\ref{prop:bjm_measure}; we may use the better of the two bounds. To handle the $n \ge 8$ constraint, for any smaller $n$, we can pad with zeroes up to dimension $n = 8$, which does not affect any operator norms and only changes constants in the claim.

\end{appendix}

\end{document}